\documentclass[a4paper,12pt]{article}
\usepackage{geometry}
\usepackage{amsmath}
\usepackage{amsthm}
\numberwithin{equation}{section}
\numberwithin{figure}{section}
\usepackage{authblk}
\usepackage{hyperref}
\usepackage{bm}
\usepackage{cite}
\usepackage{mathrsfs}
\usepackage{amssymb}
\usepackage{amsfonts}
\usepackage{graphicx}
\usepackage{subcaption}
\usepackage{tikz}
\usepackage[title]{appendix}
\usepackage{mathtools}

\usetikzlibrary{intersections}
\allowdisplaybreaks

\theoremstyle{plain}
\newtheorem{prop}{Proposition}[section]

\newtheorem{theo}{Theorem}[section]
\newtheorem{lem}{Lemma}[section]

\newtheorem{assu}{Assumption}[section]
\newtheorem{rhp}{RH problem}[section]

\newtheorem{rhp-Dbar}{RH-$\bar{\partial}$ problem}[section]

\def\rd{{\rm d}}
\def\re{{\rm e}}
\def\ri{{\rm i}}

\def\rim{{\rm {Im}}\,}
\def\rre{{\rm {Re}}\,}

\allowdisplaybreaks
\newtheorem{remark}{Remark}[section]

\title{{\bf Long-time asymptotics of the Newell equation on the line}
}

\author[$^{\dagger}$]{Deng-Shan Wang}
\author[$^{\dagger}$]{Yingmin Yang\thanks{Corresponding author: yangym@mail.bnu.edu.cn}}

\affil[$^{\dagger}$]{\small School of Mathematical Sciences,
	Beijing Normal University, Beijing 100875, China}

\date{}

\begin{document}
\maketitle
\begin{abstract}
In 1978, A. C. Newell [SIAM J. Appl. Math. 35(4) (1978) 650-664] proposed an exactly solvable model called Newell equation, which simulates the investigation of significant interaction mechanism between long and short waves. Nearly fifty years have passed, yet the long-time asymptotics of the Newell equation remains an open problem to date, with no results reported. In this work, the long-time asymptotic behaviors of the solutions to this model under Schwartz class initial conditions are studied by using the Riemann-Hilbert formulation. Through direct and inverse scattering analysis, the corresponding Riemann-Hilbert problem is formulated, and its relationship with the solution to the initial-value problem of the Newell equation is established. The existence and uniqueness of the solution to the Riemann-Hilbert problem is proved by vanishing lemma. Subsequently, the asymptotic expressions of the solution to the initial-value problem in the dispersive wave region are obtained by using the Deift-Zhou nonlinear steepest descent method. This work extends Newell's original results, providing a rigorous proof for the findings presented in Section 4 of his paper, along with explicit expressions. Furthermore, the comparison between direct numerical simulations and the theoretical results obtained in this paper demonstrates the reliability of the asymptotic expressions.\\
\par
{\bf Keywords:} Newell equation, Long-time asymptotics, Riemann-Hilbert problem, Lax pair, Reflection coefficient.
\end{abstract}

	\section{Introduction}
\ \ \ \
    Fifty years ago, Benney \cite{Benney-1976,Benney-1977} developed a general theory for long wave-short wave interactions, discussed its application to water waves, and proposed a mechanism for significant interaction between long and short waves. In 1978, Newell \cite{Newell-1978} proposed a long wave-short wave interaction model that is now called Newell equation:
	\begin{equation}\label{Newell_1978}
		\left\lbrace \begin{aligned}
			&\ri\frac{\partial B}{\partial T}+\frac{\partial^2B}{\partial X^2}+\left(A^2-2\sigma|B|^2+\ri\frac{\partial A}{\partial X} \right)B=0,\\
			&\frac{\partial A}{\partial T}-2\sigma\frac{\partial \left|B \right|^2 }{\partial X}=0,
		\end{aligned}\right.
	\end{equation}
where $\sigma=\pm 1$, $A(X,T)$ and $B(X,T)$ represent the amplitudes of the long and short waves, respectively. The Newell equation (\ref{Newell_1978}) is a completely solvable model with third-order Yajima-Oikawa type spectral problem (Lax pair) \cite{YO-1975,YO-1976}. Based on this third-order Lax pair, Newell \cite{Newell-1978} proposed the exact soliton solutions of equation (\ref{Newell_1978}) by inverse scattering transform. Then Chowdhury and Chanda \cite{Chowdhury-1986} further considered the Painlevé integrabilty and B{\"a}cklund transformation of the Newell equation by Weiss-Tabor-Carnevale Painlevé test \cite{WTC-1983}. Ling and Liu \cite{Ling-2011} proposed Darboux transformation to construct the multi-soliton solution of this equation. Moreover, Liu \cite{Liu-1994} shew that the Newell equation is related with the un-reduced Yajima-Oikawa equation by a Muira transformation. Both Newell equation and Yajima-Oikawa equation are long wave-short wave resonance model \cite{MaYC-1978,Wright-2006}, and there are various integrable extensions of this kind of model \cite{Chen-Feng-2018,Li_Geng_2020}. Recently, the present authors \cite{WangDS-2025} investigated the long-time asymptotics of the Yajima-Oikawa equation on line by Deift-Zhou nonlinear steepest descent
method \cite{DZ_1993}. However, the long-time asymptotics of the Newell equation (\ref{Newell_1978}) remains an open problem to date. This work is dedicated to solving this nearly 50-year-old problem.
\par
In order to move the critical line $x=0$ to the line $x=t$, taking the Galilean transformation 
$A(X,T)=-\sqrt{2}r(x,t), B(X,T)=q(x,t)$
with $t=T$ and $x=T+\frac{\sqrt{2}}{2}X$, the Newell equation (\ref{Newell_1978}) is converted into 
\begin{equation}\label{Newell}
	\left\lbrace 
	\begin{aligned}
		&\ri\frac{\partial q}{\partial t}+\ri\frac{\partial q}{\partial x}+\frac{1}{2}\frac{\partial^2q}{\partial x^2}+\left(2r^2-2\sigma|q|^2-\ri\frac{\partial r}{\partial x} \right)q=0,\\
		&\frac{\partial r}{\partial t}+\frac{\partial r}{\partial x}+\sigma\frac{\partial \left|q \right|^2 }{\partial x}=0,
	\end{aligned}
	\right.
\end{equation}
which has three-order Lax pair of the form
   \begin{equation}\label{lax_phi}
   	\begin{aligned}
   		&\Psi_x(x,t,k)=U(x,t,k)\Psi(x,t,k), \\ 
   		&\Psi_t(x,t,k)=V(x,t,k)\Psi(x,t,k),
   	\end{aligned}
   \end{equation}
   where 
   \begin{equation*}
   	\begin{aligned}
   		U(x,t,k)=\mathcal{U}(k)+U_1(x,t,k), \quad V(x,t,k)=\mathcal{V}(k)+V_1(x,t,k),
   	\end{aligned}
   \end{equation*}
   and $\mathcal{U}(k)=\text{diag}\{u_1(k),u_2(k),u_3(k)\}$, $\mathcal{V}(k)=\text{diag}\{v_1(k),v_2(k),v_3(k)\}$ with $ u_1(k)=3\ri k,\, u_2(k)=\ri k,\, u_3(k)=-\ri k,\, 
   v_1(k)=\frac{2\ri k^2}{3}-2\ri k,\,  v_2(k)=-\frac{4\ri k^2}{3},\,  v_3(k)=\frac{2\ri k^2}{3}+2\ri k$, and
   \begin{equation} \label{U1-V1}
   \begin{aligned}
   	&U_1(x,t,k)=\begin{pmatrix}
   		0 & \sigma\bar{q} & 2\ri r\\
   		2q & 0 & 2q\\
   		2\ri r & \sigma\bar{q} & 0\\
   	\end{pmatrix},\\
    &V_1(x,t,k)=\begin{pmatrix}
   	\ri \sigma|q|^2 & \sigma\bar{q}\left(k+r-1\right)-\frac{\ri \sigma }{2}\bar{q}_x &  -2\ri r-\ri \sigma|q|^2\\
   	2q\left(k+r-1\right)+\ri q_x & -2\ri\sigma|q|^2 & 2q\left(-k+r-1\right)+\ri q_x\\
   	-2\ri r-\ri \sigma |q|^2 & \sigma\bar{q}\left(-k+r-1\right)-\frac{\ri \sigma }{2} \bar{q}_x & \ri\sigma |q|^2\\
   	\end{pmatrix}. 
    \end{aligned}
   \end{equation}
\par
This work is devoted to studying the long-time asymptotic behaviors of initial-value problem
   \begin{equation}\label{Newell_initial}
   	\left\lbrace 
   	\begin{aligned}
   		&\ri\frac{\partial q}{\partial t}+\ri\frac{\partial q}{\partial x}+\frac{1}{2}\frac{\partial^2q}{\partial x^2}+\left(2r^2-2 \sigma|q|^2-\ri\frac{\partial r}{\partial x} \right)q=0,\\
   		&\frac{\partial r}{\partial t}+\frac{\partial r}{\partial x}+\sigma\frac{\partial \left|q \right|^2 }{\partial x}=0,\\
   		&q(x,0)=q_0(x)\in\mathcal{S}(\mathbb{R}),\quad r(x,0)=r_0(x)\in\mathcal{S}(\mathbb{R}),
   	\end{aligned}
   	\right.
   \end{equation}
where $\mathcal{S}(\mathbb{R})$ is Schwartz space on $\mathbb{R}$. The Riemann-Hilbert (RH) problem associated with the problem (\ref{Newell_initial}) is constructed by direct and inverse scattering analysis. To analyze the long-time asymptotics, the Deift-Zhou nonlinear steepest descent method is then employed to deform the RH problem into a solvable form. 
\par
This paper is organized as follows.  Section \ref{Section-2} lists the main results obtained in this work. The original RH problem is formulated in Section \ref{Section-3} based on the direct and inverse scattering transform. Section \ref{sec_LT} studies the long-time asymptotics of the Newell equation and proves that main theorem of this work.

	\section{Main results} \label{Section-2}\ \ \ \
	 The results of this paper are primarily composed of two reflection coefficients, whose definitions are given first here. Provide the unique solutions $X_{\pm}(x,k)$ and $X_\pm^A(x,k)$ defined by the following Volterra integral equations:
	\begin{align*}
		&X_\pm(x,k)=I-\int_{x}^{\pm\infty}\re^{(x-y)\widehat{\mathcal{U}}(k)}(U_1X)(y,k)\rd y,\\
		&X^A_\pm(x,k)=I+\int_{x}^{\pm\infty}\re^{-(x-y)\widehat{\mathcal{U}}(k)}(U_1^TX^A)(y,k)\rd y,
	\end{align*}
	where  $\re^{\widehat{\mathcal{U}}}A=\re^{\mathcal{U}}A\,\re^{-\mathcal{U}}$. The scattering matrices $s(k)= (s_{ij}(k))_{3×3}$ and $s^A(k)= (s^A_{ij}(k))_{3×3}=\left((-1)^{i+j}\right.$ $\left.m_{ij}(s)\right)_{3\times3}$ are defined through direct scattering analysis
	\begin{align}
		&s(k)=I-\int_\mathbb{R}\re^{-x\widehat{\mathcal{U}}(k)}(U_1X)(x,k)\rd x,\label{sk}\\
		&s^A(k)=I+\int_\mathbb{R}\re^{x\widehat{\mathcal{U}}(k)}(U_1^TX^A)(x,k)\rd x,\label{sak}
	\end{align}
and two reflection coefficients:
\begin{equation}\label{r1_r2}
		r_1(k):=\frac{s_{12}(k)}{s_{11}(k)},\quad
			r_2(k):=\frac{s^A_{31}(k)}{s^A_{33}(k)}, \qquad k\in \mathbb{R}\setminus\{0\}.                                        
	\end{equation}
	
	 \begin{assu}\label{assu_solitonless}
		Assume that $s_{11}(k)$ is nonzero for $k\in\mathbb{C}_+$.
	\end{assu}

    \begin{assu}\label{assu_LT}
    	Assume that the reflection coefficients $r_1(k)$ and $r_2(k)$ as defined in equation \eqref{r1_r2} satisfy the relations:
    	\begin{equation*}
    		\begin{aligned}
    			&{1-2|r_1(-k)|^2+|r_2(k)|^2}>0,
    		\end{aligned}\qquad k\in \mathbb{R}.
    	\end{equation*} 
    \end{assu}

   \begin{rhp}\label{rhp_M}
   	Find a $3\times3$ matrix-valued function $M(x,t,k)$ with the following properties:
   	\begin{enumerate}
   		\item The function $M(x,t,k)$ is analytic for $k\in\mathbb{C}\setminus\mathbb{R}$.
   		\item As $k$ approaches $\mathbb{R}$ from the left and right, the boundary values $M_+(x,t,k)$ and $M_-(x,t,k)$ of $M(x,t,k)$ exist and satisfy the following relationship:
   		\begin{equation*}
   			M_+(x,t,k)=M_-(x,t,k)v(x,t,k), \quad k\in \mathbb{R},
   		\end{equation*}
   		where
   		\begin{equation}\label{jump_0}
   			v(x,t,k)=
   			\begin{pmatrix}
   				1 &  -r_1(k)\re^{t\Phi_{12}(x,t,k)} & r_2(k)\re^{t\Phi_{13}(x,t,k)}\\
   				2\sigma r_1^*(k)\re^{-t\Phi_{12}(x,t,k)} & 1-2\sigma \left|r_1(k) \right|^2 &-2\sigma \alpha(k)\re^{t\Phi_{23}(x,t,k)}\\
   				r_2^*(k)\re^{-t\Phi_{13}(x,t,k)} & \alpha^*(k)\re^{-t\Phi_{23}(x,t,k)} & 1-2\sigma \left|r_1(-k) \right|^2+ \left|r_2(k) \right|^2\\
   			\end{pmatrix},
   		\end{equation}
   		and $\alpha(k)=r_1^*(-k)-r_1^*(k)r_2(k)$, $\Phi_{ij}(x,t,k)=(u_i(k)-u_j(k))x/t+(v_i(k)-v_j(k))$, $1\leq i<j\leq3$.
   		
   		\item \label{rhp_M_asy} For $k\in\mathbb{C}\setminus\mathbb{R}$, $M(x,t,k)=I+\mathcal{O}(k^{-1})$ as $k\rightarrow\infty$; $M(x,t,k)=\mathcal{O}(1)$ as  $k\rightarrow0$.

   		\item  $M$ satisfies the symmetries for $k \in \mathbb{C} \setminus \mathbb{R}$:
   		\begin{equation}\label{sym_M}
   		\begin{aligned}
   			&M^{-1}(x,t, k) = \mathcal{A}^{-1} M^\dagger(x,t, \bar k) \mathcal{A},\\
   			&M(x,t, k)=\mathcal{B} M(x,t,-k) \mathcal{B},
   		\end{aligned}\end{equation}
   		where the superscript $\dagger$ denotes the conjugate transpose, and
   		\begin{equation}\label{sym_AB}
   			\mathcal{A}=\begin{pmatrix}
   				1 & 0 & 0\\
   				0 & -\frac{\sigma}{2} & 0\\
   				0 & 0 & 1\\
   			\end{pmatrix},\quad \mathcal{B}=\begin{pmatrix}
   				0 & 0 & 1\\
   				0 & 1 & 0\\
   				1 & 0 & 0\\
   			\end{pmatrix}.
   		\end{equation}
   	\end{enumerate}
   \end{rhp}

 \begin{lem}\label{lem_vanishing}
    Suppose that the reflection coefficients $r_j(k)$, $j=1,2,$ defined in~\eqref{r1_r2} satisfy Assumptions \ref{assu_solitonless} and \ref{assu_LT}. Then the corresponding RH problem \ref{rhp_M} for $M(x,t,k)$ admits a unique solution.
\end{lem}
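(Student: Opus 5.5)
We prove uniqueness and existence separately. Existence is reduced, through Fredholm theory for the singular integral equation attached to RH problem \ref{rhp_M}, to a vanishing lemma in the style of Zhou.

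\textbf{Uniqueness.} Since $\det v\equiv 1$, which can be checked from the explicit form \eqref{jump_0} or inherited from $\det s=1$, any solution has $\det M_+=\det M_-$ on $\mathbb{R}$. Hence $\det M$ is entire after removing the isolated point $k=0$. It is bounded near $0$ by the condition $M=\mathcal{O}(1)$ and tends to $1$ at infinity, so Liouville gives $\det M\equiv1$. If $M$ and $\tilde M$ are two solutions, then $\tilde M M^{-1}$ has no jump across $\mathbb{R}$. Its singularity at $0$ is removable by boundedness, and it tends to $I$ at infinity, so $\tilde M M^{-1}\equiv I$.

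\textbf{Existence.} Write $v=(I-w_-)^{-1}(I+w_+)$, for instance with $w_-=0$ and $w_+=v-I$. The entries of $w_\pm$ lie in $L^2\cap L^\infty(\mathbb{R})$ because $r_1,r_2\in\mathcal{S}$ away from $0$; Assumption \ref{assu_solitonless} rules out zeros of $s_{11}$ and hence poles in the data. Consider the operator $C_w f=C_+(fw_-)+C_-(fw_+)$ on $L^2(\mathbb{R})$. The jump matrix is continuous and decays to $I$. A standard perturbation argument, approximating $w$ by rational data, then shows that $I-C_w$ is Fredholm of index zero.

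It therefore suffices to show $\ker(I-C_w)=\{0\}$. Equivalently, we must show that any solution $N$ of the homogeneous RH problem is zero. Here $N$ is analytic off $\mathbb{R}$, satisfies $N_+=N_-v$ in the $L^2$ sense, and has $N=\mathcal{O}(k^{-1})$ at infinity.

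\textbf{Vanishing lemma.} This is the core of the argument. Following Zhou, set $H(k)=N(k)\,\mathcal{A}^{-1}N^\dagger(\bar k)$ for $k\in\mathbb{C}_+$. The function $H$ is analytic and $\mathcal{O}(k^{-2})$ at infinity, so Cauchy's theorem on a large semicircle gives $\int_{\mathbb{R}}N_+(k)\mathcal{A}^{-1}N_-^\dagger(k)\,\rd k=0$. Substituting $N_+=N_-v$ yields
\[\int_{\mathbb{R}}N_-(k)\,v(k)\mathcal{A}^{-1}\,N_-^\dagger(k)\,\rd k=0.\]
The symmetry \eqref{sym_M} is built so that $v\mathcal{A}^{-1}$ is Hermitian on $\mathbb{R}$, and the key computation is to confirm this from \eqref{jump_0}.

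The main obstacle is that $\mathcal{A}^{-1}=\mathrm{diag}(1,-2\sigma,1)$ is indefinite when $\sigma=1$, so $v\mathcal{A}^{-1}$ need not be positive definite. Positivity alone would force $N_-=0$, but here it may fail.

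To get around this, I would exploit the second symmetry, $\mathcal{B}$ together with $k\mapsto-k$. I would pair $k$ with $-k$, or use the conjugated problem $\mathcal{B}N(-k)\mathcal{B}$, to produce a combined quadratic form. Assumption \ref{assu_LT}, namely $1-2\sigma|r_1(-k)|^2+|r_2(k)|^2>0$, is exactly the positivity of the relevant $(3,3)$ minor, or Schur complement, in that form. If the Hermitian part is only semidefinite, a finer argument is needed. One route is to conclude first that some columns of $N_-$ vanish. Another is to deduce that a row of $N$ satisfies a scalar or $2\times2$ homogeneous problem with positive-definite jump, which vanishes by the same integral identity. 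Analytic continuation then propagates vanishing to the remaining entries.

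Since $N\equiv0$, the kernel of $I-C_w$ is trivial, and by the Fredholm alternative the integral equation is solvable. The resulting $M=I+\frac{1}{2\pi\ri}\int\frac{\mu(w_++w_-)}{s-k}\,\rd s$ solves RH problem \ref{rhp_M}. The symmetries \eqref{sym_M} then follow from uniqueness, since both sides of each identity solve the same RH problem.
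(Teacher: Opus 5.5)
Your overall architecture matches the paper's strategy: uniqueness from $\det M\equiv1$ and Liouville, and existence from the Fredholm alternative reduced to a Zhou-type vanishing lemma. The paper's entire proof is the one-line citation of Zhou. Your uniqueness argument is fine.

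\textbf{The gap.} The gap is the vanishing step for $\sigma=1$, which you leave as a menu of possible routes, and the route you favour cannot work.
\begin{itemize}
\item Since $\det v\equiv1$, we have $\det(v\mathcal{A}^{-1})=\det\mathcal{A}^{-1}=-2\sigma$. For $\sigma=1$ the Hermitian matrix $v\mathcal{A}^{-1}$ has $(1,1)$ entry $1>0$ and determinant $-2$. So it has exactly one negative eigenvalue at every $k\in\mathbb{R}\setminus\{0\}$, whatever $r_1,r_2$ are. It is never even semidefinite.
\item Assumption~\ref{assu_LT} only says that the $(3,3)$ entry $v_{33}$ of this form is positive. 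The Schur complement with respect to that entry then has determinant $-2/v_{33}<0$, so your ``Schur complement'' reading of the assumption is wrong.
\item The $\mathcal{B}$-symmetry adds nothing. From \eqref{sym_M} one gets $v(-k)=\mathcal{B}v(k)^{-1}\mathcal{B}$, and $\mathcal{B}\mathcal{A}^{-1}\mathcal{B}=\mathcal{A}^{-1}$. The kernel is invariant under $N\mapsto\mathcal{B}N(-\cdot)\mathcal{B}$, and for a $\mathcal{B}$-symmetric kernel element $N_-(-k)v(-k)\mathcal{A}^{-1}N_-^\dagger(-k)=\mathcal{B}N_-(k)v(k)\mathcal{A}^{-1}N_-^\dagger(k)\mathcal{B}$. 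This is the same indefinite integrand.
\end{itemize}
So the integral identity only says that homogeneous solutions are null vectors of an indefinite form, which forces nothing. For $\sigma=-1$ your argument does close, because the leading principal minors of $v\mathcal{A}^{-1}$ are $1,2,2$; you should say so explicitly.

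\textbf{What is missing.} You need a different positivity mechanism, not a symmetrization of the $\mathcal{A}$-form. For example, take the constant matrix $G=\mathrm{diag}(1,2,1)$. Then $N(k)GN^\dagger(\bar k)$ is still analytic in $\mathbb{C}_+$ with the same decay, and for $\sigma=1$ a direct computation gives
\[
vG+(vG)^\dagger=\begin{pmatrix}2&0&2r_2(k)\re^{t\Phi_{13}}\\0&4\left(1-2|r_1(k)|^2\right)&0\\2r_2^*(k)\re^{-t\Phi_{13}}&0&2\left(1-2|r_1(-k)|^2+|r_2(k)|^2\right)\end{pmatrix}.
\]
This is positive definite exactly when $2|r_1(k)|^2<1$ and $2|r_1(-k)|^2<1$. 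So Zhou's argument closes for $\sigma=1$ under the hypothesis $2|r_1|^2<1$ on $\mathbb{R}$.
\begin{itemize}
\item This hypothesis implies Assumption~\ref{assu_LT}, but is not implied by it.
\item It is also what the later steepest-descent analysis tacitly needs for $\ln(1-2\sigma|r_1|^2)$ in $\delta$ and $\nu$.
\end{itemize}
With only Assumptions~\ref{assu_solitonless} and \ref{assu_LT}, neither your sketch nor the paper's citation supplies the vanishing step for $\sigma=1$. You should either add such a hypothesis to the statement or find a genuinely new argument.

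A secondary point: you assert that $v$ is continuous, in particular at $k=0$, where $r_1,r_2$ are a priori defined only on $\mathbb{R}\setminus\{0\}$. Your index-zero Fredholm claim and the $\mathcal{O}(1)$ condition at $0$ both rely on this, so it needs justification.
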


This lemma can be proved by the strategy of Zhou's vanishing lemma \cite{Zhou_SIAM_1989_vanishing}.

    \begin{theo}\label{theo_reconstruction formula}
   	Let $q(x,t)$ and $r(x,t)$ be a solution to the initial value problem \eqref{Newell_initial}, $r_1(k),\, r_2(k)$ are defined in equation \eqref{r1_r2} and satisfy Assumptions \ref{assu_solitonless} and \ref{assu_LT}.  Define an existence time $T\in \left( 0,\infty\right] $. The RH problem \ref{rhp_M} admits a unique solution $M(x,t,k)$ for each point $(x,t)\in \mathbb{R}\times \left[0,T\right)$, and moreover, the solution to the initial value problem \eqref{Newell_initial} of Newell equation can be expressed by
   	\begin{equation}\label{reconstruct}
   		\left\lbrace 
   		\begin{aligned}
   			&q(x,t)=\ri\lim\limits_{k\to\infty}k\left(M(x,t,k)-I \right)_{21},\\
   			&r(x,t)=2\lim\limits_{k\to\infty}k\left(M(x,t,k)-I \right)_{31}.
   		\end{aligned}\right.
   	\end{equation}
   \end{theo}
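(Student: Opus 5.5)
The plan is the standard direct/inverse scattering route: build Jost-type eigenfunctions from the Lax pair \eqref{lax_phi}, assemble a sectionally analytic $M$ from them, check that it satisfies RH problem \ref{rhp_M}, appeal to Lemma \ref{lem_vanishing} for uniqueness, and read off $q,r$ from the large-$k$ expansion.

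\emph{Step 1: eigenfunctions and the scattering relation.} Since $q,r$ solve \eqref{Newell_initial} on $[0,T)$ and remain Schwartz in $x$, compatibility of \eqref{lax_phi} lets me define $X_\pm(x,t,k)$ by the $x$-Volterra equations in Section \ref{Section-2}, now with $t$ as a parameter, and similarly $X_\pm^A$ for the adjoint problem. I then set $\mu_\pm=X_\pm$, normalized so that $\mu_\pm\re^{x\mathcal{U}+t\mathcal{V}}$ solves both equations of the Lax pair. The ordering $u_1>u_2>u_3$ on $\ri\mathbb{R}_+$ fixes which columns of $\mu_\pm$ extend analytically into $\mathbb{C}_\pm$. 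The columns that do not extend are replaced by the cofactor (adjugate) construction, using $\mu^A_\pm$ and wedge products of columns, as in the Yajima--Oikawa analysis \cite{WangDS-2025}.

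The $t$-part of the Lax pair shows that the scattering data evolve trivially, $s(k,t)=\re^{t\widehat{\mathcal V}}s(k,0)$. Hence $r_1,r_2$ acquire exactly the exponential factors $\re^{t\Phi_{ij}}$ appearing in \eqref{jump_0}.

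\emph{Step 2: the RH problem.} I define $M$ in $\mathbb{C}_+$ by the $3\times3$ matrix whose columns are
\begin{equation*}
\frac{[\mu_-]_1}{s_{11}},\qquad \frac{\mu_+^A\text{-cofactor column}}{s^A_{33}},\qquad [\mu_+]_3,
\end{equation*}
and analogously in $\mathbb{C}_-$. Assumption \ref{assu_solitonless} guarantees that there are no poles.

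Comparing boundary values through $\mu_+=\mu_-\re^{x\widehat{\mathcal U}+t\widehat{\mathcal V}}s$ gives the jump \eqref{jump_0}. To express every entry through $r_1(k)$, $r_2(k)$, $r_1(-k)$ and $\alpha$, I need symmetries of $s$. The relation $U_1^\dagger=-\mathcal{A}U_1\mathcal{A}^{-1}$ (up to the appropriate conjugation for real $k$) yields $s^{-1}=\mathcal{A}^{-1}s^\dagger\mathcal{A}$. The relation $\mathcal{U}(-k)=-\mathcal{B}\mathcal{U}(k)\mathcal{B}$ together with $\mathcal{B}U_1\mathcal{B}=U_1$ gives the $\mathcal{B}$-symmetry. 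Both symmetries pass to $M$, giving \eqref{sym_M}.

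The normalization $M\to I$ as $k\to\infty$ follows from Neumann-series estimates for the Volterra equations. Boundedness at $k=0$ has to be checked separately, because $\mathcal{U}(0)=0$ makes the exponentials degenerate there. I would verify that $s_{11}(0)\neq0$ and that the cofactor columns stay bounded, using Schwartz decay of the data.

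Uniqueness for each $(x,t)$ then follows from Lemma \ref{lem_vanishing}, since Assumption \ref{assu_LT} holds.

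\emph{Step 3: reconstruction.} I expand $M=I+M^{(1)}(x,t)/k+\mathcal{O}(k^{-2})$ and substitute into $M_x=\mathcal{U}M-M\mathcal{U}+U_1M$, which is satisfied by $M\re^{-x\mathcal{U}-t\mathcal{V}}$-conjugated columns. At order $k^0$ this gives $[\mathrm{diag}(3\ri,\ri,-\ri),M^{(1)}]+U_1=0$, that is, $(u_i-u_j)M^{(1)}_{ij}/k\cdot k=-(U_1)_{ij}$ for $i\neq j$.
\begin{itemize}
\item For $(2,1)$: $(\ri-3\ri)M^{(1)}_{21}=-2q$, so $M^{(1)}_{21}=-\ri q$ and $q=\ri M^{(1)}_{21}$.
\item For $(3,1)$: $(-\ri-3\ri)M^{(1)}_{31}=-2\ri r$, so $M^{(1)}_{31}=r/2$ and $r=2M^{(1)}_{31}$.
\end{itemize}
These reproduce \eqref{reconstruct}.

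\emph{Main obstacle.} The hardest part is Step 2. It requires the precise analyticity domains of the cofactor columns for this third-order problem, and the algebra turning the raw jump into the closed form \eqref{jump_0}, in particular the $(2,2)$ and $(3,3)$ entries via the unitarity-type relation from the $\mathcal{A}$-symmetry, together with control of the behavior at $k=0$. For both I would follow the Yajima--Oikawa template in \cite{WangDS-2025} closely.
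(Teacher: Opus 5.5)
Your overall architecture matches the paper's Section~\ref{Section-3}: Jost solutions $X_\pm$, $X^A_\pm$, trivial time evolution of $s$, the symmetries, uniqueness from Lemma~\ref{lem_vanishing}, and $q,r$ read off from the $1/k$ term. Your Step~3 coefficients $M^{(1)}_{21}=-\ri q$ and $M^{(1)}_{31}=r/2$ agree with the entries of $X_{\pm1}$. Defining $M$ directly by columns, rather than via the Fredholm equations \eqref{Mn_fredholm} followed by the column formula of Lemma~\ref{lem_M_pm}, is a legitimate variant. It even lets you get the jump from the real-axis relation $X_+=X_-\re^{x\widehat{\mathcal U}}s$ without the compact-support approximation.

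\textbf{Step~2 as written fails, because the analyticity bookkeeping is reversed.}
\begin{itemize}
\item On $k=\ri s$ with $s>0$ one has $u_1=-3s<u_2=-s<u_3=s$, not $u_1>u_2>u_3$.
\item In general $\rre(u_i-u_j)=(c_j-c_i)\rim k$ with $c=(3,1,-1)$. Column $j$ of $X_+$ (integration over $y>x$) extends where $\rre(u_i-u_j)\ge 0$ for all $i$; column $j$ of $X_-$ extends where $\rre(u_i-u_j)\le 0$.
\item Hence $[X_+]_1$ and $[X_-]_3$ are analytic in $\mathbb{C}_+$, while $[X_-]_1$ and $[X_+]_3$ are analytic in $\mathbb{C}_-$ (Proposition~\ref{prop_Xpm}).
\item Your $\mathbb{C}_+$ columns $[\mu_-]_1$ and $[\mu_+]_3$ do not extend into $\mathbb{C}_+$ at all, since their Volterra kernels grow exponentially there. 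Your denominators are also attached to the wrong columns.
\end{itemize}
The correct assignment, as in Lemma~\ref{lem_M_pm}, is
\[
M_+=\Bigl([X_+]_1,\ \frac{[X^A_+]_3\times[X^A_-]_1}{s_{11}},\ \frac{[X_-]_3}{s^A_{33}}\Bigr),\qquad
M_-=\Bigl(\frac{[X_-]_1}{s^A_{11}},\ \frac{[X^A_-]_3\times[X^A_+]_1}{s_{33}},\ [X_+]_3\Bigr).
\]
The middle column must mix $X^A_+$ and $X^A_-$: a cross product of two columns of $X^A_+$ just returns a column of $X_+$. The denominators are forced by the jump. From $X^A_-=X^A_+\re^{-x\widehat{\mathcal U}}s^T$, together with $[X^A]_3\times[X^A]_1=[X]_2$ and $[X^A]_3\times[X^A]_2=-[X]_1$, you get $[X^A_+]_3\times[X^A_-]_1=s_{11}[X_+]_2-s_{12}\re^{x(u_1-u_2)}[X_+]_1$. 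Dividing by $s_{11}$ is exactly what produces the entry $-r_1\re^{t\Phi_{12}}$ in \eqref{jump_0}. With this fix Step~3 becomes immediate: in $\mathbb{C}_+$ the first column of $M$ is exactly $[X_+]_1$, whose expansion is given by Proposition~\ref{prop_Xpm_k_infty}.

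\textbf{Three smaller points.}
\begin{itemize}
\item Assumption~\ref{assu_solitonless} concerns only $s_{11}$. To rule out poles from $s^A_{33}$, $s^A_{11}$ and $s_{33}$, invoke \eqref{sym_s}--\eqref{sym_sA}, which give $s^A_{11}(k)=\overline{s_{11}(\bar k)}$, $s_{33}(k)=s_{11}(-k)$ and $s^A_{33}(k)=\overline{s_{11}(-\bar k)}$.
\item The identity $\mathcal U(-k)=-\mathcal B\mathcal U(k)\mathcal B$ is false. What holds is $\mathcal B\mathcal U(-k)\mathcal B=\mathcal U(k)-2\ri k I$ and $\mathcal B\mathcal V(-k)\mathcal B=\mathcal V(k)$. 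This suffices, because only $\widehat{\mathcal U}$ enters the integral equations.
\item You cannot verify $s_{11}(0)\neq 0$ from Schwartz decay. At $k=0$ all $u_j$ vanish and $s(0)$ is just the transfer matrix of $\Psi_x=U_1\Psi$. For $q_0\equiv 0$ the $(1,3)$ block gives $s_{11}(0)=\cos\bigl(2\int_{\mathbb R}r_0\,\rd x\bigr)$, which vanishes when $\int_{\mathbb R}r_0\,\rd x=\pi/4$. So $M=\mathcal O(1)$ at $k=0$ needs nonvanishing of $s_{11}$ (and hence of $s^A_{33}$) at $0$ as an extra hypothesis. The paper does not derive this either; Lemma~\ref{lem_M_sing} still excludes $k=0$.
\end{itemize}
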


      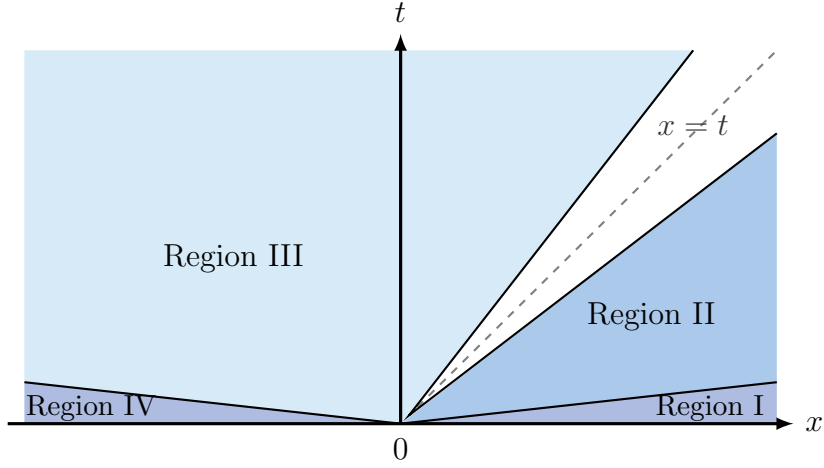
\begin{figure}[htbp]
   	\centering
   	\begin{tikzpicture}[scale=1.1]
   		\draw[thick,gray,dashed] (0.2,0.2) -- (4.5,4.5);
   		
   		\definecolor{mycolor1}{HTML}{AEBCE1} 
   		\definecolor{mycolor2}{HTML}{D6EAF8} 
   		\definecolor{mycolor3}{HTML}{ABC9EB} 
   		\definecolor{mycolor4}{HTML}{98A3CA} 
   		
   		\fill[mycolor1] (4.5,0.5) -- (0,0) -- (4.5,0) -- cycle;
   		\fill[mycolor3] (4.5,3.5) -- (0,0) -- (4.5,0.5) -- cycle;
   		\fill[mycolor2] (3.5,4.5) -- (0,0) -- (-4.5,0.5) -- (-4.5,4.5) -- cycle;
   		\fill[mycolor1] (-4.5,0.5) -- (0,0) -- (-4.5,0)-- cycle;

   		\node at (0,-0.3) {0};
   		\node at (3,1.3) {Region II};
   		\node at (-2,2) {Region III};
   		\node[above] at (-3.7,-0.1) {\small Region IV};
   		\node[above] at (3.7,-0.1) {\small Region I};
   		
   		\draw[thick] (3.5,4.5) -- (0.1,0.1) -- (4.5,3.5);
   		\draw[thick] (-4.5,0.5) -- (0,0) -- (4.5,0.5);
   		
   		\node[black!80] at (3.5,3.6) {$x=t$};
   		
   		\draw[very thick,-latex] (-4.7,0) -- (4.7,0) node[right] {$x$};
   		\draw[very thick,-latex] (0,0) -- (0,4.7) node[above] {$t$};
   		
  	\end{tikzpicture}
   	\caption{Division of asymptotic regions in the upper $(x,t)$-half plane. 
   		Region I: $\tau \in \mathcal{I}_1 = [0, \tau_{\max}]$; 
   		Region II: $\zeta \in \mathcal{I}_2 \subset (1,\infty)$; 
   		Region III: $\zeta \in \mathcal{I}_3 \subset (-\infty,1)$; 
   		Region IV: $\tau \in \mathcal{I}_4 = [-\tau_{\max}, 0]$, 
   		with $\tau = 1/\zeta = t/x$, $\tau_{\max} \in (0,1)$ and $\mathcal{I}_2, \mathcal{I}_3$ are fixed compact sets.}
   	\label{fig_asy}
   \end{figure}

   \begin{theo}\label{theo_region}
   	Assume that $r_0, q_0 \in \mathcal{S}(\mathbb{R})$ and that the conditions of Theorem \ref{theo_reconstruction formula} are satisfied, then the solution $r(x,t)$ and $q(x,t)$ of the initial value problem \eqref{Newell_initial} obey the following asymptotic formulas in the four regions  shown in Figure \ref{fig_asy}(the parameter ranges of the regions are given in the caption of figure):
   		\begin{align}
   				&\textbf{Region {\rm{I}}:}\left\lbrace 
   				\begin{aligned}
   					r(x,t)&=\mathcal{O}\left(\frac{1}{x^{N}}+\frac{C_N(\tau)\ln x}{x}\right),\\
   					q(x,t)&=\frac{\sqrt{2\pi}\left(2\sqrt{x\tau}\right)^{-2\ri\nu}\text{exp}\left(\ri\nu \ln(2k_0)+\frac{3\pi \ri}{4}+2\ri x\tau k_0^2-\frac{\pi\nu}{2}+s_1\right)}{2\sqrt{x\tau}r_1(-k_0)\Gamma(-\ri\nu)}\\
                    &\quad+\mathcal{O}\left(\frac{1}{x^{N}}+\frac{C_N(\tau)\ln x}{x}\right).
   				\end{aligned} \right.\label{theo_region1}\\
   				&\textbf{Region {\rm{II}}:}\left\lbrace 
   				\begin{aligned}
   					r(x,t)
   					&=\mathcal{O}\left(\frac{\ln t}{t}\right),\\
   					q(x,t)&=\frac{\sqrt{2\pi}\left(2\sqrt{t}\right)^{-2\ri\nu}\text{exp}\left(\ri\nu \ln(2k_0)+\frac{3\pi \ri}{4}+2\ri tk_0^2-\frac{\pi\nu}{2}+s_1\right)}{2\sqrt{t}r_1(-k_0)\Gamma(-\ri\nu)}+\mathcal{O}\left(\frac{\ln t}{t}\right).
   				\end{aligned}\right.\label{theo_region2}\\
   				&\textbf{Region {\rm{III}}:}\left\lbrace 
   				\begin{aligned}
   					r(x,t)&
   					=\mathcal{O}\left(\frac{\ln t}{t}\right),\\
   					q(x,t)
   					&=\frac{\sqrt{2\pi}\left(2\sqrt{t}\right)^{-2\ri\nu}\text{exp}\left(\ri\nu \ln(-2k_0)+\frac{3\pi \ri}{4}+2\ri tk_0^2-\frac{\pi\nu}{2}-\pi \nu_2+ s_2\right)}{2\sqrt{t} \alpha^*(k_0)\Gamma(-\ri\nu)}+\mathcal{O}\left(\frac{\ln t}{t}\right).
   				\end{aligned}\right.\label{theo_region3}\\
   				&\textbf{Region {\rm{IV}}:}\left\lbrace 
   				\begin{aligned}
   					r(x,t)&=\mathcal{O}\left(\frac{1}{|x|^{N}}+\frac{C_N(\tau)\ln |x|}{|x|}\right),\\
   					q(x,t)
   					&=\frac{\sqrt{2\pi}\left(2\sqrt{x\tau}\right)^{-2\ri\nu}\text{exp}\left(\ri\nu \ln(-2k_0)+\frac{3\pi \ri}{4}+2\ri x\tau k_0^2-\frac{\pi\nu}{2}-\pi \nu_2+ s_2\right)}{2\sqrt{x\tau} \alpha^*(k_0)\Gamma(-\ri\nu)}\\
                    &\quad+\mathcal{O}\left(\frac{1}{|x|^{N}}+\frac{C_N(\tau)\ln |x|}{|x|}\right).
   				\end{aligned}
   				\right.\label{theo_region4}
   			\end{align}
   		Here,  $\Gamma$ denotes the Gamma function, $C_N(\tau)$ is a smooth function that vanishes to all orders as $\tau\to0$ for each $N$, $k_0=\frac{x-t}{2t}=\frac{1-\tau}{2\tau}$ and $\alpha(k_0)=r_1^*(-k_0)-r_1^*(k_0)r_2(k_0)$,
   	\begin{align*}
   		&\nu=-\frac{1}{2\pi}\ln(1-2\sigma \left|r_1(-k_0) \right|^2 ), \quad &&\nu_2=-\frac{1}{2\pi} \ln \left(1-2\sigma|r_1(k_0)|^2+|r_2(-k_0)|^2\right),
   	\end{align*}
   		\begin{align}
   			s_1=&\frac{1}{2\pi\ri}\int_{k_0}^{\infty}\ln\left({(s+k_0)}{(s-k_0)^2}\right){\rm{d}}\ln(1-2\sigma \left|r_1(-s)\right|^2 ),\label{s1}\\
   			s_2=&\frac{1}{2\pi\ri}\int_{-k_0}^{\infty}\ln\left|{(s+k_0)^2}{(s-k_0)}\right|{\rm{d}}\ln(1-2\sigma \left|r_1(-s)\right|^2 ) \nonumber\\
            &+\frac{1}{2\pi\ri}\int_{-\infty}^{k_0}\ln|(s-k_0)(s+k_0)|{\rm{d}}\ln \left(1-2\sigma|r_1(s)|^2+|r_2(-s)|^2\right)\nonumber\\
   			&+\frac{1}{2\pi\ri}\int_{k_0}^{-k_0}\ln\left|(s-k_0)^2(s+k_0) \right| {\rm{d}}\ln \left(1-2\sigma|r_1(-s)|^2+|r_2(s)|^2\right).\nonumber
   		\end{align}
  
   \end{theo}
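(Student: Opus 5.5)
The plan is the Deift--Zhou nonlinear steepest descent applied to RH problem \ref{rhp_M}, whose unique solvability for each $(x,t)$ is guaranteed by Lemma \ref{lem_vanishing}. The solution is then read off through the reconstruction formula \eqref{reconstruct} of Theorem \ref{theo_reconstruction formula}.

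\textbf{Step 1: phase analysis.} The phases are
\[
\Phi_{12}=2\ri k\zeta+2\ri k^2-2\ri k,\qquad \Phi_{23}=2\ri k\zeta-2\ri k^2-2\ri k,\qquad \Phi_{13}=4\ri k\zeta-4\ri k,
\]
with $\zeta=x/t$. Thus $\Phi_{12}$ has the single stationary point $-k_0$, $\Phi_{23}$ has the single stationary point $k_0$, and $\Phi_{13}$ is linear with no stationary point. In all regions I would use the symmetry $M(k)=\mathcal{B}M(-k)\mathcal{B}$ to reduce to one $2\times2$-type critical point.
\begin{itemize}
\item In Regions I and II ($\zeta>1$, so $k_0>0$) the relevant block is the $(1,2)$ block at $-k_0$. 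This is why $r_1(-k_0)$ and $\nu=-\frac{1}{2\pi}\ln(1-2\sigma|r_1(-k_0)|^2)$ appear.
\item In Regions III and IV ($k_0<0$) the $(3,3)$ entry $1-2\sigma|r_1(-k)|^2+|r_2(k)|^2$ enters. This produces $\nu_2$ and the $\alpha^*(k_0)$ factor.
\end{itemize}
Assumption \ref{assu_LT} guarantees that the logarithms and the scalar RH problems are well defined.

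\textbf{Step 2: transformations.}
\begin{enumerate}
\item Conjugate by a diagonal matrix $\Delta(k)=\mathrm{diag}(\delta_1,\delta_2,\delta_3)$ built from scalar RH problems on half-lines with jumps given by these diagonal entries. These $\delta$'s have the form $(k\pm k_0)^{\ri\nu}\re^{\chi(k)}$. Their expansion at the critical point produces the constants $s_1$ and $s_2$, including the mixed logarithmic kernels $\ln((s+k_0)(s-k_0)^2)$, which arise from combining $\delta$ with its $\mathcal{B}$-reflected copy.
\item Factorize the jump as lower/upper triangular products: $LU$ on one side of the critical point and $U'DL'$ on the other, according to the sign of $\mathrm{Re}\,\Phi_{ij}$.
\item Since $r_j$ are only Schwartz on $\mathbb{R}$, decompose each reflection coefficient into an analytic rational approximation plus a small remainder. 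This is the original Deift--Zhou approach; alternatively use a $\bar\partial$ extension. Then open lenses along rays through $\pm k_0$ and $0$.
\item The remainders give $\mathcal{O}(t^{-N})$ jumps. Near $k=0$, where $\Phi_{13}$ and the triple structure degenerate, the contribution is controlled by the $\mathcal{O}(1)$ condition at $0$ and contributes only to the error.
\end{enumerate}

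\textbf{Step 3: local models and error.}
\begin{itemize}
\item Near the critical point, rescale $k\mp k_0=z/\sqrt{8t}$ (respectively $\sqrt{8x\tau}$ in Regions I and IV). Match to the standard parabolic-cylinder model RH problem, embedded in the appropriate $2\times2$ block. Its residue, computed via Weber functions, gives the $\Gamma(-\ri\nu)$, $\re^{-\pi\nu/2}$, $\re^{3\pi\ri/4}$ and $(2\sqrt t)^{-2\ri\nu}$ factors.
\item Add the mirror parametrix at the opposite point by the $\mathcal{B}$-symmetry.
\item Small-norm theory for the error matrix $E$ then gives $M=I+\frac{\text{model residue}}{\sqrt t\,k}+\mathcal{O}(\ln t/t)$. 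The $\ln t$ comes from the $|k\mp k_0|^{\pm\ri\nu}$ singular factors in the comparison.
\item Reading off $(M)_{21}$ yields $q$. The $(3,1)$ entry gets no $t^{-1/2}$ term, since the model lives in the other block, so $r=\mathcal{O}(\ln t/t)$.
\end{itemize}

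\textbf{Step 4: main obstacle.} Regions I and IV, where $\tau=t/x\to0$, are the main obstacle. There $k_0\to\infty$, so the natural large parameter is $x$ rather than $t$. The rescaling must be done with $x\tau=t$ written in terms of $x$, and one must show that the errors stay uniform as $\tau\to0$. At $\tau=0$ the stationary points escape to infinity and only the decay of $r_j$ at infinity matters. I would try to prove that every term involving the reflection coefficients near $k_0$ carries factors $r_j(\pm k_0)$ with $k_0=\frac{1-\tau}{2\tau}$. Rapid decay of $r_j$ then makes these factors vanish to all orders in $\tau$, giving the bound $C_N(\tau)\ln x/x$, while the analytic-approximation remainders give $x^{-N}$. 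Uniformity of the small-norm estimates in this double limit is where I expect the most work.
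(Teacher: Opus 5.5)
Your proposal is correct and follows essentially the same route as the paper. The shared steps are:
\begin{itemize}
\item $\delta$-conjugation that combines $\delta$ with its $\mathcal{B}$-reflected copy (in Regions III--IV, three scalar functions, one of them on the finite interval $[k_0,-k_0]$);
\item analytic approximation of the reflection-coefficient combinations, followed by lens opening;
\item parabolic-cylinder parametrices at $-k_0$ in the $(1,2)$ block and at $k_0$ in the $(2,3)$ block;
\item a small-norm argument giving the $\mathcal{O}(\ln t/t)$ error;
\item Regions I and IV handled through $\tau=t/x$, with $x$ as the large parameter.
\end{itemize}
Your rescaling $\sqrt{8t}$ differs from the paper's $z_{(\pm)}=2\sqrt t\,(k\mp k_0)$. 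This only changes the normalization of the model problem ($\re^{\ri z^2/4}$ versus $\re^{\ri z^2/2}$), not the final formulas.
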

   
  The long-time asymptotic behavior of the solution in the four regions is derived and proved in Section \ref{sec_LT}. Additionally, a comparative validation between the asymptotic and numerical solutions is presented. Given the following two sets of initial conditions:
   \begin{equation}\label{initial_condition}
   	\left\lbrace
   	\begin{aligned}
   		&r_0(x)=0.3\cos(x)\,\re^{-\frac{x^2}{2}},\\
        &q_0(x)=0.2\sin(x)\,\re^{-\frac{x^2}{2}}.
   	\end{aligned}\right.\qquad\qquad\qquad \\
   	\left\lbrace
   	\begin{aligned}
       	&r_0(x)=0.1\,\re^{-\frac{x^2}{2}},\\
   		&q_0(x)=0.2\,\re^{-\frac{x^2}{2}}.
   	\end{aligned}\right.
   \end{equation}
   Figures \ref{figrhp-direct} and \ref{figrhp-direct2} present a comparison between the direct numerical simulation results and the theoretical results from Theorem \ref{theo_region} under the two sets of initial conditions \eqref{initial_condition} for the two cases $\sigma=1$ and $\sigma=-1$. It can be seen from both figures that the long-time asymptotic results given in Theorem \ref{theo_region} agree well with the numerical results within an acceptable range of error.
   \begin{figure}[ht]
   	\centering
   	\begin{subfigure}[b]{0.49\textwidth}
   		\includegraphics[width=\textwidth]{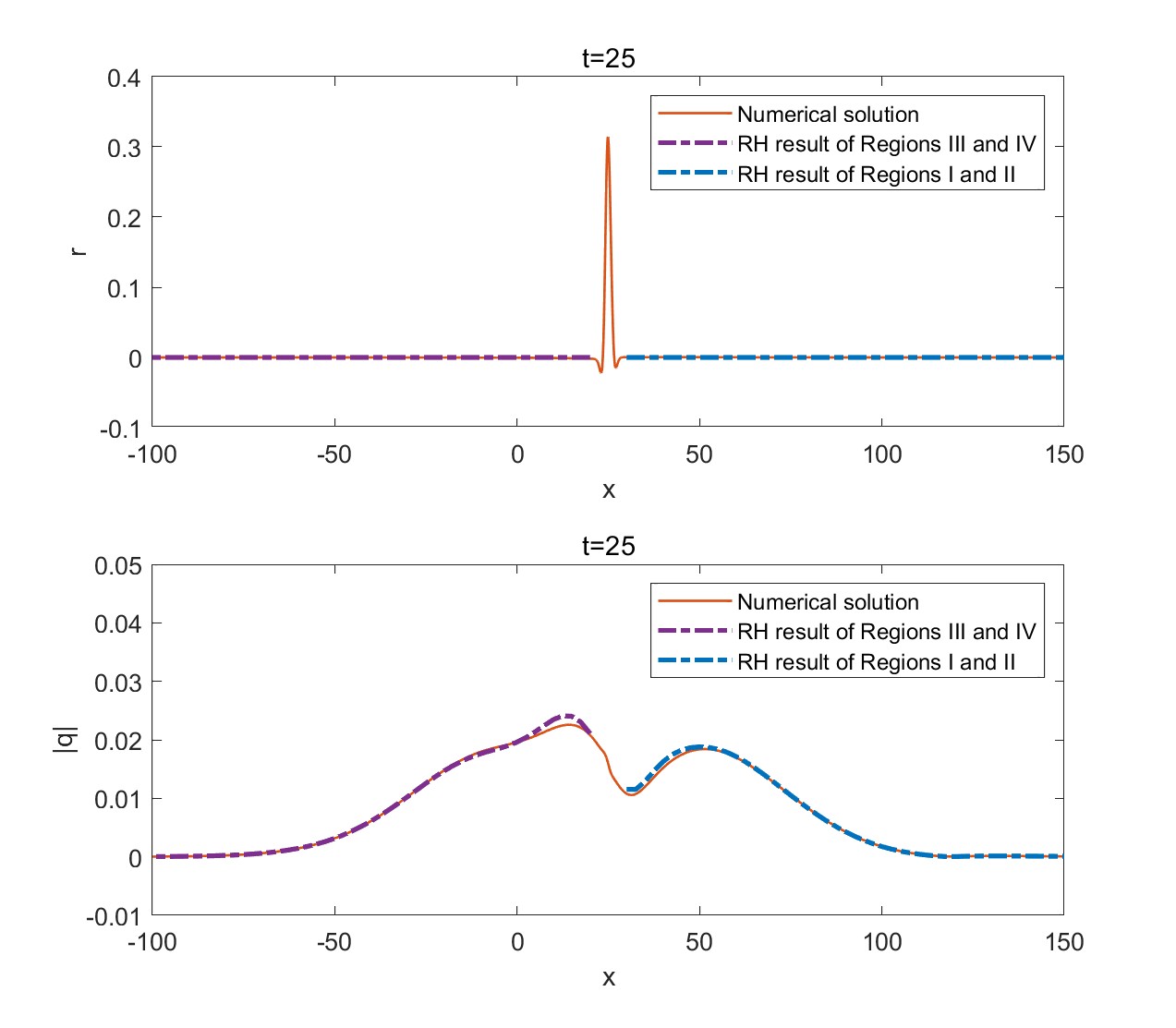}
   		\caption{The initial value condition (\ref{initial_condition} left) for $t=25$ and $\sigma=1$.}
   		\label{figrhp-1}
   	\end{subfigure}
   	\begin{subfigure}[b]{0.49\textwidth}
   		\includegraphics[width=\textwidth]{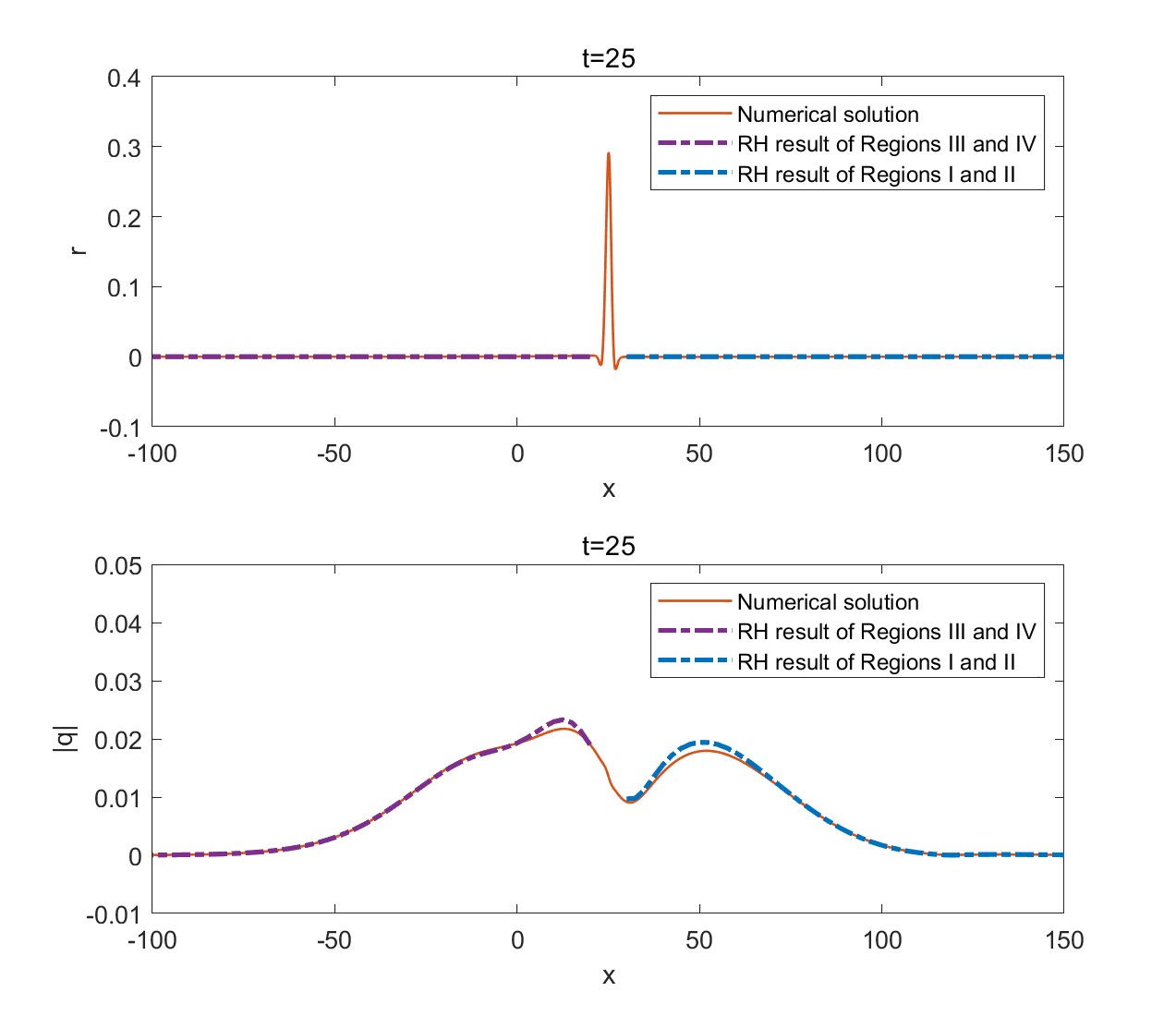}
   		\caption{The initial value condition (\ref{initial_condition} left) for $t=25$ and $\sigma=-1$.}
   		\label{figrhp-2}
   	\end{subfigure}
   	\caption{Comparisons of theoretical result given by Theorem \ref{theo_region} and the full numerical simulations of the Newell equation \eqref{Newell} under the initial value condition (\ref{initial_condition} left).}
   	\label{figrhp-direct}
   \end{figure}
   \begin{figure}[ht]
   	\centering
   	\begin{subfigure}[b]{0.49\textwidth}
   		\includegraphics[width=\textwidth]{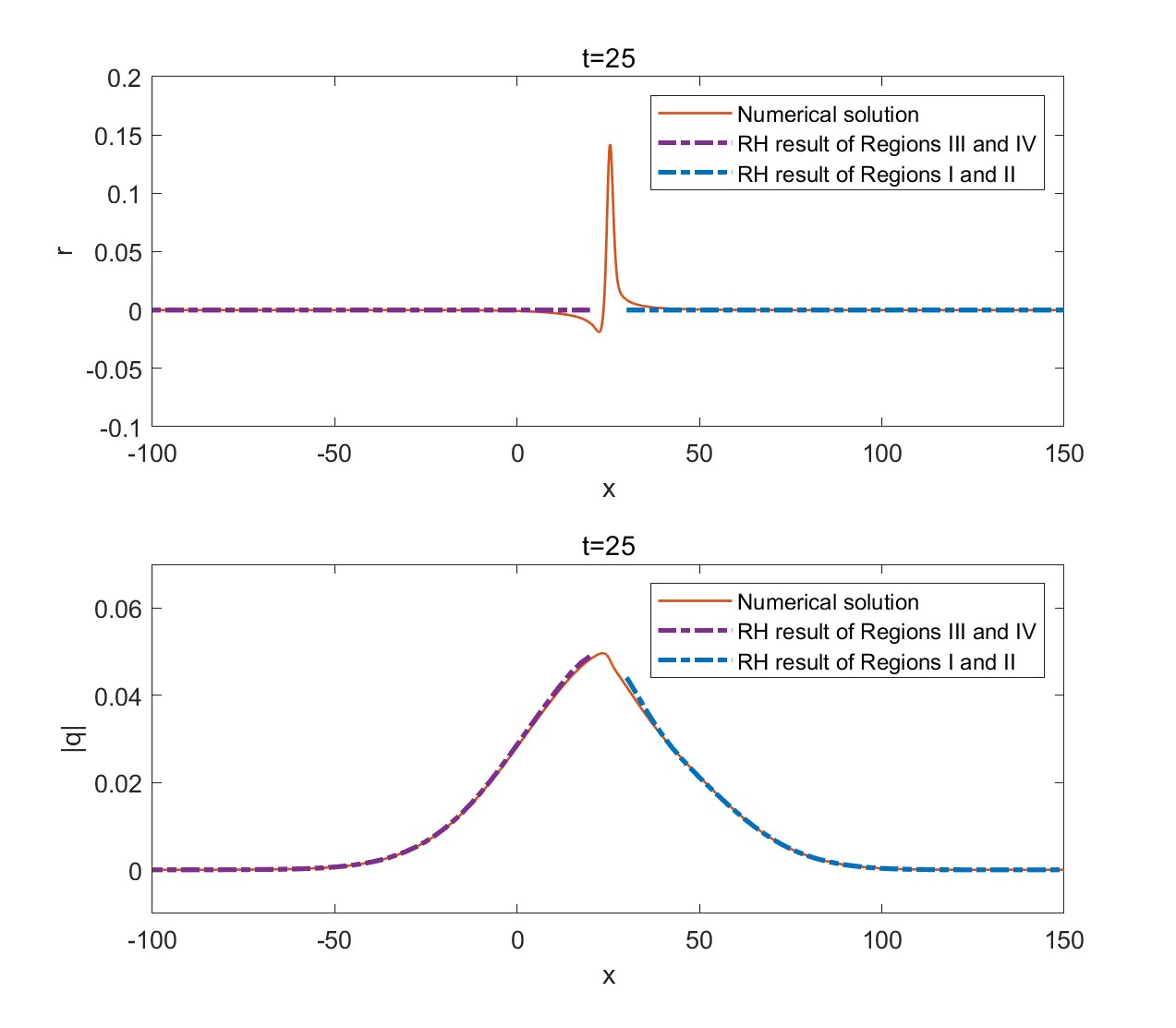}
   		\caption{The initial value condition (\ref{initial_condition} right) for $t=25$ and $\sigma=1$.}
   		\label{figrhp-3}
   	\end{subfigure}
   	\begin{subfigure}[b]{0.49\textwidth}
   		\includegraphics[width=\textwidth]{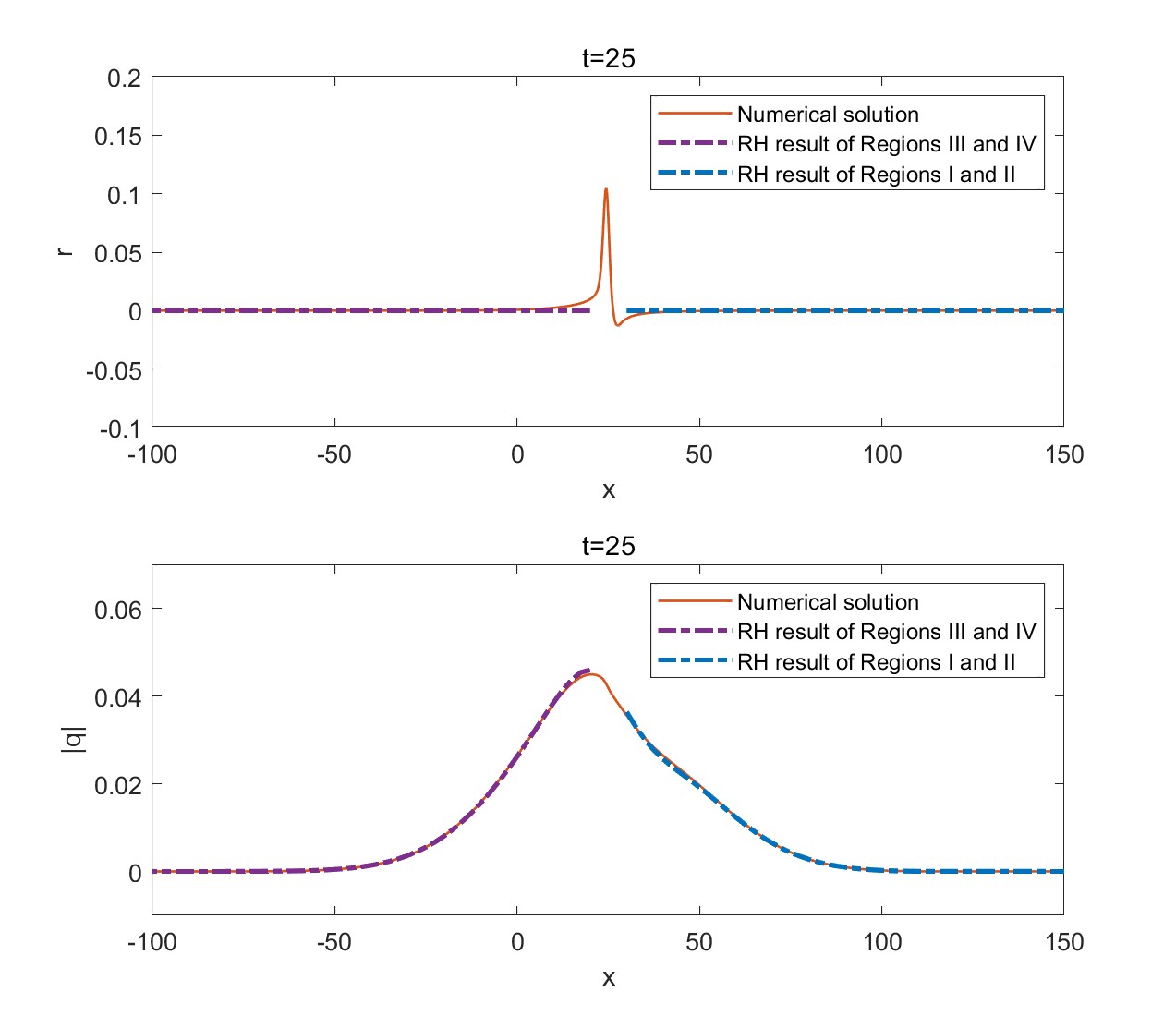}
   		\caption{The initial value condition (\ref{initial_condition} right) for $t=25$ and $\sigma=-1$.}
   		\label{figrhp-4}
   	\end{subfigure}
   	\caption{Comparisons of theoretical result given by Theorem \ref{theo_region} and the full numerical simulations of the Newell equation \eqref{Newell} under the initial value condition (\ref{initial_condition} right).}
   	\label{figrhp-direct2}
   \end{figure}

   \section{The Riemann-Hilbert problem} \label{Section-3} \ \ \ \

   Follwoing the precedure in \cite{Charlier-2022,Charlier-2023}, after some calculations, it can be shown that the matrix-valued functions $U_1(x,t,k)$ and $V_1(x,t,k)$ in (\ref{U1-V1}) satisfy the following symmetries
	\begin{equation*}\label{symmetry}
		\begin{aligned}
			&U_1(x,t,k)=-\mathcal{A}^{-1} U_1^\dagger(x,t,\bar k)\mathcal{A},\quad &&U_1(x,t,k)=\mathcal{B} U_1(x,t,-k)\mathcal{B},\\
			&V_1(x,t,k)=-\mathcal{A}^{-1} V_1^\dagger(x,t,\bar k)\mathcal{A},\quad &&V_1(x,t,k)=\mathcal{B} V_1(x,t,-k)\mathcal{B},
		\end{aligned}
	\end{equation*}
    where $\mathcal{A}$ and $\mathcal{B}$ are given by equation \eqref{sym_AB}. Construct the Lax pair in the following matrix form:
	\begin{equation}\label{lax_check}
		\left\lbrace
		\begin{aligned}
			\check{X}_x=\check{L}\check{X},\\
			\check{X}_t=\check{Z}\check{X},
		\end{aligned} \right.
	\end{equation}
	where $\check{X}$ is a $3\times 3$ matrix-valued function composed of three linearly independent solutions of the Lax pair \eqref{lax_phi}.
	Introduce $\check{X}=X e^{\mathcal{U}x+\mathcal{V}t}$, and substitute it into \eqref{lax_check}. Then we obtain
\begin{equation}\label{lax_X}
		\left\lbrace
		\begin{aligned}
			X_x-\left[ \mathcal{U},X\right]=U_1X,\\
			X_t-\left[ \mathcal{V},X\right]=V_1X.\\
		\end{aligned} \right.
	\end{equation}
	Fix $t=0$ and denote $U(x, k):=U(x,0, k)$ for notational simplicity. Next, consider the case of the $x$-part of the Lax pair \eqref{lax_X} for $t=0$:
	\begin{equation}\label{lax_xpart}
		X_x-\left[ \mathcal{U},X\right]=U_1X.
	\end{equation}
	Define $X_+(x,k)$ and $X_-(x,k)$ as the matrix-valued solutions of the linear Volterra integral equations:
	\begin{align}
		&X_+(x, k)=I-\int_{x}^{\infty}\re^{(x-y)\widehat{\mathcal{U}}(k)}(U_1X_+)(y,k)\rd y,\label{volX+}\\
		&X_-(x, k)=I+\int_{-\infty}^{x}\re^{(x-y)\widehat{\mathcal{U}}(k)}(U_1X_-)(y,k)\rd y.\label{volX-}
	\end{align}

    \begin{prop}\label{prop_Xpm}
		Suppose $r_0,\, q_0\in\mathcal{S}(\mathbb{R})$, the equations \eqref{volX+} and \eqref{volX-} uniquely define the two solutions $X_\pm(x,k)$ in equation \eqref{lax_X}, which satisfy the following properties:
		\begin{enumerate}
			\item The vector functions $X_{+,1}(x,k)$ and $X_{-,3}(x,k)$  are well-defined and analytic for $x\in\mathbb{R}$ and $\rim k>0$, and the vector functions $X_{+,3}(x,k)$ and $X_{-,1}(x,k)$  are well-defined and analytic for $x\in\mathbb{R}$ and $\rim k<0$. Additionally, $X_{\pm,2}(x,k)$ are not analytic in any domain.
            \item For each $n \geq 1$ and $\epsilon > 0$, there are bounded smooth positive functions $f_\pm(x)$ of $x \in \mathbb{R}$ with rapid decay as $x \to \pm\infty$, respectively, such that the following estimates hold for $x \in \mathbb{R}$ and $j = 0, 1, \ldots, n$:
            \begin{equation*}
                \begin{aligned}
                    &\left| \frac{\partial^j}{\partial k^j} \bigl( X_\pm(x, k) - I \bigr) \right| \leq f_\pm(x), \quad k\in (\overline{\mathbb{C}_\pm},\mathbb{R},\overline{\mathbb{C}_\mp}), \ |k| > \epsilon.
                \end{aligned}
            \end{equation*}
			\item For $k\in\mathbb{R}\setminus\{0\}$, $\det X_{\pm}=1$. The functions $X_{\pm}(x,k)$ satisfy the symmetries
			\begin{equation*}\label{sym_X_A}
				X^H_{\pm}\left(  x,\bar k \right)=\mathcal{A}X^{-1}_{\pm}(x,k)\mathcal{A}^{-1} ,
			\end{equation*}
			\begin{equation*}\label{sym_X_B}
				X_{\pm}(x,k)=\mathcal{B} X_{\pm}(x,-k)\mathcal{B}.
			\end{equation*}
		
		\end{enumerate}
	\end{prop}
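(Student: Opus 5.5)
We analyse the Volterra equations \eqref{volX+}--\eqref{volX-} column by column, in the standard way for $3\times3$ problems of this type (cf.\ \cite{Charlier-2022,Charlier-2023}).

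\textbf{Entrywise form and analyticity (item 1).} Since $\re^{(x-y)\widehat{\mathcal{U}}}$ acts entrywise, the $(i,j)$ entry of the $j$-th column of \eqref{volX+} carries the factor $\re^{(x-y)(u_i-u_j)(k)}$ with $y>x$. With $u_1-u_2=2\ri k$, $u_1-u_3=4\ri k$ and $u_2-u_3=2\ri k$, we have $|\re^{(x-y)(u_i-u_1)}|=\re^{-(y-x)(i-1)\cdot 2\rim k}\cdot$(sign adjusted). For the first column ($j=1$) the exponents are $(x-y)(u_i-u_1)$ with $u_i-u_1\in\{0,-2\ri k,-4\ri k\}$, so their real parts are $(y-x)\cdot(\text{nonpositive})\cdot\rim k\le 0$ when $\rim k\ge0$. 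The exponentials are therefore bounded by $1$ on $\overline{\mathbb{C}_+}$. The same computation gives boundedness on $\overline{\mathbb{C}_-}$ for $X_{+,3}$, and the reflected statements for $X_{-,3}$ and $X_{-,1}$ on $\mathbb{C}_+$ and $\mathbb{C}_-$ respectively. For the second column the factors $\re^{(x-y)(u_1-u_2)}$ and $\re^{(x-y)(u_3-u_2)}$ grow in opposite half-planes, so only $k\in\mathbb{R}$ is admissible.

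Given this bounded kernel, we write the column equation as $X_{+,1}=e_1+K X_{+,1}$ and solve it by the Neumann series $\sum_n K^n e_1$. The bound $|K^n e_1|\le \frac{1}{n!}\bigl(\int_x^\infty|U_1|\rd y\bigr)^n$ holds because $U_1$ is independent of $k$ and Schwartz, so $\|U_1(y)\|\in L^1$. Each term is analytic in $k$ and the convergence is locally uniform, which yields existence, uniqueness (Gronwall) and analyticity. Differentiating the integral equation in $x$ shows that the solution satisfies \eqref{lax_xpart}.

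\textbf{Estimates (item 2).} We take $f_+(x)=C\int_x^\infty(1+|y|)^n\|U_1(y)\|\rd y$. This is smooth, bounded and rapidly decaying as $x\to+\infty$ since $r_0,q_0\in\mathcal{S}$. Then $|X_+-I|\le \re^{\|U_1\|_{L^1}}\int_x^\infty\|U_1\|$. For the $k$-derivatives, differentiating the kernel $j$ times produces factors $(x-y)^j(u_i-u_l)'$, which is polynomial growth in $y-x$ absorbed by the weight $(1+|y|)^n$. We then apply Gronwall inductively in $j$ to the differentiated Volterra equations. The restriction $|k|>\epsilon$ is only a convenience here, since $\mathcal{U}$ is linear in $k$ and no singularity occurs at $0$. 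I expect this step to be the main technical obstacle, mainly the bookkeeping that keeps the bounds uniform in $k$ on closed half-planes.

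\textbf{Determinant and symmetries (item 3).} Since $\operatorname{tr}U_1=0$ and $\operatorname{tr}[\mathcal{U},X]X^{-1}=0$, Liouville's formula gives $(\det X_\pm)_x=0$. Together with $X_\pm\to I$ as $x\to\pm\infty$ for real $k$, this gives $\det X_\pm=1$.

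For the $\mathcal{A}$-symmetry, set $Y(x,k)=\mathcal{A}^{-1}X_\pm^{\dagger}(x,\bar k)^{-1}\mathcal{A}$, with $X^H=X^\dagger$. Using $\mathcal{U}(\bar k)^\dagger=-\mathcal{U}(k)$, which holds because the $u_j$ are $\ri$ times real multiples of $k$, together with the stated relation $U_1=-\mathcal{A}^{-1}U_1^\dagger\mathcal{A}$ and the commutation of $\mathcal{A}$ with the diagonal $\mathcal{U}$, we check that $Y$ solves \eqref{lax_xpart} with the same normalization at $\pm\infty$. Uniqueness of the Volterra solution then gives $Y=X_\pm$.

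The $\mathcal{B}$-symmetry follows the same way. We have $\mathcal{B}\mathcal{U}(-k)\mathcal{B}=\operatorname{diag}(u_3(-k),u_2(-k),u_1(-k))=\operatorname{diag}(3\ri k,-\ri k,\ldots)$, and this does not equal $\mathcal{U}(k)$ entrywise; it equals $\mathcal{U}(k)-2\ri k I$, which differs only by a scalar multiple of $I$. Since the commutator kills scalars, $\mathcal{B}X_\pm(x,-k)\mathcal{B}$ satisfies the same equation and normalization as $X_\pm(x,k)$, and uniqueness concludes the proof.
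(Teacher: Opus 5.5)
\textbf{Item 2 fails for $j\ge 1$.} Items 1 and 3 are handled correctly, along the standard lines the paper points to. The ingredients are the column-wise sign analysis, the Neumann series with the $\frac{1}{n!}\bigl(\int_x^\infty\|U_1\|\bigr)^n$ bound, Liouville's formula, and uniqueness for both symmetries. For the symmetries, add that for real $k$ the map $\re^{-x\widehat{\mathcal U}}$ only multiplies entries by unimodular factors. So the ODE together with the limit $I$ at $\pm\infty$ integrates back to the Volterra equation. There is also a harmless typo: $\mathcal{B}\mathcal{U}(-k)\mathcal{B}=\operatorname{diag}(\ri k,-\ri k,-3\ri k)$, which does equal $\mathcal{U}(k)-2\ri kI$ as you conclude.

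\textbf{Where the step breaks.} Differentiating $\re^{(x-y)(u_i-u_l)(k)}$ in $k$ produces $(x-y)^j$. This factor is not absorbed by $(1+|y|)^n$ uniformly in $x$. For $x<0<y$ one has $|x-y|=|x|+|y|$, so the source term of the differentiated Volterra equation for $X_+$ is only bounded by $C(1+|x|)^j\int_x^\infty(1+|y|)^j\|U_1(y)\|\rd y$, which grows as $x\to-\infty$. Hence your $f_+$ does not dominate $\partial_k^j(X_+-I)$ on all of $\mathbb{R}$.

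\textbf{No bounded $f_+$ can work.} This is not fixable by sharper bookkeeping; no bounded $f_+$ exists on the real-axis part of the domain. Take compactly supported data and $x$ to the left of the support. Then $X_-=I$, so by \eqref{X_s} $X_+(x,k)=\re^{x\widehat{\mathcal U}(k)}s(k)$. For real $k$ this gives $\partial_k(X_+)_{21}=\re^{-2\ri kx}\bigl(s_{21}'(k)-2\ri x\,s_{21}(k)\bigr)$, which is unbounded as $x\to-\infty$ whenever $s_{21}(k)\neq0$. This is the generic case, since to first order in the data $s_{21}(k)=-2\int\re^{2\ri kx}q_0(x)\,\rd x$. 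The same happens for $X_-$ as $x\to+\infty$. So item 2 as written holds only for $j=0$ on the real axis.

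\textbf{What your argument does prove.} It gives $\bigl|\partial_k^j(X_\pm-I)\bigr|\le(1+|x|)^j f_\pm(x)$, with $f_\pm$ bounded and rapidly decaying at $\pm\infty$. Equivalently, the stated bound holds on every half-line $x\ge x_0$ for $X_+$ (resp.\ $x\le x_0$ for $X_-$), with polynomial growth on the other side. You should state and prove that version. It is also all the later uses need, for example the $k$-derivative estimates for $s(k)$, where the polynomial weight is integrated against the Schwartz coefficient $U_1$.
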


	Equation \eqref{lax_xpart} has the following formal power series solutions
	\begin{equation}\label{X_pm_formal}
		X_{\pm,formal}(x,k)=X_{\pm0}(x)+\frac{X_{\pm1}(x)}{k}+\frac{X_{\pm2}(x)}{k^2}+\cdots,\quad k\to\infty,
	\end{equation}
	where the coefficients $\left\lbrace X_{\pm j}\right\rbrace_1^\infty $ satisfy $\lim\limits_{x\to\pm\infty}X_{\pm j}(x)=0_{3\times3}$ for $j\geq 1$. After calculation, the first few coefficients are as
    follows:
	\begin{align*}
		&X_{\pm0}=I,\\
		& X_{\pm 1}=\begin{pmatrix}
	 		\ri \int_{\pm\infty}^{x}(r_0(x')-\sigma|q_0(x')|^2)\rd x' & \frac{\ri\sigma}{2}\bar{q}_0(x) & -\frac{r_0(x)}{2}\\
	 		-\ri q_0(x) & 0 & \ri q_0(x)\\
	 		\frac{r_0(x)}{2} & -\frac{\ri\sigma}{2}\bar{q}_0(x) & -\ri \int_{\pm\infty}^{x}(r_0(x')-\sigma|q_0(x')|^2)\rd x'
	 	\end{pmatrix}.
	\end{align*}
	The remaining series of coefficients can all be calculated, but they are not given here. We will next describe the behavior of $X_\pm$ when $k$ is sufficiently large.
	
	\begin{prop}\label{prop_Xpm_k_infty}
		Suppose $q_0,\, r_0\in\mathcal{S}(\mathbb{R})$. As $k\to\infty$, $X_{\pm}$ coincide to all orders with $X_{\pm,formal}$, respectively. For any integer $j\geq 0$, $p\geq0$,
		 \begin{equation*}
		 	\left|\frac{\partial^j}{\partial k^j}\left(X_\pm-X_{\pm}^{[p]}\right)  \right| \leq \frac{f_\pm(x)}{\left|k \right|^{p+1} },\quad x\in\mathbb{R},\,\,k\in (\overline{\mathbb{C}_\pm},\mathbb{R},\overline{\mathbb{C}_\mp}),\,\,\left|k \right| \geq 2,
		 \end{equation*}
        where $f_\pm(x)$ are bounded smooth positive functions of $x\in\mathbb{R}$ with rapid decay as $x\to \pm \infty$, respectively, and
		\begin{equation}\label{Xp_infty}
			X^{[p]}_{\pm}(x,k)=I+\frac{X_{\pm1}(x)}{k}+\frac{X_{\pm2}(x)}{k^2}+\cdots+\frac{X_{\pm p}(x)}{k^p}.
		\end{equation}
	\end{prop}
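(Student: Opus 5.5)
The plan is the standard large-$k$ argument for Volterra equations, in the form used by Charlier--Lenells for third-order Lax pairs. Fix $p\ge 0$ and write $X_\pm = X_\pm^{[p]} + E_\pm^{[p]}$. The coefficients $X_{\pm j}$ are first fixed by substituting \eqref{X_pm_formal} into \eqref{lax_xpart} and matching powers of $k$. Since $[\mathcal{U},\cdot]$ acts on the $(i,j)$ entry as multiplication by $\ri k(c_i-c_j)$ with $c=(3,1,-1)$, the order-$k^{-j}$ equation reads
\begin{equation*}
[\mathcal{U}_0, X_{\pm (j+1)}] = \partial_x X_{\pm j} - U_1 X_{\pm j}, \qquad \mathcal{U}_0=\ri\,\mathrm{diag}(3,1,-1).
\end{equation*}
This equation determines the off-diagonal part of $X_{\pm(j+1)}$ algebraically. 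The diagonal part of $X_{\pm(j+1)}$ is then obtained by integrating the diagonal part of the next equation from $\pm\infty$. Inductively, every $X_{\pm j}$ is a differential polynomial in $q_0,\bar q_0,r_0$ and their integrals from $\pm\infty$. Because $q_0,r_0\in\mathcal{S}(\mathbb{R})$, each $X_{\pm j}$ for $j\ge1$ together with all its $x$-derivatives decays rapidly as $x\to\pm\infty$ and is bounded on $\mathbb{R}$. The displayed $X_{\pm1}$ is exactly the first step of this recursion.

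Next I will derive a Volterra equation for the remainder. By construction, $X_\pm^{[p]}$ satisfies \eqref{lax_xpart} up to a defect
\begin{equation*}
\Delta_\pm^{[p]} := \partial_x X_\pm^{[p]} - [\mathcal{U},X_\pm^{[p]}] - U_1X_\pm^{[p]} = \frac{1}{k^{p}}\bigl(\partial_x X_{\pm p} - U_1X_{\pm p}\bigr)^{\rm off},
\end{equation*}
which is of order $k^{-p}$ and rapidly decaying in $x$. To gain one more power, I will apply the construction at order $p+1$ instead, so that the defect is $\mathcal{O}(k^{-p-1})$; the difference between $X_\pm^{[p]}$ and $X_\pm^{[p+1]}$ is then $X_{\pm(p+1)}k^{-p-1}$, which is already of the required size. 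The remainder $E=X_\pm-X_\pm^{[p+1]}$ then satisfies
\begin{equation*}
E(x,k) = \mp\int_x^{\pm\infty} \re^{(x-y)\widehat{\mathcal{U}}(k)}\bigl(U_1E - \Delta^{[p+1]}_\pm\bigr)(y,k)\,\rd y.
\end{equation*}
This is the same Volterra structure as \eqref{volX+}--\eqref{volX-}.

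The main obstacle is that $X_{\pm,2}$ is not analytic, and that the columns of $X_\pm$ lie in different domains. I will therefore carry out the estimate column by column. For columns 1 and 3 in $\overline{\mathbb{C}_\pm}$, respectively $\overline{\mathbb{C}_\mp}$, the exponentials $\re^{(x-y)(u_i-u_j)}$ appearing in the kernel are bounded by $1$ on the integration range, exactly as in the proof of Proposition \ref{prop_Xpm}. For $k\in\mathbb{R}$, $|k|\ge2$, all exponentials are unimodular. Since $U_1$ is independent of $k$, the standard Neumann-series bound gives
\begin{equation*}
|E|\le \Bigl(\int_x^{\pm\infty}|\Delta|\,\rd y\Bigr)\exp\Bigl(\int_{\mathbb{R}}|U_1|\Bigr)\le \frac{f_\pm(x)}{|k|^{p+1}},
\end{equation*}
with $f_\pm(x)=C\int_x^{\pm\infty}(1+|y|)^{-N}\rd y$, which decays rapidly as $x\to\pm\infty$.

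For the $k$-derivatives, I will differentiate the Volterra equation $j$ times in $k$. Each derivative falling on the exponential produces a factor $(x-y)$. These factors are absorbed by the Schwartz decay of $U_1$ and $\Delta$, using $|x-y|\le |x|+|y|$ and replacing $f_\pm$ by a slightly weaker but still rapidly decaying weight. An induction on $j$ then gives the same bound for $\partial_k^j E$. Finally, $\partial_k^j$ of $X_{\pm(p+1)}k^{-p-1}$ is $\mathcal{O}(|k|^{-p-1})$ times a rapidly decaying function of $x$. Combining the two pieces yields the claimed estimate for $\partial_k^j(X_\pm-X_\pm^{[p]})$. Because this holds for every $p$, it also shows that $X_\pm$ coincides to all orders with $X_{\pm,formal}$.
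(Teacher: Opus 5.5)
Your route differs mildly from the paper's one-line proof. The paper works with the quotient $(X_\pm^{[p]})^{-1}X_\pm$, which solves a homogeneous Volterra equation with the $\mathcal{O}(k^{-p})$ potential $-(X_\pm^{[p]})^{-1}\Delta_\pm^{[p]}$. That route needs $X_\pm^{[p]}$ to be invertible, i.e.\ $|k|$ large compared with $\sup_x|X_{\pm j}|$. Your subtraction gives an inhomogeneous Volterra equation instead and needs no inversion. Both routes use the same $p\to p+1$ bootstrap. For $j=0$ your column-by-column Gronwall argument is correct. There is one harmless sign slip: the remainder equation reads $E=-\int_x^{\pm\infty}\cdots$ in both cases, not $\mp$.

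The genuine gap is the $k$-derivative step. For $X_+$ the integration variable satisfies $y\ge x$. In $|x-y|\le|x|+|y|$, only the $|y|$ part is absorbed by the decay of $U_1$ and $\Delta$; nothing absorbs the $|x|$ part as $x\to-\infty$. Your induction therefore yields $|\partial_k^j(X_+-X_+^{[p]})|\le(1+|x|)^j f_+(x)|k|^{-p-1}$. This weight decays rapidly as $x\to+\infty$ but grows polynomially as $x\to-\infty$, so it is not the bounded $f_+$ the statement requires (symmetrically for $X_-$ as $x\to+\infty$).

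No sharper argument can close this, because the bounded version is false for $j\ge1$. Take compactly supported data and real $k$ with $|k|\ge2$. By \eqref{X_s}, together with the rapid decay of $X_--I$ and $\partial_kX_-$ as $x\to-\infty$, one gets $\partial_k(X_+)_{12}(x,k)=2\ri x\,\re^{2\ri kx}s_{12}(k)+\mathcal{O}(1)$. Meanwhile $\partial_k X_+^{[p]}$ stays bounded in $x$. The coefficient $s_{12}(k)$ is generically nonzero: to first order in the data it equals $-\sigma\int_{\mathbb{R}}\re^{-2\ri kx}\bar q_0(x)\,\rd x$.

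So what your argument actually proves, and what is true, is the following:
\begin{itemize}
\item for $j=0$, the estimate holds with bounded $f_\pm$;
\item for $j\ge1$, it holds with the weight $(1+|x|)^j f_\pm(x)$, and in particular with bounded $f_\pm$ on half-lines $\pm x\ge -C$.
\end{itemize}
You should state the result in that form rather than claim boundedness on all of $\mathbb{R}$. The paper's quotient proof does not address this point either. The weighted form suffices for the later uses, such as bounding $\partial_k^j s(k)$, where the $x$-weights are absorbed by $U_1$ inside $\int_{\mathbb{R}}$.
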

	\begin{proof}
		The proof follows by considering the equation satisfied by the quotient $\left(X_{\pm}^{[p]}\right)^{-1}X_\pm$.
	\end{proof}

    The spectral matrix function $s(k)$ is defined by equation \eqref{sk}, and its properties are given by the following proposition.
	\begin{prop}\label{prop_s}
		Given that $q_0, r_0\in \mathcal{S}(\mathbb{R})$, the matrix-valued function $s( k )$ defined by equation \eqref{sk} exhibits the following  properties:
		\begin{enumerate}
			\item The entries of $s(k)$ are defined and continuous for
			\begin{equation}\label{sk_k_define}
				k\in \begin{pmatrix}
					\overline{\mathbb{C}_+} & \mathbb{R} & \mathbb{R}\\
					\mathbb{R} & \mathbb{R} & \mathbb{R}\\
					\mathbb{R} & \mathbb{R} & \overline{\mathbb{C}_-}\\
				\end{pmatrix}\setminus\{0\}.
			\end{equation}
			The function $s_{11}(k)$ is analytic on $\overline{\mathbb{C}_+}$, and the function $s_{33}(k)$ is analytic on $\overline{\mathbb{C}_-}$. The remaining elements of $s(k)$ are not analytic in any region.
            \item \label{prop_sub_sk_ktoinfty}$s(k)$ approaches the identity matrix as $k\to\infty$, and the following equation holds for $j=1,2,\cdots,N$:
             \begin{equation*}
                \left|\partial_k^j\left(s(k)-I-\sum_{j=1}^N\frac{s_j}{k^j}\right)\right|=\mathcal{O}\left(k^{-N-1}\right),\quad k\to\infty,\,\, k\,\,\rm{in} \,\,\eqref{sk_k_define},
            \end{equation*}
            where $\{s_j\}_{j=1}^\infty$ are diagonal matrices.
            
			\item  For $k\in\mathbb{R}\setminus\{0\}$, $\det s(k)=1$, and the function $s( k )$ satisfies the symmetries
			\begin{equation}\label{sym_s}
				\begin{aligned}
				&s^{H}\left(  \bar k \right) =\mathcal{A}s^{-1} (k)\mathcal{A}^{-1},\\
				& s(-k)=\mathcal{B}s(k)\mathcal{B}.
 			\end{aligned}
 			\end{equation}
			\item If $q_0$, $r_0$ have compact support, then $s(k)$ is defined and analytic for $k\in\mathbb{C}\setminus\{0\}$, $\det s(k)=1$, and
            \begin{equation}\label{X_s}
		X_+(x, k )=X_-(x, k )\re^{x\widehat{\mathcal{U}}( k )}s( k ),\quad  k \in \mathbb{C}\setminus \left\lbrace 0 \right\rbrace.
	\end{equation}
		\end{enumerate}
	\end{prop}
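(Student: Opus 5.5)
We read everything off the integral representation \eqref{sk}, $s(k)=I-\int_{\mathbb{R}}\re^{-x\widehat{\mathcal{U}}(k)}(U_1X)(x,k)\,\rd x$, where $X$ is $X_+$ (the convention in \eqref{sk}). The entries of $\re^{-x\widehat{\mathcal{U}}}$ are $\re^{-(u_i-u_j)x}$ with $u_1-u_2=2\ri k$, $u_1-u_3=4\ri k$, $u_2-u_3=2\ri k$.

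\textbf{Domains of definition and analyticity (part 1).} Column by column, $(\re^{-x\widehat{\mathcal{U}}}U_1X_+)_{i1}=\re^{-(u_i-u_1)x}(U_1X_{+,1})_i$. For $i=1$ the exponential is trivial. Since $X_{+,1}$ is analytic and bounded in $\mathbb{C}_+$ by Proposition \ref{prop_Xpm}, and $U_1$ is Schwartz, the integral defining $s_{11}$ converges absolutely and is analytic there. Continuity up to $\mathbb{R}\setminus\{0\}$ follows from the estimate in Proposition \ref{prop_Xpm}(2) with $j=0$ and dominated convergence. The symmetric statement for $s_{33}$ in $\overline{\mathbb{C}_-}$ uses $X_{+,3}$. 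For the remaining entries two things fail off the real axis: the oscillatory factors $\re^{\pm 2\ri kx}$, $\re^{\pm4\ri kx}$ grow in one half-plane, and the second columns $X_{\pm,2}$ exist only for $k\in\mathbb{R}$. So these entries are only defined on $\mathbb{R}\setminus\{0\}$. The point $k=0$ is excluded because the bounds of Proposition \ref{prop_Xpm} require $|k|>\epsilon$.

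\textbf{Large-$k$ expansion (part 2).} Substitute the expansion $X_+=X_+^{[p]}+\mathcal{O}(f_+(x)|k|^{-p-1})$ from Proposition \ref{prop_Xpm_k_infty}. The diagonal entries of $s$ have no oscillatory factors, so they produce directly an asymptotic series $\sum s_j k^{-j}$; in fact $s_{ii}(k)=\lim_{x\to-\infty}(X_+)_{ii}$, and the coefficients are the total integrals of the formal coefficients. For the off-diagonal entries we use the representation $s(k)=\lim_{x\to-\infty}\re^{-x\widehat{\mathcal{U}}}X_+(x,k)$, obtained from \eqref{volX+}. This limit involves terms such as $\re^{-2\ri kx}(X_+)_{12}(x,k)$. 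Because $X_{+j}(x)$ tends to zero as $x\to-\infty$ only after one accounts for $X_-$, we instead integrate by parts repeatedly in $x$, using $\partial_x\re^{-(u_i-u_j)x}=-(u_i-u_j)\re^{-(u_i-u_j)x}$ with $u_i-u_j\sim k$. Each integration by parts gains a factor $k^{-1}$, and boundary terms vanish by Schwartz decay, so the off-diagonal entries are $\mathcal{O}(k^{-N})$ for every $N$. Hence all $s_j$ are diagonal. The $k$-derivatives are controlled the same way using the derivative bounds of Propositions \ref{prop_Xpm} and \ref{prop_Xpm_k_infty}. The polynomial weights in $x$ created by differentiating the exponentials are absorbed by the Schwartz decay of $U_1$.

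\textbf{Determinant and symmetries (part 3).} Since $\operatorname{tr}U_1=0$ and $\operatorname{tr}\mathcal{U}$ does not enter the conjugated equation, $\det X_\pm=1$ for real $k\neq0$. Via the relation $X_+=X_-\re^{x\widehat{\mathcal{U}}}s$, which holds on $\mathbb{R}\setminus\{0\}$ for Schwartz data because both sides are fundamental solutions of \eqref{lax_xpart} normalized by the limit above, we get $\det s=1$. Substituting the symmetries of Proposition \ref{prop_Xpm}(3) into $s=\re^{-x\widehat{\mathcal{U}}}X_-^{-1}X_+$ gives \eqref{sym_s}, using $\mathcal{A}$ diagonal (it commutes with $\re^{x\widehat{\mathcal{U}}}$) and $\mathcal{B}\,\mathcal{U}(-k)\,\mathcal{B}=\mathcal{U}(k)$.

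\textbf{Compact support (part 4).} If $q_0,r_0$ are supported in $[-R,R]$, the Volterra equations have bounded kernels on a finite interval for every $k\in\mathbb{C}$. Hence $X_\pm$ are entire in $k$ away from the degeneracy at $k=0$, and so is $s$. Relation \eqref{X_s} follows from uniqueness of solutions of the ODE together with evaluation at $x<-R$, where $X_-=I$.

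\textbf{Main obstacle.} The main obstacle is part 2: proving that the off-diagonal coefficients vanish to all orders uniformly on real $k$, including all $k$-derivatives. The key tool is the repeated integration by parts, and it requires careful bookkeeping of the non-analytic middle column $X_{+,2}$.
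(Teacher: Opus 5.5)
\textbf{Overall.} The paper states Proposition \ref{prop_s} without proof (it is taken over from the Charlier--Lenells framework), so there is no argument in the paper to compare with. Judged on its own, your parts 1, 3 and 4 are sound. Part 2, however, has a gap at exactly the step you call the key tool.

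\textbf{The gap in part 2.} For $i\neq l$, integrating $s_{il}=-\int_{\mathbb{R}}\re^{-x(u_i-u_l)}(U_1X_+)_{il}\,\rd x$ by parts $N$ times gains $k^{-N}$ only if $\partial_x^m(U_1X_+)$, $m\le N$, is bounded uniformly in $k$. That is not automatic. We have $\partial_xX_+=[\mathcal{U},X_+]+U_1X_+$, and $[\mathcal{U},\cdot]$ multiplies the off-diagonal entries of $X_+$ by $u_m-u_l=\mathcal{O}(k)$.
\begin{itemize}
\item Proposition \ref{prop_Xpm_k_infty} with $p=0$ gives $[\mathcal{U},X_+]=\mathcal{O}(1)$, so the first integration by parts is fine.
\item The second already requires the off-diagonal part of $\partial_xX_+$ to be $\mathcal{O}(k^{-1})$. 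Proposition \ref{prop_Xpm_k_infty} controls only $k$-derivatives, so it does not supply this.
\end{itemize}
The missing lemma is that $\partial_x^m\partial_k^nX_+$ is bounded uniformly for real $|k|\ge2$. It is true, but it needs an argument. Write $\mathcal{U}=\ri kJ$ with $J=\mathrm{diag}(3,1,-1)$. The formal coefficients satisfy $\partial_xX_{+j}-\ri[J,X_{+,j+1}]=U_1X_{+j}$. Hence $R_p=X_+-X_+^{[p]}$ obeys
\[
\partial_xR_p=[\mathcal{U},R_p]+U_1R_p-\ri k^{-p}[J,X_{+,p+1}],
\]
and induction gives $\partial_x^mR_p=\mathcal{O}(k^{m-1-p})$ (similarly after applying $\partial_k^n$); then take $p$ large. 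This cancellation, not the bookkeeping of the middle column, is the real content of part 2.

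\textbf{Your reason for dropping the limit representation is mistaken.} The same recursion shows that the off-diagonal parts of all $X_{+j}$ are Schwartz. Only the diagonal parts tend to nonzero constants as $x\to-\infty$, and those produce the $s_j$. So $s_{il}=\lim_{x\to-\infty}\re^{-x(u_i-u_l)}(X_+)_{il}$ gives $|s_{il}|\le C|k|^{-p-1}$ immediately.

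\textbf{A cleaner route.} This also handles the $k$-derivatives with no integration by parts.
\begin{itemize}
\item Evaluate the scattering relation from your part 3 at $x=0$: $s(k)=X_-(0,k)^{-1}X_+(0,k)=\bigl(X^A_-(0,k)\bigr)^TX_+(0,k)$.
\item Expand both factors using Propositions \ref{prop_Xpm_k_infty} and \ref{prop_XApm_k_infty}.
\item Observe that $F=X_{-,formal}^{-1}X_{+,formal}$ satisfies $F_x=[\mathcal{U},F]$. Comparing powers of $k$ gives $\partial_xF_j=\ri[J,F_{j+1}]$ with $F_0=I$. Since $J$ has distinct entries, every $F_j$ is a constant diagonal matrix, and these are the $s_j$.
\item The same formula, restricted to the columns $X_{+,1}$ and $X^A_{-,1}$ (respectively $X_{+,3}$ and $X^A_{-,3}$), gives $s_{11}$ on $\overline{\mathbb{C}_+}$ (respectively $s_{33}$ on $\overline{\mathbb{C}_-}$).
\end{itemize}

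\textbf{Minor points.} In part 3, $\mathcal{B}\,\mathcal{U}(-k)\,\mathcal{B}=\mathcal{U}(k)-2\ri kI$, not $\mathcal{U}(k)$. This is harmless, because only the differences $u_i-u_j$ enter $\re^{x\widehat{\mathcal{U}}}$. Also, there is no genuine degeneracy at $k=0$ here, since $U_1$ is independent of $k$; in the compactly supported case $s$ is in fact entire. Excluding $0$ is simply inherited from the statement.
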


	Denote $X^A:=\left( X^{-1}\right) ^T$, where the superscript $A$ is defined as the cofactor matrix of $X$, and $X^A$ satisfies the following Lax equation
	\begin{equation}\label{lax_XA}
		\left(X^A \right)_x+\left[ \mathcal{U},X^A\right]=-U_1^TX^A.  
	\end{equation}
    Moreover, $X^A_\pm$ are solutions of the following Volterra integral equations, respectively:
\begin{align}
	&X_+^A(x, k)=I+\int_{x}^{\infty}\re^{-(x-y)\widehat{\mathcal{U}}(k)}(U_1^TX^A_+)(y,k)\rd y,\label{volXA+}\\
	&X_-^A(x, k)=I-\int_{-\infty}^{x}\re^{-(x-y)\widehat{\mathcal{U}}(k)}(U_1^TX_-^A)(y,k)\rd y.\label{volXA-}
\end{align}

	\begin{prop}\label{prop_XApm}
		Suppose $q_0,\, r_0\in\mathcal{S}(\mathbb{R})$, the equations \eqref{volXA+} and \eqref{volXA-} uniquely define the two solutions $X^A_\pm(x,k)$ in equation \eqref{lax_XA}, which satisfy the following properties:
		\begin{enumerate}
        \item The vector functions $X^A_{+,1}(x,k)$ and $X^A_{-,3}(x,k)$  are well-defined and analytic for $x\in\mathbb{R}$ and $\rim k<0$, and the vector functions $X^A_{+,3}(x,k)$ and $X^A_{-,1}(x,k)$  are well-defined and analytic for $x\in\mathbb{R}$ and $\rim k>0$. Additionally, $X^A_{\pm,2}(x,k)$ are not analytic in any domain.

            \item For each $n \geq 1$ and $\epsilon > 0$, there are bounded smooth positive functions $f_\pm(x)$ of $x \in \mathbb{R}$ with rapid decay as $x \to \pm\infty$, respectively, such that the following estimates hold for $x \in \mathbb{R}$ and $j = 0, 1, \ldots, n$:
            \begin{equation*}
                \begin{aligned}
                    &\left| \frac{\partial^j}{\partial k^j} \bigl( X^A_\pm(x, k) - I \bigr) \right| \leq f_\pm(x), \quad k\in (\overline{\mathbb{C}_\mp},\mathbb{R},\overline{\mathbb{C}_\pm}), \ |k| > \epsilon.
                \end{aligned}
            \end{equation*}

			\item For $k\in\mathbb{R}\setminus\{0\}$, $\det X_\pm ^A=1$. The functions $X^A_\pm$ satisfy the following symmetries for $x\in\mathbb{R}$ and $k\in (\overline{\mathbb{C}_\mp},\mathbb{R},\overline{\mathbb{C}_\pm})\setminus\{0\}$:
			\begin{align*}
				&\left( X^A_{\pm}\right)^\dagger\left(  x,\bar k \right)=\mathcal{A}^{-1}X_\pm^T(x,k)\mathcal{A},\\
				&X^A_{\pm}(x,k)=\mathcal{B}X_\pm^A(x,-k)\mathcal{B}.
			\end{align*}

		\end{enumerate}
	\end{prop}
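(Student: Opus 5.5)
The plan is to derive everything for $X^A_\pm$ from the corresponding facts for $X_\pm$ in Proposition \ref{prop_Xpm}, using the identification $X^A_\pm=(X_\pm^{-1})^T$. Only where a direct argument is cleaner will we work with the Volterra equations \eqref{volXA+}--\eqref{volXA-} themselves.

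\textbf{Existence, uniqueness and analyticity.} First we check that $(X_\pm^{-1})^T$ solves \eqref{lax_XA}. Differentiating $X X^{-1}=I$ and using \eqref{lax_xpart} gives $(X^{-1})_x=-X^{-1}U_1-[\mathcal U,X^{-1}]$. Transposing, and using that $\mathcal U$ is diagonal, yields \eqref{lax_XA}. Since $X_\pm\to I$ as $x\to\pm\infty$, the function $(X_\pm^{-1})^T$ also tends to $I$ there, so it satisfies the integral equations \eqref{volXA+}--\eqref{volXA-}.

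Uniqueness and existence follow from the standard Neumann series for Volterra equations. Take \eqref{volXA+} and write its $j$th column. The $(i,j)$ entry of the kernel carries the factor $\re^{-(x-y)(u_i-u_j)}$ with $y>x$. This factor is bounded exactly when $\rre(u_i-u_j)\ge0$, that is, when $(u_j-u_i)\,\rim k\ge0$ for all $i$.
\begin{itemize}
\item For $j=1$ we have $u_1-u_i=2\ri k$ or $4\ri k$, so boundedness requires $\rim k\le 0$.
\item For $j=3$ the differences are $-2\ri k$ and $-4\ri k$, so boundedness requires $\rim k\ge0$.
\item For $j=2$ the differences $u_2-u_1=-2\ri k$ and $u_2-u_3=2\ri k$ have opposite signs, so only $k\in\mathbb R$ is allowed.
\end{itemize}
For $X^A_-$ the integration is over $y<x$ and the inequalities reverse. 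This gives the regions claimed in item 1. Analyticity then follows from uniform convergence of the series of analytic iterates on compacta.

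\textbf{Estimates (item 2).} We copy the argument behind item 2 of Proposition \ref{prop_Xpm}. The bound on $X^A_\pm-I$ comes from the Volterra iteration, $|X^A-I|\le \re^{\int_x^{\pm\infty}|U_1|}-1\le f_\pm(x)$, since $U_1$ is independent of $k$ and Schwartz in $x$. For the $k$-derivatives we differentiate the integral equation. Each derivative of the exponential produces a factor $(x-y)$ times a bounded exponential. This factor is absorbed by the Schwartz decay of $q_0,r_0$ via Gronwall, which gives rapidly decaying $f_\pm$. The restriction $|k|>\epsilon$ mirrors the one for $X_\pm$ and is inherited if one instead expresses $X^A$ as cofactors of $X$.

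\textbf{Determinant and symmetries (item 3).} For the determinant, $\det X^A_\pm=1/\det X_\pm=1$ on $\mathbb R\setminus\{0\}$; alternatively, $\operatorname{tr}U_1^T=0$ makes $\det X^A$ independent of $x$. The $\mathcal A$-symmetry comes from transposing the relation $X^\dagger_\pm(x,\bar k)=\mathcal A X_\pm^{-1}(x,k)\mathcal A^{-1}$ of Proposition \ref{prop_Xpm} ($\mathcal A$ is diagonal and real). This gives $\overline{X_\pm(x,\bar k)}=\mathcal A^{-1}(X_\pm^{-1})^T(x,k)\mathcal A$, which relates $(X^A_\pm)^\dagger$ to $X^T_\pm$ as stated. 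The $\mathcal B$-symmetry follows by inverting and transposing $X_\pm(x,k)=\mathcal BX_\pm(x,-k)\mathcal B$, using $\mathcal B=\mathcal B^T=\mathcal B^{-1}$.

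\textbf{Main obstacle.} The delicate point is that the columns of $X_\pm$ exist only on half-planes, so identities such as $X^A=(X^{-1})^T$ are only meaningful where all entries are defined. On $\mathbb R$ this is fine. Off the axis, the symmetry for each column must instead be obtained by uniqueness. Both sides solve the same Volterra equation in the relevant half-plane; this is checked using the symmetries of $U_1$ under $\mathcal A$ and $\mathcal B$ stated at the start of Section \ref{Section-3}, together with $\mathcal B\,\mathcal U(-k)\,\mathcal B=-\mathcal U(k)$.
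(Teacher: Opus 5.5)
Your overall plan is sound. Existence, uniqueness and the column-wise half-planes of item 1 follow directly from the Volterra equations \eqref{volXA+}--\eqref{volXA-}. On $\mathbb{R}$, the identification $X^A_\pm=(X_\pm^{-1})^T$ legitimately transfers $\det X^A_\pm=1$ and both symmetries from Proposition \ref{prop_Xpm}. Two intermediate signs are off, although your conclusions are right: $(X^{-1})_x=[\mathcal U,X^{-1}]-X^{-1}U_1$, and for $y>x$ the factor $\re^{(y-x)(u_i-u_j)}$ is bounded iff $\rre(u_i-u_j)\le 0$.

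\textbf{The off-axis $\mathcal B$-symmetry step fails as written.} The identity $\mathcal B\,\mathcal U(-k)\,\mathcal B=-\mathcal U(k)$ is false: in fact $\mathcal B\,\mathcal U(-k)\,\mathcal B=\mathrm{diag}(u_3(-k),u_2(-k),u_1(-k))=\mathrm{diag}(\ri k,-\ri k,-3\ri k)=\mathcal U(k)-2\ri k I$. With your version, $\mathcal B X^A_+(x,-k)\mathcal B$ would solve the integral equation with kernel $\re^{+(x-y)\widehat{\mathcal U}(k)}$. That is not \eqref{volXA+}, so uniqueness gives nothing. The correct identity works because scalar shifts drop out of $\widehat{\mathcal U}$: $\mathcal B\bigl(\re^{s\widehat{\mathcal U}(-k)}F\bigr)\mathcal B=\re^{s\widehat{\mathcal U}(k)}(\mathcal BF\mathcal B)$. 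Together with $\mathcal BU_1^T\mathcal B=U_1^T$, the first column of $\mathcal BX^A_+(x,-k)\mathcal B$ (built from the third column of $X^A_+$ at $-k$) then solves the same column equation as $X^A_{+,1}(x,k)$ on $\rim k\le 0$. For the $\mathcal A$-symmetry the inputs are $\overline{\mathcal U(k)}=-\mathcal U(\bar k)$ and $\mathcal A\,\overline{U_1}\,\mathcal A^{-1}=-U_1^T$; these give $X^A_{+,1}(x,\bar k)=\mathcal A\,\overline{X_{+,1}(x,k)}$ for $\rim k\ge 0$. Alternatively, both sides of each symmetry are analytic in the same open half-plane, continuous up to $\mathbb R$ and equal there, hence equal everywhere.

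\textbf{Item 2 for $j\ge 1$ is the second gap.} The factor $(y-x)^j$ produced by $\partial_k^j$ is absorbed by the decay of $U_1(y)$ only at the good end ($x\to+\infty$ for $X^A_+$). For $x\to-\infty$, $y-x$ is of size $|x|$, and Gronwall yields only $|\partial_k^j(X^A_+-I)|\le C(1+|x|)^j$ for $x\le 0$, not a bounded $f_+$. This is a defect of the claimed bound, not of your method. For real $k$, \eqref{volXA+} reads
\begin{equation*}
X^A_+(x,k)=\re^{-x\widehat{\mathcal U}(k)}\bigl(s^A(k)-\int_{-\infty}^x\re^{y\widehat{\mathcal U}(k)}(U_1^TX^A_+)(y,k)\,\rd y\bigr),
\end{equation*}
with $s^A$ from \eqref{sak}. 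Hence $\partial_k(X^A_+)_{31}(x,k)=4\ri x\,\re^{4\ri kx}s^A_{31}(k)+\mathcal O(1)$ as $x\to-\infty$, which is unbounded wherever $r_2(k)\neq 0$; the same happens for $X^A_-$ as $x\to+\infty$. So what can be proved is a bound by smooth functions that decay rapidly at one end and grow at most like $(1+|x|)^j$ at the other. This weaker form is what is needed downstream, where these bounds are integrated against the Schwartz function $U_1$ (e.g.\ for the smoothness of $s^A$).

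Two smaller points on this item. The restriction $|k|>\epsilon$ plays no role in the Volterra argument, since $U_1$ is $k$-independent and the $u_i$ are linear in $k$. And the bounds cannot be inherited through cofactors off $\mathbb R$: the first and third columns of $\mathrm{cof}(X_\pm)$ involve the second column of $X_\pm$, which exists only on $\mathbb R$.
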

	
	Similar to the discussion for $X_\pm$, equation \eqref{lax_xpart} has the following formal power series solutions
	\begin{equation*}
		X^A_{\pm,formal}(x,k)= X_{\pm0}^A(x)+\frac{ X_{\pm1}^A(x)}{k}+\frac{ X_{\pm2}^A(x)}{k^2}+\cdots,\quad k\to\infty,
	\end{equation*}
	where $X^A_{\pm,formal}(x,k)$ be the cofactor matrices of the functions in \eqref{X_pm_formal}.
	
	\begin{prop}\label{prop_XApm_k_infty}
		Suppose $q_0,\, r_0\in\mathcal{S}(\mathbb{R})$. As $k$ approaches infinity, $X^A_{\pm}$ coincide to all orders with $X^A_{\pm,formal}$, respectively. For any integer $j\geq 0$, $p\geq0$,
		\begin{equation*}
			\left|\frac{\partial^j}{\partial k^j}\left(X^A_\pm-\left( X^A_{\pm}\right) ^{[p]}\right)  \right| \leq \frac{f_\pm(x)}{\left|k \right|^{p+1} },\quad x\in\mathbb{R},\,\,\left|k \right| \geq 2,
		\end{equation*}
        where $f_\pm(x)$ are bounded smooth positive functions of $x\in\mathbb{R}$ with rapid decay as $x\to \pm \infty$, respectively, and
		\begin{equation*}
			\left( X^A_{\pm}\right) ^{[p]}(x,k)=I+\frac{ X_{\pm1}^A(x)}{k}+\frac{ X_{\pm2}^A(x)}{k^2}+\cdots+\frac{ X_{\pm p}^A(x)}{k^p}.
		\end{equation*}

	\end{prop}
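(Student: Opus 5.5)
The natural route is to mirror the proof of Proposition \ref{prop_Xpm_k_infty}, now for the adjoint Lax equation \eqref{lax_XA}. Fix $p\geq 0$ and let $(X^A_\pm)^{[p]}$ be the truncation of the formal series $X^A_{\pm,formal}$. First I will confirm that the coefficients $X^A_{\pm j}$ are well defined, smooth and rapidly decaying as $x\to\pm\infty$. They are the cofactor coefficients of \eqref{X_pm_formal}, equivalently $X^A_{\pm,formal}=(X^{-1}_{\pm,formal})^T$. Since $\det X_{\pm,formal}=1+\mathcal{O}(k^{-1})$, this inversion is legitimate as a formal series. Each $X^A_{\pm j}$ is then a polynomial in $X_{\pm 1},\dots,X_{\pm j}$, which decay by construction. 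This gives the normalization $\lim_{x\to\pm\infty}X^A_{\pm j}=0$ for $j\geq1$.

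Inserting the series into \eqref{lax_XA} and matching powers of $k$ yields a recursion. Its off-diagonal part determines $X^A_{\pm,j+1}$ algebraically from $[\mathcal{U},\cdot]$, using that $\mathcal{U}=k\,\mathrm{diag}(3\ri,\ri,-\ri)$ has distinct entries. Its diagonal part determines the diagonal of $X^A_{\pm,j}$ by integration from $\pm\infty$. Because the recursion is satisfied exactly, the truncation obeys
\begin{equation*}
\bigl((X^A_\pm)^{[p]}\bigr)_x+\bigl[\mathcal{U},(X^A_\pm)^{[p]}\bigr]+U_1^T(X^A_\pm)^{[p]}=\frac{\Delta_{\pm}(x,k)}{k^{p}},
\end{equation*}
where $\Delta_\pm$ is a finite sum of rapidly decaying functions times nonpositive powers of $k$.

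Next, for $|k|\geq 2$ the matrix $(X^A_\pm)^{[p]}$ is invertible and close to $I$ once $|k|$ is large. Finite $k$ in the compact range can be absorbed using Proposition \ref{prop_XApm}(2), possibly by enlarging the threshold. I then define $E_\pm:=\bigl((X^A_\pm)^{[p]}\bigr)^{-1}X^A_\pm$. This matrix satisfies
\begin{equation*}
(E_\pm)_x+[\mathcal{U},E_\pm]=F_\pm E_\pm,
\end{equation*}
with $F_\pm=\mathcal{O}(f_\pm(x)|k|^{-p-1})$ after reabsorbing one power of $k$ by working with $p+1$ in place of $p$, and $E_\pm\to I$ as $x\to\pm\infty$. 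Writing this as a Volterra equation of the form \eqref{volXA+}--\eqref{volXA-} with $U_1^T$ replaced by $F_\pm$, the same exponentials $\re^{-(x-y)\widehat{\mathcal{U}}}$ are bounded on the columnwise domains $(\overline{\mathbb{C}_\mp},\mathbb{R},\overline{\mathbb{C}_\pm})$. The standard Neumann series bound then gives $|E_\pm-I|\leq C\,|k|^{-p-1}\int_x^{\pm\infty}f_\pm$, which is again rapidly decaying in the correct direction. Multiplying back by $(X^A_\pm)^{[p]}$ gives the $j=0$ estimate.

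For $\partial_k^j$, I will differentiate the Volterra equation for $E_\pm$. Each $k$-derivative of $\re^{-(x-y)\widehat{\mathcal{U}}}$ produces a factor $(x-y)$, which the Schwartz decay of $r_0,q_0$ (hence of $F_\pm$) absorbs. Induction on $j$ then gives the same bound with a modified $f_\pm$.

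The main obstacle is the middle column. $X^A_{\pm,2}$ is not analytic, so for nonreal $k$ the estimate only holds on $\mathbb{R}$ there, and I must check that the exponential factors $\re^{\pm(u_i-u_2)(x-y)}$ stay bounded on each column's domain. An alternative that sidesteps this is to use directly $X^A_\pm=(X_\pm^{-1})^T$ together with Proposition \ref{prop_Xpm_k_infty}. Inverting $X_\pm=X^{[p]}_\pm+\mathcal{O}(f_\pm|k|^{-p-1})$ and expanding gives the claim wherever both are defined, which on $\mathbb{R}$ is immediate, since $\det X_\pm=1$ there.
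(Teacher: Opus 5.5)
Your strategy is the one the paper intends. The paper gives no separate proof of this proposition, and its only argument for the companion Proposition \ref{prop_Xpm_k_infty} is to study the quotient $(X_\pm^{[p]})^{-1}X_\pm$, which is exactly your $E_\pm$. For $j=0$ your argument is fine. The truncation at level $p$ leaves a remainder that is only $\mathcal{O}(k^{-p})$ (it equals $-k^{-p}[\mathrm{diag}(3\ri,\ri,-\ri),X^A_{\pm(p+1)}]$), so passing to level $p+1$ and then discarding $X^A_{\pm(p+1)}/k^{p+1}$ is the right fix. The kernel $\re^{-(x-y)\widehat{\mathcal U}}$ acting on the $j$-th column is bounded exactly on the column domains $(\overline{\mathbb{C}_\mp},\mathbb{R},\overline{\mathbb{C}_\pm})$, so the middle column is no real obstacle. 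The range $2\le|k|\le K_0$ is handled by crude bounds.

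The step that fails is the one for $\partial_k^j$ with $j\geq1$. The factor $(x-y)$ produced by $\partial_k\re^{-(x-y)\widehat{\mathcal U}(k)}$ is not absorbed by the decay of $F_\pm$. In the $+$ case you get $\int_x^{\infty}(y-x)|F_+(y)|\,\rd y$. This decays rapidly as $x\to+\infty$ but grows like $|x|\,\|F_+\|_{L^1}$ as $x\to-\infty$. For real $k$ (the only domain of the middle column, and the boundary of the other two) there is no exponential damping to compensate. So your induction produces $f_\pm$ that decay rapidly at $\pm\infty$ but are of size $(1+|x|)^j$ at $\mp\infty$, not bounded ones.

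This cannot be repaired, because the bounded version is false for $j\geq1$. Suppose $q_0,r_0$ vanish outside $[-L,L]$ and $x<-L$. Then $X^A_-=I$, and Proposition \ref{prop_sA}(4) gives $X^A_+(x,k)=\re^{-x\widehat{\mathcal U}(k)}s^A(k)$, while $(X^A_+)^{[p]}(x,k)$ is diagonal and independent of $x$ there. Hence for real $k$ and $i\neq j$, the entry $\partial_k\bigl(X^A_+-(X^A_+)^{[p]}\bigr)_{ij}$ contains $-x\,(u_i'-u_j')\,s^A_{ij}(k)\,\re^{-x(u_i-u_j)}$. This is unbounded as $x\to-\infty$ whenever $s^A_{ij}(k)\neq0$, which is generic (e.g. $s^A_{31}$, the numerator of $r_2$). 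The same example shows that Proposition \ref{prop_XApm}(2), which you invoke on $2\le|k|\le K_0$, has the same defect for $j\ge1$.

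What your argument does prove, and what is actually usable, is the estimate with a weight $(1+|x|)^j$ at $\mp\infty$ (equivalently, bounded $f_\pm$ on half-lines $\pm x\ge x_0$). That suffices whenever the estimate is integrated against the Schwartz matrix $U_1$, as in the large-$k$ expansion of $s^A$. Your fallback through $X^A_\pm=(X_\pm^{-1})^T$ inherits the same weight from Proposition \ref{prop_Xpm_k_infty}. It also yields the estimate only for real $k$, since each column of the cofactor matrix involves two columns of $X_\pm$ that are simultaneously defined only on $\mathbb{R}$.
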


	\begin{prop}\label{prop_sA}
		Given that $q_0, r_0\in \mathcal{S}(\mathbb{R})$, the matrix-valued function $s^A( k )$ defined by equation \eqref{sak} exhibits the following  properties:
		\begin{enumerate}
			\item The entries of $s^A(k)$ are defined and continuous for
			\begin{equation}\label{sak_define}
				k\in \begin{pmatrix}
					\overline{\mathbb{C}_-} & \mathbb{R} & \mathbb{R}\\
					\mathbb{R} & \mathbb{R} & \mathbb{R}\\
					\mathbb{R} & \mathbb{R} & \overline{\mathbb{C}_+}\\
				\end{pmatrix}\setminus\{0\}.
			\end{equation}
			The function $s^A_{11}(k)$ is analytic on the upper half complex plane $({\rm{Im}}\, k <0)$, and the function $s^A_{33}(k)$ is analytic on the lower half complex plane $({\rm{Im}}\, k >0)$. The remaining elements of $s^A(k)$ are not analytic in any region.

            \item $s^A(k)$ approaches the identity matrix as $k\to\infty$, and the following equation holds for $j=1,2,\cdots,N$:
            \begin{equation*}
                \left|\partial_k^j\left(s^A(k)-I-\sum_{j=1}^N\frac{s^A_j}{k^j}\right)\right|=\mathcal{O}\left(k^{-N-1}\right),\quad k\to\infty,\,\, k\,\,\rm{in} \,\,\eqref{sak_define},
            \end{equation*}
             where $\{s^A_j\}_{j=1}^\infty$ are diagonal matrices.

			\item  For $k\in\mathbb{R}\setminus\{0\}$, $\det s(k)=1$, and the function $s^A( k )$ satisfies the symmetries
			\begin{equation}\label{sym_sA}
				\begin{aligned}
					&\left(s^A\right)^\dagger\left(  \bar k \right)=\mathcal{A}^{-1}s^{T} (k)\mathcal{A},\\
					& s^A(-k)=\mathcal{B}s^A(k)\mathcal{B}.
				\end{aligned}
			\end{equation}

			\item If $q_0$, $r_0$ have compact support, then $s^A(k)$ is defined and analytic for $k\in\mathbb{C}\setminus\{0\}$, $\det s^A(k)=1$, and 
        \begin{equation*}
		X^A_+(x, k )=X^A_-(x, k )\re^{-x\widehat{\mathcal{U}}( k )}s^A( k ),\quad  k \in \mathbb{C}\setminus \left\lbrace 0 \right\rbrace.
	\end{equation*}
		\end{enumerate}
	\end{prop}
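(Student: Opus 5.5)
The plan is to derive every property of $s^A$ from the Volterra equations \eqref{volXA+}--\eqref{volXA-} and from Propositions \ref{prop_XApm} and \ref{prop_XApm_k_infty}, mirroring the proof of Proposition \ref{prop_s}.

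\textbf{Domains and analyticity.} Write the integral in \eqref{sak} column by column. The $j$-th column of $\re^{x\widehat{\mathcal U}}(U_1^TX^A_+)$ carries the factors $\re^{x(u_i-u_j)}$ multiplying $(U_1^TX^A_{+,j})_i$. For the first column ($j=1$) we have $u_i-u_1\in\{0,-2\ri k,-4\ri k\}$. Combined with the domain of $X^A_{+,1}$ from Proposition \ref{prop_XApm}, the integrand is bounded by a rapidly decaying function of $x$ for $k\in\overline{\mathbb{C}_-}$.

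The half-line decay of $X^A_+$ is not enough on its own, since the integral in \eqref{sak} runs over all of $\mathbb{R}$. To handle this, use the equivalent representation
\[
s^A=\lim_{x\to-\infty}\re^{x\widehat{\mathcal U}}X^A_+(x,k),
\]
or equivalently express $s^A$ through $(X^A_-)^{-1}X^A_+$ evaluated at $x=0$ for compactly supported data and then pass to the limit. This gives $s^A_{11}$ analytic in $\rim k<0$ and continuous up to $\mathbb{R}\setminus\{0\}$. The same argument with the third column gives $s^A_{33}$ on $\overline{\mathbb{C}_+}$. All other entries involve $X^A_{+,2}$ or exponentials of both signs, so they exist only on $\mathbb{R}$. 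Dominated convergence then gives continuity, using the bounds $|X^A_\pm-I|\le f_\pm$.

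\textbf{Large-$k$ expansion.} Substitute the expansion of Proposition \ref{prop_XApm_k_infty} into \eqref{sak}. The off-diagonal entries carry oscillatory factors $\re^{\pm 2\ri kx}$ or $\re^{\pm4\ri kx}$ times Schwartz functions of $x$. Integrating by parts repeatedly, or observing that the off-diagonal parts of the formal coefficients are local expressions in the potentials, shows these entries are $\mathcal O(k^{-N})$ for every $N$. The diagonal entries have no oscillation and produce the coefficients $s^A_j$, so each $s^A_j$ is diagonal. The $k$-derivatives are handled in the same way, using the derivative bounds in Propositions \ref{prop_XApm} and \ref{prop_XApm_k_infty}.

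\textbf{Determinant and symmetries.} Since $\mathrm{tr}\,U_1^T=0$, Liouville's formula gives $\det X^A_\pm=1$. Passing to the limit $x\to-\infty$ then yields $\det s^A=1$, and the same holds for $s$ via $s^A=(s^{-1})^T$. For the symmetries, take the $x\to-\infty$ limits of the identities in Proposition \ref{prop_XApm}(3). Concretely,
\[
(s^A)^\dagger(\bar k)=\lim \bigl(\re^{x\widehat{\mathcal U}(\bar k)}X^A_+\bigr)^\dagger=\mathcal A^{-1}\lim\bigl(\re^{-x\widehat{\mathcal U}(k)}X_+^T\bigr)\mathcal A .
\]
Here one uses that $\mathcal U(\bar k)^\dagger=-\mathcal U(k)$ and that $\mathcal A$ is diagonal, so it commutes with $\re^{\mathcal U}$. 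The limit on the right is $\mathcal{A}^{-1}s^T(k)\mathcal{A}$. The $\mathcal B$-symmetry follows from $\mathcal B\,\mathcal U(-k)\,\mathcal B=-\mathcal U(k)$ together with the $\mathcal B$-symmetry of $X^A_\pm$.

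\textbf{Compact support.} If $q_0,r_0$ have compact support, the Volterra equations have finite-range integrals, so $X^A_\pm$ are entire in $k\neq0$. Both $X^A_+$ and $X^A_-\re^{-x\widehat{\mathcal U}}s^A$ solve \eqref{lax_XA}, so they differ by a constant matrix. Evaluating to the left of the support identifies that matrix as $s^A$, which is therefore analytic on $\mathbb{C}\setminus\{0\}$ and satisfies the stated relation.

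\textbf{Main obstacle.} I expect the hardest step to be justifying the whole-line integral in \eqref{sak} on the claimed complex domains, since $X^A_+$ only has control as $x\to+\infty$. I will handle it through the limit representation of $s^A$, approximating by compactly supported data and using the uniform estimates of Proposition \ref{prop_XApm}. A further technical point is that the uniform bounds hold only for $|k|>\epsilon$, which is why the point $0$ is excluded from the domains.
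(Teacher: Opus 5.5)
The paper gives no proof of this proposition; it follows the procedure of the Charlier--Lenells references. Your route is the standard one: Volterra bounds for the domains, Proposition~\ref{prop_XApm_k_infty} for the expansion, Liouville for the determinant, $x\to-\infty$ limits of the symmetries of $X^A_\pm$, and ODE uniqueness for compact support. However, the symmetry step fails as written.

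\textbf{The $\mathcal B$-symmetry.} The identity $\mathcal B\,\mathcal U(-k)\,\mathcal B=-\mathcal U(k)$ is false. In fact $\mathcal B\,\mathcal U(-k)\,\mathcal B=\mathrm{diag}(u_3(-k),u_2(-k),u_1(-k))=\mathrm{diag}(\ri k,-\ri k,-3\ri k)=\mathcal U(k)-2\ri k I$. With your identity, conjugating \eqref{sak} at $-k$ by $\mathcal B$ would put $\re^{-x\widehat{\mathcal U}(k)}$ in the integrand. For real $k$ this evaluates the off-diagonal Fourier integrals at the opposite frequencies, so it does not give $s^A(k)$. What makes the argument close is that the discrepancy is a scalar multiple of $I$, which drops out of the commutator. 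Hence $\mathcal B\bigl(\re^{x\widehat{\mathcal U}(-k)}F\bigr)\mathcal B=\re^{x\widehat{\mathcal U}(k)}(\mathcal BF\mathcal B)$. Combined with $\mathcal BU_1^T\mathcal B=U_1^T$ and the $\mathcal B$-symmetry of $X^A_+$, this yields $\mathcal Bs^A(-k)\mathcal B=s^A(k)$.

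\textbf{The $\dagger$-symmetry.} Your computation has a related sign slip. The adjoint reverses the order of the factors, so $\bigl(\re^{x\widehat{\mathcal U}(\bar k)}B\bigr)^\dagger=\re^{x\widehat{\mathcal U}(k)}B^\dagger$. It is $\re^{x\widehat{\mathcal U}(k)}X_+^T=\bigl(\re^{-x\widehat{\mathcal U}(k)}X_+\bigr)^T$ that tends to $s^T(k)$. The expression $\re^{-x\widehat{\mathcal U}(k)}X_+^T$ you wrote has in general no limit as $x\to-\infty$.

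\textbf{The domain claims.} You have misplaced the difficulty here. Your claim that the whole first column of the integrand is controlled on $\overline{\mathbb C_-}$ is false. For $\rim k<0$ the factors $\re^{x(u_2-u_1)}=\re^{-2\ri kx}$ and $\re^{-4\ri kx}$ grow exponentially as $x\to-\infty$. That growth, not any lack of decay of $X^A_+$, is why $s^A_{21}$ and $s^A_{31}$ live only on $\mathbb R$. For $s^A_{11}$ there is no exponential factor at all. On $\overline{\mathbb C_-}$ the kernels in the Volterra equation for $X^A_{+,1}$ have modulus at most $1$, so Gronwall bounds $X^A_{+,1}(x,k)$ uniformly in $x\in\mathbb R$. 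Hence $|(U_1^TX^A_+)_{11}|\le C(|q_0|+|r_0|)\in L^1(\mathbb R)$, and continuity and analyticity follow directly from dominated convergence and Morera. Your ``main obstacle'' is therefore not one. The detour through compactly supported approximants adds nothing, since passing to that limit needs exactly this uniform bound.

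\textbf{A smaller point.} Drop the alternative claim that ``the off-diagonal parts of the formal coefficients are local.'' The diagonal of the first coefficient already contains the nonlocal term $\int_{+\infty}^x(r_0-\sigma|q_0|^2)\,\rd x'$, and it feeds into the off-diagonal entries at the next order. Your integration-by-parts route works anyway, because each $U_1^TX^A_{+j}$ is Schwartz.
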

	
   For $k \in \mathbb{C}_\pm$, we write $M(x,k) = M_\pm(x,k)$, where $M_+(x,k)$ and $M_-(x,k)$ are matrix-valued solutions of equation \eqref{lax_X}, defined by the following Fredholm integral equations:
	\begin{equation}\label{Mn_fredholm}
			\left( M_\pm\right)_{ij}(x,k) =\delta_{ij}+\int_{\gamma _{ij}^\pm}\left( \re^{(x-y)\widehat{\mathcal{U}}(k)}\left(U_1M_\pm \right) \right)_{ij}\rd y,
		\qquad i,j=1,2,3, 
	\end{equation}
	where
	\begin{equation*}\delta_{ij}=\left\lbrace\begin{aligned}
			&1,\quad i=j,\\
			&0,\quad i\neq j,\\
		\end{aligned} \right. \qquad
		\gamma _{ij}^+=\left\lbrace
		\begin{aligned}
			&\left(-\infty,x \right),\quad i< j, \\
			&\left(+\infty,x \right),\quad i\geq j,  
		\end{aligned} \right. \qquad
		\gamma _{ij}^-=\left\lbrace
		\begin{aligned}
			&\left(-\infty,x \right),\quad i> j, \\
			&\left(+\infty,x \right),\quad i\leq j. 
		\end{aligned} \right. \qquad
	\end{equation*}
The contours $\gamma _{ij}^\pm$ are selected such that, for $k\in \mathbb{C}_\pm$ and $y\in \gamma _{ij}^\pm$, the exponents in the off-diagonal entries of the matrix $\re^{(x-y)\widehat{\mathcal{U}}(k)}\left(U_1M\right) $ appearing in the integral equation \eqref{Mn_fredholm} are bounded. The definition of $M_\pm$ via \eqref{Mn_fredholm} can be extended continuously to the boundary of $\mathbb{C}_\pm$. The following proposition demonstrates that every entry of $M_\pm$ is well-defined for $k\in \mathbb{C}_\pm\setminus \mathcal{Q}$, where
\[
\mathcal{Q}=\mathcal{Z}\cup \{0\},
\]
and $\mathcal{Z}$ denotes the set of zeros of the Fredholm determinants associated with \eqref{Mn_fredholm}.
	
	\begin{prop}\label{prop_Mn_basic}
	Suppose $q_0,\, r_0\in\mathcal{S}(\mathbb{R})$, the equation \eqref{Mn_fredholm} uniquely define the two solutions $M_\pm(x,k)$ in equation \eqref{lax_X}, which satisfy the following properties for $ x \in \mathbb{R}$:
	\begin{enumerate}
        \item The function $M_\pm(x,k)$ is defined for $x\in \mathbb{R}$ and $k\in \overline{\mathbb{C}_\pm}\setminus \mathcal{Q}$. For each fixed $k\in \overline{\mathbb{C}_\pm}\setminus \mathcal{Q}$, $M_\pm(\cdot, k)$ is smooth and satisfies equation \eqref{lax_X}. For each fixed $x\in\mathbb{R}$, the matrix-valued function $M_\pm(x, \cdot)$ is continuous on $\overline{\mathbb{C}_\pm}\setminus \mathcal{Q}$ and analytic on $\mathbb{C}_\pm \setminus \mathcal{Q}$.
		\item For each $\epsilon>0$, there exists $C:=C(\epsilon)$ such that
		\begin{equation*}
			\left|M_\pm(x,k) \right| \leq C,\quad  k\in \overline{\mathbb{C}_\pm},\,\, {\rm{dist}}(k,\mathcal{Q})\geq \epsilon. 
		\end{equation*}
		\item For each $j\in \mathbb{N}_+$, the partial derivative $\frac{\partial^j M_\pm}{\partial k^j}(x, \cdot)$ has a continuous extension to $ \overline{\mathbb{C}_\pm} \setminus \mathcal{Q}$.
		\item For $k \in \overline{\mathbb{C}_\pm} \setminus \mathcal{Q}$ and $x\in\mathbb{R}$, $\det M_\pm(x, k) = 1$ .
		\item The sectionally analytic function $ M(x, k) = M_\pm(x, k) $ for $k \in \mathbb{C}_\pm$ satisfies the symmetries
        \begin{equation*}\label{M-sym}
            \begin{aligned}
			&M^{-1}(x, k) = \mathcal{A}^{-1}M^\dagger(x, \bar k) \mathcal{A},\\
            &M(x, k)=\mathcal{B} M(x,-k) \mathcal{B},
		\end{aligned}  \qquad k \in \mathbb{C} \setminus \mathcal{Q}.
        \end{equation*}
		\end{enumerate}
	\end{prop}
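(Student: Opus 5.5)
The plan is the standard Fredholm-theory approach of Charlier--Lenells for $3\times3$ problems, applied to \eqref{Mn_fredholm} one column at a time. For fixed $j$, the $j$th column of $M_\pm$ satisfies a linear integral equation $(I-\mathcal{K}_j^\pm(k))m_j=e_j$ on $L^\infty(\mathbb{R})\otimes\mathbb{C}^3$. The kernel is $\mathcal{K}_j^\pm(k)f(x)_i=\int_{\gamma^\pm_{ij}}\re^{(x-y)(u_i-u_j)(k)}(U_1f)_i(y)\,\rd y$.

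\textbf{Step 1: Fredholm structure and the definition of $M_\pm$.}
\begin{itemize}
\item The contours $\gamma_{ij}^\pm$ are chosen so that $\rre\big((x-y)(u_i-u_j)(k)\big)\le0$ for $k\in\overline{\mathbb{C}_\pm}$.
\item Hence each kernel is bounded by a rapidly decaying multiple of $|U_1(y)|$, uniformly in $x$, on compact subsets of $\overline{\mathbb{C}_\pm}\setminus\{0\}$.
\item Using Schwartz decay of $q_0,r_0$, I will show that $\mathcal{K}_j^\pm(k)$ is a limit of finite-rank operators; equivalently, it is Hilbert--Schmidt after conjugating by a weight $(1+|y|)^{-1}$.
\item It is also analytic in $k\in\mathbb{C}_\pm$ and continuous up to $\mathbb{R}\setminus\{0\}$. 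The point $k=0$ is excluded because there $u_1-u_2$, $u_2-u_3$, $u_1-u_3$ all vanish and the oscillatory gain is lost.
\item A Fredholm determinant $\Delta_j^\pm(k)$ is then defined. It is analytic in $\mathbb{C}_\pm$ and continuous up to the boundary away from $0$, and its zero set is $\mathcal{Z}$.
\item Off $\mathcal{Q}$, analytic Fredholm theory gives a unique bounded solution, analytic in $k$ on $\mathbb{C}_\pm\setminus\mathcal{Q}$ and continuous to the boundary. This proves item 1.
\item Smoothness in $x$, and the fact that $M_\pm$ solves \eqref{lax_X} at $t=0$, follow by differentiating the integral equation, using $\partial_x\re^{(x-y)\widehat{\mathcal{U}}}=[\mathcal{U},\cdot\,]$.
\end{itemize}

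\textbf{Step 2: bounds and $k$-derivatives.}
\begin{itemize}
\item For item 2, use that the resolvent $(I-\mathcal{K})^{-1}$ is continuous on $\{\mathrm{dist}(k,\mathcal{Q})\ge\epsilon\}\cap\{|k|\le R\}$.
\item For $|k|>R$, compare with $X_\pm$ from Proposition \ref{prop_Xpm_k_infty}. Alternatively, note that $\|\mathcal{K}^n\|$ is small for large $|k|$ by an integration by parts in $y$ (Riemann--Lebesgue type decay), so a Neumann series applies. This gives the uniform bound $C(\epsilon)$.
\item For item 3, differentiate the integral equation in $k$. Each derivative produces factors $(x-y)^m\re^{(x-y)(u_i-u_j)}$.
\item These factors are controlled because $U_1$ decays faster than any polynomial. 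The differentiated equation has the same operator $I-\mathcal{K}$ with an inhomogeneity in $L^\infty$, so continuity up to $\overline{\mathbb{C}_\pm}\setminus\mathcal{Q}$ follows by induction on $j$.
\end{itemize}

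\textbf{Step 3: determinant and symmetries.}
\begin{itemize}
\item Since $\operatorname{tr}U_1=0$ and $\operatorname{tr}[\mathcal{U},M]=0$, Liouville's formula gives $\partial_x\det M_\pm=0$.
\item The normalization of \eqref{Mn_fredholm} forces $M_\pm\to$ a triangular matrix with unit diagonal as $x\to\pm\infty$ along the appropriate directions. To make this precise I will take $x\to+\infty$ in the lower-triangular entries and $x\to-\infty$ in the upper ones, using the exponential decay of the off-diagonal terms for $k$ off the real axis. Continuity then extends the result to the boundary, giving $\det M_\pm=1$.
\item For the symmetries, use the relations $U_1=-\mathcal{A}^{-1}U_1^\dagger\mathcal{A}$ and $U_1=\mathcal{B}U_1\mathcal{B}$ stated at the start of Section \ref{Section-3}. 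These show that $\mathcal{A}^{-1}(M^\dagger(x,\bar k))^{-1}\mathcal{A}$ and $\mathcal{B}M(x,-k)\mathcal{B}$ satisfy \eqref{lax_X}.
\item Conjugation by $\dagger$ and $\bar k$ swaps $\mathbb{C}_+\leftrightarrow\mathbb{C}_-$, and $k\mapsto-k$ combined with $\mathcal{B}$ permutes indices $1\leftrightarrow3$ while $u_1(-k)=u_3(k)$. So one checks that both transformed functions satisfy the same Fredholm equation \eqref{Mn_fredholm}, with the contour prescriptions $\gamma^\pm_{ij}$ mapped correctly.
\item Uniqueness from Step 1 then gives the identities off $\mathcal{Q}$. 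A side consequence is that $\mathcal{Q}$ is symmetric under $k\mapsto\bar k$ and $k\mapsto-k$.
\end{itemize}

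\textbf{Main obstacle.} I expect the hardest part to be the compactness and uniform control in Step 1 near $\mathbb{R}$ and for large $|k|$. On the real axis the exponentials only oscillate, and the column operators mix entries integrated from $+\infty$ and from $-\infty$. Showing compactness, rather than just boundedness, on a space where the Fredholm determinant is meaningful needs a weighted $L^2$ or trace-class reformulation. My first attempt will be to split $U_1=U_1^{1/2}|U_1|^{1/2}$ and reduce to a Hilbert--Schmidt kernel on $L^2(\mathbb{R})$.
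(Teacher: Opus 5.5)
Your route (column-by-column Fredholm equations, Liouville's formula for the determinant, uniqueness for the symmetries) is the Charlier--Lenells scheme the paper relies on; the paper itself gives no proof of this proposition. As written, however, Step~1 fails where you assert that $\mathcal{K}_j^\pm(k)$, and hence the resolvent, is continuous up to $\mathbb{R}\setminus\{0\}$.

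On $L^\infty(\mathbb{R})\otimes\mathbb{C}^3$ the map $k\mapsto\mathcal{K}_j^\pm(k)$ is norm-analytic in the open half-plane. It is not norm-continuous at real points, not even as $\rim k\to0^+$. The kernel of $\mathcal{K}(k)-\mathcal{K}(k')$ is $(\re^{(x-y)(u_i-u_j)(k)}-\re^{(x-y)(u_i-u_j)(k')})(U_1)_{i\cdot}(y)$. For suitable $x$ of size $|k-k'|^{-1}$ the two exponentials differ by an amount of order one for all $y$ in the bulk of $U_1$, so $\sup_x\int|\cdots|\,\rd y\not\to0$. This is intrinsic: for real $k$, $(M_+)_{13}(x,k)$ behaves like $\re^{4\ri kx}$ times a function of $k$ as $x\to+\infty$, so $M_+$ is continuous in $k$ only pointwise in $x$, never in $L^\infty_x$. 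Hence the boundary continuity in item~1 and the bound in item~2 near $\mathbb{R}$ are not proved by your argument. The real obstacle is this continuity, not compactness; each single $\mathcal{K}(k)$ is already compact on $L^\infty$.

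Either of two standard fixes works:
\begin{itemize}
\item Use Fredholm's original series. Its terms are controlled by Hadamard's inequality because the kernel is dominated by $|U_1(y)|\in L^1$, so the determinant and the first Fredholm minor are continuous in $k$ by dominated convergence.
\item Use the symmetrized version you mention, $U_1=w^{1/2}\Omega$ with $w=|U_1|$. The kernel $w(x)^{1/2}\re^{(x-y)(u_i-u_j)}\Omega(y)$ is Hilbert--Schmidt, and since both variables now carry integrable weights it is continuous in Hilbert--Schmidt norm on $\overline{\mathbb{C}_\pm}$ by dominated convergence.
\end{itemize}
Homogeneous solutions correspond bijectively between the $L^\infty$ and symmetrized settings, so $\mathcal{Z}$ is unchanged.

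Several smaller points also need repair.
\begin{itemize}
\item In Step~2 the inhomogeneity $(\partial_k\mathcal{K})M$ is not in $L^\infty$ for real $k$. It contains $\int_{-\infty}^x(x-y)\re^{(x-y)(u_i-u_j)}(U_1M)_{ij}\,\rd y$, which grows like $|x|$ (like $|x|^j$ at order $j$). The symmetrized setting absorbs this, since $w(x)^{1/2}(1+|x|)^j\in L^2$, and that suffices because item~3 is pointwise in $x$.
\item In Step~3, $u_1(-k)=-3\ri k\neq u_3(k)=-\ri k$. What holds is $\mathcal{B}\,\mathcal{U}(-k)\,\mathcal{B}=\mathcal{U}(k)-2\ri kI$, and the scalar shift drops out of $[\mathcal{U},\cdot\,]$ and of $\re^{(x-y)\widehat{\mathcal{U}}}$. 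Together with $\mathcal{B}U_1\mathcal{B}=U_1$ and $\gamma^-_{4-i,4-j}=\gamma^+_{ij}$, the $\mathcal{B}$-symmetry then does follow directly from \eqref{Mn_fredholm}.
\item The $\mathcal{A}$-symmetry cannot be checked on \eqref{Mn_fredholm} because of the inverse. For $k\in\mathbb{C}_+$, show that $\tilde M=\mathcal{A}^{-1}(M_-^\dagger(x,\bar k))^{-1}\mathcal{A}$ solves \eqref{lax_X}, is bounded (as $\det M_-=1$), and tends to $I$ as $x\to+\infty$. Then recover \eqref{Mn_fredholm} by variation of constants; the boundary terms vanish by exponential decay off the diagonal and by the normalization on it.
\item The invariance of $\mathcal{Z}$ under $k\mapsto\bar k$ is not automatic from uniqueness. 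Suppose $h$ were a nontrivial homogeneous solution for column $j$ at $\bar k$. Then $h=N\re^{x(\mathcal{U}-u_j)(\bar k)}c$ with $N=\mathcal{A}^{-1}(M_+^\dagger(x,k))^{-1}\mathcal{A}$ and $c$ constant. Boundedness of $h$ and $N^{-1}$ forces $c\parallel e_j$, and $h_j\to0$, $N_{jj}\to1$ as $x\to+\infty$ force $c=0$.
\item For $\det M_\pm=1$ you cannot send different entries to different ends. Instead, as $x\to+\infty$ the entries of $M_+$ with $i\ge j$ (of $M_-$ with $i\le j$) tend to $\delta_{ij}$ for every admissible $k$, real or not, while the others stay bounded. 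This already forces the constant $\det M_\pm$ to equal $1$.
\end{itemize}
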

	
	\begin{lem}\label{M_kinfty}
		Suppose $q_0,\, r_0\in\mathcal{S}(\mathbb{R})$ and $q_0,\, r_0\not\equiv0$. For any integer $p>0$, the function $X_+^{[p]}$ defined in \eqref{Xp_infty} such that
		\begin{equation*}
			\left| M(x,k)-X_+^{[p]}(x,k)\right| \leq \frac{C}{\left| k\right| ^{p+1}},\qquad x\in\mathbb{R},\,\,k\in \mathbb{C}\setminus\mathbb{R},\,\,\left| k\right| \geq 2.
		\end{equation*}
	\end{lem}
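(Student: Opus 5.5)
We prove the estimate in the upper half-plane; the lower half-plane then follows from the symmetry $M(x,k)=\mathcal{B}M(x,-k)\mathcal{B}$ of Proposition \ref{prop_Mn_basic}, since $X_+^{[p]}$ obeys the same symmetry by Proposition \ref{prop_Xpm}. The approach follows the standard large-$k$ argument for Fredholm-defined eigenfunctions of third-order Lax pairs (as in \cite{Charlier-2022,Charlier-2023}). Fix $p>0$ and $k\in\mathbb{C}_+$ with $|k|\ge 2$, and note that zeros of the Fredholm determinants cannot accumulate at infinity: the Fredholm operator in \eqref{Mn_fredholm} tends to zero in norm as $|k|\to\infty$. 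So after enlarging the threshold if needed (with $|k|$ between $2$ and the threshold handled by compactness and Proposition \ref{prop_Mn_basic}(2), assuming $\mathcal{Q}$ avoids this annulus or absorbing it in the constant), we may work with $|k|$ large.

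\textbf{Step 1: an approximate solution and its defect.} Let $X^{[p]}_+$ be as in \eqref{Xp_infty}, taking the formal expansion \eqref{X_pm_formal} to a higher order $p+1$ (or $p+2$) so that $X^{[p+1]}_+$ satisfies \eqref{lax_xpart} up to an error
\[
\Delta(x,k):=\partial_x X^{[p+1]}_+-[\mathcal{U},X^{[p+1]}_+]-U_1X^{[p+1]}_+=\mathcal{O}\bigl(f(x)|k|^{-p-1}\bigr).
\]
Here $f$ is rapidly decaying, because the $X_{+j}$ are built recursively from $q_0,r_0$ and their derivatives and integrals. One subtlety is that the diagonal entries of $X_{+1}$ contain $\int_{+\infty}^x(r_0-\sigma|q_0|^2)$, which is not decaying at $-\infty$. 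The coefficients are nevertheless bounded, and $\Delta$ involves only $U_1$ times coefficients, so it decays.

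\textbf{Step 2: an integral equation for the difference.} Set $E:=M_+-X^{[p+1]}_+$. Then $E$ solves the same Fredholm-type system \eqref{Mn_fredholm}, with the same contours $\gamma^+_{ij}$, with inhomogeneous term
\[
F_{ij}(x,k)=\int_{\gamma^+_{ij}}\bigl(\re^{(x-y)\widehat{\mathcal{U}}(k)}\Delta\bigr)_{ij}\rd y
\]
in place of $\delta_{ij}$. The boundary normalizations of $X_{+j}$ at $+\infty$ match the contours for $i\ge j$; for the entries with $i<j$ (contours from $-\infty$) one uses that the formal coefficients are bounded and that the exponentials are bounded on the relevant contour. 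Since the contours make all exponentials bounded for $k\in\mathbb{C}_+$, we get $|F|\le C|k|^{-p-1}$ uniformly in $x$.

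\textbf{Step 3: invert the operator.} The Fredholm integral operator $K_k$ has kernel $U_1$ times bounded exponentials, so $\|K_k\|$ is bounded. It is in fact small for large $|k|$, after integrating by parts in the oscillatory off-diagonal entries and using a Neumann-series argument on $L^\infty$. Hence $(I-K_k)^{-1}$ is uniformly bounded for $|k|$ large, and $\|E\|_{L^\infty}\le C|k|^{-p-1}$. Finally $X^{[p+1]}_+-X^{[p]}_+=X_{+,p+1}/k^{p+1}$ is also $\mathcal{O}(|k|^{-p-1})$, which gives the claim.

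\textbf{Main obstacle.} The hardest step is the uniform invertibility of $I-K_k$ in Step 3. This is where the non-decaying diagonal coefficients and the degenerate second column/row (the entries involving $u_2$, which are not analytic) interfere. I would first try the approach of the cited works: show that $I-K_k$ is invertible for $|k|$ large by proving that $K_k^2$, not $K_k$ itself, has small norm via the Riemann–Lebesgue lemma applied to the oscillatory factors $\re^{(u_i-u_j)(x-y)}$. That gives uniform boundedness of $(I-K_k)^{-1}$ without any explicit soliton-free assumption at large $|k|$.
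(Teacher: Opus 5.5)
The paper states this lemma without proof. The framework it sets up suggests reading the estimate off from the cofactor representation of Lemma \ref{lem_M_pm}: first column $X_{+,1}$, third column $X_{-,3}/s^A_{33}$, second column built from $X^A_\pm$ and divided by $s_{11}$. One would combine this with the expansions of Propositions \ref{prop_Xpm_k_infty}, \ref{prop_XApm_k_infty}, \ref{prop_s}, \ref{prop_sA} and a matching of normalizations (e.g.\ $X_{-,33}/s^A_{33}\to1$ as $x\to+\infty$, so the formal expansion of $X_{-,3}/s^A_{33}$ is the third column of $X_{+,formal}$).

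Your route is genuinely different: a direct perturbation argument on the Fredholm system, in the style of Beals--Deift--Tomei. It works, and it buys you three things:
\begin{itemize}
\item The identification of coefficients is automatic: the boundary terms for $E$ vanish because the diagonal of $X_+^{[p+1]}$ is normalized at $+\infty$, and the exponentials decay at the off-diagonal contour endpoints for $\rim k>0$.
\item You need neither Lemma \ref{lem_M_pm} nor the expansions of $s,s^A$.
\item As a by-product, the Fredholm system is uniquely solvable for $|k|\ge R$, so $\mathcal{Q}$ is bounded.
\end{itemize}
The cofactor route, in exchange, needs no operator estimate at all, only the Volterra theory.

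Two points need correcting.

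\textbf{$K_k$ itself is not small.} Diagonal entries such as $(K_kM)_{11}=\int_{+\infty}^x\bigl((U_1)_{12}M_{21}+(U_1)_{13}M_{31}\bigr)\rd y$ carry no oscillation. So your opening claim and the first version of Step 3 are false. What holds, and suffices, is $\|K_k^2\|_{L^\infty\to L^\infty}=\mathcal{O}(|k|^{-1})$, and this is not a real obstacle:
\begin{itemize}
\item In $K_k^2$ the total phase is $x(u_i-u_j)+y(u_l-u_i)-z(u_l-u_j)$ with $l\neq i$, because $U_1$ has zero diagonal.
\item After Fubini, integrate by parts in the middle variable $y$. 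This gains a factor $|u_l-u_i|^{-1}\le(2|k|)^{-1}$.
\item The boundary terms are controlled because the full exponential has modulus at most $1$ on the closed integration region.
\item Then $(I-K_k)^{-1}=(I+K_k)(I-K_k^2)^{-1}$ is uniformly bounded for $|k|\ge R$.
\end{itemize}
The non-decaying diagonal coefficients and the second row/column play no role in $K_k$. That operator involves only $U_1$ and exponentials, all bounded on the chosen contours since $\rre u_1<\rre u_2<\rre u_3$ on $\mathbb{C}_+$. Those coefficients enter only through $F$. There the fact you need is that $\Delta=k^{-p-1}[\mathrm{diag}(3\ri,\ri,-\ri),X_{+,p+2}]$ is off-diagonal and rapidly decaying; it is not merely ``$U_1$ times coefficients''.

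\textbf{The annulus $2\le|k|\le R$.} You cannot absorb singularities into the constant there. You need $\mathcal{Q}\cap\{|k|\ge2\}=\emptyset$. This holds under Assumptions \ref{assu_solitonless} and \ref{assu_LT} by Lemma \ref{lem_M_sing}, after which Proposition \ref{prop_Mn_basic}(2) gives the bound on the annulus. Without such an assumption the estimate can only be claimed for $|k|\ge R$.
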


	\begin{lem}\label{lem_M_ST}
		Suppose $q_0,\, r_0\in\mathcal{S}(\mathbb{R})$ have compact support. Then
		\begin{equation*}\label{def_ST}
			\begin{aligned}
				M_\pm(x,k)=&X_-(x,k)\re^{x\widehat{\mathcal{U}}(k)}S_\pm(k)=X_+(x,k)\re^{x\widehat{\mathcal{U}}(k)}T_\pm(k),\\
			\end{aligned}
			\qquad  x\in\mathbb{R},\,\,k\in \overline{\mathbb{C}_\pm}\setminus\mathcal{Q},
		\end{equation*}
	where $S_\pm(k)$ and $T_\pm(k)$  are expressed by
\begin{equation*}\begin{aligned}
		S_+(k) &=\begin{pmatrix}
			s_{11}	&	0	&0\\
			s_{21}	&\frac{m_{33}(s)}{s_{11}}	&0\\
			s_{31}	&\frac{m_{23}(s)}{s_{11}}&\frac{1}{m_{33}(s)}
		\end{pmatrix},&&
		S_-(k) =\begin{pmatrix}
			\frac{1}{m_{11}(s)}	&\frac{m_{21}(s)}{s_{33}}		&s_{13}\\
			0		&\frac{m_{11}(s)}{s_{33}}	&s_{23}\\
			0	&0&s_{33}
		\end{pmatrix},\\
		T_+(k) &=\begin{pmatrix}
			1	&	-\frac{s_{12}}{s_{11}}	&\frac{m_{31}(s)}{m_{33}(s)}\\
			0	&1	&-\frac{m_{32}(s)}{m_{33}(s)}\\
			0	&0&1
		\end{pmatrix},&&
		T_-(k) =\begin{pmatrix}
			1	&0		&0\\
			-\frac{m_{12}(s)}{m_{11}(s)}		&1	&0\\
			\frac{m_{13}(s)}{m_{11}(s)}	&-\frac{s_{32}}{s_{33}}&1
		\end{pmatrix}.
\end{aligned}\end{equation*}

	\end{lem}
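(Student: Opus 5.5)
Since $q_0,r_0$ have compact support, Proposition \ref{prop_s}(4) gives that $s(k)$ is entire away from $0$ and that $X_+=X_-\re^{x\widehat{\mathcal{U}}}s$ holds for all $k\in\mathbb{C}\setminus\{0\}$. Both $X_\pm(x,k)$ are then defined for all such $k$, and for each fixed $k$ they are fundamental solutions of \eqref{lax_xpart}. The solution $M_\pm$ of \eqref{lax_X} built from \eqref{Mn_fredholm} must also be a combination of them. So I will write
\[
M_\pm(x,k)=X_-(x,k)\re^{x\widehat{\mathcal{U}}(k)}S_\pm(k),\qquad S_\pm(k):=\re^{-x\widehat{\mathcal{U}}(k)}\bigl(X_-^{-1}M_\pm\bigr)(x,k).
\]
Differentiating and using that both $X_-$ and $M_\pm$ solve \eqref{lax_xpart} shows that $S_\pm$ is independent of $x$. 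Once $S_\pm$ is known, \eqref{X_s} gives $T_\pm=s^{-1}S_\pm$ directly, so the real work is to identify $S_\pm$.

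\textbf{Structure of $T_\pm$ and $S_\pm$.} I will identify the entries from the boundary behaviour of $M_\pm$ encoded in the contours $\gamma^\pm_{ij}$. Take $k\in\mathbb{C}_+$. For $i\ge j$ the integral runs from $+\infty$, so $(M_+)_{ij}\to\delta_{ij}$ as $x\to+\infty$. Since $X_+\to I$ there, the $(i,j)$ entry of $\re^{x\widehat{\mathcal U}}T_+$ must tend to $\delta_{ij}$ for $i\ge j$. The exponentials $\re^{x(u_i-u_j)}$ with $i>j$ grow as $x\to+\infty$ in $\mathbb{C}_+$, because $u_1-u_2=2\ri k$, $u_1-u_3=4\ri k$ and $u_2-u_3=2\ri k$. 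This forces $(T_+)_{ij}=\delta_{ij}$ for $i\ge j$, so $T_+$ is unit upper triangular.

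Symmetrically, for $i<j$ the integral runs from $-\infty$, so $(M_+)_{ij}\to0$ as $x\to-\infty$, while the diagonal entries stay bounded. Since $X_-\to I$ there, the same exponential argument shows $S_+$ is lower triangular. With $s=S_+T_+^{-1}$ this is an LU-type factorization. Equating first columns gives $(S_+)_{i1}=s_{i1}$. Equating first rows gives $(T_+)_{12}=-s_{12}/s_{11}$ and $(T_+)_{13}=m_{31}(s)/m_{33}(s)$ after inverting the triangular factor. The remaining entries follow from $2\times2$ minors of $s$, namely $(S_+)_{33}=1/m_{33}(s)$ and $(S_+)_{22}=m_{33}(s)/s_{11}$, together with $\det s=1$.

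\textbf{The lower half-plane.} For $k\in\mathbb{C}_-$ the contours $\gamma^-$ reverse the roles of the triangles. Now $T_-$ is unit lower triangular and $S_-$ is upper triangular, and the same minor computation yields the stated $S_-$ and $T_-$. Alternatively, $S_-$ and $T_-$ can be obtained from the $+$ case via the symmetry $M(x,k)=\mathcal{B}M(x,-k)\mathcal{B}$ of Proposition \ref{prop_Mn_basic}, combined with $s(-k)=\mathcal{B}s(k)\mathcal{B}$ from \eqref{sym_s}.

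\textbf{Main obstacle.} The delicate step is justifying the normalization conditions of $M_\pm$ at $x\to\pm\infty$ entrywise, and then identifying the diagonal entries, which do not vanish but only stay bounded. For the diagonal I will use the Fredholm equation: at $x\to-\infty$ the diagonal entries of $M_+$ tend to a constant given by an integral over all of $\mathbb{R}$. Combined with $\det M_\pm=1$ and the triangular structure, this pins these constants down as $s_{11}$, $m_{33}(s)/s_{11}$ and $1/m_{33}(s)$. The factorization is valid only where $s_{11}m_{33}(s)\neq0$, which is exactly the condition $k\notin\mathcal{Q}$ for the compactly supported case. The identities then extend continuously to $\overline{\mathbb{C}_\pm}\setminus\mathcal{Q}$.
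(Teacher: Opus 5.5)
The paper states this lemma without proof (it defers to the Charlier--Lenells procedure). Your argument is exactly that standard argument, and it correctly produces the stated $S_\pm$ and $T_\pm$: you show $S_\pm$ is independent of $x$, you get triangularity of $S_\pm$ and $T_\pm$ from the normalizations built into the contours $\gamma^\pm_{ij}$, and you read off every entry from the factorizations $s=S_+T_+^{-1}$ (lower times unit upper) and $s=S_-T_-^{-1}$ (upper times unit lower) using leading/trailing minors and $\det s=1$.

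The step you call the main obstacle is easier than you suggest.
\begin{itemize}
\item For compactly supported data, $X_+\equiv I$ to the right of the support and $X_-\equiv I$ to the left of it. There the relevant entries of $M_\pm$ equal $\delta_{ij}$ or $0$ exactly, so triangularity is an identity rather than a limit.
\item The diagonal of $S_+$ is already fixed by the minor identities together with $\det S_+=\det M_+/\det X_-=1$. This also gives $s_{11}m_{33}(s)\neq0$ on $\overline{\mathbb{C}_+}\setminus\mathcal{Q}$, so you do not need your claimed equivalence with $k\notin\mathcal{Q}$; only this one direction is used.
\end{itemize}
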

	
	Let $\eta \in C_c^\infty(\mathbb{R})$ be a smooth cutoff function satisfying $\eta(x)=1$ for $|x|\le 1$ and $\eta(x)=0$ for $|x|\ge 2$. For $j\ge 1$, set $\eta_j(x):=\eta(x/j)$. Given any $f\in \mathcal{S}(\mathbb{R})$, the family $\{\eta_j f\}_{j=1}^\infty$ consists of smooth functions with compact support. Moreover, this family converges to $f$ in the $\mathcal{S}(\mathbb{R})$ topology as $j\to\infty$.

	\begin{lem}\label{lem_Cauchy sequence}
		 Let $q_0,\, r_0\in\mathcal{S}(\mathbb{R})$. Let ${s(k), M_\pm(x,k)}$ and ${s^{(i)}(k), M_\pm^{(i)}(x,k)}$ be the spectral functions and eigenfunctions associated with $(q_0, r_0)$ and
	$
			\left(q_0^{(i)}(x), r_0^{(i)}(x)\right) := (\eta_i q_0, \eta_i r_0) \in \mathcal{S}(\mathbb{R}) \times \mathcal{S}(\mathbb{R}), 
	$
		respectively. Then
		
		\begin{equation*}
			\begin{aligned}
			&\lim_{i \to \infty} s^{(i)}(k) = s(k), \quad && k\,\, {\rm{in}}\,\, \eqref{sk_k_define},\\
			&\lim_{i \to \infty} \left(s^A\right)^{(i)}(k) = s^A(k), \quad && k\,\, {\rm{in}}\,\, \eqref{sak_define},\\
			&\lim_{i \to \infty} X_\pm^{(i)}(x,k) = X_\pm(x,k), \quad && x\in\mathbb{R},\,\,k\in (\overline{\mathbb{C}_\pm},\mathbb{R},\overline{\mathbb{C}_\mp})\setminus\{0\},\\
			&\lim_{i \to \infty} M_\pm^{(i)}(x,k) = M_\pm(x,k), \quad &&x \in \mathbb{R}, \,\, k \in \overline{\mathbb{C}_\pm} \setminus \mathcal{Q}.
			\end{aligned}
		\end{equation*}
	
	\end{lem}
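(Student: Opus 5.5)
The plan is to prove the convergence statements in the order in which the objects are built. First the Jost solutions $X_\pm$ and $X^A_\pm$, then the scattering data $s$ and $s^A$ obtained from them through \eqref{sk} and \eqref{sak}, and finally the Fredholm eigenfunctions $M_\pm$.

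\textbf{Step 1: the Jost solutions.} Set $U_1^{(i)}$ to be $U_1$ built from $(\eta_i q_0,\eta_i r_0)$. Then $\|U_1^{(i)}-U_1\|_{L^1(\mathbb{R})}\to 0$, and since $\eta_i f\to f$ in $\mathcal{S}(\mathbb{R})$, weighted norms such as $\int (1+|y|)^m|U_1^{(i)}-U_1|\,\rd y$ also tend to zero. For $X_+$ I will take the difference of the Volterra equations \eqref{volX+} for $X_+^{(i)}$ and $X_+$:
\begin{equation*}
X_+^{(i)}-X_+=-\int_x^\infty \re^{(x-y)\widehat{\mathcal{U}}(k)}\Bigl((U_1^{(i)}-U_1)X_+^{(i)}+U_1(X_+^{(i)}-X_+)\Bigr)(y,k)\,\rd y .
\end{equation*}
In each column the relevant exponentials are bounded for $k$ in the admissible region. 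The Neumann series also gives $|X_+^{(i)}|\le \exp(\|U_1^{(i)}\|_{L^1})$, uniformly in $i$. Gronwall's inequality then yields
\begin{equation*}
|X_+^{(i)}(x,k)-X_+(x,k)|\le C\,\|U_1^{(i)}-U_1\|_{L^1}\to0 .
\end{equation*}
The same argument applies to $X_-$ via \eqref{volX-}, and to $X^A_\pm$ via \eqref{volXA+} and \eqref{volXA-}. Where pointwise-in-$k$ bounds are not immediate, for instance near the boundary of the admissible region, I will use the estimates of Proposition \ref{prop_Xpm} and Proposition \ref{prop_XApm}; these also give continuity in $k$ up to the boundary. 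The conclusion is pointwise convergence for $k\neq0$.

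\textbf{Step 2: the scattering data.} Write
\begin{equation*}
s^{(i)}-s=-\int_{\mathbb{R}} \re^{-x\widehat{\mathcal{U}}}\bigl[(U_1^{(i)}-U_1)X^{(i)}+U_1(X^{(i)}-X)\bigr]\,\rd x .
\end{equation*}
The first term is bounded by $C\|U_1^{(i)}-U_1\|_{L^1}$. The second tends to zero by dominated convergence, using Step 1 and the uniform bound on $X^{(i)}$. Here $X$ is $X_+$ or $X_-$ column by column, as in the definition of each entry on its domain \eqref{sk_k_define}. The identical argument gives convergence of $s^A$ on \eqref{sak_define}.

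\textbf{Step 3: the Fredholm eigenfunctions (main obstacle).} Fix $k\in\overline{\mathbb{C}_\pm}\setminus\mathcal{Q}$. The equation \eqref{Mn_fredholm} has the form $(I-K)M=I$, where $K$ is an integral operator. With the exponentials controlled by the choice of contours $\gamma^\pm_{ij}$, $K$ is bounded on $L^\infty(\mathbb{R})$ with norm at most $C\|U_1\|_{L^1}$. Its kernel mixes Volterra pieces in both directions, so it is Fredholm rather than Volterra, and invertibility holds because $k\notin\mathcal{Z}$.

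The key estimate is
\begin{equation*}
\|K^{(i)}-K\|_{L^\infty\to L^\infty}\le C\|U_1^{(i)}-U_1\|_{L^1}\to0 .
\end{equation*}
I then invoke the stability of bounded invertibility under small perturbations: a Neumann series for $(I-K^{(i)})^{-1}=(I-K)^{-1}\bigl(I-(K^{(i)}-K)(I-K)^{-1}\bigr)^{-1}$. This shows that $I-K^{(i)}$ is invertible for all large $i$, so $k$ is not a zero of the $i$-th Fredholm determinant eventually. It also shows that $M^{(i)}_\pm(x,k)\to M_\pm(x,k)$ uniformly in $x$.

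The delicate point is to make precise that $I-K$ is invertible on a suitable Banach space, for instance bounded continuous functions with limits at $\pm\infty$. This needs compactness of $K$, which I obtain by approximating $U_1$ by compactly supported potentials, so that the Fredholm alternative applies. It also needs the identification of the zero set $\mathcal{Z}$ with the non-invertibility set. Finally, the boundary points $k\in\mathbb{R}$ require the continuous extension of $K$ from Proposition \ref{prop_Mn_basic}. Once invertibility is secured, the convergence itself is routine.
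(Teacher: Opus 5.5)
The paper states this lemma without proof; it relies on the Charlier--Lenells framework cited at the start of Section~\ref{Section-3}. Your proposal therefore has to stand on its own, and it does. The plan is the standard argument and is correct: Gronwall on the Volterra equations for $X_\pm$ and $X^A_\pm$, then $s$ and $s^A$ through \eqref{sk}--\eqref{sak}, then a Neumann-series perturbation of $(I-K)^{-1}$ for \eqref{Mn_fredholm}.

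What you flag as delicate in Step~3 is easier than you suggest, but one choice there is wrong.

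\textbf{The function space.} Bounded continuous functions with limits at $\pm\infty$ do not form an invariant space for $K$ when $k\in\mathbb{R}$. For example,
$(M_+)_{12}(x,k)=\re^{2\ri kx}\int_{-\infty}^x \re^{-2\ri ky}(U_1M_+)_{12}(y,k)\,\rd y$
behaves like $c\,\re^{2\ri kx}$ as $x\to+\infty$. Work instead in $L^\infty(\mathbb{R};\mathbb{C}^3)$, one column at a time.

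\textbf{Invertibility without compactness.} On the contours $\gamma^\pm_{ij}$ the kernel satisfies $|K(x,y)|\le |U_1(y)|$ with $U_1\in L^1$. By Hadamard's inequality, the Fredholm series for the determinant $D(k)$ and for the first minor $D(x,y;k)$ converge. When $D(k)\neq0$, the resolvent kernel $D(x,y;k)/D(k)$ is bounded by $C|U_1(y)|$, so $(I-K)^{-1}$ is bounded on $L^\infty$. This is exactly the identification with $\mathcal{Z}$ you need. The same dominated series give $D^{(i)}(k)\to D(k)$, which directly yields $k\notin\mathcal{Z}^{(i)}$ for large $i$.

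\textbf{If you prefer the Fredholm alternative.} $K$ is compact on $L^\infty$. It is the norm limit of the truncations $K_n$ obtained by replacing $U_1$ with $U_1\mathbf{1}_{[-n,n]}$, since $\|K-K_n\|\le\int_{|y|>n}|U_1|\,\rd y$. Each $K_n$ maps the unit ball into a family that is equicontinuous on $[-n,n]$ and, on $|x|>n$, lies in a fixed finite-dimensional span of bounded exponentials $\re^{x(u_i-u_j)}$.

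\textbf{Smaller points.}
\begin{itemize}
\item For $k\in\mathbb{R}$ all exponentials in $K$ are unimodular, so $K$ is defined directly and no continuous extension is needed. Proposition~\ref{prop_Mn_basic} concerns $M_\pm$, not $K$.
\item The $X$ in \eqref{sk} is $X_+$ for every entry, not a column-wise choice. Compare \eqref{X_s}, which gives $s=\lim_{x\to-\infty}\re^{-x\widehat{\mathcal{U}}}X_+$. This is harmless, since your Step~1 covers both $X_+$ and $X_-$.
\item Because $U_1$ does not depend on $k$, your Gronwall bound is already uniform on the closed half-planes, including $k=0$. The appeal to Proposition~\ref{prop_Xpm} is unnecessary.
\end{itemize}
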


	\begin{lem}\label{lem_M_jump}
		Suppose $q_0,\, r_0\in\mathcal{S}(\mathbb{R})$. For each $x\in\mathbb{R}$, $M(x,k)$ satisfies the jump condition
		\begin{equation*}
			M_+(x,k)=M_-(x,k)v(x,0,k),\qquad k\in \mathbb{R}\setminus\mathcal{Q},
		\end{equation*}
	where $v$ is the jump matrix defined in equation \eqref{jump_0}.
	\end{lem}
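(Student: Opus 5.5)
We prove Lemma \ref{lem_M_jump} in two stages: first for compactly supported data, then for general Schwartz data by the approximation of Lemma \ref{lem_Cauchy sequence}.

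\textbf{Compactly supported data.} Suppose $q_0, r_0$ have compact support. By Lemma \ref{lem_M_ST}, for $k\in\mathbb{R}\setminus\mathcal{Q}$ both boundary values factor through the same solution $X_+$:
\begin{equation*}
M_+(x,k)=X_+(x,k)\re^{x\widehat{\mathcal{U}}(k)}T_+(k),\qquad M_-(x,k)=X_+(x,k)\re^{x\widehat{\mathcal{U}}(k)}T_-(k).
\end{equation*}
Eliminating $X_+$ gives
\begin{equation*}
M_-^{-1}M_+=\re^{x\widehat{\mathcal{U}}(k)}\bigl(T_-^{-1}(k)T_+(k)\bigr),
\end{equation*}
so it suffices to show that $T_-^{-1}T_+$ equals $v(x,0,k)$ with the exponentials removed. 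Since $u_i-u_j$ is purely imaginary on $\mathbb{R}$ and $t\Phi_{ij}=(u_i-u_j)x$ at $t=0$, conjugation by $\re^{x\widehat{\mathcal{U}}}$ produces exactly the factors $\re^{\pm t\Phi_{ij}}$ in \eqref{jump_0}.

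\textbf{Computing the entries.} $T_-$ is unipotent lower triangular, so $T_-^{-1}$ is explicit, and the product $T_-^{-1}T_+$ has entries built from the ratios $s_{12}/s_{11}$, $m_{31}(s)/m_{33}(s)$, $m_{32}(s)/m_{33}(s)$, $m_{12}(s)/m_{11}(s)$, $m_{13}(s)/m_{11}(s)$ and $s_{32}/s_{33}$. We rewrite each ratio in terms of $r_1,r_2$ as follows.
\begin{enumerate}
\item The $(1,2)$ entry is $-s_{12}/s_{11}=-r_1(k)$ by definition.
\item The $(1,3)$ entry: since $s^A=((-1)^{i+j}m_{ij}(s))$, we have $m_{31}(s)/m_{33}(s)=s^A_{31}/s^A_{33}=r_2(k)$.
\item The $(2,1)$ entry: the first symmetry in \eqref{sym_s}, $s^\dagger(\bar k)=\mathcal{A}s^{-1}(k)\mathcal{A}^{-1}$ with $s^{-1}=(s^A)^T$, expresses $m_{12}(s)/m_{11}(s)$ as $-2\sigma\,\overline{s_{12}/s_{11}}$. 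This gives the entry $2\sigma r_1^*(k)$.
\item The $(3,1)$ entry: the same symmetry gives $m_{13}/m_{11}=r_2^*(k)$.
\item The remaining ratios $m_{32}/m_{33}$ and $s_{32}/s_{33}$: the $\mathcal{B}$-symmetry $s(-k)=\mathcal{B}s(k)\mathcal{B}$ relates $s_{32}(k)/s_{33}(k)$ to $s_{12}(-k)/s_{11}(-k)=r_1(-k)$. Combined with the $\mathcal{A}$-symmetry, this yields $r_1^*(-k)$ and $\alpha(k)$.
\end{enumerate}

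\textbf{Diagonal entries.} These are the expressions $1-2\sigma|r_1|^2$ and $1-2\sigma|r_1(-k)|^2+|r_2|^2$. Each follows from multiplying out the triangular factors and inserting the moduli identities above.

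\textbf{Expected main obstacle.} The main difficulty is bookkeeping. The conjugate, cofactor and $\mathcal{B}$ symmetries must be combined carefully to recover exactly $\alpha=r_1^*(-k)-r_1^*(k)r_2(k)$ in the $(2,3)$ slot and $\alpha^*$ in the $(3,2)$ slot. A possible product identity $\det s=1$ may also be needed; we would obtain it from the unimodularity of $s$ restricted to the relevant $2\times2$ minors.

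\textbf{General Schwartz data.} For $q_0,r_0\in\mathcal{S}(\mathbb{R})$, let $q_0^{(i)}=\eta_iq_0$ and $r_0^{(i)}=\eta_ir_0$, which are compactly supported. The first stage gives
\begin{equation*}
M^{(i)}_+=M^{(i)}_-\,v^{(i)}(x,0,k).
\end{equation*}
By Lemma \ref{lem_Cauchy sequence}, $s^{(i)}\to s$, $(s^A)^{(i)}\to s^A$ and $M_\pm^{(i)}\to M_\pm$ pointwise for $k\in\mathbb{R}\setminus\mathcal{Q}$. Hence $r_1^{(i)}\to r_1$ and $r_2^{(i)}\to r_2$, using that $s_{11}$ and $s^A_{33}$ do not vanish away from $\mathcal{Q}$. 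Passing to the limit on both sides gives the stated jump relation.
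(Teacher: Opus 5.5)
Your route is the paper's. You treat compactly supported data through Lemma \ref{lem_M_ST}, using $M_-^{-1}M_+=\re^{x\widehat{\mathcal U}}(T_-^{-1}T_+)$; this is the paper's $S_-^{-1}S_+$, since $S_\pm=sT_\pm$. You then apply the symmetries \eqref{sym_s}, \eqref{sym_sA} and approximate via Lemma \ref{lem_Cauchy sequence}. However, your item 4 is false, and it hides the one identity the computation needs that does not come from the symmetries.

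Because the first column of $T_+$ is $(1,0,0)^T$ and $T_-$ is unipotent lower triangular,
\[
(T_-^{-1}T_+)_{31}=(T_-^{-1})_{31}=-\frac{m_{13}(s)}{m_{11}(s)}+\frac{s_{32}}{s_{33}}\,\frac{m_{12}(s)}{m_{11}(s)}.
\]
On $\mathbb{R}$, the relation $s^\dagger=\mathcal{A}s^{-1}\mathcal{A}^{-1}$ with $s^{-1}=(s^A)^T$ gives $m_{ij}(s)=(-1)^{i+j}\mathcal{A}_{ii}\mathcal{A}_{jj}^{-1}\overline{s_{ij}}$. Hence $m_{13}/m_{11}=\overline{s_{13}/s_{11}}$, and the $\mathcal{B}$-symmetry turns this into $r_2(-k)$, not $r_2^*(k)$.

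What closes the computation is the alien-cofactor identity $s_{31}m_{11}(s)-s_{32}m_{12}(s)+s_{33}m_{13}(s)=0$, which is the $(3,1)$ entry of $s\,\mathrm{adj}(s)=(\det s)I$. It collapses the display to $s_{31}/s_{33}$, which is the $(3,1)$ entry of the left factor in the paper's $J$. Then $m_{31}=\overline{s_{31}}$ and $m_{33}=\overline{s_{33}}$ give $s_{31}/s_{33}=\overline{r_2(k)}$. In reflection-coefficient form this is the nontrivial relation $\overline{r_2(k)}+r_2(-k)=2\sigma r_1(-k)\overline{r_1(k)}$. It does not follow from ``$\det s=1$ on $2\times2$ minors''; $\det s=1$ is used only to write $s^{-1}=(s^A)^T$.

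The error propagates beyond one entry, because $(T_-^{-1})_{31}$ also enters two other entries:
\[
(T_-^{-1}T_+)_{32}=-(T_-^{-1})_{31}\,r_1(k)+r_1(-k),\qquad (T_-^{-1}T_+)_{33}=1+(T_-^{-1})_{31}\,r_2(k)-2\sigma|r_1(-k)|^2 .
\]
With your value, neither $\alpha^*(k)$ nor $1-2\sigma|r_1(-k)|^2+|r_2(k)|^2$ would come out.

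Item 3 also contains a sign slip. The symmetry gives $m_{12}=+2\sigma\overline{s_{12}}$, and $(T_-^{-1})_{21}=+m_{12}/m_{11}$, so the entry is $2\sigma r_1^*(k)$ directly. Your stated minus sign only works if a second sign error compensates.

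With these corrections, all nine entries match \eqref{jump_0} with $\alpha=r_1^*(-k)-r_1^*(k)r_2(k)$; the ``$+$'' in the paper's proof is a typo. Your limiting argument for general Schwartz data is the same as the paper's.
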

	
	\begin{proof}
		If $q_0,\, r_0\in\mathcal{S}(\mathbb{R})$ are compactly supported, then we have
		
		\begin{equation*}
			v(x,0,k)=\re^{x\widehat{\mathcal{U}}(k)}\left(S_-^{-1}(k)S_+(k) \right) :=\re^{x\widehat{\mathcal{U}}(k)}J(k),
		\end{equation*}
		where
		\begin{equation*}
			\begin{aligned}
				J(k)&=\begin{pmatrix}
					1 & 0 & 0\\
					\frac{m_{12}(s)}{m_{11}(s)} & 1 & 0\\
					\frac{s_{31}}{s_{33}} & \frac{s_{32}}{s_{33}} & 1\\
				\end{pmatrix}\begin{pmatrix}
				1 & -\frac{s_{12}}{s_{11}} & \frac{m_{31}(s)}{m_{33}(s)}\\
				0 & 1 & -\frac{m_{32}(s)}{m_{33}(s)}\\
				0 & 0 & 1\\
				\end{pmatrix}=\begin{pmatrix}
				1 & -\frac{s_{12}}{s_{11}} & \frac{m_{31}(s)}{m_{33}(s)}\\
				\frac{m_{12}(s)}{m_{11}(s)} & 1-\frac{s_{12}m_{12}(s)}{s_{11}m_{11}(s)} & -\frac{s_{23}}{m_{11}(s)m_{33}(s)}\\
				\frac{s_{31}}{s_{33}} & \frac{m_{23}(s)}{s_{11}s_{33}} & \frac{1}{s_{33}m_{33}(s)}\\
				\end{pmatrix}.
			\end{aligned}
		\end{equation*}
		According to the symmetry conditions satisfied by $s(k)$ and $s^A(k)$, denoted as equations \eqref{sym_s} and \eqref{sym_sA}, we can derive
	\begin{equation*}
			\begin{aligned}
				J(k)&=\begin{pmatrix}
   				1 &  -r_1(k) & r_2(k)\\
   				2\sigma r_1^*(k) & 1-2\sigma \left|r_1(k) \right|^2 &-2\sigma \alpha(k)\\
   				r_2^*(k) & \alpha^*(k) & 1-2\sigma \left|r_1(-k) \right|^2+ \left|r_2(k) \right|^2\\
   			\end{pmatrix},
			\end{aligned}
		\end{equation*}
        where $\alpha(k)=r_1^*(-k)+ r_1^*(k)r_2(k)$, $r_1(k)$ and $r_2(k)$ are defined by equation \eqref{r1_r2}.
        
       Suppose $q_0, r_0 \in \mathcal{S}(\mathbb{R})$ are not compactly supported. We then consider a sequence $\{(q_0^{(i)}, r_0^{(i)})\}_{i=1}^\infty$ of smooth, compactly supported functions converging to $(q_0, r_0)$. By virtue of Lemma \ref{lem_Cauchy sequence}, we may pass to the limit $i \to \infty$ in the identities satisfied by $(q_0^{(i)}, r_0^{(i)})$.
		
	\end{proof}
	
	\begin{lem}\label{lem_M_pm}
		Suppose $q_0,\, r_0\in\mathcal{S}(\mathbb{R})$. The functions $M_\pm$ can be expressed in terms of the entries of $X_\pm$, $X_\pm^A$, $s$ and $s^A$ as follows:
		\begin{equation*}
			M_+=\begin{pmatrix}
				X_{+,11} & \frac{X^A_{+,23}X^A_{-,31}-X^A_{+,33}X^A_{-,21}}{s_{11}} & \frac{X_{-,13}}{s^A_{33}}\\
				X_{+,21} & \frac{X^A_{+,33}X^A_{-,11}-X^A_{+,13}X^A_{-,31}}{s_{11}} & \frac{X_{-,23}}{s^A_{33}}\\
				X_{+,31} & \frac{X^A_{+,13}X^A_{-,21}-X^A_{+,23}X^A_{-,11}}{s_{11}} & \frac{X_{-,33}}{s^A_{33}}\\
			\end{pmatrix},\quad 	M_-=\begin{pmatrix}
			\frac{X_{-,11}}{s^A_{11}} & \frac{X^A_{+,31}X^A_{-,23}-X^A_{+,21}X^A_{-,33}}{s_{33}} & X_{+,13}\\
			\frac{X_{-,21}}{s^A_{11}} & \frac{X^A_{+,11}X^A_{-,33}-X^A_{+,31}X^A_{-,13}}{s_{33}} & X_{+,23}\\
			\frac{X_{-,31}}{s^A_{11}} & \frac{X^A_{+,21}X^A_{-,13}-X^A_{+,11}X^A_{-,23}}{s_{33}} & X_{+,33}\\
			\end{pmatrix},
		\end{equation*}
		for  all $x\in\mathbb{R}$ and $k\in\overline{ \mathbb{C}_\pm}\setminus\mathcal{Q}$.
		
	\end{lem}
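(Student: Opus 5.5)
The plan is to prove Lemma \ref{lem_M_pm} by identifying each column of $M_\pm$ as the unique solution of \eqref{lax_X} that has the boundedness and normalization prescribed by the Fredholm equations \eqref{Mn_fredholm}. I would first treat compactly supported $q_0,r_0$, where Lemma \ref{lem_M_ST} applies, and then pass to general Schwartz data with Lemma \ref{lem_Cauchy sequence}.

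\textbf{First and third columns.} Take $k\in\overline{\mathbb{C}_+}$ and compactly supported data. Lemma \ref{lem_M_ST} gives $M_+=X_+\re^{x\widehat{\mathcal{U}}}T_+$. The first column of $T_+$ is $e_1$, so $M_{+,1}=X_{+,1}$. In the same way, $M_+=X_-\re^{x\widehat{\mathcal{U}}}S_+$ and the third column of $S_+$ is $\tfrac{1}{m_{33}(s)}e_3$, so $M_{+,3}=X_{-,3}/m_{33}(s)$. The cofactor relation $s^A=((-1)^{i+j}m_{ij}(s))$ gives $m_{33}(s)=s^A_{33}$, and this is the stated third column. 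For $M_-$ the argument is symmetric: $T_-$ has third column $e_3$, which gives $M_{-,3}=X_{+,3}$, and $S_-$ has first column $\tfrac{1}{m_{11}(s)}e_1=\tfrac{1}{s^A_{11}}e_1$, which gives $M_{-,1}=X_{-,1}/s^A_{11}$.

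\textbf{Second column.} This is the main obstacle, because $X_{\pm,2}$ are not analytic anywhere and the second column has to be built from columns that do extend into the half plane. I would use the identity $b\times c=\det(a,b,c)\,(a,b,c)^{-T}e_1$ and its cyclic versions: for any $3\times3$ matrix $N$ with $\det N=1$, $N_2=N^A_3\times N^A_1$, where $\times$ is the vector cross product. Applied to $N=M_+$ (recall $\det M_+=1$ by Proposition \ref{prop_Mn_basic}), this gives $M_{+,2}=M^A_{+,3}\times M^A_{+,1}$. Since $M_+^A=(M_+^{-1})^T$, Lemma \ref{lem_M_ST} yields
\[
M_+^A=X_-^A\,\re^{-x\widehat{\mathcal{U}}}S_+^A=X_+^A\,\re^{-x\widehat{\mathcal{U}}}T_+^A .
\]
Because $T_+$ is unit upper triangular, $T_+^A$ is unit lower triangular, so its third column is $e_3$ and $M^A_{+,3}=X^A_{+,3}$. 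Because $S_+$ is lower triangular with diagonal $(s_{11},\,m_{33}/s_{11},\,1/m_{33})$, $S_+^A$ is upper triangular with $(1,1)$ entry $1/s_{11}$, so $M^A_{+,1}=X^A_{-,1}/s_{11}$. Both pieces are analytic in $\mathbb{C}_+$ by Proposition \ref{prop_XApm}. Hence
\[
M_{+,2}=\frac{X^A_{+,3}\times X^A_{-,1}}{s_{11}},
\]
and writing out the cross product componentwise gives exactly the three displayed entries, for example $X^A_{+,23}X^A_{-,31}-X^A_{+,33}X^A_{-,21}$ in the first row. The same computation with $T_-^A$ (upper unitriangular, first column $e_1$) and $S_-^A$ (lower triangular, $(3,3)$ entry $1/s_{33}$) gives
\[
M_{-,2}=M^A_{-,3}\times M^A_{-,1}=\frac{X^A_{-,3}\times X^A_{+,1}}{s_{33}},
\]
which matches the stated second column of $M_-$. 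I will check the sign and orientation conventions of the cross product against the stated entries.

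\textbf{General Schwartz data.} Set $(q_0^{(i)},r_0^{(i)})=(\eta_iq_0,\eta_ir_0)$. The identities above hold for each $i$, and I pass to the limit using Lemma \ref{lem_Cauchy sequence}. That lemma gives convergence of $s$, $s^A$, $X_\pm$ and $M_\pm$ but says nothing about $X^A_\pm$. For $X^A_\pm$ I would either take the limit directly in the Volterra equations \eqref{volXA+}--\eqref{volXA-}, using dominated convergence together with the bounds of Proposition \ref{prop_XApm}, or use $X^A=(X^{-1})^T$ with $\det X_\pm=1$. The denominators $s_{11}$, $s_{33}$ and $s^A_{11}$, $s^A_{33}$ are nonzero on $\overline{\mathbb{C}_\pm}\setminus\mathcal{Q}$, since their zeros are included in $\mathcal{Q}$ through the Fredholm determinants. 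The identities therefore pass to the limit for all $x\in\mathbb{R}$ and $k\in\overline{\mathbb{C}_\pm}\setminus\mathcal{Q}$.
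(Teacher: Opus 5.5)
Your proposal is correct. It is the argument the paper's framework is set up for (the paper states this lemma without proof, following Charlier--Lenells, and Lemmas~\ref{lem_M_ST} and \ref{lem_Cauchy sequence} are built for it): outer columns from Lemma~\ref{lem_M_ST} for compactly supported data, the middle column from $N_2=N^A_3\times N^A_1$ when $\det N=1$, then density via Lemma~\ref{lem_Cauchy sequence}, and your cross products $X^A_{+,3}\times X^A_{-,1}/s_{11}$ and $X^A_{-,3}\times X^A_{+,1}/s_{33}$ reproduce the stated entries exactly.

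Two small points. First, off $\mathbb{R}$ the convergence of $X^A_{+,3}$ and $X^A_{-,1}$ (resp.\ $X^A_{-,3}$ and $X^A_{+,1}$) must come from the Volterra equations, because the full matrices $X_\pm^{-1}$ are not defined there. Second, the nonvanishing of the denominators off $\mathcal{Q}$ is most cleanly justified by passing to the limit in undivided identities such as $s^A_{33}M_{+,3}=X_{-,3}$ and $s_{11}M^A_{+,1}=X^A_{-,1}$, and then using $X_{-,3}\to e_3$ and $X^A_{-,1}\to e_1$ as $x\to-\infty$.
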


	\begin{lem}\label{lem_M_sing}
		Suppose $q_0,\, r_0\in\mathcal{S}(\mathbb{R})$ such that Assumptions \ref{assu_solitonless} and \ref{assu_LT} hold, then Proposition \ref{prop_Mn_basic} and Lemmas \ref{lem_M_ST}, \ref{lem_Cauchy sequence}, \ref{lem_M_jump} remain valid after replacing $\mathcal{Q}$ with $\{0\}$.
	\end{lem}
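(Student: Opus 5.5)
The plan is to show that, under Assumptions \ref{assu_solitonless} and \ref{assu_LT}, the set $\mathcal{Z}$ of zeros of the Fredholm determinants has no points in $\overline{\mathbb{C}_\pm}\setminus\{0\}$. Then $\mathcal{Q}=\{0\}$, and every statement proved earlier with $\mathcal{Q}$ remains true with $\mathcal{Q}$ replaced by $\{0\}$.

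\textbf{Step 1: locate the possible singularities.} I will use the representation in Lemma \ref{lem_M_pm}. The entries of $M_\pm$ are built from $X_\pm$, $X^A_\pm$ and the spectral functions. By Propositions \ref{prop_Xpm} and \ref{prop_XApm}, the relevant columns of $X_\pm$ and $X^A_\pm$ are analytic and bounded on the correct half-planes away from $k=0$. Hence the only possible singularities of $M_+$ on $\overline{\mathbb{C}_+}\setminus\{0\}$ come from zeros of $s_{11}$ and $s^A_{33}$. Likewise, the only possible singularities of $M_-$ come from zeros of $s_{33}$ and $s^A_{11}$.

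\textbf{Step 2: reduce everything to $s_{11}$.} The symmetries \eqref{sym_s} and \eqref{sym_sA} link these functions:
\begin{itemize}
\item $s(-k)=\mathcal{B}s(k)\mathcal{B}$ gives $s_{33}(k)=s_{11}(-k)$, which maps $\mathbb{C}_-$ onto $\mathbb{C}_+$.
\item Since $s^A$ is the cofactor matrix and $\det s=1$, the $\mathcal{A}$-symmetry gives $s^A_{33}(k)=\overline{s_{11}(\bar k)}$ on the common domain. Here I would use $(s^A)^\dagger(\bar k)=\mathcal{A}^{-1}s^T(k)\mathcal{A}$ together with the diagonal form of $\mathcal{A}$.
\item Similarly, $s^A_{11}$ is related to $s_{33}$.
\end{itemize}
So Assumption \ref{assu_solitonless} ($s_{11}\neq0$ on $\mathbb{C}_+$) rules out zeros of all four functions in the open half-planes. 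Whether the Fredholm-determinant zeros coincide with these zeros is justified by comparing the two solutions: where the denominators are nonzero, the formulas of Lemma \ref{lem_M_pm} give a solution of \eqref{Mn_fredholm}, and this solution is unique.

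\textbf{Step 3: the real axis (main obstacle).} The hardest part is excluding zeros of $s_{11}$ and $s_{33}$ on $\mathbb{R}\setminus\{0\}$. The approach is to evaluate the $(3,3)$ entry of the jump matrix $J(k)$ from the proof of Lemma \ref{lem_M_jump}. It equals
$$\frac{1}{s_{33}\,m_{33}(s)}=1-2\sigma|r_1(-k)|^2+|r_2(k)|^2.$$
Using the symmetries, $s_{33}m_{33}(s)$ should equal $|s_{33}|^2$ up to a positive factor, or equivalently $s_{11}(-k)\overline{s_{11}(-k)}$. I would verify this by a direct computation with $\mathcal{A}$ and $\mathcal{B}$. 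Assumption \ref{assu_LT} then makes the right-hand side finite and positive. To turn this into $s_{33}\neq0$, I would run a continuity argument:
\begin{itemize}
\item Approximate $(q_0,r_0)$ by the compactly supported data $(q_0^{(i)},r_0^{(i)})$ of Lemma \ref{lem_Cauchy sequence}.
\item Use the identity $|s_{33}|^{-2}=1-2\sigma|r_1(-k)|^2+|r_2(k)|^2$, rewritten as $|s_{33}|^2\bigl(1-2\sigma|r_1(-k)|^2+|r_2(k)|^2\bigr)=1$ with everything expressed through entries of $s$ and $s^A$, which are defined on $\mathbb{R}\setminus\{0\}$.
\end{itemize}
This gives $|s_{33}|^2>0$ pointwise. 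Real zeros of $s_{11}$ follow by the symmetry $s_{11}(k)=s_{33}(-k)$.

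\textbf{Step 4: conclusion.} With no zeros on $\overline{\mathbb{C}_\pm}\setminus\{0\}$, the Fredholm equations \eqref{Mn_fredholm} are uniquely solvable there. The uniform bounds in Proposition \ref{prop_Mn_basic} now hold with $\mathrm{dist}(k,\{0\})\ge\epsilon$; the required compactness for large $|k|$ comes from Lemma \ref{M_kinfty}. The proofs of Lemmas \ref{lem_M_ST}, \ref{lem_Cauchy sequence} and \ref{lem_M_jump} then carry over verbatim with $\{0\}$ in place of $\mathcal{Q}$.
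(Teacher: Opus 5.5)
Your skeleton is the natural one, and the paper, which states this lemma without proof, evidently has the same argument in mind. Lemma \ref{lem_M_pm} confines the possible singularities of $M_+$ on $\overline{\mathbb{C}_+}\setminus\{0\}$ to zeros of $s_{11},s^A_{33}$, and those of $M_-$ to zeros of $s_{33},s^A_{11}$. The symmetries then reduce everything to $s_{11}$. Two steps, however, do not hold as written.

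\textbf{Step 2.} You want $\mathcal{Z}\cap(\overline{\mathbb{C}_\pm}\setminus\{0\})=\emptyset$, and you justify it by saying the formulas give a solution of \eqref{Mn_fredholm} ``and this solution is unique''. At a zero of a Fredholm determinant the homogeneous equation has a nontrivial solution, so uniqueness is exactly what is in question; the argument is circular. There are two ways to repair it.
\begin{itemize}
\item Prove directly that the homogeneous equation has only the trivial bounded solution whenever $s_{11}(k)s^A_{33}(k)\neq0$. This ODE argument is easy for compactly supported data, but delicate for Schwartz data, since $X_{-,1},X_{-,2},X_{+,3}$ need not extend into $\mathbb{C}_+$.
\item More economically, do not try to empty $\mathcal{Z}$ at all. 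Define $M_\pm$ on $\overline{\mathbb{C}_\pm}\setminus\{0\}$ by the right-hand sides in Lemma \ref{lem_M_pm}, which agree with the Fredholm solutions off $\mathcal{Q}$. Points of $\mathcal{Z}$ are then removable, and the properties in Proposition \ref{prop_Mn_basic} and Lemmas \ref{lem_M_ST}, \ref{lem_M_jump} pass to the extension by continuity. This is all that is used afterwards.
\end{itemize}
Your symmetry relation is also wrong. The $\mathcal{A}$-symmetry gives $s^A_{11}(k)=\overline{s_{11}(\bar k)}$ and $s^A_{33}(k)=\overline{s_{33}(\bar k)}$; your $\overline{s_{11}(\bar k)}$ is not even defined for $k\in\mathbb{C}_+$. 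Only after combining with $\mathcal{B}$ do you get $s^A_{33}(k)=\overline{s_{11}(-\bar k)}$. The conclusion survives because $k\mapsto-\bar k$ preserves $\mathbb{C}_+$. Note that these identities are stated only on $\mathbb{R}$ and must be extended into the half-planes, e.g.\ by the reflection principle.

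\textbf{Step 3.} This does not prove that $s_{33}$ has no real zeros. On $\mathbb{R}$ one has $s^A_{33}=\overline{s_{33}}$, $s^A_{31}=\overline{s_{31}}$ and $s_{12}(-k)=s_{32}(k)$. So your identity is just the $(3,3)$ entry of $s\mathcal{A}^{-1}s^\dagger=\mathcal{A}^{-1}$ (a consequence of \eqref{sym_s}), namely $|s_{31}|^2-2\sigma|s_{32}|^2+|s_{33}|^2=1$, divided by $|s_{33}|^2$. Written through entries of $s$, it is perfectly compatible with $s_{33}(k_*)=0$. Approximating by compactly supported data cannot help, since nonvanishing is not preserved under limits. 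In fact $1-2\sigma|r_1(-k)|^2+|r_2(k)|^2=|s_{33}(k)|^{-2}$ is automatically positive wherever it is finite, and it tends to $+\infty$ at a real zero. So positivity carries no information; only your remark that the right-hand side is \emph{finite} does.

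The correct argument is short. If $s_{33}(k_*)=0$, the row identity forces $(s_{31},s_{32})(k_*)\neq0$. Hence $r_2(k_*)=\overline{s_{31}}/\overline{s_{33}}$ or $r_1(-k_*)=s_{32}/s_{33}$ is not finite, contradicting that $r_1,r_2$ are defined on $\mathbb{R}\setminus\{0\}$ in \eqref{r1_r2}. Then $s_{11}(k)=s_{33}(-k)$ and the conjugation relations exclude real zeros of the other three functions.

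\textbf{Lemma \ref{lem_Cauchy sequence}.} This does not carry over verbatim, because the truncated data $(\eta_i q_0,\eta_i r_0)$ need not satisfy Assumption \ref{assu_solitonless}. Argue pointwise instead: $s^{(i)}_{11}(k)\to s_{11}(k)\neq0$, and likewise for $s^A_{33}$, so the denominators in Lemma \ref{lem_M_pm} are nonzero for large $i$. Then pass to the limit in those formulas.
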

	
   We next examine how $M$ evolves over time. By substituting $U_1(x, k)$ in the integral equation \eqref{Mn_fredholm} with its time-dependent counterpart $U_1(x, t, k)$, we obtain two time-dependent eigenfunctions denoted $M_\pm(x, t, k)$. We then define $M(x, t, k) = M_\pm(x, t, k)$ for $k \in \mathbb{C}_\pm$, which yields a piecewise analytic function $M(x, t, k)$.
   \begin{lem}\label{lem_Mxtk_smooth}
   	The function $M_\pm(x,t,k)$ is a smooth function of $(x,t)\in\mathbb{R}\times \left[ 0,T\right) $ for $k\in\overline{\mathbb{C}_\pm}\setminus\{0\}$ satisfying equations \eqref{lax_X}.
   \end{lem}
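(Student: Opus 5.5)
The plan is to treat $M_\pm(x,t,k)$ exactly as $M_\pm(x,k)$ was treated at $t=0$, with $t$ as a smooth parameter. Since $(q,r)$ is a smooth solution of \eqref{Newell_initial} on $[0,T)$ with $q(\cdot,t),r(\cdot,t)\in\mathcal{S}(\mathbb{R})$, the kernel $U_1(x,t,k)$ of \eqref{Mn_fredholm} depends smoothly on $t$, with Schwartz bounds locally uniform in $t$. First I will show that for each fixed $t$, Proposition \ref{prop_Mn_basic} and Lemma \ref{lem_M_sing} apply with $\mathcal{Q}$ replaced by $\{0\}$. The Fredholm determinant does not vanish on $\overline{\mathbb{C}_\pm}\setminus\{0\}$ at any time, because $s_{11}$ and $s^A_{33}$ will be shown to be time independent, so Assumptions \ref{assu_solitonless} and \ref{assu_LT} persist. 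This gives existence and uniqueness of $M_\pm(x,t,k)$ and its smoothness in $x$, together with the $x$-equation in \eqref{lax_X}.

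For smoothness in $(x,t)$ jointly, I will differentiate the Fredholm equations \eqref{Mn_fredholm} in $t$ (and mixed derivatives in $x,t$). Formally, $\partial_t^m\partial_x^n M_\pm$ satisfies the same Fredholm equation, with a forcing term built from lower-order derivatives and $\partial_t^j U_1$. Since the Fredholm operator is invertible and depends continuously on $t$, with the exponentials $\re^{(x-y)\widehat{\mathcal U}(k)}$ bounded on the chosen contours $\gamma^\pm_{ij}$, induction yields continuous derivatives of all orders. The rapid decay of $\partial_t^jU_1$ in $y$ justifies differentiating under the integral sign. Alternatively, I can use Lemma \ref{lem_M_pm}, which expresses $M_\pm$ through $X_\pm,X^A_\pm,s,s^A$. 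The Volterra solutions $X_\pm(x,t,k)$ are manifestly smooth in $(x,t)$ by the Neumann series with parameter, so it suffices to control the denominators $s_{11},s_{33},s^A_{11},s^A_{33}$.

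Next I will verify the $t$-equation $X_t-[\mathcal V,X]=V_1X$. By the compatibility (zero-curvature) condition $U_t-V_x+[U,V]=0$, which is equivalent to \eqref{Newell}, I will define $W:=\partial_tM-[\mathcal V,M]-V_1M$. A direct computation shows that $W$ satisfies the $x$-equation $W_x-[\mathcal U,W]=U_1W$. I then write $W=M\re^{x\widehat{\mathcal U}}C(t,k)$ for some $x$-independent $C$. Comparing the asymptotics as $x\to\pm\infty$ entrywise along the normalization contours $\gamma_{ij}^\pm$, where $M\to$ a triangular-type limit and $V_1\to0$, forces $C$ to be the diagonal value $-\mathcal V+\mathcal V=0$ in the relevant entries. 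Equivalently, $\widetilde M:=M\re^{\mathcal V t\,\widehat{}}$ satisfies the same Fredholm normalization, so uniqueness gives $W=0$. This is first done for compactly supported data approximations, in the spirit of Lemmas \ref{lem_M_ST} and \ref{lem_Cauchy sequence}, and then passed to the limit.

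The main obstacle is this last step: the normalizations of $M_\pm$ are mixed, with some entries fixed at $+\infty$ and others at $-\infty$. The limits of $W$ therefore have to be checked entry by entry, and the boundedness of $\re^{t(v_i-v_j)}$ must be handled carefully when $k\in\overline{\mathbb{C}_\pm}$ is not real. I would first do the computation for $X_\pm$, where the normalization is simple, deduce $s(k,t)=\re^{-t\widehat{\mathcal V}}s(k,0)$, and then transfer the result to $M_\pm$ via Lemma \ref{lem_M_ST} for compact support and Lemma \ref{lem_M_pm} in general.
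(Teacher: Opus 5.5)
The paper states Lemma~\ref{lem_Mxtk_smooth} without proof (it is taken over from the Charlier--Lenells-type construction the paper follows), so there is no argument in the paper to compare with, and I can only judge your outline on its own terms. Its main line is the standard one and is sound:
\begin{itemize}
\item joint smoothness follows either by differentiating \eqref{Mn_fredholm} in $t$, or more simply from Lemma~\ref{lem_M_pm} applied at each fixed $t$, because $X_\pm$ and $X^A_\pm$ are smooth in $(x,t)$ and $s_{11}$, $s^A_{33}$ are time independent;
\item the $t$-part comes from zero curvature.
\end{itemize}
In the Lemma~\ref{lem_M_pm} route, note that the middle column is a cross product of adjoint columns. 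It satisfies the $t$-equation because $\operatorname{tr}V=0$; the nonzero $\operatorname{tr}\mathcal U=3\ri k$ is absorbed by the exponentials.

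\textbf{A step that would fail.} The step ``first done for compactly supported data approximations \ldots\ then passed to the limit'' does not work for the $t$-equation, for two reasons:
\begin{itemize}
\item The truncations $(\eta_iq(\cdot,t),\eta_ir(\cdot,t))$ do not solve \eqref{Newell}. The compatibility condition therefore fails for them, and the corresponding $W$ no longer satisfies the $x$-equation.
\item \eqref{Newell} does not propagate compact support, so Lemma~\ref{lem_M_ST} cannot be invoked at $t>0$ even when $q_0,r_0$ are compactly supported.
\end{itemize}
Approximation is legitimate only for fixed-$t$ identities such as Lemma~\ref{lem_M_pm}, so drop this step.

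\textbf{Sign of the time evolution.} Since $\Psi_\pm=X_\pm\re^{x\mathcal U+t\mathcal V}$ solve both parts and $X_+=X_-\re^{x\widehat{\mathcal U}}s(k,t)$, the $t$-independent quantity is $\Psi_-^{-1}\Psi_+=\re^{-t\widehat{\mathcal V}}s(k,t)$. Hence $s(k,t)=\re^{t\widehat{\mathcal V}}s(k,0)$, not $\re^{-t\widehat{\mathcal V}}s(k,0)$; this is what produces the factors $\re^{t\Phi_{ij}}$ in \eqref{jump_0}. Persistence of Assumption~\ref{assu_LT} needs this full law on $\mathbb R$, where $v_i-v_j\in\ri\mathbb R$ so $|r_1|$ and $|r_2|$ are conserved. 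Invariance of $s_{11}$ and $s^A_{33}$ alone is not enough.

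\textbf{The mixed normalization is not an obstacle.} The sentence about $\widetilde M=M\re^{\mathcal Vt\,\widehat{}}$ can be discarded. Set $W=\partial_tM-[\mathcal V,M]-V_1M$, which satisfies $W_x-[\mathcal U,W]=U_1W$. Integrate the $(i,j)$ entry from a point $x'$ on $\gamma^\pm_{ij}$ to $x$:
$W_{ij}(x)=\re^{(x-x')(u_i-u_j)}W_{ij}(x')+\int_{x'}^{x}\re^{(x-y)(u_i-u_j)}(U_1W)_{ij}(y)\,\rd y$.
Now let $x'$ tend to the base point $\pm\infty$ of $\gamma^\pm_{ij}$.
\begin{itemize}
\item The exponential stays bounded, because that is exactly how $\gamma^\pm_{ij}$ was chosen.
\item $W_{ij}(x')\to0$, because at the base point $M_{ij}\to\delta_{ij}$, $\partial_tM_{ij}\to0$ (from the $t$-differentiated Fredholm equation), and $V_1\to0$.
\end{itemize}
So $W$ is a bounded solution of the homogeneous version of \eqref{Mn_fredholm}. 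Hence $W=0$ wherever the Fredholm determinant is nonzero, and Lemma~\ref{lem_M_sing} together with the time invariance of the assumptions makes that all of $\overline{\mathbb C_\pm}\setminus\{0\}$. You need no entrywise determination of the constant matrix $C$, and no control of $\re^{t(v_i-v_j)}$ for non-real $k$, because $t$ is fixed throughout.
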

   
   \begin{lem}\label{lem_Mxtk_jump}
   	For each $(x,t)\in\mathbb{R}\times \left[ 0,T\right) $, the function $M(x,t,k)$ is analytic in $k\in\mathbb{C}\setminus \mathbb{R} $ and admits continuous limits onto $\mathbb{R}\setminus\{0\}$ from both half-planes. Furthermore, when $k$ tends to $\mathbb{R}$ from above and below, $M(x,t,k)$ obeys the jump relation \eqref{jump_0}.
    
   \end{lem}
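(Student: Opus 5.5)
The plan is to reduce the time-dependent statement to the $t=0$ results (Lemmas \ref{lem_M_jump} and \ref{lem_M_sing}) by working with the Lax pair \eqref{lax_X} in both $x$ and $t$.

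\textbf{Step 1: analyticity and boundary values.} Analyticity of $M(x,t,k)$ in $k\in\mathbb{C}\setminus\mathbb{R}$, together with continuous boundary values on $\mathbb{R}\setminus\{0\}$, is obtained by repeating the Fredholm analysis of Proposition \ref{prop_Mn_basic} at each fixed $t\in[0,T)$. The potential $U_1(\cdot,t)$ is used in place of $U_1(\cdot,0)$; it lies in $\mathcal{S}(\mathbb{R})$ for each $t<T$ by the definition of the existence time. Lemma \ref{lem_M_sing} (with Assumptions \ref{assu_solitonless}--\ref{assu_LT}) removes the set $\mathcal{Z}$ of zeros of the Fredholm determinant. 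To justify that these assumptions persist in $t$, we use the time evolution of $s$ derived in Step 3 below: it gives $s_{11}(k;t)=s_{11}(k;0)$ and $|r_j(k;t)|=|r_j(k;0)|$.

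\textbf{Step 2: the $t$-part of the Lax pair.} By Lemma \ref{lem_Mxtk_smooth}, $M_\pm(x,t,k)$ is smooth in $(x,t)$ and satisfies both equations in \eqref{lax_X}. Hence $\check M_\pm=M_\pm \re^{\mathcal{U}x+\mathcal{V}t}$ are fundamental solutions of the full Lax pair \eqref{lax_check}, with $\det=1$. For $k\in\mathbb{R}\setminus\{0\}$, both boundary values $\check M_+$ and $\check M_-$ solve the same system. Therefore
\[
\check M_-^{-1}\check M_+=\re^{-\mathcal{V}t-\mathcal{U}x}M_-^{-1}M_+\re^{\mathcal{U}x+\mathcal{V}t}
\]
is independent of both $x$ and $t$.

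\textbf{Step 3: identifying the jump.} Evaluate the constant from Step 2 at $t=0$. By Lemma \ref{lem_M_jump}, $M_-^{-1}M_+=\re^{x\widehat{\mathcal U}}J(k)$, so the constant equals $J(k)$, with $J$ built from $r_1,r_2$ of the initial data. It follows that
\[
M_-^{-1}(x,t,k)M_+(x,t,k)=\re^{(x\mathcal U+t\mathcal V)\,\widehat{}}\,J(k).
\]
Its $(i,j)$ entry is $J_{ij}\re^{(u_i-u_j)x+(v_i-v_j)t}=J_{ij}\re^{t\Phi_{ij}}$, which is exactly \eqref{jump_0}.

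\textbf{Main obstacle.} The delicate part is the $t$-independence argument of Step 2. It requires that $M_\pm(x,t,k)$ satisfy the $t$-equation, not merely the $x$-equation. I would prove this in three stages. First, for compactly supported data, use Lemma \ref{lem_M_ST} to write $M_\pm=X_+\re^{x\widehat{\mathcal U}}T_\pm(k,t)$. Second, show that $X_+$ satisfies the $t$-part, by a uniqueness argument from its normalization as $x\to+\infty$ combined with the compatibility of the Lax pair, which holds because $(q,r)$ solves \eqref{Newell}. This gives $s(k,t)=\re^{t\widehat{\mathcal V}}s(k,0)$. Third, pass to general Schwartz data by the density argument of Lemma \ref{lem_Cauchy sequence}. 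The boundedness of $V_1$'s decaying part, and the fact that $\mathcal{V}$ is diagonal, are what make the normalization at infinity compatible with the $t$-flow.
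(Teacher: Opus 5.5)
The paper states this lemma without proof. Your Steps 2--3 are the standard argument it implicitly relies on: take the $t$-equation for $M_\pm$ from Lemma \ref{lem_Mxtk_smooth}, conclude that $\re^{-x\mathcal U-t\mathcal V}M_-^{-1}M_+\re^{x\mathcal U+t\mathcal V}$ is constant, and identify the constant via Lemma \ref{lem_M_jump} at $t=0$. So the proposal is correct provided you genuinely cite Lemma \ref{lem_Mxtk_smooth}. One small slip: $\det\check M_\pm=\re^{3\ri kx}$, not $1$, since $\operatorname{tr}\mathcal U=3\ri k$; only invertibility is used, so nothing breaks.

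The three-stage plan under ``Main obstacle'' would not work as written. The $t$-equation is a property of the actual solution $(q,r)$, and it cannot be derived for approximants and then passed to the limit:
\begin{itemize}
\item the truncations $\eta_i q(\cdot,t)$ are not solutions of \eqref{Newell};
\item solutions with compactly supported initial data do not stay compactly supported.
\end{itemize}

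If you want to avoid Lemma \ref{lem_Mxtk_smooth}, use density only at fixed $t$:
\begin{itemize}
\item Lemmas \ref{lem_M_jump} and \ref{lem_M_sing}, applied to the Schwartz data $(q(\cdot,t),r(\cdot,t))$, give $M_+=M_-\re^{x\widehat{\mathcal U}}J(k;t)$ on $\mathbb R\setminus\{0\}$, with $J(k;t)$ built from $s(k,t)$.
\item Your stage two is the $t$-equation for the Volterra solutions $X_\pm$, and it needs no compact support. It gives $s(k,t)=\re^{t\widehat{\mathcal V}}s(k,0)$.
\item Hence $r_1(k;t)=r_1(k)\re^{t(v_1-v_2)}$, $r_2(k;t)=r_2(k)\re^{t(v_1-v_3)}$, and $J(k;t)=\re^{t\widehat{\mathcal V}}J(k;0)$, which is exactly \eqref{jump_0}.
\end{itemize}
This route bypasses the $t$-equation for $M$ entirely. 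The same $s$-evolution is what your Step 1 needs to keep the assumptions valid at time $t$. It should therefore be derived first, from $X_\pm$, rather than ``from Step 3''.
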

	
	Thus we have completed the construction of RH problem \ref{rhp_M}.

	\section{Long-time asymptotics}\label{sec_LT}\ \ \ \
   The goal of this section is to present the asymptotic behavior of the solution to the initial value problem \eqref{Newell_initial} of the Newell equation as the time variable $t\to\infty$, under the Assumption \ref{assu_solitonless} of no solitons, i.e., $s_{11}(k)\neq0$ for $k\in\mathbb{C}_+$.

    \subsection{Long-time asymptotics in Region {\rm{II}}}\label{subsec_regionII}
    \ \ \ \
       Consider performing the Deift-Zhou nonlinear steepest descent analysis on the initial RH problem \ref{rhp_M}. First, for the stationary phase functions $\Phi_{ij}(\zeta,k)$, $1\leq i<j\leq3$ appearing in the jump matrix $v(x,t,k)$ in equation \eqref{jump_0}:
        \begin{align*}\label{theta}
            &\Phi_{12}(\zeta,k):=2\ri k\left(k+\zeta-1\right),\quad\Phi_{13}(\zeta,k):=4\ri k\left(\zeta-1\right),\quad \Phi_{23}(\zeta,k):=-2\ri k\left(k+1-\zeta\right),
        \end{align*}
      it is computed that $\Phi_{23}(\zeta,k)$ has a single zero $k_0=(\zeta-1)/2$ and $\Phi_{12}(\zeta,k)$ has a single zero $-k_0=(1-\zeta)/2$ for $\zeta\in\mathcal{I}_2$ and $t>0$. In Figure \ref{fig_theta_sign1}, the  signature tables of the real parts of the three phase functions are given for $(x,t)$ in Region \rm{II}.
    \begin{figure}[htbp]
    \centering
    \begin{subfigure}[t]{0.28\textwidth}
        \centering
     	\begin{tikzpicture} [scale=0.7]
        	\definecolor{mycolor}{HTML}{f5d5d8}
        	
         \fill[mycolor] (-3,0) -- (-1.5,0) -- (-1.5,3) -- (-3,3) -- cycle;
        \fill[mycolor] (-1.5,0) -- (3,0) -- (3,-3) -- (-1.5,-3) -- cycle;
	 		
	  \draw [very thick,black](-3,0) -- (3,0);
        \draw [very thick,black](-1.5,-3) -- (-1.5,3);

        \fill (-1.5,0) circle (1.5pt);
        \fill (0,0) circle (1.5pt);
	  \fill (1.5,0) circle (1.5pt);

        \node[below] at (-2,0) {$-k_0$};
	  \node[below] at (0,0) {$\text{0}$};
	  \node[below] at (1.5,0) {$k_0$};

      \node at (-2.4,2) {$D_2^{(12)}$};
      \node at (1,2) {$D_1^{(12)}$};
      \node at (1,-2) {$D_4^{(12)}$};
      \node at (-2.4,-2) {$D_3^{(12)}$};

        \node[right] at (3,0) {$\mathbb{R}$};
     	\end{tikzpicture}
     		\caption{The signature of $\Phi_{12}$.}
    \end{subfigure}
    \quad
    \begin{subfigure}[t]{0.28\textwidth}
        \centering
     	\begin{tikzpicture} [scale=0.7]
        
        	\definecolor{mycolor}{HTML}{f5d5d8}

         \fill[mycolor] (-3,0) -- (3,0) -- (3,-3) -- (-3,-3) -- cycle;
	 		
	  \draw [very thick,black](-3,0) -- (3,0);

        \fill (-1.5,0) circle (1.5pt);
        \fill (0,0) circle (1.5pt);
	  \fill (1.5,0) circle (1.5pt);

        \node[above] at (-1.5,0) {$-k_0$};
	  \node[above] at (0,0) {$\text{0}$};
	  \node[above] at (1.5,0) {$k_0$};
      
      \node at (0,2) {$D_1^{(13)}$};
      \node at (0,-2) {$D_2^{(13)}$};

        \node[right] at (3,0) {$\mathbb{R}$};
     	\end{tikzpicture}
     		\caption{The signature of $\Phi_{13}$.\label{subfig_theta13_1}}
    \end{subfigure}
    \quad
    \begin{subfigure}[t]{0.28\textwidth}
        \centering
     	\begin{tikzpicture} [scale=0.7]
        \definecolor{mycolor}{HTML}{f5d5d8}
        	
         \fill[mycolor] (3,0) -- (1.5,0) -- (1.5,3) -- (3,3) -- cycle;
        \fill[mycolor] (1.5,0) -- (-3,0) -- (-3,-3) -- (1.5,-3) -- cycle;
	 		
	  \draw [very thick,black](-3,0) -- (3,0);
        \draw [very thick,black](1.5,-3) -- (1.5,3);

        \fill (-1.5,0) circle (1.5pt);
        \fill (0,0) circle (1.5pt);
	  \fill (1.5,0) circle (1.5pt);

        \node[below] at (-1.5,0) {$-k_0$};
	  \node[below] at (0,0) {$\text{0}$};
	  \node[below] at (1.8,0) {$k_0$};

      \node at (2.4,2) {$D_1^{(23)}$};
      \node at (-1,2) {$D_2^{(23)}$};
      \node at (-1,-2) {$D_3^{(23)}$};
      \node at (2.4,-2) {$D_4^{(23)}$};

        \node[right] at (3,0) {$\mathbb{R}$};
     	\end{tikzpicture}
     		\caption{The signature of $\Phi_{23}$.}
             \end{subfigure}
             \caption{ Open sets in the complex $k$-plane for Region \rm{II}: $\rre \Phi_{ij}>0$ (shaded) and $\rre \Phi_{ij}<0$ (white).}
             \label{fig_theta_sign1}
\end{figure}

The next step is to decompose the jump on $\mathbb{R}$, with the aim of obtaining the new jump that approaches the identity matrix everywhere except near the two critical points $\pm k_0$. Prior to this, it is necessary to perform analytical approximation for the functions $r_1(k)$, $r_2^*(k)$, $\hat{r}_1( k)=\frac{r_1(k)}{1-2\sigma|r_1(k)|^2}$, and $\hat{\alpha}(k)=\frac{\alpha(k)}{1-2\sigma|r_1(k)|^2}$.

\begin{lem}\label{lem_r_decomposition}
For each $\zeta\in\mathcal{I}_2$ and $t>0$, there are several decompositions:
    \begin{align*}
        &r_1(k)=r_{1,a}(x,t,k)+r_{1,r}(x,t,k),\quad && k\in \left[-k_0,\infty\right),\\
        &\hat{r}_1(k)=\hat{r}_{1,a}(x,t,k)+\hat{r}_{1,r}(x,t,k),\quad && k\in \left(-\infty,-k_0\right],\\
        &\hat{\alpha}(k)=\hat{\alpha}_{a}(x,t,k)+\hat{\alpha}_{r}(x,t,k),\quad && k\in \left(-\infty,-k_0\right],\\
        &r_2(k)=r_{2,a}(x,t,k)+r_{2,r}(x,t,k),\quad && k\in \mathbb{R},
    \end{align*}
    where the above functions satisfy the following properties:
    \begin{itemize}
     \item $r_{1,a}(x,t,k)$ is defined and continuous for $k\in \overline{D_1^{(12)}}$, and is analytic for $k\in D_1^{(12)}$. $\hat{r}_{1,a}(x,t,k)$ is defined and continuous for $k\in \overline{D_3^{(12)}}$, and is analytic for $k\in D_3^{(12)}$. $\hat{\alpha}_{a}(x,t,k)$ is defined and continuous for $k\in \overline{D_2^{(12)}}$, and is analytic for $k\in D_2^{(12)}$. $r_{2,a}(x,t,k)$ is defined and continuous for $k\in \overline{\mathbb{C}_+}$, and is analytic for $k\in \mathbb{C}_+$. 
    \item The functions $r_{1,a}$, $\hat{r}_{1,a}$, $\hat{\alpha}_{a}$ and $r_{2,a}$ satisfy
        \begin{align*}
            &\left|r_{1,a}(x,t,k)-\sum_{j=0}^{N}\frac{r^{(j)}_{1,a}(-k_0)}{j!}(k+k_0)^j\right|\leq C \left|k+k_0\right|^{N+1}\re^{\frac{t}{4}\left|\rre\Phi_{12}(\zeta,k)\right|},\quad k\in \overline{D_1^{(12)}},\\
            &\left|\hat{r}_{1,a}(x,t,k)-\sum_{j=0}^{N}\frac{\hat{r}^{(j)}_{1,a}(-k_0)}{j!}(k+k_0)^j \right|\leq C \left|k+k_0\right|^{N+1}\re^{\frac{t}{4}\left|\rre\Phi_{12}(\zeta,k)\right|},\quad  k\in \overline{D_3^{(12)}},\\
            &\left|\hat{\alpha}_{a}(x,t,k)-\sum_{j=0}^{N}\frac{\hat{\alpha}^{(j)}_{a}(-k_0)}{j!}(k+k_0)^j \right|\leq C \left|k+k_0\right|^{N+1}\re^{\frac{t}{4}\left|\rre\Phi_{23}(\zeta,k)\right|},\quad  k\in \overline{D_2^{(12)}},\\
            &\left|r_{2,a}(x,t,k)-\sum_{j=0}^{N}\frac{r^{(j)}_{2,a}(0)}{j!}k^j\right|\leq C \left|k\right|^{N+1}\re^{\frac{t}{4}\left|\rre\Phi_{13}(\zeta,k)\right|},\quad k\in \overline{\mathbb{C}_+},
            \end{align*}
            and
            \begin{align*}
            & \left|r_{1,a}(x,t,k)\right|\leq  \frac{C}{1+\left|k\right|}\re^{\frac{t}{4}\left|\rre\Phi_{12}(\zeta,k)\right|}, \quad k\in \overline{D_1^{(12)}},\\
            & \left|\hat{r}_{1,a}(x,t,k)\right|\leq  \frac{C}{1+\left|k\right|}\re^{\frac{t}{4}\left|\rre\Phi_{12}(\zeta,k)\right|},\quad  k\in \overline{D_3^{(12)}},\\
            & \left|\hat{\alpha}_{a}(x,t,k)\right|\leq  \frac{C}{1+\left|k\right|}\re^{\frac{t}{4}\left|\rre\Phi_{23}(\zeta,k)\right|},\quad  k\in \overline{D_2^{(12)}},\\
            & \left|r_{2,a}(x,t,k)\right|\leq  \frac{C}{1+\left|k\right|}\re^{\frac{t}{4}\left|\rre\Phi_{13}(\zeta,k)\right|},\quad k\in \overline{\mathbb{C}_+},
            \end{align*}

        where the constant $C$ is independent of $\zeta,t,k$.
    \item For each $1\leq p\leq \infty$,
    \begin{align*}
        &\left\|(1+|\cdot|) r_{1,r}(x,t,\cdot)\right\|_{L^p\left(-k_0,\infty\right)}=\left\|\frac{r_{1,r}(x,t,\cdot)}{\cdot+k_0}\right\|_{L^p\left(-k_0,\infty\right)}=\mathcal{O}\left(t^{-N}\right),\\
        &\left\|(1+|\cdot|) \hat{r}_{1,r}(x,t,\cdot)\right\|_{L^p\left(-\infty,-k_0\right)}=\left\|\frac{\hat{r}_{1,r}(x,t,\cdot)}{\cdot+k_0}\right\|_{L^p\left(-\infty,-k_0\right)}=\mathcal{O}\left(t^{-N}\right),\\
        &\left\|(1+|\cdot|) \hat{\alpha}_{r}(x,t,\cdot)\right\|_{L^p\left(-\infty,-k_0\right)}=\left\|\frac{\hat{\alpha}_{r}(x,t,\cdot)}{\cdot+k_0}\right\|_{L^p\left(-\infty,-k_0\right)}=\mathcal{O}\left(t^{-N}\right),\\
        &\left\|(1+|\cdot|) {r}_{2,r}(x,t,\cdot)\right\|_{L^p(\mathbb{R})}=\mathcal{O}\left(t^{-N}\right),
    \end{align*}
    uniformly for $\zeta\in\mathcal{I}_2$ as $t\to\infty$.
        \end{itemize}
\end{lem}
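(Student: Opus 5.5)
The plan follows the standard analytic-approximation argument of Deift–Zhou, as adapted by Lenells and by Charlier–Lenells for $3\times3$ problems. It treats each of the four functions separately, using a rational Taylor-matching step followed by a Fourier splitting. I will describe it for $r_1$ on $[-k_0,\infty)$. The remaining three cases are analogous; only the relevant phase and region change.
\begin{itemize}
\item $\hat r_1$ and $\hat\alpha$ use $\Phi_{12}$ and $\Phi_{23}$ on $(-\infty,-k_0]$, with the regions $D_3^{(12)}$ and $D_2^{(12)}$ respectively.
\item $r_2$ uses $\Phi_{13}$ on all of $\mathbb{R}$, expanded at $k=0$, since $\Phi_{13}=4\ri k(\zeta-1)$ is linear with $\rre\Phi_{13}<0$ in $\mathbb{C}_+$ for $\zeta>1$.
\end{itemize}

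\textbf{Regularity inputs.} From Propositions~\ref{prop_s}, \ref{prop_sA} and Assumption~\ref{assu_solitonless}, $r_1,r_2$ are smooth on $\mathbb{R}\setminus\{0\}$ with rapid decay, together with all derivatives, as $k\to\pm\infty$. By Assumption~\ref{assu_LT}, the denominators $1-2\sigma|r_1|^2$ appearing in $\hat r_1,\hat\alpha$ are bounded away from zero. Hence these functions are also smooth and rapidly decaying on the relevant half-lines, which stay away from $0$ because $k_0>0$ is bounded below for $\zeta\in\mathcal{I}_2$ compact. For $r_2$ near $0$ I will need smoothness at $k=0$; I would extract this from the small-$k$ behaviour of $s,s^A$. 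This is a point to check; if necessary, the expansion point is shifted slightly.

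\textbf{Step 1: rational Taylor matching.} Write $\theta=-\ri\Phi_{12}/2=k^2+(\zeta-1)k$, which is real on $\mathbb{R}$ and monotone on $[-k_0,\infty)$ with $\theta(-k_0)$ minimal. Choose
\[
f_0(k)=\sum_{j} \frac{a_j (k+k_0)^j}{(k-\ri)^{m+j}},
\]
with $m$ large and coefficients $a_j$ fixed so that $f_0$ agrees with $r_1$ to order $N+1$ at $-k_0$ (and decays like $1/k$). The coefficients depend smoothly on $\zeta$ and are uniformly bounded over $\mathcal{I}_2$. The pole $k=\ri$ lies outside $D_1^{(12)}$, a region to the right of $\rre k=-k_0$ in $\mathbb{C}_+$, after possibly moving it to $k_0+\ri$ or a similar point. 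Then $f=r_1-f_0$ vanishes to order $N+1$ at $-k_0$, and $f_0$ is analytic on $D_1^{(12)}$ with the required Taylor bound and $1/(1+|k|)$ bound.

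\textbf{Step 2: Fourier splitting in the phase variable.} Set
\[
g(\theta)=\frac{(k-\ri)^{q}}{(k+k_0)^{p}}\,f(k)
\]
for suitable $p\le N+1$ and $q$. This function lies in $H^s(\mathbb{R}_\theta)$ after extension by zero, using $\theta\mapsto k$ as a smooth change of variables. The extension is smooth because the vanishing order at $-k_0$ absorbs the square-root singularity of $k(\theta)$ at $\theta_{\min}$; this is why the $(k+k_0)^{p}$ factor is chosen to cancel it. By Fourier inversion, $g=\int \hat g(s)\re^{\ri s\theta}\,\rd s$. Split this integral at $s=-t/4$. The part with $s\ge -t/4$ is
\[
f_a(k)=\frac{(k+k_0)^p}{(k-\ri)^q}\int_{-t/4}^\infty \hat g(s)\re^{\ri s\theta(k)}\rd s .
\]
It continues analytically into $D_1^{(12)}$, where $\rre(\ri\theta)\le0$, and satisfies
\[
|f_a|\le C|k+k_0|^p|k-\ri|^{-q}\re^{\frac t4|\rre\Phi_{12}|}.
\]
The remainder $f_r$ is bounded by $\int_{-\infty}^{-t/4}|\hat g|$. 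By Plancherel together with $\|(1+|s|)^{s'}\hat g\|_{L^2}<\infty$, this is $\mathcal{O}(t^{-N})$ in every $L^p$ norm, also after division by $(k+k_0)$ and multiplication by $(1+|k|)$ thanks to the prefactors. Finally, set $r_{1,a}=f_0+f_a$ and $r_{1,r}=f_r$.

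\textbf{Main obstacle.} I expect the hardest part to be uniformity in $\zeta\in\mathcal{I}_2$ and the handling of the change of variables $k\mapsto\theta$ near the critical point, where $\rd\theta/\rd k=0$. To deal with it, I would take $p$ large enough that $g$ has $H^{s}$ norm bounded uniformly in $\zeta$; this is the source of the exponents being tied to $N$. I would verify the $\theta$-derivatives of $g$ by expressing $\partial_\theta=(2k+\zeta-1)^{-1}\partial_k$ and using the vanishing order of $f$. The same care is needed for $\hat\alpha$, since its natural phase is $\Phi_{23}$ while its domain is $D_2^{(12)}$. There I would check that on $D_2^{(12)}$ the sign of $\rre\Phi_{23}$ is the favourable one, using Figure~\ref{fig_theta_sign1}, before applying the same construction with $\theta=-\ri\Phi_{23}/2$.
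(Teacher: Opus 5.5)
Your route (rational Taylor matching at $-k_0$, then a Fourier splitting in the phase variable) is the Deift--Zhou/Lenells argument that the paper simply cites, so the strategy is right. As written, however, Steps 1--2 do not close, and two details need fixing.

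First, the rational factors must have their pole outside the region of analyticity. The region $D_1^{(12)}=\{\rre k>-k_0,\ \rim k>0\}$ contains both $\ri$ and $k_0+\ri$, so $f_0$ and $f_a$ built from $(k-\ri)$ are not analytic there. Use powers of $(k+\ri)$ for $D_1^{(12)}$, $D_2^{(12)}$ and $\mathbb{C}_+$, and powers of $(k-\ri)$ for $D_3^{(12)}$.

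Second, your orders are inverted.
\begin{itemize}
\item Since $\theta+k_0^2=(k+k_0)^2$, a function vanishing to order $m$ at $-k_0$ lies, after extension by zero, in $H^\beta(\mathbb{R}_\theta)$ only for $\beta<(m+1)/2$. The tail $\int_{-\infty}^{-t/4}|\hat g(s)|\,\rd s$ is then in general only $\mathcal{O}(t^{-(\beta-1/2)})$.
\item Matching to order $N+1$ gives $f=\mathcal{O}((k+k_0)^{N+2})$. With $p\ge 1$, which you need for the bound on $r_{1,r}/(\cdot+k_0)$, $g$ vanishes to order at most $N+1$. That yields about $t^{-(N+1)/2}$ rather than $t^{-N}$.
\item Taking $p$ larger, as you propose for the main obstacle, lowers the regularity of $g$ rather than raising it.
\end{itemize}
The fix is to match to a much higher order $n$ in Step 1, roughly $3N$. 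Then take $p=N+1$, so that $f_a=\mathcal{O}(|k+k_0|^{N+1})$ gives the Taylor estimate directly, with $q\ge N+2$ for the $1/(1+|k|)$ bound. The surplus vanishing $n+1-p\ge 2N+2$ then puts $g$ in $H^{N+1}(\mathbb{R}_\theta)$.

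Third, the regularity inputs need repair.
\begin{itemize}
\item The half-line $[-k_0,\infty)$ carrying $r_1$ contains $k=0$. So the smoothness question you flag for $r_2$ arises for $r_1$ as well. Shifting an expansion point does not help, since the Fourier step needs $g$ smooth on the whole half-line.
\item Smoothness at $0$ does hold here. $U_1$ is independent of $k$ and $\mathcal{U}$ is linear, so for real $k$ the Volterra kernels are unimodular. Each $k$-derivative only produces a polynomial weight, which the Schwartz decay absorbs. Hence $s,s^A\in C^\infty(\mathbb{R})$, and $r_1,r_2$ are smooth on all of $\mathbb{R}$ provided $s_{11}$ and $s^A_{33}$ do not vanish on the real axis. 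Assumption~\ref{assu_solitonless}, stated for the open half-plane, does not literally give this.
\item Assumption~\ref{assu_LT} controls $1-2\sigma|r_1(-k)|^2+|r_2(k)|^2$, not the denominator $1-2\sigma|r_1(k)|^2$ of $\hat r_1$ and $\hat\alpha$. For $\sigma=-1$ the latter is $\ge 1$. For $\sigma=1$ its positivity is an extra hypothesis, not a consequence of Assumption~\ref{assu_LT}; the paper also uses it tacitly, e.g.\ for $\nu$ to be real.
\end{itemize}
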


\begin{proof}
    The proof can be found in References \cite{DZ_1993} and \cite{Lenells_2017}.
\end{proof}

Note that the reflection coefficient $\hat{r}_1(k)$ is introduced in Lemma \ref{lem_r_decomposition} because we will subsequently construct a scalar RH problem concerning the function $\delta(k)$. By applying a matrix transformation composed of $\delta$, the initial jump matrix $v(x,t,k)$ in \eqref{jump_0} can be decomposed. This scalar RH problem is as follows:
     \begin{equation}\label{delta}
    	\left\{
    	\begin{aligned}
    		&\begin{aligned}\delta_{+}(k)
    			=&\delta_{-}(k)(1-2\sigma\left|r_1(k)\right|^2),\quad  &&k\in \Sigma^{(1)}_1,\\
    			=&\delta_{-}(k), &&k\in\mathbb{C}\setminus\Sigma^{(1)}_1,
    		\end{aligned}\\
            &\delta(k)\rightarrow1,\qquad\qquad\qquad\qquad\quad\quad\! k\rightarrow\infty.
    	\end{aligned}
    	\right.
    \end{equation}
    The function $\delta(k)$ defined in equation \eqref{delta} can be expressed as the following integral equation by Plemelj formulas:
    \begin{equation}\label{vol_delta}
        \delta(\zeta,k)={\rm{exp}}\left\{\frac{1}{2 \pi\ri}\int_{-\infty}^{-k_0}\frac{\ln \left(1-2\sigma\left|r_1(s)\right|^2\right)}{s-k}{\rm{d}}s\right\},\quad k\in \mathbb{C}\backslash \Sigma^{(1)}_1.
    \end{equation}
    \begin{figure}[htbp]
    	\centering
    	\begin{tikzpicture}[scale=1]
        
        \draw[very thick, black!20!blue] (-4,0) -- (4,0);
    		
    		\draw[very thick, black!20!blue, -latex] (-4,0) -- (-2.5,0);
            \draw[very thick, black!20!blue, -latex] (-4,0) -- (0.2,0);
            \draw[very thick, black!20!blue, -latex] (-4,0) -- (3,0);

    		\fill (-1.5,0) circle (1.5pt);
            \fill (0,0) circle (1.5pt);
	      \fill (1.5,0) circle (1.5pt);

           \node[below] at (-1.5,0) {$-k_0$};
	     \node[below] at (0,-0.2) {$\text{0}$};
	     \node[below] at (1.5,0) {$k_0$};
         
           \node[red!70!black,above] at (-3,0.2) {$\Sigma^{(1)}_1$};
	     \node[red!70!black,above] at (0,0.2) {$\Sigma^{(1)}_2$};
          \node[red!70!black,above] at (3,0.2) {$\Sigma^{(1)}_3$};

          \node[right] at (4,0) {$\rre k$};
    		
    	\end{tikzpicture}
    	\caption{The jump contour $\Sigma^{(1)}$ in the complex $k$-plane.}
    	\label{fig_Gamma1}
    \end{figure}
    
Let $\ln_0(k)=\ln|k|+\ri \arg_0k$ and  $\ln_\pi(k)=\ln|k|+\ri \arg_\pi k$ with $\arg_0 k\in (0,2\pi)$ and $\arg_\pi k\in (-\pi,\pi)$. 
    \begin{lem}\label{lem_delta}
    	The function $\delta(\zeta,k)$ satisfies the following properties:
    	\begin{enumerate}
     		\item  The functions 
            $\delta^{\pm 1}(\zeta,k)$ are analytic on $\mathbb{C}\setminus\left(-\infty,-k_0\right] $, and $\delta$ can be written as
			\begin{equation*}
				\delta(\zeta,k)={\rm{e}}^{\ri\nu\ln_\pi(k+k_0)}{\rm{e}}^{-\chi(\zeta,k)},\quad 1<\zeta<\infty,
			\end{equation*}
			where 
			\begin{equation}\label{nu_chi}
				\begin{aligned}
					\nu=-\frac{1}{2\pi}\ln(1-2\sigma\left|r_1(-k_0) \right|^2 ),\quad \chi(\zeta,k)=\frac{1}{2\pi\ri}\int_{-\infty}^{-k_0}\ln_\pi(k-s){\rm{d}}\ln(1-2\sigma\left|r_1(s)\right|^2 ).
				\end{aligned}
			\end{equation}
    	\item For each $\zeta\in\mathcal{I}_2$, $\delta(k)$ is bounded in $\mathbb{C}\setminus\left(-\infty,-k_0\right] $ and satisfies $\delta^{-1}(k)= \overline{\delta(\overline{k})}$.
        \item As $k\to -k_0$ along the non-tangential direction of $(-\infty,-k_0)$, we have the following equation holds, for $C$ independent of $\zeta$:
        \begin{align*}
            &\left|\chi(\zeta,k)-\chi({\zeta,-k_0})\right|\leq C|k+k_0|\left(1+|\ln|k+k_0||\right).
             \end{align*}
            
    	\end{enumerate}
    \end{lem}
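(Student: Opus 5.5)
Everything follows from the Cauchy-integral representation \eqref{vol_delta} and the fact that $f(s):=\ln(1-2\sigma|r_1(s)|^2)$ is real, smooth, and rapidly decaying on $(-\infty,-k_0]$. Note that $f$ is well defined by Assumption \ref{assu_LT}: for $k\le -k_0<0$ one needs positivity of $1-2\sigma|r_1(k)|^2$, which is either automatic ($\sigma=-1$) or comes from the positivity of the $(2,2)$ jump entry. Since $r_1\in\mathcal{S}$ away from $0$ and $-k_0<0$ for $\zeta\in\mathcal{I}_2$, the interval stays away from the origin.

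\emph{Part 1.} The Cauchy integral $\frac{1}{2\pi\ri}\int_{-\infty}^{-k_0}\frac{f(s)}{s-k}\,\rd s$ is analytic off $(-\infty,-k_0]$, so $\delta^{\pm1}=\exp(\pm\cdot)$ are analytic there. For the factorization, integrate by parts with $\frac{1}{s-k}=-\partial_s\ln_\pi(k-s)$; $\ln_\pi(k-s)$ is well defined for $s\le -k_0$ and $k\notin(-\infty,-k_0]$, since $k-s$ then avoids $(-\infty,0]$. The boundary term at $-\infty$ vanishes because $f$ decays rapidly. The boundary term at $s=-k_0$ is $-\frac{1}{2\pi\ri}f(-k_0)\ln_\pi(k+k_0)=\ri\nu\ln_\pi(k+k_0)$. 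What remains is $-\frac{1}{2\pi\ri}\int\ln_\pi(k-s)\,\rd f(s)=-\chi(\zeta,k)$. Checking the signs gives exactly $\delta={\rm e}^{\ri\nu\ln_\pi(k+k_0)}{\rm e}^{-\chi}$.

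\emph{Part 2.} For the symmetry, $f$ is real, so conjugating \eqref{vol_delta} at $\bar k$ gives $\overline{\delta(\bar k)}=\exp\{-\frac{1}{2\pi\ri}\int\frac{f(s)}{s-k}\,\rd s\}=\delta^{-1}(k)$. For boundedness, write $\delta=(k+k_0)^{\ri\nu}{\rm e}^{-\chi}$. Since $\nu$ is real, $|{\rm e}^{\ri\nu\ln_\pi(k+k_0)}|={\rm e}^{-\nu\arg_\pi(k+k_0)}\le {\rm e}^{\pi|\nu|}$. The real part of $\chi$ is
\[
\rre\chi=\frac{1}{2\pi}\int\arg_\pi(k-s)\,\rd f(s),
\]
which is bounded by $\frac12\int|f'|$. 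It remains to rule out growth of $|{\rm e}^{-\chi}|$, which involves only $\rre\chi$ and is therefore handled by this bound. Uniformity in $\zeta\in\mathcal{I}_2$ holds because $\int|f'|$ and $|\nu|$ are bounded uniformly, $\mathcal{I}_2$ being compact. Alternatively, one can use the scalar RH problem \eqref{delta} and a standard argument.

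\emph{Part 3.} This is the main step. Write
\[
\chi(\zeta,k)-\chi(\zeta,-k_0)=\frac{1}{2\pi\ri}\int_{-\infty}^{-k_0}\bigl[\ln_\pi(k-s)-\ln_\pi(-k_0-s)\bigr]f'(s)\,\rd s .
\]
Set $\epsilon=|k+k_0|$ and split the integral into $s\in[-k_0-2\epsilon,-k_0]$ and $s<-k_0-2\epsilon$.

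On the near piece, $f'$ is bounded. Non-tangential approach keeps $\arg$ bounded, and $\int_0^{2\epsilon}|\ln u|\,\rd u$ together with $\int_0^{2\epsilon}|\ln|k+k_0+u||\,\rd u$ is $\mathcal{O}(\epsilon(1+|\ln\epsilon|))$.

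On the far piece, use $|\ln_\pi(k-s)-\ln_\pi(-k_0-s)|\le C\epsilon/|s+k_0|$, valid because $|k+k_0|\le\frac12|s+k_0|$ there. For $|s+k_0|\le1$ the integral $\int_{2\epsilon}^{1}\frac{\epsilon}{u}\,\rd u$ gives $\epsilon|\ln\epsilon|$, and the tail $|s+k_0|\ge1$ contributes $C\epsilon\int|f'|$. Here non-tangentiality is needed only to keep $\ln_\pi$ on a single branch near the cut. Constants depend only on $\sup|f'|$ and $\int|f'|$ over $\zeta\in\mathcal{I}_2$, hence are independent of $\zeta$.
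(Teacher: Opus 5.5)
Your argument is correct and is the ``direct estimation'' from \eqref{vol_delta} that the paper invokes without details. You use integration by parts for the factorization, conjugation plus the bound on $\rre\chi$ for Part 2, and a near/far splitting at scale $|k+k_0|$ for the $|k+k_0|(1+|\ln|k+k_0||)$ estimate. Two small points:
\begin{itemize}
\item \textbf{Sign slips.} In fact $\frac{1}{s-k}=+\partial_s\ln_\pi(k-s)$, and with the sign you wrote, the boundary term $-\frac{1}{2\pi\ri}f(-k_0)\ln_\pi(k+k_0)$ equals $-\ri\nu\ln_\pi(k+k_0)$. Your two sign errors cancel, so the final formula is still right.
\item \textbf{Positivity of the logarithm's argument.} Assumption \ref{assu_LT} concerns the $(3,3)$ entry. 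So positivity of $1-2\sigma|r_1(s)|^2$ when $\sigma=1$ does not follow from that assumption; it is an implicit hypothesis of the lemma, and the paper relies on it too.
\end{itemize}
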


    \begin{proof}
        The lemma is derived from equation \eqref{vol_delta}, and can be proved by direct estimation.
    \end{proof}

   Next, we introduce the matrix transformation:
    \begin{equation}\label{trans_Delta}
        M^{(1)}(x,t,k)=M(x,t,k)\Delta(k), \quad \Delta(k)=\begin{pmatrix}
        	\delta(k) & 0 & 0\\
        	0 & \frac{1}{{\delta(k)}{\delta(-k)}} & 0\\
        	0 & 0 & {\delta(-k)}\\
        \end{pmatrix}.
    \end{equation}
    Based on the properties satisfied by the $\delta(\zeta,k)$ function in Lemma \ref{lem_delta}, it can be deduced that the functions  $\Delta^{\pm 1}$ are uniformly bounded with respect to $\zeta\in\mathcal{I}_2$ and $k\in \mathbb{C}\setminus\left(\Gamma^{(1)}_1\cup \Gamma^{(1)}_3\right)$, and the following holds:
    \begin{equation*}\label{Delta_bound}
    	\begin{aligned}
    	 &\Delta(\zeta,k)=I+\mathcal{O}\left(k^{-1}\right),\quad k\to\infty,\\
    	 &\Delta^\dagger(\zeta,\bar k)=\mathcal{A}\Delta^{-1}(\zeta,k)\mathcal{A}^{-1},\quad \Delta(\zeta,k)=\mathcal{B}\Delta(\zeta,-k)\mathcal{B}.
    	\end{aligned}
    \end{equation*}

    Under the transformation \eqref{trans_Delta}, the new jump matrices $V_j^{(1)}(x,t,k):=V^{(1)}_{j,l}(x,t,k)V^{(1)}_{j,r}(x,t,k)$  for $k\in \Sigma_j^{(1)}$, $j=1,2,3,$ are obtained, which can be written in the following form:
        \begin{align*}
           & V^{(1)}_{1,l}
            =\begin{pmatrix}
					1 & -\hat{r}_1(k)\delta_{21-}^{-1}(k)\re^{t\Phi_{12}} & 0\\
					0 & 1 & 0\\
					-r_2(-k)\frac{\delta_{21-}(k)}{\delta_{23-}(k)}\re^{-t\Phi_{13}} & \hat{\alpha}^*(k)\delta_{23-}^{-1}(k)\re^{-t\Phi_{23}} & 1\\
				\end{pmatrix},\nonumber\\
                &V^{(1)}_{1,r}= \begin{pmatrix}
					1 & 0 & -r_2^*(-k)\frac{\delta_{23+}(k)}{\delta_{21+}(k)}\re^{t\Phi_{13}}\\
					2\sigma\hat{r}_1^*(k)\delta_{21+}(k)\re^{-t\Phi_{12}} & 1 & -2\sigma\hat{\alpha}(k)\delta_{23+}(k)\re^{t\Phi_{23}}\\
					0 & 0 & 1\\
				\end{pmatrix}\\,
                &V^{(1)}_{2,l}
                = \begin{pmatrix}
					1 & 0 & 0\\
					2\sigma r_1^*(k)\delta_{21}(k)\re^{-t\Phi_{12}} & 1 & 0\\
					r_2^*(k)\frac{\delta_{21}(k)}{\delta_{23}(k)}\re^{-t\Phi_{13}} & r_1(-k)\delta_{23}^{-1}(k)\re^{-t\Phi_{23}} & 1\\
				\end{pmatrix},\\
                &V^{(1)}_{2,r}= \begin{pmatrix}
					1 & -r_1(k)\delta_{21}^{-1}(k)\re^{t\Phi_{12}} & r_2(k)\frac{\delta_{23}(k)}{\delta_{21}(k)}\re^{t\Phi_{13}}\\
					0 & 1 & -2\sigma r_1^*(-k)\delta_{23}(k)\re^{t\Phi_{23}}\\
					0 & 0 & 1\\
				\end{pmatrix},\\ 
               & V^{(1)}_{3,l}
                = \begin{pmatrix}
					1 & 0 & 0\\
					2 \sigma r_1^*(k)\delta_{21-}(k)\re^{-t\Phi_{12}} & 1 & -2 \sigma\hat{r}_1^*(-k)\delta_{23-}(k) \re^{t\Phi_{23}}\\
					r_2^*(k)\frac{\delta_{21-}(k)}{\delta_{23-}(k)}\re^{-t\Phi_{13}} & 0 & 1\\
				\end{pmatrix}, \\
                &V^{(1)}_{3,r}= \begin{pmatrix}
					1 & -r_1(k)\delta_{21+}^{-1}(k)\re^{t\Phi_{12}} & r_2(k)\frac{\delta_{23+}(k)}{\delta_{21+}(k)}\re^{t\Phi_{13}}\\
					0 & 1 & 0\\
					0 & \hat{r}_1(-k)\delta_{23+}^{-1}(k)\re^{-t\Phi_{23} } & 1\\
				\end{pmatrix},\label{v_1_3}
            \end{align*}
            where $\delta_{21}(k)=\delta(k)^2\delta(-k)$ and $\delta_{23}(k)=\delta(k)\delta(-k)^2$.

Next, we define the following matrix function $T(\zeta,k)=T_j(\zeta,k)$ for $k\in D_j$, $j=1,2,\cdots,10$, where the open regions $D_j$ are shown in Figure \ref{fig_Gamma2}:
\begin{align*}
	&T_1(\zeta,k)=\begin{pmatrix}
		1 & \hat{\alpha}^*_{a}(-k)\delta_{21}^{-1}(k)\re^{t\Phi_{12}} & -r_{2,a}(k)\frac{\delta_{23}(k)}{\delta_{21}(k)}\re^{t\Phi_{13}}\\
		0 & 1 & 0\\
		0 & -\hat{r}_{1,a}(-k)\delta_{23}^{-1}(k)\re^{-t\Phi_{23} } & 1\\
	\end{pmatrix},\\
    & T_2(\zeta,k)=\begin{pmatrix}
	1 & r_{1,a}(k)\delta_{21}^{-1}(k)\re^{t\Phi_{12}} & r_{2,a}^*(-k)\frac{\delta_{23}(k)}{\delta_{21}(k)}\re^{t\Phi_{13}}\\
	0 & 1 & 2\sigma r_{1,a}^*(-k)\delta_{23}(k)\re^{t\Phi_{23}}\\
	0 & 0 & 1\\
	\end{pmatrix},\nonumber\\
	&T_3(\zeta,k)=\begin{pmatrix}
		1 & 0 & 0\\
		2\sigma r_{1,a}^*(k)\delta_{21}(k)\re^{-t\Phi_{12}} & 1 & 0\\
		r_{2,a}^*(k)\frac{\delta_{21}(k)}{\delta_{23}(k)}\re^{-t\Phi_{13}} & r_{1,a}(-k)\delta_{23}^{-1}(k)\re^{-t\Phi_{23}} & 1\\
	\end{pmatrix},\\
    &T_4(\zeta,k)=\begin{pmatrix}
	1 & 0 & 0\\
	2\sigma  r_{1,a}^*(k)\delta_{21}(k)\re^{-t\Phi_{12}} & 1 & -2 \sigma \hat{r}_{1,a}^*(-k)\delta_{23}(k)\re^{t\Phi_{23}}\\
r_{2,a}^*(k)\frac{\delta_{21}(k)}{\delta_{23}(k)}\re^{-t\Phi_{13}} & 0 & 1\\
	\end{pmatrix},\nonumber\\
	&T_5(\zeta,k)=\begin{pmatrix}
		1 & 0 &  r_{2,a}^*(-k) \frac{\delta_{23}(k)}{\delta_{21}(k)}\re^{t\Phi_{13}}\\
		-2\sigma \hat{r}_{1,a}^*(k)\delta_{21}(k)\re^{-t\Phi_{12}} & 1 & 2\sigma \hat{r}_{1,a}^*(-k)\delta_{23}(k)\re^{t\Phi_{23}}\\
		0 & 0 & 1\\
	\end{pmatrix},\\
    & T_6(\zeta,k)=\begin{pmatrix}
	1 & -\hat{r}_{1,a}(k)\delta_{21}^{-1}(k)\re^{t\Phi_{12}}  & 0\\
	0 & 1 & 0\\
	-r_{2,a}(-k)\frac{\delta_{21}(k)}{\delta_{23}(k)}\re^{-t\Phi_{13}} &  \hat{\alpha}_a^*(k)\delta_{23}^{-1}(k)\re^{-t\Phi_{23}}  & 1\\
	\end{pmatrix},\nonumber\\
		&T_7(\zeta,k)=\begin{pmatrix}
		1 & r_{1,a}(k)\delta_{21}^{-1}(k)\re^{t\Phi_{12}} & -r_{2,a}(k)\frac{\delta_{23}(k)}{\delta_{21}(k)}\re^{t\Phi_{13}}\\
		0 & 1 & 0\\
		0 & 0 & 1\\
	\end{pmatrix},\\
    & T_8(\zeta,k)=\begin{pmatrix}
		1 & 0 & r_{2,a}^*(-k)\frac{\delta_{23}(k)}{\delta_{21}(k)}\re^{t\Phi_{13}}\\
		0 & 1 & 2\sigma  r_{1,a}^*(-k)\delta_{23}(k)\re^{t\Phi_{23}}\\
		0 & 0 & 1\\
	\end{pmatrix},\\
	&T_9(\zeta,k)=\begin{pmatrix}
		1 & 0 & 0\\
		0 & 1 & 0\\
		-r_{2,a}(-k)\frac{\delta_{21}(k)}{\delta_{23}(k)}\re^{-t\Phi_{13}} &  r_{1,a}(-k)\delta_{23}^{-1}(k)\re^{-t\Phi_{23}} & 1\\
	\end{pmatrix},\\
    & T_{10}(\zeta,k)=\begin{pmatrix}
		1 & 0 & 0\\
		2\sigma  r_{1,a}^*(k)\delta_{21}(k)\re^{-t\Phi_{12}} & 1 & 0\\
		r_{2,a}^*(k)\frac{\delta_{21}(k)}{\delta_{23}(k)}\re^{-t\Phi_{13}} & 0 & 1\\
	\end{pmatrix}.\nonumber
\end{align*}

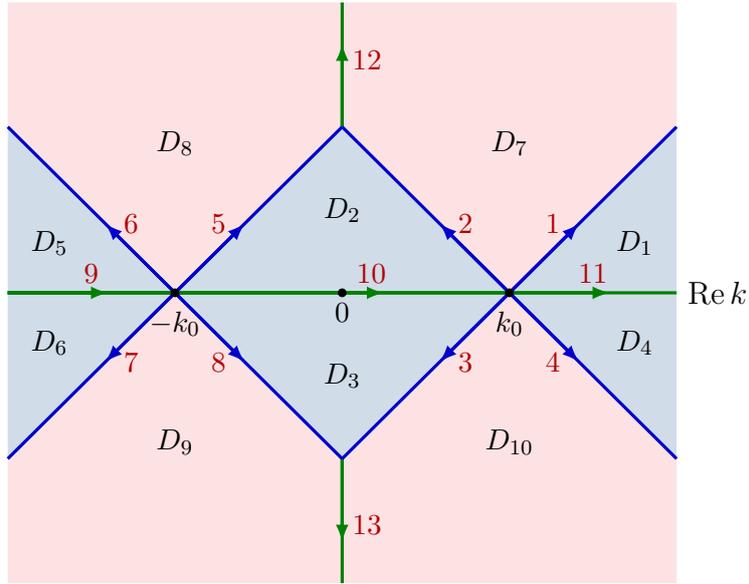
\begin{figure}[htbp]
	\centering
	\begin{tikzpicture}[scale=1.1]
		
		\definecolor{mycolor1}{HTML}{D0DCE8}
		\definecolor{mycolor2}{HTML}{FDE2E4}
		
		\fill[fill=mycolor1] (0,2) -- (2,0) -- (0,-2) -- (-2,0) -- cycle;
		\fill[fill=mycolor1] (2,0) -- (4,-2) -- (4,2) -- cycle;
		\fill[fill=mycolor1] (-2,0) -- (-4,-2) -- (-4,2) -- cycle;
		
		\fill[fill=mycolor2] (0,3.5) -- (0,2) --  (2,0)  -- (4,2) -- (4,3.5) -- cycle;
		\fill[fill=mycolor2] (0,-3.5) -- (0,-2) --  (2,0)  -- (4,-2) -- (4,-3.5) -- cycle;
		\fill[fill=mycolor2] (0,3.5) -- (0,2) --  (-2,0)  -- (-4,2) -- (-4,3.5) -- cycle;
		\fill[fill=mycolor2] (0,-3.5) -- (0,-2) --  (-2,0)  -- (-4,-2) -- (-4,-3.5) -- cycle;

		\draw[very thick, black!50!green] (-4,0) -- (4,0);
		\draw[very thick, black!50!green] (0,2) -- (0,3.5);
		\draw[very thick, black!50!green] (0,-2) -- (0,-3.5);
		
		
		\draw[very thick, black!50!green, -latex] (-4,0) -- (-2.8,0);
		\draw[very thick, black!50!green, -latex] (-4,0) -- (0.5,0);
		\draw[very thick, black!50!green, -latex] (-4,0) -- (3.2,0);
		
		\draw[very thick, black!50!green, -latex] (0,-2) -- (0,-3);
		\draw[very thick, black!50!green, -latex] (0,2) -- (0,3);
		
		\draw[very thick, black!20!blue] (4,2) -- (0,-2);
		\draw[very thick, black!20!blue] (4,-2) -- (0,2);
		
		\draw[very thick, black!20!blue] (-4,2) -- (0,-2);
		\draw[very thick, black!20!blue] (-4,-2) -- (0,2);
		
		\draw[very thick, black!20!blue,-latex] (2,0) --(2+1/1.2,1/1.2);
		\draw[very thick, black!20!blue,-latex] (2,0) -- (-1/1.2+2,-1/1.2);
		\draw[very thick, black!20!blue,-latex] (2,0) -- (-1/1.2+2,1/1.2);
		\draw[very thick, black!20!blue,-latex] (2,0) -- (2+1/1.2,-1/1.2);
		
		\draw[very thick, black!20!blue,-latex] (-2,0) --(-2+1/1.2,1/1.2);
		\draw[very thick, black!20!blue,-latex] (-2,0) -- (-1/1.2-2,-1/1.2);
		\draw[very thick, black!20!blue,-latex] (-2,0) -- (-1/1.2-2,1/1.2);
		\draw[very thick, black!20!blue,-latex] (-2,0) -- (-2+1/1.2,-1/1.2);
		
		\fill (-2,0) circle (1.5pt);
		\fill (0,0) circle (1.5pt);
		\fill (2,0) circle (1.5pt);
		
		{\small 	\node[below] at (-2,-0.1) {$-k_0$};
			\node[below] at (0,0) {$\text{0}$};
			\node[below] at (2,-0.1) {$k_0$};
			
			\node[red!70!black,left] at (1.9+1/1.2,1/1.2) {$1$};
			\node[red!70!black,right] at (2.1-1/1.2,1/1.2) {$2$};
			\node[red!70!black,right] at (2.1-1/1.2,-1/1.2) {$3$};
			\node[red!70!black,left] at (1.9+1/1.2,-1/1.2) {$4$};
			
			\node[red!70!black,left] at (-2.1+1/1.2,1/1.2) {$5$};
			\node[red!70!black,right] at (-1.9-1/1.2,1/1.2) {$6$};
			\node[red!70!black,right] at (-1.9-1/1.2,-1/1.2) {$7$};
			\node[red!70!black,left] at (-2.1+1/1.2,-1/1.2) {$8$};
			
			\node[red!70!black,above] at (-3,0) {$9$};
			\node[red!70!black,above] at (0.35,0) {$10$};
			\node[red!70!black,above] at (3,0) {$11$};
			
			\node[red!70!black,right] at (0,2.8) {$12$};
			\node[red!70!black,right] at (0,-2.8) {$13$};
			}
		
		{\small 	\node at (3.5,0.6) {$D_1$};
			\node at (0,1) {$D_2$};
			\node at (3.5,-0.6) {$D_4$};
			\node at (0,-1) {$D_3$};
			
			\node at (-3.5,0.6) {$D_5$};
			\node at (-3.5,-0.6) {$D_6$};
			
			\node at (2,1.8) {$D_7$};
			\node at (-2,1.8) {$D_{8}$};
			\node at (-2,-1.8) {$D_{9}$};
			\node at (2,-1.8) {$D_{10}$};
		}

		\node[right] at (4,0) {$\rre k$};
		
	\end{tikzpicture}
	\caption{The jump contour $\Sigma^{(2)}$ and regions  $D_j$, $j=1,2,\cdots,10,$ in the complex $k$-plane.}
	\label{fig_Gamma2}
\end{figure}

        \begin{lem}\label{lem_T_bound}
        	$T(\zeta,k)$ is uniformly bounded for $\zeta\in\mathcal{I}_2$, $k\in \mathbb{C}\setminus\Sigma^{(2)}$, and $t>0$. Moreover,
        	\begin{equation*}\label{T_k_infty}
        		T(k)=I+\mathcal{O}(k^{-1}),\quad k\to\infty.
        	\end{equation*}
        \end{lem}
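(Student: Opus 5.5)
The plan is to bound each nonzero off-diagonal entry of each $T_j$ separately. Each such entry is the product of three factors: an analytic approximant from Lemma \ref{lem_r_decomposition}, a combination of $\delta$-functions, and an exponential $\re^{\pm t\Phi_{ij}}$. The $\delta$-factors are harmless. By Lemma \ref{lem_delta}, $\delta^{\pm1}$ is bounded on $\mathbb{C}\setminus(-\infty,-k_0]$ uniformly in $\zeta\in\mathcal{I}_2$. The same holds for $\delta(-k)^{\pm1}$ off $[k_0,\infty)$. Hence $\delta_{21}^{\pm1}$, $\delta_{23}^{\pm1}$ and $\delta_{23}/\delta_{21}=\delta(-k)/\delta(k)$ are uniformly bounded away from the real axis, and in particular on every open region $D_j$. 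The whole argument therefore reduces to showing that, in each $D_j$, the product of the analytic approximant and the exponential stays bounded.

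For this I would use the estimates $|f_a(k)|\le C(1+|k|)^{-1}\re^{\frac t4|\rre\Phi(\zeta,k)|}$ of Lemma \ref{lem_r_decomposition}, with the phase attached to each approximant. The key point is that in the region $D_j$ where a given entry appears, the attached exponential satisfies $\rre(\pm t\Phi)\le0$, and in fact $\pm t\rre\Phi=-t|\rre\Phi|$ there. This is read off the signature tables in Figure \ref{fig_theta_sign1}, using $\rre\Phi_{12}=-2(\rim k)(2\rre k+\zeta-1)$, $\rre\Phi_{13}=-4(\zeta-1)\rim k$ and $\rre\Phi_{23}=2\rim k\,(2\rre k+1-\zeta)$. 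For instance, in $D_7$ (upper half plane, to the right of $-k_0$) both $\rre\Phi_{12}<0$ and $\rre\Phi_{13}<0$. When an approximant is evaluated at $-k$, as in $\hat r_{1,a}(-k)$, $r_{2,a}^*(-k)$ or $r_{1,a}^*(-k)$, I would use the symmetry $\Phi_{12}(\zeta,-k)=-\Phi_{23}(\zeta,k)$ (direct from the formulas) together with the conjugate versions $\overline{f_a(\bar k)}$. This checks that the region where that approximant is defined and analytic matches the region of Figure \ref{fig_Gamma2} in which it is used, and that the phase in its bound matches the exponential multiplying it. Each entry is then bounded by $C(1+|k|)^{-1}\re^{-\frac{3t}{4}|\rre\Phi|}\le C$, uniformly in $t>0$ and $\zeta\in\mathcal{I}_2$.

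The same bound gives the behaviour at infinity. Every off-diagonal entry is $\mathcal{O}((1+|k|)^{-1})$ uniformly, so $T=I+\mathcal{O}(k^{-1})$ as $k\to\infty$ in each unbounded region $D_1,D_4,D_5,\dots,D_{10}$. The bounded regions $D_2,D_3$ contribute nothing at infinity.

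The main obstacle is the bookkeeping in the middle step: confirming, region by region, that every entry (for example $\hat\alpha_a^*(-k)\re^{t\Phi_{12}}$ in $D_1$ and $\hat r_{1,a}^*(-k)\re^{t\Phi_{23}}$ in $D_4,D_5$) has its exponential decaying in the same sector where the approximant is analytic and carries the matching phase. In $D_2$ and $D_3$, which contain $0$ and border both $\pm k_0$, I also have to check that the sign of $\rre\Phi_{13}$ is the correct one. I would carry this out by tabulating the signs of $\rre\Phi_{ij}$ for the ten regions, and using $\Phi(\zeta,-k)$ relations to reduce the reflected entries to the cases already treated.
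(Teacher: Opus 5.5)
Your proposal follows the paper's proof: bound every off-diagonal entry of $T_j$ by $C(1+|k|)^{-1}\re^{-\frac{3t}{4}|\rre\Phi_{ij}|}$ using Lemma~\ref{lem_r_decomposition}, the boundedness of $\delta^{\pm1}$ (Lemma~\ref{lem_delta}) and the signature tables, which gives both the uniform bound and $T=I+\mathcal{O}(k^{-1})$; the paper writes this out only for $D_1$ and says the other regions are similar. One caveat on the bookkeeping you single out: as printed, the $(2,3)$ entry $2\sigma\hat r_{1,a}^*(-k)\delta_{23}\re^{t\Phi_{23}}$ of $T_5$ needs $\hat r_{1,a}$ at $-\bar k$, which for $k\in D_5$ lies in the upper half-plane to the right of $k_0$, outside $\overline{D_3^{(12)}}$, so your domain check fails there; this is a typo in $T_5$, since matching $T_8$ so that $T_8^{-1}T_5=V^{(2)}_6$ forces the entry $2\sigma r_{1,a}^*(-k)\delta_{23}\re^{t\Phi_{23}}$, for which $-\bar k\in D_1^{(12)}$ and $|\rre\Phi_{12}(\zeta,-\bar k)|=|\rre\Phi_{23}(\zeta,k)|$, and with this correction your argument goes through.
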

        \begin{proof}
        	We take the case in region $D_1$ as an example for the proof; the cases in the remaining regions can be proved similarly. From Figure \ref{fig_theta_sign1}, we can obtain:
        	\begin{equation*}
        		t\, \rre \Phi_{12}(\zeta,k)<0,\quad t\,\rre \Phi_{13}(\zeta,k)<0,\quad -t\,\rre \Phi_{23}(\zeta,k)<0,\quad \zeta\in \mathcal{I}_2,\,\, k\in D_1,
        	\end{equation*}
        	and according to Lemmas \ref{lem_r_decomposition}, \ref{lem_delta}, the following holds
        	\begin{align*}
        		\left| T_1(\zeta,k)-I\right| =&\left| \hat{\alpha}^*_{a}(-k)\delta_{21}^{-1}(k)\re^{t\Phi_{12}}\right| +\left| \hat{r}_{1,a}(-k)\delta_{23}^{-1}(k)\re^{-t\Phi_{23} }\right|  +\left|  r_{2,a}(k)\frac{\delta_{23}(k)}{\delta_{21}(k)}\re^{t\Phi_{13}}\right|\\
        		\leq & \frac{C}{1+|k|}\left(\re^{-\frac{3}{4}t|\rre\Phi_{12}|} +\re^{-\frac{3}{4}t|\rre\Phi_{23}|}+\re^{-\frac{3}{4}t|\rre\Phi_{13}|}\right) .\qedhere
        	\end{align*}
        
        \end{proof}

   Construct the matrix transformation 
   \begin{equation}\label{trans_T}
   	M^{(2)}(x,t,k)=M^{(1)}(x,t,k)T(x,t,k),
   \end{equation}
    to obtain the eigenfunction $M^{(2)}(x,t,k)$ that satisfies the new jump property:
    $$M_+^{(2)}(x,t,k)=M_-^{(2)}(x,t,k)V^{(2)}(x,t,k),\quad k\in \Sigma^{(2)},$$ 
    where $\Sigma^{(2)}$ as given in Figure \ref{fig_Gamma2}, and $V^{(2)}_j(x,t,k)$ for $k\in \Sigma^{(2)}_j$, $j=1,2,\cdots,8$, are as follows
     \begin{align}
         &V^{(2)}_1(x,t,k)=\begin{pmatrix}
            1 & 0 & 0\\
            0 & 1 & 0\\
            0 & \hat{r}_{1,a}(-k)\delta_{23}^{-1}(k)\re^{-t\Phi_{23}} & 1\\
        \end{pmatrix},\, &&V^{(2)}_2(x,t,k)=\begin{pmatrix}
            1 & 0 & 0\\
            0 & 1 & 2\sigma  r_{1,a}^*(-k)\delta_{23}(k)\re^{t\Phi_{23}}\\
            0 & 0 & 1\\
        \end{pmatrix},\nonumber\\
        &V^{(2)}_3(x,t,k)=\begin{pmatrix}
            1 & 0 & 0\\
            0 & 1 & 0\\
            0 & -r_{1,a}(-k)\delta_{23}^{-1}(k)\re^{-t\Phi_{23}}  & 1\\
        \end{pmatrix},\, &&V^{(2)}_4(x,t,k)=\begin{pmatrix}
            1 & 0 & 0\\
            0 & 1 & -2\sigma  \hat{r}_{1,a}^*(-k)\delta_{23}(k)\re^{t\Phi_{23}}\\
            0 & 0 & 1\\
        \end{pmatrix},\label{v_2_1-8}\\
        &V^{(2)}_5(x,t,k)=\begin{pmatrix}
            1 & -r_{1,a}(k)\delta_{12}^{-1}(k)\re^{t\Phi_{12}} & 0\\
            0 & 1 & 0\\
            0 & 0 & 1\\
        \end{pmatrix},\,
        &&V^{(2)}_6(x,t,k)=\begin{pmatrix}
            1 & 0 & 0\\
            -2\sigma  \hat{r}_{1,a}^*(k)\delta_{12}(k)\re^{-t\Phi_{12}} & 1 & 0\\
            0 & 0 & 1\\
        \end{pmatrix},\nonumber\\
        &V^{(2)}_7(x,t,k)=\begin{pmatrix}
            1 & \hat{r}_{1,a}(k)\delta_{12}^{-1}(k)\re^{t\Phi_{12}} & 0\\
            0 & 1 & 0\\
            0 & 0  & 1\\
        \end{pmatrix},\, &&V^{(2)}_8(x,t,k)=\begin{pmatrix}
            1 & 0 & 0\\
            2\sigma r_{1,a}^*(k)\delta_{12}(k)\re^{-t\Phi_{12}} & 1 & 0\\
            0 & 0 & 1\\
        \end{pmatrix}.\nonumber
     \end{align}
     
     In fact, the jump conditions generated by the above transformation \eqref{trans_T} exist not only for $k\in \Sigma^{(2)}_j$, $j=1,2,\cdots,8$, but the eigenfunction $M^{(2)}(x,t,k)$ is also non-analytic as $k\in \Sigma^{(2)}_j$, $j=9,10,\cdots,13$ (corresponding to the green jump line portions in Figure \ref{fig_Gamma2}). In the two remarks below, we will explain that the jump matrices corresponding to $k\in \Sigma^{(2)}_j$, $j=9,10,\cdots,13$, will decay to the identity matrix faster than those corresponding to $k\in \Sigma^{(2)}_j$, $j=1,2,\cdots,8$, as $t\to\infty$, which is the reason why we use green lines in Figure \ref{fig_Gamma2}.

     \begin{remark}\label{remark_v2_9-11}
     	In the case of $k\in\Sigma^{(2)}_j$, $j=9,10,11$, we have
     	\begin{align*}
     		V^{(2)}_9(k)=T_6^{-1}(k)V^{(1)}_1(k)T_5(k),\quad
     		V^{(2)}_{10}(k)=T_3^{-1}(k)V^{(1)}_2(k)T_2(k),\quad
     		V^{(2)}_{11}(k)=T_4^{-1}(k)V^{(1)}_3(k)T_1(k).
     	\end{align*}
     	Through calculation, it can be shown that the non-zero elements of the jump matrices $V^{(2)}_j-I$, $j=9,10,11$, can be completely controlled by $r_{1,r}(k)$, $\hat r_{1,r}(k)$, $\hat \alpha_{r}(k)$, and $r_{2,r}(k)$; the explicit forms are too complicated to be presented here in detail.
       \end{remark}

     \begin{remark}\label{remark_v2_12-13}
         The explicit forms of the jump matrices $V^{(2)}_{12}(x,t,k)$ and $V^{(2)}_{13}(x,t,k)$ are given by
          \begin{align*}\label{v_1314}
         &V^{(2)}_{12}(x,t,k)=T^{-1}_7(x,t,k)T_8(x,t,k)=\begin{pmatrix}
            1 & -r_{1,a}(k)\delta_{21}^{-1}(k)\re^{t\Phi_{12}} & 0\\
            0 & 1 & 2\sigma  r_{1,a}^*(-k)\delta_{23}(k)\re^{t\Phi_{23}}\\
            0 & 0 & 1\\
        \end{pmatrix},\\
        &V^{(2)}_{13}(x,t,k)=T^{-1}_9(x,t,k)T_{10}(x,t,k)=\begin{pmatrix}
            1 & 0 & 0\\
            2\sigma r_{1,a}^*(k)\delta_{23}^{-1}(k)\re^{-t\Phi_{12}} & 1 & 0\\
            0 & -r_{1,a}(-k)\delta_{23}^{-1}(k)\re^{-t\Phi_{23}} & 1\\
        \end{pmatrix}.
     \end{align*}
         It is evident that $\rre \Phi_{12}(k)<0$, $\rre \Phi_{23}(k)<0$ for $k\in \Sigma^{(2)}_{12}$ and $\rre \Phi_{12}(k)>0$, $\rre \Phi_{23}(k)>0$ for $k\in \Gamma^{(2)}_{13}$, which means that the two jump matrices decay exponentially to the identity matrix.
        
     \end{remark}

    \begin{lem}\label{lem_v2_error}
   The jump matrix $V^{(2)}(x,t,k)$ converges to the identity matrix $I$ uniformly for $\zeta\in\mathcal{I}_2$ and $k\in \Sigma^{(2)}$ except near the two critical points $\pm k_0$ as $t\to\infty$. Additionally, the following estimates are established:
       \begin{align*}
            &\left\|(1+|\cdot|)\left(V^{(2)}_j(x,t,\cdot)-I\right)\right\|_{(L^1\cap L^\infty)\left(\Sigma^{(2)}_j\right)}\leq Ct^{-N}, \quad  &&j=9,10,11,\\
            &\left\|(1+|\cdot|)\left(V^{(2)}_j(x,t,\cdot)-I\right)\right\|_{(L^1\cap L^\infty)\left(\Sigma^{(2)}_j\right)}\leq C\re^{-ct}, \quad &&j=12,13.
        \end{align*}
    \end{lem}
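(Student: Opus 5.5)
The estimates split into two families. On $\Sigma^{(2)}_j$, $j=9,10,11$ (the real axis), the jump is built only from the remainders of Lemma \ref{lem_r_decomposition}, so it will be $\mathcal{O}(t^{-N})$. On $\Sigma^{(2)}_{12}$, $\Sigma^{(2)}_{13}$ (the imaginary-axis rays away from the lens vertices), the jump is controlled by exponential decay of the phases, so it will be $\mathcal{O}(\re^{-ct})$. The uniform convergence to $I$ away from $\pm k_0$ on the lens rays $\Sigma^{(2)}_1,\dots,\Sigma^{(2)}_8$ follows from the same bounds used in the proof of Lemma \ref{lem_T_bound}. There the entries are products of $r_{1,a},\hat r_{1,a}$ (bounded by $C(1+|k|)^{-1}\re^{\frac t4|\rre\Phi|}$), bounded $\delta$-quotients (Lemma \ref{lem_delta}) and $\re^{\pm t\Phi_{ij}}$ with the sign fixed by Figure \ref{fig_theta_sign1}. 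This gives a bound $C(1+|k|)^{-1}\re^{-\frac34 t|\rre\Phi_{ij}|}$, and $|\rre\Phi_{ij}|$ grows like $|k\mp k_0|^2$ along rays at angle $\pi/4$, so it is bounded below off any neighbourhood of $\pm k_0$.

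\textbf{Real-axis pieces.} I would write out $V^{(2)}_{10}=T_3^{-1}V^{(1)}_2T_2$ explicitly on $\Sigma^{(2)}_{10}=(-k_0,k_0)$. Since $V^{(1)}_2=V^{(1)}_{2,l}V^{(1)}_{2,r}$ and $T_3$, $T_2$ reproduce exactly these lower/upper factors with $r_1,r_2$ replaced by their analytic parts, I get
\begin{equation*}
V^{(2)}_{10}=\bigl(T_3^{-1}V^{(1)}_{2,l}\bigr)\bigl(V^{(1)}_{2,r}T_2\bigr).
\end{equation*}
Each bracket is $I$ plus a triangular matrix whose entries are linear, or at most bilinear, in $r_{1,r}(\pm k)$, $r_{2,r}(k)$, multiplied by bounded factors: $\delta$-quotients, $|\re^{t\Phi_{ij}}|=1$ on $\mathbb{R}$, and bounded analytic parts. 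Bilinear terms involve one remainder times a bounded function. Hence the $(L^1\cap L^\infty)$ norm with weight $(1+|k|)$ is bounded by the $L^p$ norms in Lemma \ref{lem_r_decomposition}, namely $\mathcal{O}(t^{-N})$. The same argument applies on $\Sigma^{(2)}_9$, with remainders $\hat r_{1,r}$, $\hat\alpha_r$, $r_{2,r}$, and on $\Sigma^{(2)}_{11}$, where $r_1(-k)$ for $k>k_0$ is the hatted quantity on $(-\infty,-k_0)$.

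\textbf{Main obstacle.} The delicate point is showing that the products really factor in this way. The original jump entries such as $1-2\sigma|r_1|^2$ and $\alpha$ must recombine so that only remainder terms survive; this requires the identity $\hat r_1(1-2\sigma|r_1|^2)=r_1$ and the factorization of $V^{(1)}_1$, $V^{(1)}_3$ into $V_{l}V_{r}$. I would verify it entrywise for $V^{(2)}_{9}$, since the $\hat\alpha$ entries there are the least transparent. Cross terms such as $r_{1,a}r_{2,r}$ must also be handled. For these I would use that $r_{1,a}$ restricted to $\mathbb{R}$ equals $r_1-r_{1,r}$, which is bounded.

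\textbf{Imaginary-axis pieces.} The explicit forms in Remark \ref{remark_v2_12-13} show that the entries of $V^{(2)}_{12}$ are $r_{1,a}(k)\delta_{21}^{-1}\re^{t\Phi_{12}}$ and $r^*_{1,a}(-k)\delta_{23}\re^{t\Phi_{23}}$. Using $|r_{1,a}|\le C(1+|k|)^{-1}\re^{\frac t4|\rre\Phi_{12}|}$ gives a bound $C(1+|k|)^{-1}\re^{-\frac34 t|\rre\Phi_{12}|}$. On $k=\ri s$ with $s$ larger than the lens height ($\asymp k_0$) one has $\rre\Phi_{12}=-2s(\zeta-1)$, which is $\le -c s$ uniformly for $\zeta\in\mathcal{I}_2$ compact in $(1,\infty)$. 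Thus the weighted $L^1\cap L^\infty$ norm is $\le C\re^{-ct}$. The case $j=13$ is identical by the symmetry $k\mapsto\bar k$.
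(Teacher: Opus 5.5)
Your three-part structure (remainder control on $\mathbb{R}$, exponential decay on the imaginary rays, $T$-type bounds on the lens rays) is the same as the paper's proof. Your computation $\rre\Phi_{12}(\zeta,\ri s)=-2s(\zeta-1)$ for $s\ge k_0$ also supplies the uniform constant $c$ that the paper leaves implicit.

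The step that does not go through as you justify it is the real-axis recombination. The left bracket $T_3^{-1}V^{(1)}_{2,l}$ is fine, because $T_3$ is literally $V^{(1)}_{2,l}$ with analytic parts. The right factor $T_2$ must approximate $(V^{(1)}_{2,r})^{-1}$. The $(1,3)$ entry of that inverse is $\bigl(-r_2(k)+2\sigma r_1(k)\overline{r_1(-k)}\bigr)\frac{\delta_{23}}{\delta_{21}}\re^{t\Phi_{13}}$, whereas $T_2$ carries $r^*_{2,a}(-k)$ there. Hence the $(1,3)$ entry of $V^{(1)}_{2,r}T_2$ is
\begin{equation*}
\bigl(r_2(k)+\overline{r_{2,a}(-k)}-2\sigma r_1(k)\overline{r_{1,a}(-k)}\bigr)\frac{\delta_{23}(k)}{\delta_{21}(k)}\re^{t\Phi_{13}}.
\end{equation*}
This is a remainder only because of the symmetry relation
\begin{equation*}
r_2(k)+\overline{r_2(-k)}=2\sigma r_1(k)\overline{r_1(-k)},\qquad k\in\mathbb{R}\setminus\{0\},
\end{equation*}
which turns the bracket into $-\overline{r_{2,r}(-k)}+2\sigma r_1(k)\overline{r_{1,r}(-k)}$. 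Neither $\hat r_1(1-2\sigma|r_1|^2)=r_1$ nor the factorization $V^{(1)}_j=V^{(1)}_{j,l}V^{(1)}_{j,r}$ gives this; it comes from the $\mathcal{B}$-symmetry. Using both symmetries in \eqref{sym_s}, one gets
\begin{equation*}
r_2=\frac{m_{31}(s)}{m_{33}(s)},\quad \overline{r_2(-k)}=\frac{s_{13}}{s_{11}},\quad 2\sigma r_1(k)\overline{r_1(-k)}=\frac{s_{12}m_{32}(s)}{s_{11}m_{33}(s)}.
\end{equation*}
So the relation is the cofactor identity $s_{11}m_{31}(s)-s_{12}m_{32}(s)+s_{13}m_{33}(s)=0$. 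Equivalently, it follows from $v(-k)=\mathcal{B}v(k)^{-1}\mathcal{B}$ and uniqueness of the unit lower--upper factorization of $v$.

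The same relation is needed on the other two real pieces.
\begin{itemize}
\item On $\Sigma^{(2)}_{11}$ it yields $\overline{\hat\alpha(-k)}=r_1(k)+r_2(k)\hat r_1(-k)$. This is what makes the $(1,2)$ entry of $T_1$ approximate that of $(V^{(1)}_{3,r})^{-1}$.
\item On $\Sigma^{(2)}_9$ it yields $\hat\alpha(k)-\overline{\hat r_1(k)}\,\overline{r_2(-k)}=\overline{r_1(-k)}$. This identifies the $(2,3)$ entry of $(V^{(1)}_{1,r})^{-1}$ as $2\sigma\overline{r_1(-k)}\,\delta_{23+}\re^{t\Phi_{23}}$. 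So in the entrywise check you planned for $V^{(2)}_9$, the $(2,3)$ entry of $T_5$ must be read as $2\sigma r^*_{1,a}(-k)\delta_{23}\re^{t\Phi_{23}}$; the printed $\hat r^*_{1,a}(-k)$ is not even defined in $D_5$.
\end{itemize}
These are the Region II analogues of Remark \ref{remark_relation_hatalpha}. The paper hides them in ``through calculation'' in Remark \ref{remark_v2_9-11}; with them, your argument closes.

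One minor point on $\Sigma^{(2)}_{12}$. With the printed $T_7$ and $T_8$, the product $T_7^{-1}T_8$ also has a $(1,3)$ entry
\begin{equation*}
\bigl(r_{2,a}(k)+r^*_{2,a}(-k)-2\sigma r_{1,a}(k)r^*_{1,a}(-k)\bigr)\frac{\delta_{23}}{\delta_{21}}\re^{t\Phi_{13}},
\end{equation*}
which Remark \ref{remark_v2_12-13} omits. It is still $\mathcal{O}(\re^{-ct})$ by your argument, since $\rre\Phi_{13}=\rre\Phi_{12}+\rre\Phi_{23}=-4s(\zeta-1)$ there.
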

    \begin{proof}
    	The proof is divided into three parts. First, when $k \in \Sigma^{(2)}_1$, there exists $\rre\Phi_{23}(k)\geq0$ such that $V^{(2)}_1(x,t,k)$ converges to $I$ as $t \to \infty$. However, when $k$ is near $k_0$, its $(2,1)$-element does not converge uniformly to $0$, because $\rre\Phi_{23}(k_0)=0$. The cases where $k \in \Sigma^{(2)}_j$, $j=2,3,\cdots,8$, are similar and can be proved analogously.
    	
    	According to Figure $\ref{fig_theta_sign1}$, we have $\rre\Phi_{12}(k)=\rre\Phi_{13}(k)=\rre\Phi_{23}(k)=0$ for $k\in \mathbb{R}$. Moreover, based on Lemma \ref{lem_r_decomposition} and Remark \ref{remark_v2_9-11}, it follows that $V^{(2)}_j(x,t,k)$ converge uniformly to the identity matrix for $k\in \Sigma^{(2)}_j$, $j=9,10,11$, except at the points $k_0$ and $-k_0$.
    	
    	Finally, we have $\rre\Phi_{12}(k)<0$, $\rre\Phi_{23}(k)<0$ for $k\in \Sigma^{(2)}_{12}$, and according to Lemmas \ref{lem_r_decomposition} and \ref{lem_delta},
   \begin{align*}
       &\left|\left(V^{(2)}_{12}-I\right)_{12}\right|=\left|-r_{1,a}(k)\delta_{21}^{-1}(k)\re^{t\Phi_{12}(\zeta,k)}\right|\leq \frac{C}{1+|k|}\re^{-\frac{3}{4}t|\rre\Phi_{12}(\zeta,k)|}\leq C \re^{-ct},\\
       &\left|\left(V^{(2)}_{12}-I\right)_{23}\right|=\left|2\sigma  r_{1,a}^*(-k)\delta_{23}(k)\re^{t\Phi_{23}(\zeta,k)}\right|\leq \frac{C}{1+|k|}\re^{-\frac{3}{4}t|\rre\Phi_{23}(\zeta,k)|}\leq C \re^{-ct}.
   \end{align*}
   Thus the proof of $j=12$ is completed, and the case for $\Sigma^{(2)}_{13}$ can be proved similarly.
    \end{proof}
    
    According to the conclusion of Lemma \ref{lem_v2_error}, we next only need to consider the behavior near the critical points $\pm k_0$. Since $\rre\Phi_{12}(-k_0)=\rre\Phi_{23}(k_0)=0$, it is necessary to introduce gauge transformations to obtain new variables, transforming the jumps near the points $\pm k_0$ into a model that can be solved using parabolic cylinder functions. The required scaling transformation is constructed as follows:
    \begin{equation}
    	z_{(\pm)}=2\sqrt{t}(k\mp k_0),\quad  k\in B_\epsilon(\pm k_0),\label{zpm}
    \end{equation}
    where $\epsilon=|k_0|/2$, $B_\epsilon(\pm k_0)=\{ k|\left|k\mp k_0\right|<\epsilon \}$, and we denot $B_\epsilon=B_\epsilon(k_0)\cup B_\epsilon(-k_0)$.
    Based on the aforementioned transformation, the functions can be written as
    \begin{equation*}
    t\Phi_{12}(\zeta,k)=t\Phi_{12}(\zeta,-k_0)+\frac{\ri z_{(-)}^2}{2}, \quad t\Phi_{23}(\zeta,k)=t\Phi_{23}(\zeta,k_0)-\frac{\ri z_{(+)}^2}{2}.
    \end{equation*}
    
    From equation \eqref{v_2_1-8}, it is evident that the off-diagonal entries of the jump matrices $V^{(2)}_j$, $j=1,2,\cdots,8$, always involve the functions $\delta(\pm k)$. Under the transformed spectral variables $z_{(\pm)}$, we also need to carry out a new decomposition of these functions. For $k\in B_\epsilon(k_0)\setminus\left[k_0,k_0+\epsilon \right) $,
    \begin{align*}
        \delta_{23}(k)={\delta(k)}{\delta(-k)^2}=\delta(k)(2\sqrt{t})^{-2\ri \nu}\re ^{2\ri\nu\ln_0(z_{(+)})}\re^{-2\chi(\zeta,-k)}:=\re ^{2\ri\nu\ln_0(z_{(+)})}d^{(k_0)}_{0}(\zeta)d^{(k_0)}_{1}(\zeta,k),
    \end{align*}
    where 
    \begin{align*}
        &d^{(k_0)}_{0}(\zeta)=(2\sqrt{t})^{-2\ri \nu}\re^{-2\chi(\zeta,-k_0)}\delta(k_0),\quad d^{(k_0)}_{1}(\zeta,k)=\re^{-2\chi(\zeta,-k)+2\chi(\zeta,-k_0)}\frac{\delta(k)}{\delta(k_0)}.
    \end{align*}
    Thus, the following eigenfunction can be constructed near $k_0$:
    \begin{equation*}\label{trans_M_k1}
        \tilde{M}^{(k_0)}(\zeta,k)=M^{(2)}(\zeta,k)Y_{(+)}(\zeta), \quad k\in B_\epsilon(k_0),
        \end{equation*}
        where
        \begin{equation*}
        Y_{(+)}(\zeta)=\begin{pmatrix}
        	1 & 0 & 0\\
        	0 & \left(d^{(k_0)}_{0}(\zeta)\right)^{1/2}\re^{\frac{t}{2}\Phi_{23}(\zeta,k_0)} & 0\\
        	0 & 0 & \left(d^{(k_0)}_{0}(\zeta)\right)^{-1/2}\re^{-\frac{t}{2}\Phi_{23}(\zeta,k_0)}
        \end{pmatrix}.
    \end{equation*}
    Then the jump matrices $\tilde{V}^{(k_0)}_j(\zeta,z_{(+)}(k))$ for $k\in X_j^{k_0}$, $j=1,2,3,4,$ of $\tilde{M}^{(k_0)}(\zeta,k)$ are
    \begin{align*}
         &\tilde{V}^{(k_0)}_1(\zeta,z_{(+)})=\begin{pmatrix}
            1 & 0 & 0\\
            0 & 1 & 0\\
            0 & \hat{r}_{1,a}(-k)\left(d^{(k_0)}_{1}\right)^{-1}\re ^{-2\ri\nu\ln_0(z_{(+)})}\re^{\frac{\ri z_{(+)}^2}{2}} & 1\\
        \end{pmatrix},\\
        & \tilde{V}^{(k_0)}_2(\zeta,z_{(+)})=\begin{pmatrix}
            1 & 0 & 0\\
            0 & 1 & 2\sigma r_{1,a}^*(-k)d^{(k_0)}_{1}\re ^{2\ri\nu\ln_0(z_{(+)})}\re^{-\frac{\ri z_{(+)}^2}{2}}\\
            0 & 0 & 1\\
        \end{pmatrix},\\
        &\tilde{V}^{(k_0)}_3(\zeta,z_{(+)})=\begin{pmatrix}
            1 & 0 & 0\\
            0 & 1 & 0\\
            0 & -r_{1,a}(-k)\left(d^{(k_0)}_{1}\right)^{-1}\re ^{-2\ri\nu\ln_0(z_{(+)})}\re^{\frac{\ri z_{(+)}^2}{2}}  & 1\\
        \end{pmatrix},\\
        &\tilde{V}^{(k_0)}_4(\zeta,z_{(+)})=\begin{pmatrix}
            1 & 0 & 0\\
            0 & 1 & -2\sigma \hat{r}^*_{1,a}(-k)d^{(k_0)}_{1}\re ^{2\ri\nu\ln_0(z_{(+)})} \re^{-\frac{\ri z_{(+)}^2}{2}}\\
            0 & 0 & 1\\
        \end{pmatrix},
    \end{align*}
    where $X^{k_0}_j=(k_0+X_j)\cap B_\epsilon(k_0)$, $X_j=\left\{z\in \mathbb{C}\,|\,z=s\re^{\frac{(2j-1)\pi\ri}{4}}, 0\leq s\leq\infty\right\}$, $j=1,2,3,4$, and $X^{k_0}=\bigcup_{j=1}^4X_j^{k_0}$,  $X=\bigcup_{j=1}^4X_j$.

    For the case where $k\in B_\epsilon(-k_0)\setminus\left(-k_0-\epsilon,-k_0\right]$, we define
    \begin{align*}
        \delta_{21}^{-1}(k)=\frac{1}{\delta(k)^2\delta(-k)}&=\frac{1}{\delta(-k)}(2\sqrt{t})^{2\ri \nu}\re ^{-2\ri\nu\ln_\pi(z_{(-)})}\re^{2\chi(\zeta,k)}:=\re ^{-2\ri\nu\ln_\pi(z_{(-)})}d_{0}^{(-k_0)}(\zeta)d_{1}^{(-k_0)}(\zeta,k),
    \end{align*}
    where
    \begin{align*}
        &d_{0}^{(-k_0)}(\zeta)=(2\sqrt{t})^{2\ri \nu}\re^{2\chi(\zeta,-k_0)}\frac{1}{\delta(k_0)},\quad d_{1}^{(-k_0)}(\zeta,k)=\re^{2\chi(\zeta,k)-2\chi(\zeta,-k_0)}\frac{\delta(k_0)}{\delta(-k)}.
    \end{align*}
    Define $\tilde{M}^{(-k_0)}$ for $k$ near $-k_0$ by
    \begin{equation*}\label{trans_M_-k1}
        \tilde{M}^{(-k_0)}(\zeta,k)=M^{(2)}(\zeta,k)Y_{(-)}(\zeta), \quad k\in B_\epsilon(-k_0),\end{equation*}
        where
        \begin{equation*}
       Y_{(-)}(\zeta)=\begin{pmatrix}
        	\left(d_{0}^{(-k_0)}(\zeta)\right)^{1/2}\re^{\frac{t}{2}\Phi_{12}(\zeta,-k_0)}  & 0 & 0\\
        	0 & \left(d_{0}^{(-k_0)}(\zeta)\right)^{-1/2}\re^{-\frac{t}{2}\Phi_{12}(\zeta,-k_0)} & 0\\
        	0 & 0 & 1
        \end{pmatrix},
    \end{equation*}
    And the jump matrices $\tilde{V}^{(-k_0)}_j(\zeta,z_{(-)}(k))$ for $k\in X_{j+4}^{-k_0}=(-k_0+X_j)\cap B_\epsilon(-k_0)$, $j=1,2,3,4,$ are
    \begin{align*}
         &\tilde{V}^{(-k_0)}_5(\zeta,z_{(-)})=\begin{pmatrix}
            1 & -r_{1,a}(k)d_{1}^{(-k_0)}\re ^{-2\ri\nu\ln_\pi(z_{(-)})}  \re^{\frac{\ri z_{(-)}^2}{2}} & 0\\
            0 & 1 & 0\\
            0 & 0 & 1\\
        \end{pmatrix},\\
        &\tilde{V}^{(-k_0)}_6(\zeta,z_{(-)})=\begin{pmatrix}
            1 & 0 & 0\\
            -2\sigma \hat{r}^*_{1,a}(k)\left(d_{1}^{(-k_0)}\right)^{-1} \re ^{2\ri\nu\ln_\pi(z_{(-)})} \re^{-\frac{\ri z_{(-)}^2}{2}} & 1 & 0\\
            0 & 0 & 1\\
        \end{pmatrix},\\
        &\tilde{V}^{(-k_0)}_7(\zeta,z_{(-)})=\begin{pmatrix}
            1 & \hat{r}_{1,a}(k)d_{1}^{(-k_0)}\re ^{-2\ri\nu\ln_\pi(z_{(-)})}  \re^{\frac{\ri z_{(-)}^2}{2}} & 0\\
            0 & 1 & 0\\
            0 & 0 & 1\\
        \end{pmatrix},\\
        &\tilde{V}^{(-k_0)}_8(\zeta,z_{(-)})=\begin{pmatrix}
            1 & 0 & 0\\
            2\sigma r^*_{1,a}(k)\left(d_{1}^{(-k_0)}\right)^{-1} \re ^{2\ri\nu\ln_\pi(z_{(-)})}  \re^{-\frac{\ri z_{(-)}^2}{2}} & 1 & 0\\
            0 & 0 & 1\\
        \end{pmatrix}.
    \end{align*}

        \begin{lem}\label{lem_Y12_bound}
    	For $\zeta\in\mathcal{I}_2$ and $t\geq2$, the functions $Y_{(\pm)}(\zeta)$ are uniformly bounded, i.e.,
    	\begin{equation*}
    		\sup_{\zeta\in\mathcal{I}_2}\sup_{t\geq2}\left|Y_{(\pm)}(\zeta)\right|<C,\quad 	\sup_{\zeta\in\mathcal{I}_2}\sup_{t\geq2}\left|Y_{(\pm)}^{-1}(\zeta)\right|<C.
    	\end{equation*}
    	The functions $d_0^{(\pm k_0)}(\zeta)$ and $d_1^{(\pm k_0)}(\zeta,k)$ satisfy:
    	\begin{equation}\label{Error_d1-1}
    		\begin{aligned}
    		&\left|d_0^{(k_0)}(\zeta)\right|=\re^{2\pi\nu},  \quad &&\left|d_1^{(k_0)}(\zeta,k)-1\right|\leq C|k-k_0|(1+|\ln|k-k_0||),\\
    		&\left|d_0^{(-k_0)}(\zeta)\right|=1, \quad &&\left|d_1^{(-k_0)}(\zeta,k)-1\right|\leq C|k+k_0|(1+|\ln|k+k_0||).
    		\end{aligned}
    	\end{equation}
    	
    \end{lem}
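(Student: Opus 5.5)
The plan is to verify each claimed estimate directly from the explicit formulas for $d_0^{(\pm k_0)}$, $d_1^{(\pm k_0)}$ and from Lemma \ref{lem_delta}, using two facts: $\pm k_0$ are real, and $\zeta$ ranges over the compact set $\mathcal{I}_2\subset(1,\infty)$. Compactness gives $k_0=(\zeta-1)/2\in[c_1,c_2]$ with $c_1>0$, so $\epsilon=k_0/2$ is bounded below and the two disks $B_\epsilon(\pm k_0)$ stay a uniform distance from each other and from $0$. By Assumption \ref{assu_LT} (applied at $-k_0$), continuity and decay of $r_1$, and compactness, the quantity $1-2\sigma|r_1(-k_0)|^2$ lies in a compact subset of $(0,\infty)$. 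Hence $\nu$ is real and uniformly bounded on $\mathcal{I}_2$, and $\ln(1-2\sigma|r_1(s)|^2)$ is a real, smooth, integrable measure density on $(-\infty,-k_0]$. This positivity is the point I would double-check first for $\sigma=1$, since it also underlies the construction of $\delta$.

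\textbf{Moduli of the phase factors and of $d_0^{(\pm k_0)}$.} Since $k_0\in\mathbb{R}$, both $\Phi_{23}(\zeta,k_0)=-2\ri k_0(k_0+1-\zeta)$ and $\Phi_{12}(\zeta,-k_0)$ are purely imaginary, so $|\re^{\pm\frac t2\Phi}|=1$. Also $|(2\sqrt t)^{\pm2\ri\nu}|=1$ because $\nu$ is real.

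Next, $\chi(\zeta,-k_0)$ is purely imaginary. For $s<-k_0$ we have $-k_0-s>0$, so $\ln_\pi(-k_0-s)$ is real, and the measure $\rd\ln(1-2\sigma|r_1(s)|^2)$ is real; the prefactor $1/(2\pi\ri)$ then makes $\chi(\zeta,-k_0)$ imaginary, hence $|\re^{\pm2\chi(\zeta,-k_0)}|=1$.

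Likewise, for real $k>-k_0$ the exponent in \eqref{vol_delta} is $\frac1{2\pi\ri}$ times a real integral, so $|\delta(k_0)|=1$. This gives $|d_0^{(-k_0)}|=1$ at once.

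For $d_0^{(k_0)}$ the extra factor $\re^{2\pi\nu}$ comes from the branch bookkeeping hidden in its definition. Writing $\delta(-k)=\re^{\ri\nu\ln_\pi(-k+k_0)}\re^{-\chi(\zeta,-k)}$ and $-k+k_0=-(k-k_0)$, we use $\ln_\pi(-w)=\ln_0(w)-\ri\pi$ for $w$ off $[0,\infty)$. Each factor $\delta(-k)$ then contributes $\re^{\pi\nu}$, so $\delta(-k)^2$ contributes $\re^{2\pi\nu}$, together with the scaling $\ln_0(k-k_0)=\ln_0(z_{(+)})-\ln(2\sqrt t)$. I will carry out this identity carefully and record the constant $\re^{2\pi\nu}$ in $d_0^{(k_0)}$, checking that $|d_0^{(k_0)}|=\re^{2\pi\nu}$ exactly.

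It then follows that every diagonal entry of $Y_{(\pm)}$ and $Y_{(\pm)}^{-1}$ has modulus $1$ or $\re^{\pm\pi\nu}$. This is bounded uniformly in $\zeta\in\mathcal{I}_2$ and $t\ge2$, with no $t$-dependence at all.

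\textbf{Estimates on $d_1^{(\pm k_0)}$.} Write
\[
d_1^{(k_0)}-1=\re^{-2(\chi(\zeta,-k)-\chi(\zeta,-k_0))}\frac{\delta(k)}{\delta(k_0)}-1 .
\]
The factor $\re^{-2(\chi(\zeta,-k)-\chi(\zeta,-k_0))}$ is handled by Lemma \ref{lem_delta}(3) applied at the point $-k$ near $-k_0$. That lemma gives $|\chi(\zeta,-k)-\chi(\zeta,-k_0)|\le C|k-k_0|(1+|\ln|k-k_0||)$; the quantity is bounded on the small disk, so exponentiating costs only a constant. The factor $\delta(k)/\delta(k_0)$ is analytic and nonvanishing on $B_\epsilon(k_0)$, because that disk lies at distance $\ge 3k_0/2$ from the cut $(-\infty,-k_0]$. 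It is therefore Lipschitz there, with constant uniform in $\zeta$ by differentiating \eqref{vol_delta} under the integral, and satisfies $|\delta(k)/\delta(k_0)-1|\le C|k-k_0|$. Combining the two factors gives the first estimate in \eqref{Error_d1-1}.

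The estimate for $d_1^{(-k_0)}$ is symmetric. Lemma \ref{lem_delta}(3) controls $\chi(\zeta,k)-\chi(\zeta,-k_0)$ directly, and $\delta(k_0)/\delta(-k)$ is analytic in $k\in B_\epsilon(-k_0)$, since $-k$ then stays near $k_0$, away from the cut.

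The main obstacle I expect is ensuring that Lemma \ref{lem_delta}(3) applies uniformly on the whole punctured disk, including tangential approach along the cut, rather than only non-tangentially. If needed, I would reprove it by splitting the integral defining $\chi$ into $|s+k_0|<2|k+k_0|$ and its complement, and integrating by parts using the smoothness of $\ln(1-2\sigma|r_1|^2)$. This yields the $|k+k_0|\,|\ln|k+k_0||$ bound uniformly in the direction of approach.
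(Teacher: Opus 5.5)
Your proposal is correct and follows the paper's route: compute the moduli of the factors of $d_0^{(\pm k_0)}$, then use Lemma \ref{lem_delta} for $d_1^{(\pm k_0)}$. The one difference is bookkeeping: you keep $\chi$ on the $\ln_\pi$ branch, so $\chi(\zeta,-k_0)$ is purely imaginary and $\re^{2\pi\nu}$ comes from $\ln_\pi(-w)=\ln_0(w)-\ri\pi$ inside $\delta(-k)^2$, whereas the paper obtains it by asserting $\rre\chi(\zeta,-k)|_{k=k_0}=-\pi\nu$ alongside $\rre\chi(\zeta,k)|_{k=-k_0}=0$ (the same point, so the first value silently uses the other branch); your version is the cleaner one and correctly shows that the displayed $d_0^{(k_0)}$ must carry the factor $\re^{2\pi\nu}$ explicitly. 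As you suspected, $1-2\sigma|r_1|^2>0$ for $\sigma=1$ is not implied by Assumption \ref{assu_LT} as written, but the paper also assumes it tacitly (it is needed just to define $\nu$ and $\delta$), so this is not a gap relative to the paper.
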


    \begin{proof}
    	Direct calculation shows that the following two equations hold:
    	\begin{equation*}
    		\begin{aligned}
    			&\rre\chi(\zeta,-k)|_{k=k_0}=-\frac{1}{2\pi}\int_{k_0}^{\infty}\pi \rd\ln (1-2|r_1(-s)|^2)=\frac{1}{2} \ln (1-2|r_1(-k_0)|^2)=-\pi \nu,\\
    			&\rre\chi(\zeta,k)|_{k=-k_0}=-\frac{1}{2\pi}\int_{k_0}^{\infty}0 \rd\ln (1-2|r_1(-s)|^2)=0,
    		\end{aligned}
    	\end{equation*}
    	and consequently $\left|d_0^{(k_0)}(\zeta)\right|=\re^{2\pi\nu}$ and $\left|d_0^{(-k_0)}(\zeta)\right|=1$. The two expressions on the right-hand side of equation \eqref{Error_d1-1} can also be directly obtained from the conclusions in Lemma \ref{lem_delta} that the function $\delta(\zeta,k)$ satisfies. 
    \end{proof}

   Denote $\mathrm{r}=r_1(-k_0)$ and $\hat{\mathrm{r}}=\mathrm{r}/(1-2|\mathrm{r}|^2)$. For any fixed $z_{(\pm)}$, $r_{1,a}(\mp k)\to \mathrm{r}$, $\hat{r}_{1,a}(\mp k)\to\hat{\mathrm{r}}$ and $d_1^{(\pm k_0)}\to 1$ as $t\to\infty$. Thus, the jump matrices $\tilde{V}^{(k_0)}_j(\zeta,z_{(+)}(k))$ and $\tilde{V}^{(-k_0)}_{j+4}(\zeta,z_{(-)}(k))$ tend to the matrices $V^{X}_{ k_0}(\mathrm{r},z_{(+)})$ and $V^{X}_{- k_0}(\mathrm{r},z_{(-)})$, respectively, where
   \begin{equation}\label{model_V_k0}
   \begin{aligned}
   	&V^{X}_{ k_0}(\mathrm{r},z_{(+)})=\begin{pmatrix}
   		1 & 0 & 0\\
   		0 & 1 & 0\\
   		0 & \hat{\mathrm{r}}z_{(+)} ^{-2\ri\nu(\mathrm{r})}\re^{\frac{\ri z_{(+)}^2}{2}} & 1\\
   	\end{pmatrix},\quad &&z_{(+)}\in X_1,\\
    &V^{X}_{ k_0}(\mathrm{r},z_{(+)})=\begin{pmatrix}
   		1 & 0 & 0\\
   		0 & 1 & 2\sigma {\mathrm{r}^*}z_{(+)}^{2\ri\nu(\mathrm{r})}\re^{-\frac{\ri z_{(+)}^2}{2}}\\
   		0 & 0 & 1\\
   	\end{pmatrix},\quad  &&z_{(+)}\in X_2,\\
   	&V^{X}_{ k_0}(\mathrm{r},z_{(+)})=\begin{pmatrix}
   		1 & 0 & 0\\
   		0 & 1 & 0\\
   		0 & -\mathrm{r}z_{(+)}^{-2\ri\nu(\mathrm{r})}\re^{\frac{\ri z_{(+)}^2}{2}}  & 1\\
   	\end{pmatrix},\quad  &&z_{(+)}\in X_3,\\
    &V^{X}_{ k_0}(\mathrm{r},z_{(+)})=\begin{pmatrix}
   		1 & 0 & 0\\
   		0 & 1 & -2\sigma \hat{\mathrm{r}}^*z_{(+)}^{2\ri\nu(\mathrm{r})}\re^{-\frac{\ri z_{(+)}^2}{2}}\\
   		0 & 0 & 1\\
   	\end{pmatrix},\quad  &&z_{(+)}\in X_4,
      \end{aligned}
  \end{equation}
  \begin{equation}\label{model_V_-k0}
  	\begin{aligned}
  		&V^{X}_{-k_0}(\mathrm{r},z_{(-)})=\begin{pmatrix}
  			1 & -\mathrm{r}z_{(-)}^{-2\ri\nu(\mathrm{r})}\re^{\frac{\ri z_{(-)}^2}{2}} & 0\\
  			0 & 1 & 0\\
  			0 & 0 & 1\\
  		\end{pmatrix},\quad &&z_{(-)}\in X_1,\\
    &V^{X}_{- k_0}(\mathrm{r},z_{(-)})=\begin{pmatrix}
  		1 & 0 & 0\\
  		-2\sigma \hat{\mathrm{r}}^*z_{(-)}^{2\ri\nu(\mathrm{r})}\re^{-\frac{\ri z_{(-)}^2}{2}} & 1 & 0\\
  		0 & 0 & 1\\
  		\end{pmatrix},\quad  &&z_{(-)}\in X_2,\\
  		&V^{X}_{-k_0}(\mathrm{r},z_{(-)})=\begin{pmatrix}
  			1 & \hat{\mathrm{r}}z_{(-)}^{-2\ri \tilde\nu(\mathrm{r})}\re^{\frac{\ri z_{(-)}^2}{2}} & 0\\
  			0 & 1 & 0\\
  			0 & 0 & 1\\
  		\end{pmatrix},\quad  &&z_{(-)}\in X_3,\\
    &V^{X}_{-k_0}(\mathrm{r},z_{(-)})=\begin{pmatrix}
  			1 & 0 & 0\\
  			2\sigma {\mathrm{r}^*}z_{(-)}^{2\ri\nu(\mathrm{r})}\re^{-\frac{\ri z_{(-)}^2}{2}} & 1 & 0\\
  			0 & 0 & 1\\
  		\end{pmatrix},\quad  &&z_{(-)}\in X_4,
  	\end{aligned}
  \end{equation}
  and the branch cuts running along the positive and negative real axes, respectively, i.e., $z_{(+)}^{2\ri\nu}=\re^{2\ri\nu\ln_0(z_{(+)})}$ and $z_{(-)}^{2\ri\nu}=\re^{2\ri\nu\ln_\pi(z_{(-)})}$.

   Through the following transformation, we can define functions $M^{(\pm k_0)}$ in $B_\epsilon(\pm k_0)$, respectively, to approximate the function  $M^{(2)}(x,t,k)$:
  \begin{align}\label{Mk_0}
      M^{(\pm k_0)}(\zeta,k)=Y_{(\pm)}(\zeta)M^{X}_{\pm k_0}(q(\zeta),z_{(\pm)}(k))Y^{-1}_{(\pm)}(\zeta), \quad k\in B_\epsilon(\pm k_0),
  \end{align}
  where $M^{X}_{\pm k_0}(q(\zeta),z_{(\pm)}(k))$ satisfy the following model problem.
  \begin{rhp}\label{model-rhp1}
  	Find a $3\times3$ matrix-valued function $M^{X}_{\pm k_0}(q,z_{(\pm)})$ with the following properties:
  	\begin{itemize}
  		\item The function $M^{X}_{\pm k_0}(q,z_{(\pm)})$ is holomorphic for $k\in\mathbb{C}\setminus X$.
  		\item The function $M^{X}_{\pm k_0}(q,z_{(\pm)})$ is analytic for $k\in\mathbb{C}\setminus X$, and satisfy the following relationship:
  		\begin{equation*}
  			\left(M^{X}_{\pm k_0}(q,z_{(\pm)}) \right)_+ =\left(M^{X}_{\pm k_0}(q,z_{(\pm)}) \right)_- V^{X}_{\pm k_0}(q,z_{(\pm)}), \quad k\in X,
  		\end{equation*}
  		where $V^{X}_{ k_0}$ and $V^{X}_{- k_0}$ are given by equations \eqref{model_V_k0} and \eqref{model_V_-k0}, respectively.
  		
  		\item  As $z_{(\pm)}\rightarrow\infty$, $M^{X}_{\pm k_0}(q,z_{(\pm)})=I+\mathcal{O}\left( z_{(\pm)}^{-1}\right) $.
  		\item  As $z_{(\pm)}\rightarrow0$, $M^{X}_{\pm k_0}(q,z_{(\pm)})=\mathcal{O}(1)$.
  	\end{itemize}
  \end{rhp}

   The solution $M^{X}_{\pm k_0}(q,z_{(\pm)})$ of the RH problem \ref{model-rhp1} admits the following expansion:
  \begin{equation}\label{Mpm_expand}
  	M^{X}_{\pm k_0}(q,z_{(\pm)})=I+\frac{\left( M^{X}_{\pm k_0}(q)\right) _1}{z_{(\pm)}}+\mathcal{O}\left(\frac{1}{z_{(\pm)}^2}\right), \quad z_{(\pm)}\to\infty,
  \end{equation}
  where
  \begin{equation*}
  	\left( M^{X}_{k_0}(q)\right) _1=\begin{pmatrix}
  		0 & 0 & 0\\
  		0 & 0 & \alpha_{23}\\
  		0 & \alpha_{32} & 0\\
  	\end{pmatrix},\quad 	\left( M^{X}_{-k_0}(q)\right) _1=\begin{pmatrix}
  	0 & \alpha_{12} & 0\\
  	\alpha_{21} & 0 & 0\\
  	0 & 0 & 0\\
  	\end{pmatrix},
  \end{equation*}
  and $\Gamma(\cdot)$ represents the Gamma function,
  \begin{equation*}
  	 \alpha_{12}=\frac{\sqrt{2\pi}\re^{-\frac{\pi\ri}{4}}\re^{-\frac{\pi\nu}{2}}}{2\sigma \mathrm{r}^*\Gamma(\ri\nu)},\quad \alpha_{21}=\frac{\sqrt{2\pi}\re^{\frac{\pi\ri}{4}}\re^{-\frac{\pi\nu}{2}}}{\mathrm{r}\Gamma(-\ri\nu)}, \quad \alpha_{23}=\frac{\sqrt{2\pi}\re^{-\frac{\pi\ri}{4}}\re^{-\frac{5\pi\nu}{2}}}{\mathrm{r}\Gamma(-\ri\nu)},\quad
  	\alpha_{32}=\frac{\sqrt{2\pi}\re^{\frac{\pi\ri}{4}}\re^{\frac{3\pi\nu}{2}}}{2\sigma \mathrm{r}^*\Gamma(\ri\nu)}.
  \end{equation*}

 \begin{lem}\label{lem_Error_v2-vk1}
	For each $\zeta\in\mathcal{I}_2$ and $t\geq 2$, the eigenfunctions $M^{( k_0)}$ and $M^{(-k_0)}$ defined in equation \eqref{Mk_0} are analytic and bounded for $k\in B_\epsilon( k_0)\setminus X^{ k_0}$ and $k\in B_\epsilon(-k_0)\setminus X^{-k_0}$, respectively. The jump relations $M^{( k_0)}_+(x,t,k)=M^{(k_0)}_-(x,t,k)V^{(k_0)}(x,t,k)$ for $k\in X^{k_0}$ and $M^{(- k_0)}_+(x,t,k)=M^{(-k_0)}_-(x,t,k)V^{(-k_0)}(x,t,k)$ for $k\in X^{-k_0}=\cup_{j=5}^8X_j^{-k_0}$ hold. Moreover, the jump matrices $V^{(\pm k_0)}$ satisfy the following estimates:
	\begin{align}
		&\left\|V^{(2)}(x,t,\cdot)-V^{(\pm k_0)}(x,t,\cdot)\right\|_{L^1(X^{\pm k_0})}\leq \frac{C \ln t}{t},\label{Error_v2-vk1}\\
		&\left\|V^{(2)}(x,t,\cdot)-V^{(\pm k_0)}(x,t,\cdot)\right\|_{L^\infty(X^{\pm k_0})}\leq \frac{C \ln t}{\sqrt{t}}.\label{Error_v2-v-k1}
	\end{align}
	In addition,
	\begin{equation}\label{Mk1-I}
		\left\|\left(M^{(\pm k_0)}(x,t,\cdot)\right)^{-1}-I\right\|_{L^\infty(\partial B_\epsilon(\pm k_0))}=\mathcal{O}\left(t^{-1/2}\right),
	\end{equation}
	\begin{equation}\label{intM-I}\frac{1}{2\pi\ri} \int_{\partial B_\epsilon(\pm k_0)} \left(\left(M^{(\pm k_0)}(x,t,k)\right)^{-1}-I\right)\rd k=-\frac{Y_{(\pm)}(\zeta)\left( M^{X}_{\pm k_0}(q)\right) _1Y_{(\pm)}^{-1}(\zeta)}{2\sqrt{t}}+\mathcal{O}\left(t^{-1}\right),
	\end{equation}
	uniformly for $\zeta\in\mathcal{I}_2$.
\end{lem}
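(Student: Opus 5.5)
The plan is the standard local parametrix argument of Deift--Zhou as adapted in \cite{Lenells_2017}. I treat the point $k_0$ in detail; the point $-k_0$ is handled in the same way by replacing $z_{(+)},\,\ln_0,\,\Phi_{23}$ and the $(2,3)$ block with $z_{(-)},\,\ln_\pi,\,\Phi_{12}$ and the $(1,2)$ block.

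\emph{Analyticity and the jump relation.} On $B_\epsilon(k_0)\setminus X^{k_0}$, $M^{(k_0)}=Y_{(+)}M^X_{k_0}(\mathrm{r},z_{(+)})Y_{(+)}^{-1}$. This is analytic because $M^X_{k_0}$ solves RH problem \ref{model-rhp1}, which is solvable explicitly by parabolic cylinder functions in the $2\times2$ block $\{2,3\}$, and $z_{(+)}$ is an affine function of $k$. Boundedness follows from three facts: $M^X_{k_0}$ is bounded on $\mathbb{C}\setminus X$ (it is $\mathcal{O}(1)$ at $0$ and tends to $I$ at infinity); $Y_{(\pm)}^{\pm1}$ are uniformly bounded by Lemma \ref{lem_Y12_bound}; and $\det M^X=1$. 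The jump is $V^{(k_0)}=Y_{(+)}V^X_{k_0}(\mathrm{r},z_{(+)})Y_{(+)}^{-1}$. By the construction of $Y_{(+)}$, conjugating by $Y_{(+)}^{-1}$ sends $\tilde V^{(k_0)}_j$ back to $V^{(2)}_j$. Hence $V^{(2)}-V^{(k_0)}=Y_{(+)}\bigl(\tilde V^{(k_0)}-V^X_{k_0}\bigr)Y_{(+)}^{-1}$, and the estimates reduce to comparing $\tilde V^{(k_0)}$ with $V^X_{k_0}$ on $X^{k_0}$.

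\emph{Jump estimates \eqref{Error_v2-vk1}--\eqref{Error_v2-v-k1}.} On $X_1^{k_0}$ the only nonzero off-diagonal entry of the difference is
\[
\Bigl(\hat r_{1,a}(-k)\bigl(d_1^{(k_0)}\bigr)^{-1}-\hat{\mathrm r}\Bigr)z_{(+)}^{-2\ri\nu}\re^{\ri z_{(+)}^2/2}.
\]
On this ray $z_{(+)}=s\re^{\ri\pi/4}$, so $|\re^{\ri z_{(+)}^2/2}|=\re^{-s^2/2}$, and $|z_{(+)}^{-2\ri\nu}|\le C$ uniformly. For the prefactor I use Lemma \ref{lem_r_decomposition} with $N=0$ together with \eqref{Error_d1-1}. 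This gives
\[
\bigl|\hat r_{1,a}(-k)(d_1^{(k_0)})^{-1}-\hat{\mathrm r}\bigr|\le C|k-k_0|\bigl(1+|\ln|k-k_0||\bigr)\re^{\frac t4|\rre\Phi_{23}|}.
\]
The growth factor is absorbed because $t|\rre\Phi_{23}|=|z_{(+)}|^2/2$ exactly on the ray. This leaves an exponent $-\tfrac{3}{8}s^2$, and $|k-k_0|=s/(2\sqrt t)$.

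The $L^\infty$ bound follows from
\[
\sup_s \frac{s}{\sqrt t}\,\bigl(1+|\ln s|+\ln t\bigr)\re^{-cs^2}\le \frac{C\ln t}{\sqrt t}.
\]
The $L^1$ bound follows by integrating in $k$, with $\rd k=\rd s/(2\sqrt t)$, which gives $C\ln t/t$. The other three rays are identical, the sign of the exponent being matched to whether the triangular factor is upper or lower.

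\emph{Estimates on $\partial B_\epsilon(k_0)$ and the main obstacle.} On $\partial B_\epsilon$ we have $|z_{(+)}|=2\epsilon\sqrt t\to\infty$. The expansion \eqref{Mpm_expand} then gives
\[
\bigl(M^X_{k_0}\bigr)^{-1}=I-\frac{(M^X_{k_0})_1}{z_{(+)}}+\mathcal{O}\bigl(z_{(+)}^{-2}\bigr).
\]
Conjugating by the bounded $Y_{(+)}$ yields \eqref{Mk1-I}. For \eqref{intM-I} I integrate the expansion term by term. The residue gives
\[
\frac{1}{2\pi\ri}\oint\frac{\rd k}{2\sqrt t\,(k-k_0)}=\frac{1}{2\sqrt t},
\]
and the remainder contributes $\mathcal{O}(t^{-1})$ because the circle has fixed length.

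The step I expect to be delicate is uniformity in $\zeta\in\mathcal{I}_2$. Specifically, the $\mathcal{O}(z^{-2})$ remainder of $M^X_{k_0}$ must be uniform in the parameter $\mathrm r=r_1(-k_0)$. I would handle this by noting that $\mathrm r$ ranges over a compact set on which $1-2\sigma|\mathrm r|^2$ is bounded below, by Assumption \ref{assu_LT}. The explicit parabolic-cylinder solution then depends continuously on $\mathrm r$, so its large-$z$ asymptotics are uniform.
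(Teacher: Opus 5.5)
Your proposal is correct and follows the paper's proof essentially step for step. It uses the same reduction $V^{(2)}-V^{(\pm k_0)}=Y_{(\pm)}\bigl(\tilde V^{(\pm k_0)}-V^{X}_{\pm k_0}\bigr)Y_{(\pm)}^{-1}$, the same ray bound $C|k-k_0|(1+|\ln|k-k_0||)\re^{-ct|k-k_0|^2}$ from Lemma \ref{lem_r_decomposition} and \eqref{Error_d1-1} after absorbing $\re^{\frac{t}{4}|\rre\Phi_{23}|}$ into the Gaussian, and the same large-$z_{(\pm)}$ expansion plus Cauchy's formula on $\partial B_\epsilon(\pm k_0)$; your remarks on analyticity, boundedness and uniformity in $\zeta$ only make explicit what the paper leaves tacit. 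One small slip: the lower bound on $1-2\sigma|r_1(-k_0)|^2$ that makes $\nu$ real does not come from Assumption \ref{assu_LT}, which concerns the $(3,3)$ entry $1-2|r_1(-k)|^2+|r_2(k)|^2$; for $\sigma=1$ it is an extra hypothesis that the paper itself uses tacitly when defining $\delta$ and $\nu$.
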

\begin{proof}
	According to the definitions of $M^{(\pm k_0)}$ in equation \eqref{Mk_0}, we have
	\begin{align*}
		V^{(2)}(k)-V^{(\pm k_0)}(k)=Y_{(\pm)}(\zeta)\left(\tilde{V}^{(\pm k_0)}(k)-V^{X}_{\pm k_0}(z_{(\pm)}(k))\right)Y^{-1}_{(\pm)}(\zeta),\quad k\in X^{\pm k_0},
	\end{align*}
	Further, by utilizing the boundedness of the functions $Y_{(\pm)}(\zeta)$ and $Y_{(\pm)}^{-1}(\zeta)$ in Lemma \ref{lem_Y12_bound}, we prove that equations \eqref{Error_v2-vk1} and \eqref{Error_v2-v-k1} can be equivalently considered as $\tilde{V}^{(\pm k_0)}(k)-V^{X}_{\pm k_0}(z_{(\pm)}(k))$. 
	
	For $k\in X^{k_0}_1$, we can obtain
	\begin{equation*}
		\re^{\frac{t}{4}\left|\rre\Phi_{23}(\zeta,k)\right|}=\re^{\frac{t}{4}\left|\rre\Phi_{23}(\zeta,k)-\rre\Phi_{23}(\zeta,k_0)\right|}=\re^{\frac{1}{4}\left|\rre\left(\ri z_1^2/2\right)\right|}\leq \re^{\frac{|z_1|^2}{8}}.
	\end{equation*}
	By using Lemma \ref{lem_r_decomposition} and equation \eqref{Error_d1-1} in the Lemma \ref{lem_Y12_bound}, we can obtain for $k\in X^{k_0}_1$
	\begin{align*}
		\left|\left(\tilde{V}^{(k_0)}-V^{X}_{k_0}\right)_{32}\right|&=\left|\hat{r}_{1,a}(-k)\left(d^{(k_0)}_{1}\right)^{-1}\re ^{-2\ri\nu\ln_0(z_{(+)})}\re^{\frac{\ri z_{(+)}^2}{2}}-\hat{r}_{1,a}(-k_0)z_{(+)} ^{-2\ri\nu(q)}\re^{\frac{\ri z_{(+)}^2}{2}}\right|\\
		&=\left|\re ^{-2\ri\nu\ln_0(z_{(+)})}\right|\left|\left(\left(d^{(k_0)}_{1}\right)^{-1}-1\right)\hat{r}_{1,a}(-k)+\left(\hat{r}_{1,a}(-k)-\hat{r}_{1,a}(-k_0)\right)\right|\left|\re^{\frac{\ri z_{(+)}^2}{2}}\right|\\
		&\leq C \left(\left|\left(d^{(k_0)}_{1}\right)^{-1}-1\right|+\left|k-k_0\right|\right)\re^{-ct|k-k_0|^2}\\
		&\leq C|k-k_0|(1+|\ln|k-k_0||)\re^{-ct|k-k_0|^2}.
	\end{align*}
	Thus,
	\begin{equation*}
		\begin{aligned}
			&
			\left\|\left(\tilde{V}^{(k_0)}-V^{X}_{k_0}\right)_{32}\right\|_{L^1\left(X_1^{k_0}\right)}\leq C\int_{0}^{\infty} u (1+|\ln u|)\re^{-ctu^2}\rd u \leq \frac{C \ln t}{t},\\
			&\left\|\left(\tilde{V}^{(k_0)}-V^{X}_{k_0}\right)_{32}\right\|_{L^\infty\left(X_1^{k_0}\right)}\leq C\sup_{u\geq 0} u (1+|\ln u|)\re^{-ctu^2}\leq \frac{C \ln t}{\sqrt{t}}.
		\end{aligned}
	\end{equation*}
	Thus, the case for $k \in X^{k_0}_1$ in equations \eqref{Error_v2-vk1} and \eqref{Error_v2-v-k1} is proved, and the remaining cases can be proved similarly.
	
	 On the other hand, according to equation \eqref{zpm}, we have that $z_{(\pm)}\to\infty$ for $k\in \partial B_\epsilon(\pm k_0)$ as $t\to\infty$. Equation \eqref{Mpm_expand} can be rewritten as:
	\begin{align*}
		M^{X}_{\pm k_0}(q(\zeta),z_{(\pm)}(k))=I+\frac{\left( M^{X}_{\pm k_0}(q(\zeta))\right) _1}{2\sqrt{t}(k\mp k_0)}+\mathcal{O}\left(\frac{1}{t}\right), \quad k\in \partial
		B_\epsilon(\pm k_0),\,\,t\to\infty,
	\end{align*}
	uniformly for $\zeta\in\mathcal{I}_2$. Combining equation \eqref{Mk_0}, we obtain as $t\to\infty$,
	\begin{align*}
		\left(M^{(\pm k_0)}\right)^{-1}-I=-\frac{Y_{(\pm)}(\zeta)\left( M^{X}_{\pm k_0}(q(\zeta))\right) _1Y_{(\pm)}^{-1}(\zeta)}{2\sqrt{t}(k\mp k_0)}+\mathcal{O}\left(\frac{1}{t}\right), 
	\end{align*}
	uniformly for $k\in \partial B_\epsilon(\pm k_0)$. With Lemma \ref{lem_Y12_bound} and the Cauchy formula, the proof of equations \eqref{Mk1-I} and \eqref{intM-I} follows directly from the above equation.\qedhere
	
\end{proof}

   To obtain a small-norm RH problem, we construct using the functions $M^{\pm k_0}$ as follows: 
      \begin{equation}\label{hatM}
        \hat{M}=\left\{
        \begin{aligned}
            &M^{(2)}\left(M^{(\pm k_0)}\right)^{-1},\quad &&k\in B_\epsilon(\pm k_0),\\
            &M^{(2)},\quad &&\text{elsewhere}.
        \end{aligned}
        \right.
    \end{equation}
    The function $\hat M$ is analytic for $k \in \hat{\Sigma}$, where $\hat{\Sigma}=\Sigma^{(2)}\cup \partial B_{\epsilon}$, and $\partial B_{\epsilon}$ are oriented counterclockwise. The jump matrix $\hat V$ are
    \begin{equation}\label{hatv}
        \hat{V}=\left\{
        \begin{aligned}
            &M^{(\pm k_0)}_-V^{(2)}\left(M^{(\pm k_0)}_+\right)^{-1},\quad &&k\in \hat{\Sigma} \cap B_\epsilon(\pm k_0),\\
            & \left(M^{(\pm k_0)}\right)^{-1}, && k\in \partial B_\epsilon(\pm k_0),\\
            &V^{(2)},\quad &&k\in \hat{\Sigma} \setminus B_\epsilon.
        \end{aligned}
        \right.
    \end{equation}

    \begin{lem}\label{lem_hatw}
    Let $\hat{W}=\hat{V}-I$, the following estimates hold uniformly for $\zeta\in\mathcal{I}_2$, $t\geq2$:
     \begin{align*}
            &\left\|(1+|\cdot|) \hat{W}\right\|_{\left(L^1\cap L^\infty\right)(\mathbb{R})}\leq \frac{C}{t^{N}}, \quad \left\|(1+|\cdot|)  \hat{W}\right\|_{\left(L^1\cap L^\infty\right)(\Sigma^{(2)}\setminus(\mathbb{R}\cup B_\epsilon))}\leq C\re^{-ct},\label{Error_hatw_2}\\
            &\left\| \hat{W}\right\|_{\left(L^1\cap L^\infty\right)(\partial B_\epsilon(\pm k_0))}\leq \frac{C}{\sqrt{t}},\quad
            \left\| \hat{W}\right\|_{L^1(X^{\pm k_0})}\leq \frac{C\ln t}{t},\quad 
            \left\|  \hat{W}\right\|_{L^\infty(X^{\pm k_0})}\leq \frac{C\ln t}{\sqrt{t}}.
        \end{align*}
    \end{lem}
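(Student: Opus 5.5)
The plan is to split $\hat\Sigma$ into the pieces on which $\hat V$ is given by different formulas in \eqref{hatv}, and to bound $\hat W=\hat V-I$ on each piece with an estimate already in hand.

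\textbf{Pieces outside $B_\epsilon$.} Here $\hat V=V^{(2)}$.

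On $\mathbb{R}\setminus B_\epsilon$, i.e.\ on $\Sigma^{(2)}_j$ with $j=9,10,11$, the $t^{-N}$ bound comes directly from Lemma \ref{lem_v2_error}. By Remark \ref{remark_v2_9-11}, the entries of $V^{(2)}_j-I$ are built from $r_{1,r},\hat r_{1,r},\hat\alpha_r,r_{2,r}$. These are multiplied by $\delta$-factors, which Lemma \ref{lem_delta} shows are bounded, and by pure phases, since $\rre\Phi_{ij}=0$ on $\mathbb{R}$. Lemma \ref{lem_r_decomposition} then gives the weighted $L^1\cap L^\infty$ bound $Ct^{-N}$.

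Inside $B_\epsilon$, the jump on $\mathbb{R}$ is conjugated by $M^{(\pm k_0)}$. That conjugating factor is bounded by Lemma \ref{lem_Error_v2-vk1}, so the same $t^{-N}$ bound survives there.

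On $\Sigma^{(2)}_{12}\cup\Sigma^{(2)}_{13}$, Lemma \ref{lem_v2_error} gives $Ce^{-ct}$ directly.

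On the rays $\Sigma^{(2)}_j$, $j=1,\dots,8$, outside $B_\epsilon$, we have $|k\mp k_0|\ge\epsilon$. The signature tables in Figure \ref{fig_theta_sign1} then give $|\rre\Phi_{ij}(\zeta,k)|\ge c|k\mp k_0|^2\ge c$, at least for bounded $k$; for large $|k|$ the quadratic growth helps further. Combining this with the factor $e^{\frac t4|\rre\Phi|}$ in Lemma \ref{lem_r_decomposition} and the boundedness of $\delta$ yields $|\hat W|\le C(1+|k|)^{-1}e^{-\frac34 t|\rre\Phi|}\le Ce^{-ct}(1+|k|)^{-1}e^{-c'|k|}$. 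Hence the weighted $L^1\cap L^\infty$ norm is $O(e^{-ct})$, uniformly for $\zeta\in\mathcal{I}_2$, because $k_0$ stays in a compact subset of $(0,\infty)$.

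\textbf{The circles $\partial B_\epsilon(\pm k_0)$.} Here $\hat W=(M^{(\pm k_0)})^{-1}-I$. The $L^\infty$ bound $Ct^{-1/2}$ is exactly \eqref{Mk1-I}. Since the circles have finite length $2\pi\epsilon$, with $\epsilon$ bounded below on $\mathcal{I}_2$, the $L^1$ bound follows.

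\textbf{The crosses $X^{\pm k_0}$.} This is the main step. Write
\[
\hat V-I=M^{(\pm k_0)}_-\bigl(V^{(2)}-V^{(\pm k_0)}\bigr)\bigl(M^{(\pm k_0)}_+\bigr)^{-1},
\]
which uses $M^{(\pm k_0)}_-V^{(\pm k_0)}(M^{(\pm k_0)}_+)^{-1}=I$. By \eqref{Mk_0}, $M^{(\pm k_0)}=Y_{(\pm)}M^X_{\pm k_0}Y_{(\pm)}^{-1}$. Lemma \ref{lem_Y12_bound} bounds $Y_{(\pm)}$ and $Y_{(\pm)}^{-1}$. We also need $M^X_{\pm k_0}$ and its inverse to be bounded uniformly in $z$ and in $\mathrm r$ ranging over $\{r_1(-k_0):\zeta\in\mathcal{I}_2\}$. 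I would get this from the explicit parabolic-cylinder solution, which is bounded near $z=0$ and tends to $I$ at infinity, together with $\det=1$. Its continuous dependence on $\mathrm r$ over a compact set, with $1-2\sigma|\mathrm r|^2>0$ by Assumption \ref{assu_LT}, then gives the uniform bound.

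With these bounds, \eqref{Error_v2-vk1} and \eqref{Error_v2-v-k1} transfer directly and give $\|\hat W\|_{L^1(X^{\pm k_0})}\le Ct^{-1}\ln t$ and $\|\hat W\|_{L^\infty(X^{\pm k_0})}\le Ct^{-1/2}\ln t$. The main obstacle is securing this uniform boundedness of $M^X_{\pm k_0}$ across the family of parameters. The remaining steps are bookkeeping.
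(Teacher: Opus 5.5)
Your proposal is correct and takes the paper's approach: the paper's proof just says the lemma follows from Lemmas \ref{lem_r_decomposition}, \ref{lem_delta} and \ref{lem_v2_error}--\ref{lem_Error_v2-vk1}, and you have written out that piece-by-piece bookkeeping on $\hat\Sigma$, additionally justifying the uniform boundedness of $M^{(\pm k_0)}$ and its inverse (which the paper simply asserts in Lemma \ref{lem_Error_v2-vk1}) via the explicit parabolic-cylinder solution, $\det=1$, and compactness in $\mathrm r$. One small correction: the positivity $1-2\sigma|\mathrm r|^2>0$ you invoke is not what Assumption \ref{assu_LT} states, since that assumption concerns $1-2|r_1(-k)|^2+|r_2(k)|^2$; rather, it is an implicit standing hypothesis of the paper, already needed for $\nu$ and $\hat r_1$ to be defined when $\sigma=1$.
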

    \begin{proof}
       This follows directly from Lemmas \ref{lem_r_decomposition}, \ref{lem_delta}, and \ref{lem_v2_error}-\ref{lem_Error_v2-vk1}.
    \end{proof}

   Define the Cauchy transform of the following form:
    \begin{equation*}
        \left(\mathcal{C}f\right)(z)=\frac{1}{2\pi\ri} \int_{\hat{\Sigma}}\frac{f(z')}{z'-z}\rd z',\quad z\in \mathbb{C}\setminus\hat{\Sigma},
    \end{equation*}
   where the function $f$ is defined on $\hat{\Sigma}$. The Cauchy transform $\left(\mathcal{C}f\right)(z)$ is analytic from $\mathbb{C} \setminus \hat{\Sigma}$ to $\mathbb{C} $, 
  provided that $(1 + |z|)^{1/3} f(z) \in L^3(\hat{\Sigma})$ (i.e., $f\in \dot{L}^3(\hat{\Sigma})$). Moreover, $\mathcal{C}_{\pm} f $ exist almost everywhere for $z \in \hat{\Sigma} $, $\mathcal{C}_+-\mathcal{C}_-=I$ and $\mathcal{C}_{\pm} \in \mathcal{B}(\dot L^3(\hat{\Sigma})) $, where $\mathcal{B}(\dot L^3(\hat{\Sigma})) $ denotes the space of bounded linear operators on $\dot L^3(\hat{\Sigma})$.

  Lemma \ref{lem_hatw} implies that
  \begin{equation*}
            \left\|(1+|\cdot|)  \hat{W}\right\|_{L^1(\hat{\Sigma})}\leq \frac{C}{\sqrt{t}},\quad \left\|(1+|\cdot|)  \hat{W}\right\|_{L^\infty(\hat{\Sigma})}\leq \frac{C\ln t}{\sqrt{t}},
       \quad \zeta\in\mathcal{I}_2,\,\,t\geq2.
  \end{equation*}
  Next, we obtain by calculation:
  \begin{equation}\label{hatw_Lp}
      \left\|(1+|\cdot|)  \hat{W}\right\|_{L^p(\hat{\Sigma})}\leq \frac{C\ln t^{\frac{p-1}{p}}}{\sqrt{t}},\quad 1\leq p\leq \infty.
  \end{equation}
  This estimate shows that $\hat{W}\in (\dot{L}^3\cap L^\infty)(\hat{\Sigma})$, and we can define:
  \begin{equation*}
      \mathcal{C}_{\hat{W}(x,t,\cdot)}f=\mathcal{C}_-(f\hat
      W),\quad \mathcal{C}_{\hat{W}}: (\dot{L}^3\cap L^\infty)(\hat{\Sigma})\to\dot{L}^3(\hat{\Sigma}).
  \end{equation*}
  \begin{lem}\label{lem_I-C_inverse}
      For any sufficiently large $t$ and $\zeta\in\mathcal{I}_2$, $I-\mathcal{C}_{\hat{W}(x,t,\cdot)}\in \mathcal{B}(\dot L^3(\hat{\Sigma})) $ is invertible.
  \end{lem}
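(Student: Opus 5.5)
The plan is a Neumann series argument in $\mathcal{B}(\dot L^3(\hat{\Sigma}))$: show that the operator norm of $\mathcal{C}_{\hat W}$ tends to zero as $t\to\infty$, uniformly in $\zeta\in\mathcal{I}_2$. Then for large $t$ it is below $1/2$, and $I-\mathcal{C}_{\hat W}$ is invertible with $\|(I-\mathcal{C}_{\hat W})^{-1}\|\le 2$.

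\emph{Step 1: the contour.} The contour $\hat{\Sigma}=\Sigma^{(2)}\cup\partial B_\epsilon$ is a finite union of rays, segments and circles. Its geometry depends on $\zeta$ only through $k_0=(\zeta-1)/2$ and $\epsilon=|k_0|/2$. Since $\mathcal{I}_2\subset(1,\infty)$ is compact, $k_0$ stays in a compact subset of $(0,\infty)$. Hence $\hat{\Sigma}$ is a Carleson-type contour whose constant can be chosen uniformly in $\zeta$.

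\emph{Step 2: the Cauchy operator.} By the standard boundedness of the Cauchy operator on such contours (the fact already quoted before the lemma), $\mathcal{C}_-\in\mathcal{B}(\dot L^3(\hat{\Sigma}))$. Its norm $\|\mathcal{C}_-\|_{\dot L^3\to\dot L^3}$ is bounded independently of $\zeta\in\mathcal{I}_2$ and $t$. One way to see the uniformity is to translate and dilate $k$, which maps the family of contours into a compact family of configurations.

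\emph{Step 3: the multiplication operator.} For $f\in\dot L^3(\hat\Sigma)$,
\[
\|\mathcal{C}_{\hat W}f\|_{\dot L^3}=\|\mathcal{C}_-(f\hat W)\|_{\dot L^3}\le C\,\|f\hat W\|_{\dot L^3}\le C\,\|f\|_{\dot L^3}\,\|\hat W\|_{L^\infty(\hat\Sigma)}.
\]
Here the weight $(1+|z|)^{1/3}$ is carried entirely by $f$. By \eqref{hatw_Lp} with $p=\infty$ (equivalently, by Lemma \ref{lem_hatw}),
\[
\|\hat W\|_{L^\infty(\hat\Sigma)}\le\frac{C\ln t}{\sqrt t},
\]
uniformly in $\zeta\in\mathcal{I}_2$. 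Therefore $\|\mathcal{C}_{\hat W}\|_{\mathcal{B}(\dot L^3)}\le C\ln t/\sqrt t\to0$.

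\emph{Step 4: invertibility.} Choose $T_0$ so that $C\ln t/\sqrt t<1/2$ for $t\ge T_0$. The Neumann series $\sum_{n\ge 0}\mathcal{C}_{\hat W}^n$ then converges in operator norm and gives the inverse.

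The main obstacle I expect is the uniformity in Step 2, because $\hat\Sigma$ moves with $\zeta$. It is handled by the compactness of $\mathcal{I}_2$, which keeps $k_0$ away from $0$ and from $\infty$. A secondary point is that the $L^\infty$ bound on $\hat W$ must hold on all pieces of $\hat\Sigma$, including the crosses $X^{\pm k_0}$ near the critical points, where the decay is only $\ln t/\sqrt t$. This is exactly what Lemma \ref{lem_hatw} provides.
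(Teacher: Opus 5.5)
Your proof is correct and follows essentially the same route as the paper. You bound $\|\mathcal{C}_{\hat W}\|_{\mathcal{B}(\dot L^3(\hat\Sigma))}\le\|\mathcal{C}_-\|_{\mathcal{B}(\dot L^3(\hat\Sigma))}\|\hat W\|_{L^\infty(\hat\Sigma)}\le C\ln t/\sqrt{t}$ and invert by a Neumann series for large $t$; your extra remark that $\|\mathcal{C}_-\|$ is uniform in $\zeta\in\mathcal{I}_2$ (the contour scales with $k_0$, which ranges over a compact subset of $(0,\infty)$) fills in a point the paper leaves implicit.
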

  \begin{proof}
      For each $f\in \dot L^3(\hat{\Sigma})$, we have
      \begin{equation*}
          \left\|\mathcal{C}_{\hat W}f\right\|_{\dot L^3(\hat{\Sigma})}\leq \left\|\mathcal{C}_-\right\|_{\mathcal{B}(\dot L^3(\hat{\Sigma}))}\left\|\hat{W}\right\|_{L^\infty(\hat{\Sigma})}\left\|f\right\|_{\dot L^3(\hat{\Sigma})}.
      \end{equation*}
      Thus $\left\|\mathcal{C}_{\hat W}\right\|_{\mathcal{B}(\dot L^3(\hat{\Sigma}))}\leq \left\|\mathcal{C}_-\right\|_{\mathcal{B}(\dot L^3(\hat{\Sigma}))}\left\|\hat{W}\right\|_{L^\infty(\hat{\Sigma})}\leq \frac{C\ln t}{\sqrt{t}}\left\|\mathcal{C}_-\right\|_{\mathcal{B}(\dot L^3(\hat{\Sigma}))}$. Equivalently, we can find a sufficiently large time $t$ such that  $\left\|\hat{W}\right\|_{L^\infty(\hat{\Sigma})}\leq \left(\left\|\mathcal{C}_-\right\|_{\mathcal{B}(\dot L^3(\hat{\Sigma}))}\right)^{-1}$, the operator $I-\mathcal{C}_{\hat{W}(x,t,\cdot)}\in \mathcal{B}(\dot L^3(\hat{\Sigma}))$  becomes invertible.
  \end{proof}
    
   Based on the conclusion of Lemma \ref{lem_I-C_inverse}, we can define a function  $\hat{\mu}(x,t,k)$ for  sufficiently large $t$:
  \begin{equation}\label{hatnu}
      \hat{\mu}=I+\left(I-\mathcal{C}_{\hat{W}}\right)^{-1}\mathcal{C}_{\hat{W}}I\in I+\dot L^3(\hat{\Sigma}).
  \end{equation}
  And define the space $\dot{E}^3(\mathbb{C} \setminus \hat{\Sigma})$ to include analytic functions $f : \mathbb{C} \setminus\hat{\Sigma} \rightarrow \mathbb{C}$ with the property that for each component $D$ of $\mathbb{C} \setminus \hat{\Sigma}$ there exist curves $\{C_n\}_1^\infty$ in $D$ such that the $C_n$ eventually surround each compact subset of $D$ and
  \begin{equation*}
  	\sup_{n \geq 1} \int_{C_n} (1 + |k|) |f(k)|^3 |dk| < \infty.
  \end{equation*}
  For specific details, refer to Reference \cite{Lenells_2018}.

  \begin{lem}\label{lem_hatM_solution}
      For sufficiently large time $t$, the RH problem for $\hat{M}$ in \eqref{hatM} has a unique solution $\hat{M}\in I+\dot{E}^3(\mathbb{C} \setminus \hat{\Sigma})$, which
      \begin{equation*}
          \hat{M}(x,t,k)=I+\mathcal{C}(\hat\mu \hat W)=I+\frac{1}{2\pi\ri}\int_{\hat{\Sigma}}\frac{\hat{\mu}(x,t,k')\hat{W}(x,t,k')}{k'-k}\rd k'.
      \end{equation*}
  \end{lem}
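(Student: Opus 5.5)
The proof follows the standard small-norm theory in $\dot L^3$ of Lenells \cite{Lenells_2018}. The argument has three parts: existence via the function $\hat\mu$ of \eqref{hatnu}, verification that the resulting Cauchy integral is a solution in $I+\dot E^3(\mathbb{C}\setminus\hat\Sigma)$, and uniqueness.

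\textbf{Existence.} Fix $t$ so large that Lemma \ref{lem_I-C_inverse} applies, so that $\hat\mu=I+(I-\mathcal{C}_{\hat W})^{-1}\mathcal{C}_{\hat W}I$ is well defined in $I+\dot L^3(\hat\Sigma)$. First note that $\mathcal{C}_{\hat W}I=\mathcal{C}_-(\hat W)$ makes sense, since \eqref{hatw_Lp} with $p=3$ gives $\hat W\in\dot L^3(\hat\Sigma)$. The function $\hat\mu$ satisfies $(I-\mathcal{C}_{\hat W})(\hat\mu-I)=\mathcal{C}_{\hat W}I$, that is, $\hat\mu=I+\mathcal{C}_-(\hat\mu\hat W)$. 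Next check that $\hat\mu\hat W=\hat W+(\hat\mu-I)\hat W$ is an admissible density. The first term lies in $\dot L^3$ by \eqref{hatw_Lp}. For the second, Hölder together with $\hat W\in L^\infty$ gives $(\hat\mu-I)\hat W\in\dot L^3$. So the Cauchy transform $\mathcal{C}(\hat\mu\hat W)$ is analytic off $\hat\Sigma$ and lies in $\dot E^3(\mathbb{C}\setminus\hat\Sigma)$, by the Hardy-space characterization of Cauchy integrals of $\dot L^3$ densities from \cite{Lenells_2018}. Define $\hat M=I+\mathcal{C}(\hat\mu\hat W)$. Using $\mathcal{C}_+-\mathcal{C}_-=I$,
\[
\hat M_-=I+\mathcal{C}_-(\hat\mu\hat W)=\hat\mu,\qquad \hat M_+=\hat\mu+\hat\mu\hat W=\hat\mu\hat V=\hat M_-\hat V
\]
almost everywhere on $\hat\Sigma$, so $\hat M$ has the required jump. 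Normalization at infinity follows from $\hat M-I\in\dot E^3$. Near the self-intersection points of $\hat\Sigma$, the points $0$ and $\pm k_0$ and the intersections with $\partial B_\epsilon$, only $\dot E^3$ membership is asked for, so no additional local analysis is required.

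\textbf{Uniqueness.} Let $\hat M'\in I+\dot E^3$ be a second solution. First show $\det\hat V=1$. The original jump $v$ of RH problem \ref{rhp_M} has determinant one, and each transformation $\Delta$, $T$ and $M^{(\pm k_0)}$ is unimodular: for $\Delta$ this is immediate from \eqref{trans_Delta}, the $T_j$ are unipotent, and the parabolic-cylinder model solutions have determinant one. Then $\det\hat M$ has no jump, tends to $1$ at infinity, and lies in $1+\dot E^{1}$. The possible isolated singularities at the nodes are removable, since the $L^1$-type integrability excludes poles, and Liouville gives $\det\hat M\equiv1$. Hence $\hat M^{-1}$ exists and lies in $I+\dot E^3$. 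Now consider $\hat M'\hat M^{-1}$. Its jump is $\hat M'_-\hat V\hat V^{-1}\hat M_-^{-1}$, so it has no jump across $\hat\Sigma$. It is of the form $I+(\dot E^3\cdot\dot E^3)$, which is $I$ plus an $\dot E^{3/2}$ function, and an $\dot E^{3/2}$ function without jump is identically zero (again by \cite{Lenells_2018}). Therefore $\hat M'=\hat M$.

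The main obstacle is the functional-analytic bookkeeping: justifying that the Cauchy transform of an $\dot L^3$ density lies in $\dot E^3$ with boundary values $\mathcal{C}_\pm$, and that products and inverses stay in the right $\dot E^p$ classes near the nodes of $\hat\Sigma$. I would import these facts from \cite{Lenells_2018} (see also Deift--Zhou) rather than reprove them. The determinant identity is the only model-specific step, and it follows from the explicit unimodular structure of the transformations.
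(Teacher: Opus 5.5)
This is the standard Lenells $\dot L^3$ small-norm argument. It is exactly what the paper relies on: the paper states this lemma without proof and only points to Lenells' $\dot E^3$ framework. Your existence half is fine as written.

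There is one slip in the uniqueness half. $\hat M^{-1}$ is the adjugate of $\hat M$, whose entries contain $2\times 2$ minors of $I+\dot E^3$, so it lies only in $I+\dot E^{3/2}$, not in $I+\dot E^3$. Hence $\hat M'\hat M^{-1}-I\in\dot E^3+\dot E^{3/2}+\dot E^3\cdot\dot E^{3/2}\subset\dot E^1$, not $\dot E^{3/2}$. It still vanishes by the same jump-free $\dot E^1$ argument you used for $\det\hat M$. This is precisely why the exponent must satisfy $p=3\ge n=3$ for this $3\times3$ problem.
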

  \begin{lem}\label{Error_mu-I}
      For sufficiently large $t$ and $\zeta\in\mathcal{I}_2$, we have
      \begin{equation*}
          \left\|\hat{\mu}-I\right\|_{L^p(\hat{\Sigma})}\leq \frac{C\ln t^{\frac{p-1}{p}}}{\sqrt{t}}, \quad 1<p<\infty.
      \end{equation*}
  \end{lem}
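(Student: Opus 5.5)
We estimate $\hat\mu-I$ directly from its definition \eqref{hatnu} by a Neumann-series argument in $L^p(\hat\Sigma)$, using the smallness of $\hat W$ recorded in \eqref{hatw_Lp}.

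\textbf{Step 1: setting up the bound.} By \eqref{hatnu}, $\hat\mu-I=(I-\mathcal{C}_{\hat W})^{-1}\mathcal{C}_{\hat W}I$, where $\mathcal{C}_{\hat W}I=\mathcal{C}_-(\hat W)$. Fix $1<p<\infty$. The Cauchy boundary operator $\mathcal{C}_-$ is bounded on $L^p(\hat\Sigma)$, since $\hat\Sigma$ is a finite union of rays, segments and circles. Hence
\begin{equation*}
\|\mathcal{C}_{\hat W}f\|_{L^p(\hat\Sigma)}\le \|\mathcal{C}_-\|_{\mathcal{B}(L^p)}\|\hat W\|_{L^\infty(\hat\Sigma)}\|f\|_{L^p(\hat\Sigma)}.
\end{equation*}
Since $\|\hat W\|_{L^\infty}\le C\ln t/\sqrt t$, the operator norm of $\mathcal{C}_{\hat W}$ on $L^p$ tends to $0$. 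This is exactly the argument of Lemma \ref{lem_I-C_inverse}, run in $L^p$ instead of $\dot L^3$. So for $t$ large the Neumann series converges and $\|(I-\mathcal{C}_{\hat W})^{-1}\|_{\mathcal{B}(L^p)}\le 2$.

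\textbf{Step 2: conclusion.} Combining Step 1 with \eqref{hatw_Lp}, and using $1+|k|\ge 1$, gives
\begin{equation*}
\|\hat\mu-I\|_{L^p(\hat\Sigma)}\le 2\|\mathcal{C}_-\|_{\mathcal{B}(L^p)}\|\hat W\|_{L^p(\hat\Sigma)}\le \frac{C(\ln t)^{\frac{p-1}{p}}}{\sqrt t}.
\end{equation*}
The constant is uniform in $\zeta\in\mathcal{I}_2$ by Lemma \ref{lem_hatw}.

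\textbf{Main obstacle: uniformity of the contour.} The delicate point is that $\hat\Sigma$ itself depends on $\zeta$, through $\pm k_0$ and $\epsilon=|k_0|/2$. So we need $\|\mathcal{C}_-\|_{\mathcal{B}(L^p(\hat\Sigma))}$ to be bounded uniformly in $\zeta\in\mathcal{I}_2$. I would handle this in three moves:
\begin{enumerate}
\item Use the fact that $\mathcal{I}_2\subset(1,\infty)$ is compact, so $k_0$ stays in a compact subset of $(0,\infty)$.
\item Observe that $\hat\Sigma(\zeta)$ is the image of a fixed reference contour $\hat\Sigma(\zeta_0)$ under a piecewise affine map $k\mapsto a(\zeta)k+b$, with $a(\zeta)$ bounded above and away from zero.
\item Note that the Cauchy operator is invariant under affine maps up to conjugation, with a change of measure that is bounded above and below. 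Therefore the $L^p$ norms of $\mathcal{C}_-$ on $\hat\Sigma(\zeta)$ are comparable to its norm on $\hat\Sigma(\zeta_0)$, uniformly in $\zeta$.
\end{enumerate}

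Alternatively, one can follow \cite{Lenells_2018}, where exactly this uniform boundedness is established for such families of contours.
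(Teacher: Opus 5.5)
Your proposal is correct and is essentially the paper's proof. Both use a Neumann series for $(I-\mathcal{C}_{\hat W})^{-1}$ on $L^p(\hat\Sigma)$, with the operator norm bounded by $\|\mathcal{C}_-\|_{\mathcal{B}(L^p)}\|\hat W\|_{L^\infty}$ and the term $\mathcal{C}_-(\hat W)$ controlled by \eqref{hatw_Lp}.

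Your uniformity remark, which the paper leaves implicit, is also valid, but the map should be a single global dilation rather than a piecewise affine one. Indeed $\hat\Sigma(\zeta)=k_0\,\hat\Sigma|_{k_0=1}$, including the circles since $\epsilon=k_0/2$, and the $L^p$ operator norm of $\mathcal{C}_-$ is exactly invariant under dilations. A genuinely piecewise affine change of variables would not conjugate the Cauchy operator.
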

  \begin{proof}
      Let $\|\mathcal{C}_-\|_p:=\|\mathcal{C}_-\|_{\mathcal{B}(L^p(\hat{\Sigma}))}<\infty$ and assume the time $t$ is sufficiently large such that $\|\hat{W}\|_{L^\infty(\hat{\Sigma})}<\|\mathcal{C}_-\|_p^{-1}$. Using \eqref{hatnu} and Neumann series:
      \begin{equation*}
          \|\hat{\mu}-I\|_{L^p(\hat{\Gamma})}\leq \sum_{j=0}^{\infty}\|\mathcal{C}_{\hat{w}}\|_{\mathcal{B}(L^p(\hat{\Gamma}))}^{j}\|\mathcal{C}_{\hat{w}}I\|_{L^p(\hat{\Gamma})}\leq \sum_{j=0}^{\infty}\|\mathcal{C}_-\|_p^{j+1}\|\hat{w}\|_{L^\infty(\hat{\Gamma})}^{j}\|\hat{w}\|_{L^p(\hat{\Gamma})}=\frac{\|\mathcal{C}_-\|_p\|\hat{w}\|_{L^p(\hat{\Gamma})}}{1-\|\mathcal{C}_-\|_p\|\hat{w}\|_{L^\infty(\hat{\Gamma})}}.
      \end{equation*}
      Thus, based on the estimates in equation \eqref{hatw_Lp}, we can complete the proof.
  \end{proof}

   Consider the following nontangential limit and decompose it using Lemmas \ref{lem_hatw} and \ref{Error_mu-I}, we obtain:
   \begin{align*}
       Q(x,t)&=\lim_{k\to\infty}k\left(\hat{M}(x,t,k)-I\right)=-\frac{1}{2\pi\ri}\int_{\hat{\Sigma}}\hat{\mu}(x,t,k)\hat{W}(x,t,k)\rd k\\
       &=-\frac{1}{2\pi\ri}\int_{\partial B_\epsilon}\hat{W}(x,t,k)\rd k-\frac{1}{2\pi\ri}\int_{\hat{\Sigma}\setminus\partial B_\epsilon}\hat{W}(x,t,k)\rd k-\frac{1}{2\pi\ri}\int_{\hat{\Sigma}}(\hat{\mu}(x,t,k)-I)\hat{W}(x,t,k)\rd k\\
       &=-\frac{1}{2\pi\ri}\int_{\partial B_\epsilon }\hat{W}(x,t,k)\rd k+\mathcal{O}\left(\frac{\ln t}{t}\right).
   \end{align*}
   The definition of $\hat{M}$ in \eqref{hatM} and jump matrix $\hat{V}$ in \eqref{hatv} lead to the following relation
   \begin{align*}
       Q(x,t)&=-\frac{1}{2\pi\ri}\int_{\partial B_\epsilon }\left(\left(M^{(\pm k_0)}\right)^{-1}-I\right)\rd k+\mathcal{O}\left(\frac{\ln t}{t}\right)\\
       &=\frac{Y_{(+)}(\zeta)\left( M^{X}_{ k_0}(q(\zeta))\right) _1Y_{(+)}^{-1}(\zeta)}{2\sqrt{t}}+\frac{Y_{(-)}(\zeta)\left( M^{X}_{- k_0}(q(\zeta))\right) _1Y_{(-)}^{-1}(\zeta)}{2\sqrt{t}}+\mathcal{O}\left(\frac{\ln t}{t}\right).
   \end{align*}
   
   Combining all the transformations made in this subsection together with Theorem \ref{theo_reconstruction formula}, Lemmas \ref{lem_delta} and \ref{lem_T_bound}, we obtain:
   \begin{align*}
        r(x,t)&=2\lim\limits_{k\to\infty}k\left(M-I \right)_{31}=2\lim\limits_{k\to\infty}k\left(\hat{M}T^{-1} \Delta^{-1}-I\right)_{31}\\
        &=\mathcal{O}\left(\frac{\ln t}{t}\right),\\
    q(x,t)&=\ri\lim\limits_{k\to\infty}k\left(M-I \right)_{21}=\ri \lim\limits_{k\to\infty}\left(\hat{M}T^{-1} \Delta^{-1}-I\right)_{21}\\
    &=\frac{\sqrt{2\pi}\left(2\sqrt{t}\right)^{-2\ri\nu}\text{exp}\left(\ri\nu \ln(2k_0)+\frac{3\pi \ri}{4}+2\ri tk_0^2-\frac{\pi\nu}{2}+s_1\right)}{2\sqrt{t}r_1(-k_0)\Gamma(-\ri\nu)}+\mathcal{O}\left(\frac{\ln t}{t}\right),
    \end{align*}
    where $s_1=\frac{1}{2\pi\ri}\int_{k_0}^{\infty}\ln\left({(s+k_0)}{(s-k_0)^2}\right){\rm{d}}\ln(1-2\sigma \left|r_1(-s)\right|^2 ).$

    \subsection{Long-time asymptotics in Region {\rm{I}}}
    In this region, $C\leq x/t\leq +\infty$ holds for sufficiently large $C$. Therefore, it is necessary to introduce the parameter $0\leq \tau:=t/x\leq C^{-1}$ for the analysis. Moreover, the dispersion functions $\Phi_{ij}$ satisfy:
    \begin{equation*}
    	t\Phi_{ij}(\zeta,k)=x\tilde\Phi_{ij}(\tau,k)=x[(u_i(k)-u_j(k))+(v_i(k)-v_j(k))\tau],\quad  1\leq i<j\leq3.
    \end{equation*} 
    Moreover, we can rewrite $k_0$ as $k_0=\frac{1-\tau}{2\tau}$ and
    \begin{equation*}
    	\begin{aligned}
    	&x\tilde\Phi_{12}(\tau,k)=x\tilde\Phi_{12}(\tau,-k_0)+\frac{\ri}{2}\left( 2\sqrt{x\tau}(k+k_0)\right) ^2,\quad x\tilde\Phi_{23}(\tau,k)=x\tilde\Phi_{12}(\tau,k_0)-\frac{\ri}{2}\left( 2\sqrt{x\tau}(k-k_0)\right) ^2.
    	\end{aligned}
    \end{equation*}
    
   The asymptotic analysis in this region is very similar to that in Region II. In fact, $\Phi_{ij}$ and $\tilde\Phi_{ij}$ satisfy the same signature of their real parts, with the difference that in Region I we consider the variable $\tau$ and the limit $x\to+\infty$. We will not repeat the same decomposition steps of the RH problem here, and directly present the asymptotic behavior of the solution to the initial value problem \eqref{Newell_initial} in this region under the new parameter:
    \begin{equation}\label{solution_I}
    	\begin{aligned}
    		r(x,t)&=\mathcal{O}\left(\frac{1}{x^{N}}+\frac{C_N(\tau)\ln x}{x}\right),\\
    		q(x,t)&=\frac{\sqrt{2\pi}\left(2\sqrt{x\tau}\right)^{-2\ri\nu}\text{exp}\left(\ri\nu \ln(2k_0)+\frac{3\pi \ri}{4}+2\ri tk_0^2-\frac{\pi\nu}{2}+s_1\right)}{2\sqrt{x\tau}r_1(-k_0)\Gamma(-\ri\nu)}+\mathcal{O}\left(\frac{1}{x^{N}}+\frac{C_N(\tau)\ln x}{x}\right),
    	\end{aligned}
    \end{equation}
    where $k_0=\frac{1-\tau}{2\tau}$, $\nu$ and $s_1$ are given in equations \eqref{nu_chi} and \eqref{s1}.

    \begin{remark}
    	Since $k_0\to+\infty$ as $\tau \to 0$, $r_1(k)$ vanishes to all orders at $k=\pm k_0$, it follows that  $\nu$ and $s_1$ vanish to all orders. Consequently, equation \eqref{solution_I} implies that as $x\to+\infty$:
    	\begin{equation*}
    			r(x,t)=\mathcal{O}\left(\frac{1}{x^{N}}+\frac{C_N(\tau)}{x}\right),\quad
    			q(x,t)=\mathcal{O}\left(\frac{1}{x^{N}}+\frac{C_N(\tau)}{x}\right),
    	\end{equation*}
    	uniformly for $\tau\in\mathcal{I}_1$. In particular, for any fixed $t\geq 0$, the above expression can be reduced to $\mathcal{O}\left({x^{-N}}\right) $ as $x \to +\infty$.
    \end{remark}

\subsection{Long-time asymptotics in Region {\rm{III}}}\label{subsec_regionIII}
	In Subsection \ref{subsec_regionII}, we performed transformations and decompositions on RH problem \ref{rhp_M}, and using the solvable $3 \times 3$ parabolic cylinder model RH problem \ref{model-rhp1}, we derived the asymptotic behavior of the Newell equation in Region II. This section will discuss the behavior in Region III, which can still lead to a solvable parabolic cylinder model. However, the following analysis remains necessary, as the two regions involve different transformations of the RH problem \ref{rhp_M}, and the dispersion relation exhibits properties in this region that differ from those in Region II. We first present the signatures of the real parts of the functions $\Phi_{ij}$, $1\leq i<j\leq 3$, for $(x,t)$ in Region III, i.e., $\zeta\in\mathcal{I}_3$, as shown in Figure \ref{fig_theta_sign2}.

      \begin{figure}[htbp]
    \centering
    \begin{subfigure}[t]{0.28\textwidth}
        \centering
     	\begin{tikzpicture} [scale=0.7]
        \definecolor{mycolor}{HTML}{f5d5d8}
        
         \fill[mycolor] (-3,0) -- (1.5,0) -- (1.5,3) -- (-3,3) -- cycle;
        \fill[mycolor] (1.5,0) -- (3,0) -- (3,-3) -- (1.5,-3) -- cycle;
	 		
	  \draw [very thick,black](-3,0) -- (3,0);
        \draw [very thick,black](1.5,-3) -- (1.5,3);

        \fill (-1.5,0) circle (1.5pt);
        \fill (0,0) circle (1.5pt);
	  \fill (1.5,0) circle (1.5pt);

        \node[below] at (1.8,0) {$-k_0$};
	  \node[below] at (0,0) {$\text{0}$};
	  \node[below] at (-1.5,0) {$k_0$};

       \node at (2.4,2) {$\tilde{D}_1^{(12)}$};
      \node at (-1,2) {$\tilde{D}_2^{(12)}$};
      \node at (-1,-2) {$\tilde{D}_3^{(12)}$};
      \node at (2.4,-2) {$\tilde{D}_4^{(12)}$};

        \node[right] at (3,0) {$\mathbb{R}$};
     	\end{tikzpicture}
     		\caption{The signature of $\Phi_{12}$.}
    \end{subfigure}
    \quad
    \begin{subfigure}[t]{0.28\textwidth}
        \centering
     	\begin{tikzpicture} [scale=0.7]
     	\definecolor{mycolor}{HTML}{f5d5d8}
        
         \fill[mycolor] (-3,0) -- (3,0) -- (3,3) -- (-3,3) -- cycle;
         \fill[white] (-3,0) -- (3,0) -- (3,-3) -- (-3,-3) -- cycle;
	 		
	  \draw [very thick,black](-3,0) -- (3,0);

        \fill (-1.5,0) circle (1.5pt);
        \fill (0,0) circle (1.5pt);
	  \fill (1.5,0) circle (1.5pt);

        \node[below] at (-1.5,0) {$k_0$};
	  \node[below] at (0,0) {$\text{0}$};
	  \node[below] at (1.5,0) {$-k_0$};

       \node at (0,2) {$\tilde{D}_1^{(13)}$};
      \node at (0,-2) {$\tilde{D}_2^{(13)}$};

        \node[right] at (3,0) {$\mathbb{R}$};
     	\end{tikzpicture}
     		\caption{The signature of $\Phi_{13}$.\label{subfig_theta13_2}}
    \end{subfigure}
    \quad
    \begin{subfigure}[t]{0.28\textwidth}
        \centering
     	\begin{tikzpicture} [scale=0.7]
     		\definecolor{mycolor}{HTML}{f5d5d8}
        
         \fill[mycolor] (3,0) -- (-1.5,0) -- (-1.5,3) -- (3,3) -- cycle;
        \fill[mycolor] (-1.5,0) -- (-3,0) -- (-3,-3) -- (-1.5,-3) -- cycle;
	 		
	  \draw [very thick,black](-3,0) -- (3,0);
        \draw [very thick,black](-1.5,-3) -- (-1.5,3);

        \fill (-1.5,0) circle (1.5pt);
        \fill (0,0) circle (1.5pt);
	  \fill (1.5,0) circle (1.5pt);

        \node[below] at (-1.2,0) {$k_0$};
	  \node[below] at (0,0) {$\text{0}$};
	  \node[below] at (1.5,0) {$-k_0$};

      \node at (-2.4,2) {$\tilde{D}_2^{(23)}$};
      \node at (1,2) {$\tilde{D}_1^{(23)}$};
      \node at (1,-2) {$\tilde{D}_4^{(23)}$};
      \node at (-2.4,-2) {$\tilde{D}_3^{(23)}$};

        \node[right] at (3,0) {$\mathbb{R}$};
     	\end{tikzpicture}
     		\caption{The signature of $\Phi_{23}$.}
             \end{subfigure}
             \caption{ Open sets in the complex $k$-plane for Region \rm{III}: $\rre \Phi_{ij}>0$ (shaded) and $\rre \Phi_{ij}<0$ (white).}
             \label{fig_theta_sign2}
\end{figure}

    For the convenience of notation in the subsequent analysis, we introduce the following notations:
\begin{align*}
    &\tilde{r}_2(k):=\frac{r_2(k)}{1-2\sigma |r_1(-k)|^2+|r_2(k)|^2},\qquad\tilde{\alpha}(k):=\frac{\alpha(k)}{1-2\sigma |r_1(-k)|^2+|r_2(k)|^2}.
\end{align*}
 and present the lemma on the analytic approximations of the reflection coefficients.
\begin{lem}\label{alpha_decomposition}
For each $\zeta\in\mathcal{I}_3$ and $t>0$, there are several decompositions
    \begin{align*}
        &\tilde{r}_2(k)=\tilde{r}_{2,a}(x,t,k)+\tilde{r}_{2,r}(x,t,k),\quad && k\in\mathbb{R}_+,\\
        &\hat{\alpha}(k)=\hat{\alpha}_a(x,t,k)+\hat{\alpha}_r(x,t,k), &&k\in \left(-\infty,k_0\right],\\
        &\tilde{\alpha}(k)=\tilde{\alpha}_a(x,t,k)+\tilde{\alpha}_r(x,t,k), &&k\in \left[k_0,\infty\right),
    \end{align*}
    where the above functions satisfy the following properties:
    \begin{itemize}
     \item $\tilde{r}_{2,a}(x,t,k)$ is defined and continuous for $k\in \{k\,|\,\rre k\geq0,\rim k\leq0\}$, and is analytic for $k\in \{k\,|\,\rre k>0,\rim k<0\}$. $\hat{\alpha}_a(k)$ is defined and continuous for $k\in \overline{\tilde{D}_2^{(23)}}$, and is analytic for $k\in \tilde{D}_2^{(23)}$.  $\tilde{\alpha}_a(k)$ is defined and continuous for $k\in \overline{\tilde{D}_4^{(23)}}$, and is analytic for $k\in \tilde{D}_4^{(23)}$.
    \item The functions $\tilde{r}_{2,a}(k)$, $\hat{\alpha}_a(k)$, and $\tilde{\alpha}_a(k)$ satisfy
        \begin{align*}
            &\left| \tilde{r}_{2,a}(x,t,k)-\sum_{j=0}^{N}\frac{\tilde r^{(j)}_{2,a}(0)}{j!}k^j \right|\leq C \left|k\right|^{N+1}\re^{\frac{t}{4}\left|\rre\Phi_{13}(\zeta,k)\right|},\quad&& k\in \{k\,|\,\rre k\geq0,\rim k\geq0\},\\
            &\left| \hat{\alpha}_a(x,t,k)-\sum_{j=0}^{N}\frac{\hat{\alpha}^{(j)}_a(k_0)}{j!}(k-k_0)^{j} \right|\leq C \left|k-k_0\right|^{N+1}\re^{\frac{t}{4}\left|\rre\Phi_{23}(\zeta,k)\right|},\quad&&  k\in \overline{\tilde{D}_2^{(23)}},\\
            &\left| \tilde{\alpha}_a(x,t,k)-\sum_{j=0}^{N}\frac{\tilde{\alpha}^{(j)}_a(k_0)}{j!}(k-k_0)^{j} \right|\leq C \left|k-k_0\right|^{N+1}\re^{\frac{t}{4}\left|\rre\Phi_{23}(\zeta,k)\right|},\quad && k\in \overline{\tilde{D}_4^{(23)}},
            \end{align*}
            and 
            \begin{align*}
            &\left| \tilde{r}_{2,a}(x,t,k) \right|\leq  \frac{C}{1+\left|k\right|}\re^{\frac{t}{4}\left|\rre\Phi_{13}(\zeta,k)\right|},\quad&& k\in \{k\,|\,\rre k\geq0,\rim k\geq0\},\\
            &\left| \tilde{\alpha}_a(x,t,k) \right|\leq  \frac{C}{1+\left|k\right|}\re^{\frac{t}{4}\left|\rre\Phi_{23}(\zeta,k)\right|},\quad && k\in \overline{\tilde{D}_2^{(23)}},\\
            &\left| \tilde{\alpha}_a(x,t,k) \right|\leq  \frac{C}{1+\left|k\right|}\re^{\frac{t}{4}\left|\rre\Phi_{23}(\zeta,k)\right|},\quad  &&k\in \overline{\tilde{D}_4^{(23)}},
            \end{align*}
        where the constant $C$ is independent of $\zeta,t,k$.
    \item For each $1\leq p\leq \infty$,
    \begin{align*}
        &\left\|(1+|\cdot|) \tilde{r}_{2,r}(x,t,\cdot)\right\|_{\mathbb{R}_+}=\mathcal{O}\left(t^{-N}\right),\\
        &\left\|(1+|\cdot|) \hat{\alpha}_r(x,t,\cdot)\right\|_{\left(-\infty,k_0\right)}=\left\|\frac{\hat{\alpha}_r(x,t,\cdot)}{\cdot-k_0}\right\|_{\left(-\infty,k_0\right)}=\mathcal{O}\left(t^{-N}\right),\\
        &\left\|(1+|\cdot|) \tilde{\alpha}_r(x,t,\cdot)\right\|_{\left(k_0,\infty\right)}=\left\|\frac{\tilde{\alpha}_r(x,t,\cdot)}{\cdot-k_0}\right\|_{\left(k_0,\infty\right)}=\mathcal{O}\left(t^{-N}\right),
    \end{align*}
    uniformly for $\zeta\in\mathcal{I}_3$ as $t\to\infty$.
        \end{itemize}
\end{lem}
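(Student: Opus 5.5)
The plan is the standard Deift--Zhou analytic-approximation construction, the same one behind Lemma \ref{lem_r_decomposition}, adapted to the Region III geometry of Figure \ref{fig_theta_sign2}. All three functions are treated the same way; we describe it for $\hat{\alpha}$ on $(-\infty,k_0]$. The relevant phase is $\Phi_{23}(\zeta,k)=-2\ri k(k+1-\zeta)$, whose critical point is $k_0=(\zeta-1)/2<0$. It is convenient to write
\[
t\Phi_{23}(\zeta,k)=t\Phi_{23}(\zeta,k_0)-2\ri t\,(k-k_0)^2 ,
\]
and to use the real variable $\phi=-(k-k_0)^2$ on the half-line.

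\emph{Step 1: smoothness of the data.} First we check that $\hat{\alpha}$, $\tilde{\alpha}$ and $\tilde r_2$ are smooth on their half-lines and decay rapidly with all derivatives. This uses Proposition \ref{prop_s}, Proposition \ref{prop_sA}, the Schwartz decay of $q_0,r_0$, and Lemma \ref{lem_M_sing}, which removes the singularities except possibly at $k=0$. The denominators are handled as follows.
\begin{itemize}
\item For $\hat\alpha$, the denominator $1-2\sigma|r_1|^2$ equals $|s_{11}|^{-2}$ times a positive quantity. This follows from $\det s=1$ and the symmetry \eqref{sym_s}.
\item For $\tilde\alpha$ and $\tilde r_2$, the denominator is bounded below by Assumption \ref{assu_LT}.
\end{itemize}
Near $k=0$ we expect the entries of $s$ to stay bounded from the real axis, so the functions are $C^\infty$ there as well. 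This is the point we would verify separately, using the $\mathcal{B}$-symmetry.

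\emph{Step 2: Taylor part and rational factor.} Next we subtract a rational function $h_0(k)=\sum_{j=0}^{N'} a_j (k-k_0)^j/(k-k_0-\ri)^{m}$. Its coefficients are chosen so that $f-h_0$ vanishes to high order at $k_0$ and decays at infinity; for $\tilde r_2$ we instead expand at $k=0$ with the phase $\Phi_{13}$, replacing $(k-k_0)^2$ by $k$. The pole of $h_0$ is placed in the half-plane opposite to the sector where the analytic part must live. This means the lower half-plane for $\tilde{D}_2^{(23)}$. For $\tilde{D}_4^{(23)}$ and for the fourth-quadrant domain of $\tilde r_{2,a}$, it means the upper half-plane, i.e.\ $k_0+\ri$, respectively $+\ri$.

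\emph{Step 3: Fourier splitting in $\phi$.} Write $f-h_0=(k-k_0)^{N+1}g(\phi)/(k-k_0-\ri)^{m}$. Using that the high-order vanishing makes $g$ smooth in $\phi$, we bound $g\in H^{s}$ with $s$ large. Take the Fourier transform $\hat g(u)$ and split the inverse Fourier integral at $u=t/4$. Then define:
\begin{itemize}
\item $f_a$ as $h_0$ plus the factor $(k-k_0)^{N+1}/(k-k_0-\ri)^m$ times $\int_{-\infty}^{t/4}\hat g(u)e^{\ri u\phi(k)}\,du$, extended into the sector;
\item $f_r$ as the remaining piece, the integral over $u>t/4$.
\end{itemize}
Since $\mathrm{Re}(\ri\phi)$ coincides up to sign and a constant with $\rre\Phi_{23}$, the exponential growth of $e^{\ri u\phi}$ for $u\le t/4$ inside the sector is at most $\re^{\frac t4|\rre\Phi_{23}|}$. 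The Cauchy--Schwarz inequality with $\hat g\in L^2$ then gives both stated bounds on $f_a$: the Taylor bound $|k-k_0|^{N+1}$ and the decay $1/(1+|k|)$, the latter coming from choosing $m$ large. The remainder $f_r$ is controlled by the tail $\int_{t/4}^\infty|\hat g|$, which is $\mathcal O(t^{-N})$ in $L^\infty$. Weighting by $(1+|k|)$ or dividing by $(k-k_0)$ costs only powers of $|k-k_0|$ that are absorbed by taking $N+1$ and $m$ large. This yields the $L^1$, $L^\infty$ and $L^p$ estimates, uniformly in $\zeta\in\mathcal I_3$ because $k_0$ ranges over a compact set.

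The main obstacle is Step 1 near $k=0$ and the uniformity of the $H^s$ norms of $g$. The $\Phi_{13}$ decomposition of $\tilde r_2$ is centred at the origin, so the regularity of $r_2$ at $k=0$ cannot be sidestepped; we would address it first via Lemma \ref{lem_M_sing}. The rest follows the model arguments in \cite{DZ_1993,Lenells_2017}.
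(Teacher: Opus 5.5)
Your route is the right one. The paper gives no argument for this lemma, and for its Region~II twin (Lemma~\ref{lem_r_decomposition}) it only cites the Deift--Zhou/Lenells analytic-approximation construction, which is what you are reproducing. But Steps~2--3, as you set them up, do not close.

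\textbf{The exponent conflict.} You use the same exponent $m$ in $h_0=P(k-k_0)/(k-k_0-\ri)^m$ (with $\deg P=N'$) and in the weight defining $g=(f-h_0)(k-k_0-\ri)^m/(k-k_0)^{N+1}$. You also use the same $N+1$ in the Taylor bound and in the division.
\begin{itemize}
\item Near $\phi=0$: for $g$ (extended by $0$ to $\phi>0$) to lie in $H^\ell$, $g$ must vanish at $k_0$ to order about $2\ell$, because each $\phi$-derivative costs two powers of $k-k_0$. This forces $N'\ge N+2\ell$.
\item At infinity: $f$ is negligible, so $(f-h_0)(k-k_0-\ri)^m=-P(k-k_0)+o(1)$. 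Hence $g\approx -P(k-k_0)/(k-k_0)^{N+1}$ grows like $|k|^{N'-N-1}\sim|\phi|^{(N'-N-1)/2}$, whatever $m$ is. The decay you built into $h_0$ is exactly cancelled by the weight.
\end{itemize}
So $g\notin L^2(\rd\phi)$, let alone $H^\ell$. Neither the Cauchy--Schwarz bound on $f_a$ nor the tail estimate $\int_{|u|>t/4}|\hat g|\,\rd u=\mathcal O(t^{-N})$ is available, because the two requirements on $N'$ are incompatible.

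\textbf{The direction of the split.} Separately, the split is on the wrong side. With $\phi=-(k-k_0)^2$ one has $\rim\phi=-2\rre(k-k_0)\,\rim(k-k_0)>0$ on both $\tilde D_2^{(23)}$ and $\tilde D_4^{(23)}$. Thus $|\re^{\ri u\phi}|=\re^{-u\rim\phi}$, and the analytic piece must be $\int_{-t/4}^{\infty}$; your $\int_{-\infty}^{t/4}$ does not converge there in general.

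\textbf{The fix.} The standard remedy, and the way the cited arguments are organized, is to decouple the exponents.
\begin{itemize}
\item Take $h_0$ with high decay, e.g.\ $h_0=\sum_{j=n+1}^{2n+1}a_j(k-p)^{-j}$, with the pole $p$ off the closed sector, matched to $f$ to order $n$ at $k_0$. Then $f-h_0=\mathcal O(|k-k_0|^{n+1})$ near $k_0$ and $\mathcal O(|k|^{-n-1})$ at infinity.
\item Put $g=(k-p)^{m_1}(k-k_0)^{-N-1}(f-h_0)$ with a fixed $m_1$ satisfying $N+4\le m_1\le N+n$, and take $n\ge N+2\ell+1$, $\ell\ge N+1$.
\end{itemize}
Now $g=\mathcal O(|k|^{-2})$ at infinity and $g\in H^\ell$, so the tail is $\mathcal O(t^{1/2-\ell})$. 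The reconstruction factor $(k-k_0)^{N+1}(k-p)^{-m_1}$ gives the Taylor bound and the $1/(1+|k|)$ decay of $f_a$. It also gives the $L^1\cap L^\infty$ bounds on $(1+|k|)f_r$ and on $f_r/(k-k_0)$.

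\textbf{Two smaller corrections to Step~1.}
\begin{itemize}
\item On $\mathbb R$, \eqref{sym_s} gives $s\mathcal A^{-1}s^\dagger=\mathcal A^{-1}$. Its $(1,1)$ entry reads $|s_{11}|^2(1-2\sigma|r_1|^2)=1-|s_{13}|^2$, which has no sign when $\sigma=1$. So positivity of $1-2\sigma|r_1|^2$ on $(-\infty,k_0]$ is a tacit standing hypothesis (the paper needs it to define $\nu_1,\delta_1$), not a consequence of the symmetries.
\item $k=0$ is not the real obstacle. $U_1$ does not depend on $k$, so for real $k$ the Volterra equations make $s$ smooth across $0$. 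The $(3,3)$ entry of the same identity, combined with the $\mathcal B$-symmetry and \eqref{sym_sA}, shows that the denominator of $\tilde r_2$ and $\tilde\alpha$ is exactly $|s_{33}|^{-2}$; for instance $\tilde r_2=\overline{s_{31}}\,s_{33}$. What is really needed is $s_{11}\neq0$ on $\mathbb R$, including at $0$.
\end{itemize}
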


   Note that the key difference between this region and Region II is that Figure \ref{subfig_theta13_2} and Figure \ref{subfig_theta13_1} are in opposite states. For the upcoming decomposition of the jump matrix on the real axis $\mathbb{R}$, we need to introduce the following three scalar RH problems on $\left(-\infty,k_0\right]$ and $[k_0,-k_0]$:
   \begin{equation}\label{vol_delta123}
   	\begin{aligned}
   		&\delta_1(\zeta,k)={\rm{exp}}\left\{\frac{1}{2 \pi\ri}\int_{-\infty}^{k_0}\frac{\ln \left(1-2\sigma|r_1(s)|^2\right)}{s-k}{\rm{d}}s\right\},\quad &&k\in \mathbb{C}\backslash \left(-\infty,k_0 \right],\\
   		&\delta_2(\zeta,k)={\rm{exp}}\left\{\frac{1}{2 \pi\ri}\int_{-\infty}^{k_0}\frac{\ln \left(1-2\sigma|r_1(s)|^2+|r_2(-s)|^2\right)}{s-k}{\rm{d}}s\right\},\quad &&k\in \mathbb{C}\backslash \left(-\infty,k_0 \right],\\
   		&\delta_3(\zeta,k)={\rm{exp}}\left\{\frac{1}{2 \pi\ri}\int_{k_0}^{-k_0}\frac{\ln \left(1-2\sigma|r_1(-s)|^2-|r_2(s)|^2\right)}{s-k}{\rm{d}}s\right\},\quad &&k\in \mathbb{C}\backslash \left[k_0,-k_0 \right].
   	\end{aligned}
    \end{equation}

 \begin{lem}\label{lem_delta1}
    	 The functions $\delta_j(\zeta,k)$, $j=1,2,3,$ satisfy the following properties for $\zeta\in\mathcal{I}_3$:
    	\begin{enumerate}
     		\item The functions 
     		$\delta_j^{\pm 1}(\zeta,k)$, $j=1,2,$ are analytic on $\mathbb{C}\setminus\left(-\infty,k_0 \right]$, and $\delta_3^{\pm 1}(\zeta,k)$ are analytic on $\mathbb{C}\setminus\left[k_0,-k_0 \right]$, and they can be written as
			\begin{equation*}
				\begin{aligned}
				&\delta_1(\zeta,k)=\re ^{\ri \nu_1\ln_\pi (k-k_0)}{\rm{e}}^{-\chi_1(\zeta,k)},\\
                &\delta_2(\zeta,k)=\re ^{\ri \nu_2\ln_\pi (k-k_0)}{\rm{e}}^{-\chi_2(\zeta,k)},\\
                &\delta_3(\zeta,k)=\re ^{\ri \nu_2\ln_\theta (k+k_0)-\ri\nu_3\ln_\theta (k-k_0)}{\rm{e}}^{-\chi_3(\zeta,k)},
				\end{aligned}
			\end{equation*}
			where $\theta=0$ when $k$ is near $k_0$, and $\theta=\pi$ when $k$ is near $-k_0$,
			\begin{equation*}
				\begin{aligned}
					&\nu_1=-\frac{1}{2\pi} \ln \left(1-2\sigma|r_1(k_0)|^2\right),\\
                    &\chi_1(\zeta,k)=\frac{1}{2\pi\ri}\int_{-\infty}^{k_0}\ln_\pi(k-s){\rm{d}}\ln \left(1-2\sigma|r_1(s)|^2\right),\\
					&\nu_2=-\frac{1}{2\pi} \ln \left(1-2\sigma|r_1(k_0)|^2+|r_2(-k_0)|^2\right),\\
                    &\chi_2(\zeta,k)=\frac{1}{2\pi\ri}\int_{-\infty}^{k_0}\ln_\pi(k-s){\rm{d}}\ln \left(1-2\sigma|r_1(s)|^2+|r_2(-s)|^2\right),\\
					&\nu_3=-\frac{1}{2\pi} \ln \left(1-2\sigma|r_1(-k_0)|^2+|r_2(k_0)|^2\right),\\
                    &\chi_3(\zeta,k)=\frac{1}{2\pi\ri}\int_{k_0}^{-k_0}\ln_\theta(k-s){\rm{d}}\ln \left(1-2\sigma|r_1(-s)|^2+|r_2(s)|^2\right).
				\end{aligned}
			\end{equation*}
	
    	\item For each $\zeta\in\mathcal{I}_3$, $\delta_j(k)$, $j=1,2$, are bounded in $\mathbb{C}\setminus\left(-\infty,k_0 \right]$, $\delta_3(k)$ is bounded in $\mathbb{C}\setminus\left[k_0,-k_0 \right]$ and  $
    		\delta_j^{-1}(k)= \overline{\delta_j(\overline{k})}.$
        \item As $k\to \pm k_0$ along the non-tangential direction of $\mathbb{R}$, we have the following equations hold, for $C$ independent of $\zeta$:
        \begin{align*}
            &\left|\chi_j(\zeta,k)-\chi_j({\zeta,k_0})\right|\leq C|k-k_0|\left(1+|\ln|k-k_0||\right),\quad j=1,2,\\
            &\left|\chi_3(\zeta,k)-\chi_3({\zeta,\pm k_0})\right|\leq C|k\mp k_0|\left(1+|\ln|k\mp k_0||\right).
             \end{align*}
    	\end{enumerate}
    \end{lem}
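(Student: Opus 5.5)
The plan is to follow the proof of Lemma \ref{lem_delta}, working directly from the Cauchy-integral representations \eqref{vol_delta123}. Write $f_1(s)=\ln(1-2\sigma|r_1(s)|^2)$, $f_2(s)=\ln(1-2\sigma|r_1(s)|^2+|r_2(-s)|^2)$, and let $f_3$ be the integrand on $[k_0,-k_0]$. By Assumption \ref{assu_LT} and the symmetry relations \eqref{sym_s}, \eqref{sym_sA}, the arguments of these logarithms are positive real functions on the relevant intervals. For $f_1$ this also uses the positivity of the $(2,2)$ entry of the jump, which follows from the vanishing-lemma framework of Lemma \ref{lem_vanishing}. Each $f_j$ is smooth away from $0$ and decays rapidly at $\pm\infty$, since $r_1,r_2$ are Schwartz-type by Propositions \ref{prop_s} and \ref{prop_sA}.

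\textbf{Step 1: analyticity.} Analyticity of $\delta_j^{\pm1}$ off the integration interval is immediate, since the exponent is a Cauchy integral of an integrable density.

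\textbf{Step 2: the representation formulas.} Integrate by parts in $\frac{1}{2\pi\ri}\int_a^b \frac{f(s)}{s-k}\,\rd s$ using $\frac{\rd}{\rd s}\ln(k-s)=-\frac{1}{s-k}$.
\begin{itemize}
\item For $\delta_1$ and $\delta_2$, the interval is $(-\infty,k_0]$. The boundary term at $-\infty$ vanishes by decay. The boundary term at $k_0$ gives $-\frac{1}{2\pi\ri}f_j(k_0)\ln_\pi(k-k_0)=\ri\nu_j\ln_\pi(k-k_0)$, where $\nu_j=-\frac{1}{2\pi}f_j(k_0)$. The remaining integral is exactly $-\chi_j$.
\item For $\delta_3$ on $[k_0,-k_0]$ there are two endpoint terms, producing $\ri\nu_2\ln_\theta(k+k_0)-\ri\nu_3\ln_\theta(k-k_0)$. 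Here one checks that $f_3(-k_0)$ and $f_3(k_0)$ match $-2\pi\nu_2$ and $-2\pi\nu_3$ after $s\mapsto -s$.
\item The branch choice $\theta$ is dictated by which side of the cut the point $k$ lies on. I will fix the sign and branch conventions carefully here, including reconciling the $-|r_2|^2$ in the definition of $\delta_3$ with the $+|r_2|^2$ in $\nu_3,\chi_3$.
\end{itemize}

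\textbf{Step 3: boundedness and symmetry.} The symmetry $\delta_j^{-1}(k)=\overline{\delta_j(\bar k)}$ follows because $f_j$ is real: conjugating the exponent maps $k$ to $\bar k$ and flips the sign of the prefactor $\frac{1}{2\pi\ri}$.

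For boundedness, use the representation from Step 2:
\begin{itemize}
\item $|\re^{\ri\nu\ln(k-k_0)}|=\re^{-\nu\arg(k-k_0)}$ is bounded because the argument lies in a bounded range and $\nu$ is real.
\item $\chi_j$ is bounded because $\rd f_j$ is a finite measure and $\ln|k-s|$ has only an integrable singularity.
\end{itemize}
Near the endpoints and at infinity, the standard estimate $|\frac{1}{2\pi\ri}\int\frac{f}{s-k}\rd s|\le C$ also follows from the Plemelj–Privalov theory, since the real part of the exponent is bounded by $\frac12\|f\|_\infty$ plus a bounded harmonic conjugate correction. Uniformity in $\zeta\in\mathcal{I}_3$ comes from compactness of $\mathcal{I}_3$ and continuity of $f_j$.

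\textbf{Step 4: the H\"older-type estimate (main obstacle).} We need $|\chi_j(\zeta,k)-\chi_j(\zeta,k_0)|\le C|k-k_0|(1+|\ln|k-k_0||)$ for non-tangential approach. Split $\rd f_j(s)=f_j'(s)\,\rd s$ with $f_j'$ bounded and integrable, and write the difference as
\[
\frac{1}{2\pi\ri}\int f_j'(s)\left[\ln(k-s)-\ln(k_0-s)\right]\rd s .
\]
Divide the integral into two parts:
\begin{itemize}
\item On $|s-k_0|\le 2|k-k_0|$, bound each logarithm separately. Integrating $|\ln|\cdot||$ over an interval of length $\sim|k-k_0|$ gives $C|k-k_0|(1+|\ln|k-k_0||)$.
\item On $|s-k_0|>2|k-k_0|$, use $|\ln(k-s)-\ln(k_0-s)|\le C\frac{|k-k_0|}{|s-k_0|}$. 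Integrating against bounded $f_j'$ over $2|k-k_0|<|s-k_0|\le1$ yields the logarithm. Over the far region the bound is $C|k-k_0|$ by integrability of $f_j'$.
\end{itemize}
The non-tangential restriction guarantees that the branch of $\arg$ stays controlled, so the imaginary parts of the logarithms do not jump. For $\chi_3$ the same argument is applied at both endpoints $\pm k_0$.
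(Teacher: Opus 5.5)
Your route is the one the paper intends (its proof says only ``direct estimation'' from \eqref{vol_delta123}), and the skeleton is right: integration by parts, a Poisson-type bound, and a near/far splitting at the endpoints. Several steps are wrong as written, though.

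In Step 2, $\frac{\rd}{\rd s}\ln(k-s)=\frac{1}{s-k}$, not $-\frac{1}{s-k}$. With the correct sign, $\frac{1}{2\pi\ri}\int_{-\infty}^{k_0}\frac{f_j(s)}{s-k}\rd s=\frac{f_j(k_0)}{2\pi\ri}\ln_\pi(k-k_0)-\chi_j=\ri\nu_j\ln_\pi(k-k_0)-\chi_j$. Your sign would give the reciprocal $\re^{-\ri\nu_j\ln_\pi(k-k_0)}\re^{\chi_j}$. Your evaluation $-\frac{1}{2\pi\ri}f_j(k_0)=\ri\nu_j$ is also off by a sign, and the remainder would be $+\chi_j$ rather than $-\chi_j$. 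So you reach the stated formula only through compensating slips.

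For $\delta_3$ the endpoint terms are $\frac{f_3(-k_0)}{2\pi\ri}\ln_\theta(k+k_0)-\frac{f_3(k_0)}{2\pi\ri}\ln_\theta(k-k_0)$. Direct evaluation gives $f_3(-k_0)=-2\pi\nu_2$ and $f_3(k_0)=-2\pi\nu_3$, with no $s\mapsto -s$ needed. This holds only with $+|r_2(s)|^2$, so, as you suspected, the minus sign in \eqref{vol_delta123} is a typo.

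Also, $\theta$ is fixed by which endpoint $k$ is near, not by which side of the cut $k$ lies on. Near the left endpoint $k_0$ one takes $\theta=0$, and near $-k_0$ one takes $\theta=\pi$. This makes each factor individually analytic near that endpoint off $[k_0,-k_0]$, which is what the later local analysis uses.

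In Step 3, $\chi_j$ is not bounded. The term $\ln|k-s|$ enters only $\rim\chi_j$, which grows like $\nu_j\ln|k|$ as $k\to\infty$ and is irrelevant for $|\delta_j|$. What you need is $\rre\chi_j=\frac{1}{2\pi}\int\arg(k-s)\,\rd f_j(s)$, which is bounded by $\frac12\int|\rd f_j|$. More simply, the real part of the exponent in \eqref{vol_delta123} is exactly the Poisson integral $\frac{1}{2\pi}\int f_j(s)\frac{\rim k}{|s-k|^2}\rd s$. Hence $|\delta_j^{\pm1}|\le\re^{\|f_j\|_\infty/2}$, with no ``harmonic conjugate correction.'' This is also how $\delta_3$ must be bounded globally, since its $\theta$-representation is only local.

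Your justification of positivity for $f_1$ does not work. Lemma \ref{lem_vanishing} takes only Assumptions \ref{assu_solitonless}--\ref{assu_LT} as input and says nothing about $1-2\sigma|r_1|^2$. For $\sigma=-1$ positivity is trivial. For $\sigma=1$, however, \eqref{sym_s} only gives $1-2|r_1|^2=(1-|s_{13}|^2)/|s_{11}|^2$ on $\mathbb{R}$. So positivity is an implicit hypothesis of the lemma, just as it already is for $\nu$ in Lemma \ref{lem_delta}.

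Finally, $[k_0,-k_0]$ contains $0$, so ``smooth away from $0$'' does not give $f_3'$ bounded and integrable there. The symmetries (with $s^A=(s^{-1})^T$ and $s(-k)=\mathcal{B}s(k)\mathcal{B}$) give $1-2\sigma|r_1(-s)|^2+|r_2(s)|^2=|s_{33}(s)|^{-2}$. So what you must check is that $s_{33}$ is $C^1$ and nonvanishing across $s=0$ on $\mathbb{R}$.

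With these repairs your Step 4 splitting estimate is correct. It in fact holds for any approach to $\pm k_0$ off the cut, not only non-tangential ones.
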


    \begin{proof}
        The lemma is derived from equations \eqref{vol_delta123}, and can be proved by direct estimation.
    \end{proof}

   With the functions $\delta_j(k)$, $j=1,2,3,$ constructed from \eqref{vol_delta123}, we define the following matrix transformation:
    \begin{equation}\label{trans_Delta1}
        m^{(1)}(x,t,k)=M(x,t,k)\Delta_1(k), \quad \Delta_1(k)=\begin{pmatrix}
        	\frac{\delta_1(-k)\delta_2(k)}{\delta_2(-k)\delta_3(-k)} & 0 & 0\\
        	0 & \frac{\delta_3(k)\delta_3(-k)}{{\delta_1(k)}{\delta_1(-k)}} & 0\\
        	0 & 0 & \frac{\delta_1(k)\delta_2(-k)}{\delta_2(k)\delta_3(k)}\\
        \end{pmatrix}.
    \end{equation}
   And the matrix $\Delta_1(k)$ satisfies the following two properties:
    \begin{align*}
    	&\Delta_1(\zeta,k)=I+\mathcal{O}\left(k^{-1}\right),\quad k\to\infty,\\
        &\Delta_1^\dagger(\zeta,\bar k)=\mathcal{A}\Delta_1^{-1}(\zeta,k)\mathcal{A}^{-1},\quad \Delta_1(\zeta,k)=\mathcal{B}\Delta_1(\zeta,-k)\mathcal{B}.
    \end{align*}

    The function $m^{(1)}$ obtained from transformation \eqref{trans_Delta1} is analytic on $k\in \mathbb{C}\setminus\Gamma^{(1)}$ ($\Gamma^{(1)}$ is essentially consistent with the form of $\Sigma^{(1)}$ in Figure \ref{fig_Gamma1}, except that the positions of $k_0$ and $-k_0$ are swapped) and satisfies the jump condition $m^{(1)}_+(x,t,k)=m^{(1)}_-(x,t,k)v_j^{(1)}(x,t,k)$ for $k\in\Gamma^{(1)}_j$, $j=1,2,3$, where $v^{(1)}_j(x,t,k):=v^{(1)}_{j,l}(x,t,k)v^{(1)}_{j,r}(x,t,k)$ are given by
    \begin{equation*}
     \begin{aligned}
            &v^{(1)}_{1,l}
            =\begin{pmatrix}
					1 &-\hat{r}_1(k)\tilde{\delta}^{-1}_{21-}(k)\re^{t\Phi_{12}} & -\tilde{r}_2^*(-k)\frac{\tilde{\delta}_{23-}(k)}{\tilde{\delta}_{21-}(k)}\re^{t\Phi_{13}}\\
					0 & 1 & 0\\
					0 & \hat \alpha^*(k)\tilde{\delta}_{23-}^{-1}(k)\re^{-t\Phi_{23}} & 1\\
				\end{pmatrix},\\
                & v^{(1)}_{1,r}=\begin{pmatrix}
					1 & 0 & 0\\
					2\sigma \hat{r}_1^*(k)\tilde{\delta}_{21+}(k)\re^{-t\Phi_{12}} & 1  & -2\sigma  \hat \alpha(k)\tilde{\delta}_{23+}(k)\re^{t\Phi_{23}}\\
					-\tilde{r}_2(-k) \frac{\tilde{\delta}_{21+}(k)}{\tilde{\delta}_{23+}(k)}\re^{-t\Phi_{13}} & 0 & 1\\
				\end{pmatrix}, \\
                &v^{(1)}_{2,l}
                =\begin{pmatrix}
					1 & -\tilde\alpha^*(-k)\tilde{\delta}^{-1}_{21-}(k)\re^{t\Phi_{12}}    & \tilde{r}_2(k)\frac{\tilde{\delta}_{23-}(k)}{\tilde{\delta}_{21-}(k)}\re^{t\Phi_{13}} \\
					0 & 1 & -2 \tilde\alpha(k)\tilde{\delta}_{23-}(k)\re^{t\Phi_{23}} \\
					0 & 0 & 1\\
				\end{pmatrix} ,\\
                & v^{(1)}_{2,r}=\begin{pmatrix}
					1 & 0  & 0\\
					 2 \sigma  \tilde\alpha(-k)\tilde{\delta}_{21+}(k)\re^{-t\Phi_{12}}   & 1  & 0\\
					\tilde{r}_2^*(k)\frac{\tilde{\delta}_{21+}(k)}{\tilde{\delta}_{23+}(k)}\re^{-t\Phi_{13} }   & \tilde \alpha^*(k)\tilde{\delta}_{23+}^{-1}(k)\re^{-t\Phi_{23} }   & 1\\
				\end{pmatrix},\\
               & v^{(1)}_{3,l}
            =\begin{pmatrix}
					1 & 0  & \tilde{r}_2(k)\frac{\tilde{\delta}_{23-}(k)}{\tilde{\delta}_{21-}(k)}\re^{t\Phi_{13} } \\
				2 \sigma \hat\alpha(-k)\tilde{\delta}_{21-}(k)\re^{-t\Phi_{12} }  & 1  & -2\sigma  \tilde\alpha(k)\tilde{\delta}_{23-}(k)\re^{t\Phi_{23} } \\
					0  & 0 & 1\\
				\end{pmatrix},\\
                & v^{(1)}_{3,r}=\begin{pmatrix}
					1 & -\hat\alpha^*(-k)\tilde{\delta}_{21+}^{-1}(k)\re^{t\Phi_{12} } & 0\\
					0  & 1  & 0\\
					\tilde{r}_2^*(k)\frac{\tilde{\delta}_{21+}(k)}{\tilde{\delta}_{23+}(k)}\re^{-t\Phi_{13} }   & \tilde \alpha^*(k)\tilde{\delta}_{23+}^{-1}(k)\re^{-t\Phi_{23} } & 1\\
				\end{pmatrix},
            \end{aligned}
            \end{equation*}    
              where $\tilde{\delta}_{21}(k)=\frac{\delta_1(k)\delta_1^2(-k)\delta_2(k)}{\delta_2(-k)\delta_3(k)\delta_3^2(-k)}$ and $\tilde{\delta}_{23}(k)=\tilde{\delta}_{21}(-k)=\frac{\delta_1^2(k)\delta_1(-k)\delta_2(-k)}{\delta_2(k)\delta_3^2(k)\delta_3(-k)}$.

     Similarly, we define the following matrix functions $R(\zeta,k)=R_j(\zeta,k)$ for $k\in \tilde D_j$, $j=1,2,\cdots,10,$ where the definitions of $\tilde D_j$ refer to Figure \ref{fig_Gamma2}, with the difference that the positions of $k_0$ and $-k_0$ in Figure \ref{fig_Gamma2} are swapped, while the jump contour $\Gamma^{(2)}$, as well as the labels of regions $\tilde D_j$, are still given according to their positions in Figure \ref{fig_Gamma2}:
     \begin{align}
     	&R_1(\zeta,k)=\begin{pmatrix}
     		1 & \hat\alpha_{a}^*(-k)\tilde{\delta}_{21}^{-1}(k)\re^{t\Phi_{12} } & 0\\
     		0  & 1  & 0\\
     		-\tilde{r}_{2,a}^*(k)\frac{\tilde{\delta}_{21}(k)}{\tilde{\delta}_{23}(k)}\re^{-t\Phi_{13} }   & -\hat{r}_{1,a}(-k) \tilde{\delta}_{23}^{-1}(k)\re^{-t\Phi_{23} } & 1\\
     	\end{pmatrix},\nonumber\\
     	&R_2(\zeta,k)=\begin{pmatrix}
     		1 & 0  & 0\\
     		-2\sigma  \tilde\alpha_{a}(-k)\tilde{\delta}_{21}(k)\re^{-t\Phi_{12} }   & 1 & 0\\
     		\tilde{r}_{2,a}(-k)\frac{\tilde{\delta}_{21}(k)}{\tilde{\delta}_{23}(k)}\re^{-t\Phi_{13} }  & -\tilde\alpha_{a}^*(k)\tilde{\delta}_{23}^{-1}(k)\re^{-t\Phi_{23} }  & 1\\
     	\end{pmatrix},\nonumber\\
     	& R_3(\zeta,k)=\begin{pmatrix}
    	1 & -\tilde\alpha_a^*(-k)\tilde{\delta}^{-1}_{21}(k)\re^{t\Phi_{12} }  & \tilde{r}_{2,a}(k)\frac{\tilde{\delta}_{23}(k)}{\tilde{\delta}_{21}(k)}\re^{t\Phi_{13} } \\
     	0  & 1  & -2\sigma \tilde\alpha_a(k)\tilde{\delta}_{23}(k) \re^{t\Phi_{23} } \\
     	0  & 0  & 1\\
     \end{pmatrix},\nonumber\\
     &R_4(\zeta,k)=\begin{pmatrix}
     	1 & 0  & \tilde{r}_{2,a}(k)\frac{\tilde{\delta}_{23}(k)}{\tilde{\delta}_{21}(k)}\re^{t\Phi_{13} } \\
     	2\sigma \hat\alpha_a(-k)\tilde{\delta}_{21}(k)\re^{-t\Phi_{12} }  & 1  & -2\sigma  \tilde\alpha_a(k)\tilde{\delta}_{23}(k)\re^{t\Phi_{23} } \\
     	0  & 0 & 1\\
     \end{pmatrix},\nonumber\\
     & R_5(\zeta,k)=\begin{pmatrix}
     1 & 0 & 0\\
     -2\sigma \tilde{\alpha}_a(-k)\tilde{\delta}_{21}(k)\re^{-t\Phi_{12}} & 1  & 2 \sigma \hat \alpha_a(k)\tilde{\delta}_{23}(k)\re^{t\Phi_{23}}\\
     \tilde{r}_{2,a}(-k) \frac{\tilde{\delta}_{21}(k)}{\tilde{\delta}_{23}(k)}\re^{-t\Phi_{13}} & 0 & 1\\
     \end{pmatrix},\nonumber\\ 
     &R_6(\zeta,k)=\begin{pmatrix}
     	1 &-\hat{r}_{1,a}(k)\tilde{\delta}^{-1}_{21}(k)\re^{t\Phi_{12}} & -\tilde{r}_{2,a}^*(-k)\frac{\tilde{\delta}_{23}(k)}{\tilde{\delta}_{21}(k)}\re^{t\Phi_{13}}\\
     	0 & 1 & 0\\
     	0 & \hat \alpha_a^*(k)\tilde{\delta}_{23}^{-1}(k)\re^{-t\Phi_{23}} & 1\\
     \end{pmatrix},\label{R1-8}\\
     & R_7(\zeta,k)=\begin{pmatrix}
     	1 & 0 & 0\\
     	0  & 1  & 0\\
     	-\tilde{r}_{2,a}^*(k)\frac{\tilde{\delta}_{21}(k)}{\tilde{\delta}_{23}(k)}\re^{-t\Phi_{13} }   & -\tilde{\alpha}_a^*(k)\tilde{\delta}_{23}^{-1}(k)\re^{-t\Phi_{23} } & 1\\
     \end{pmatrix},\nonumber\\
     &R_8(\zeta,k)=\begin{pmatrix}
     1 & 0  & 0\\
     -2\sigma  \tilde\alpha_{a}(-k)\tilde{\delta}_{21}(k)\re^{-t\Phi_{12} }   & 1 & 0\\
     \tilde{r}_{2,a}(-k)\frac{\tilde{\delta}_{21}(k)}{\tilde{\delta}_{23}(k)}\re^{-t\Phi_{13} }  & 0  & 1\\
     \end{pmatrix},\nonumber\\
     & R_9(\zeta,k)= \begin{pmatrix}
     	1 &-\tilde{\alpha}_a^*(-k)\tilde{\delta}^{-1}_{21}(k)\re^{t\Phi_{12}} & -\tilde{r}_{2,a}^*(-k)\frac{\tilde{\delta}_{23}(k)}{\tilde{\delta}_{21}(k)}\re^{t\Phi_{13}}\\
     	0 & 1 & 0\\
     	0 & 0 & 1\\
     \end{pmatrix},\nonumber\\
     &R_{10}(\zeta,k)=\begin{pmatrix}
     1 & 0  & \tilde{r}_{2,a}(k)\frac{\tilde{\delta}_{23}(k)}{\tilde{\delta}_{21}(k)}\re^{t\Phi_{13} } \\
     0 & 1  & -2\sigma  \tilde\alpha_a(k)\tilde{\delta}_{23}(k)\re^{t\Phi_{23} } \\
     0  & 0 & 1\\
     \end{pmatrix}.\nonumber
     \end{align}

     \begin{remark}\label{remark_relation_hatalpha}
     	The reflection coefficients satisfy the following relations:
     	\begin{align*}
     		&\tilde{r}_2(-k)\hat{\alpha}(k)+\tilde{\alpha}(-k)-\hat{r}_1^*(k)=0,\\
     		&2\sigma \tilde{\alpha}^*(k)\tilde{\alpha}(-k)-\tilde{r}_2^*(k)-\tilde{r}_2(-k)=0,
     	\end{align*} 
     	and it is precisely by using these relations that we simplified and reformulated the above matrix functions \eqref{R1-8}.
     \end{remark}

   \begin{lem}\label{lem_R_bound}
  	$R(\zeta,k)$ is uniformly bounded for $\zeta\in\mathcal{I}_3$, $k\in \mathbb{C}\setminus\Gamma^{(2)}$, and $t>0$. Moreover,
  	\begin{equation*}
  		R(k)=I+\mathcal{O}(k^{-1}),\quad k\to\infty.
  	\end{equation*}
  \end{lem}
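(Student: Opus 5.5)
The proof will follow the template of Lemma \ref{lem_T_bound}: check each region $\tilde D_j$ separately and bound every off-diagonal entry of $R_j-I$. Each such entry has the form (analytic approximant)$\times$(ratio of $\delta_j$'s)$\times\re^{\pm t\Phi_{il}}$, and I will bound the three factors in turn.

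\textbf{The $\delta$-factors.} By Lemma \ref{lem_delta1}(2), $\delta_1,\delta_2,\delta_3$ and their inverses are bounded uniformly in $\zeta\in\mathcal{I}_3$ off their cuts. The functions $\tilde\delta_{21}^{\pm1}$, $\tilde\delta_{23}^{\pm1}$ and $(\tilde\delta_{23}/\tilde\delta_{21})^{\pm1}$ are finite products of these and their reflections $k\mapsto -k$. They are therefore uniformly bounded on $\mathbb{C}\setminus\mathbb{R}$, and in particular on each $\tilde D_j$. Since each $\delta_j\to1$ as $k\to\infty$, each of these ratios is $1+\mathcal{O}(k^{-1})$.

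\textbf{The approximants.} By Lemma \ref{alpha_decomposition}, together with the analogous decomposition of $\hat r_1$ (as in Lemma \ref{lem_r_decomposition}, adapted to $\mathcal{I}_3$), each approximant is bounded in its region of definition by $\frac{C}{1+|k|}\re^{\frac t4|\rre\Phi_{il}(\zeta,k)|}$. Here $\Phi_{il}$ is the phase appearing in the same entry. Some entries involve an approximant evaluated at $-k$, such as $\hat\alpha_a^*(-k)$ or $\tilde r_{2,a}(-k)$. For these I will use $\Phi_{12}(\zeta,-k)=\Phi_{23}(\zeta,k)$ and $\Phi_{13}(\zeta,-k)=-\Phi_{13}(\zeta,k)$, which follow from the $\mathcal{B}$-symmetry and the explicit formulas. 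These identities let me rewrite the growth exponent in terms of the phase multiplying that entry.

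\textbf{The signature step.} For each $j$, I will read off from Figure \ref{fig_theta_sign2} that on $\tilde D_j$ every exponential actually present in $R_j$ is decaying. In $R_1$, for example, the entries carry $\re^{t\Phi_{12}}$, $\re^{-t\Phi_{13}}$ and $\re^{-t\Phi_{23}}$. The region $\tilde D_1$ sits to the right of $-k_0$ above the real axis, where $\rre\Phi_{12}<0$, $\rre\Phi_{13}>0$ and $\rre\Phi_{23}>0$. Each entry is then bounded by
\[
\frac{C}{1+|k|}\re^{-\frac34 t|\rre\Phi_{il}|}\le \frac{C}{1+|k|}.
\]
This gives uniform boundedness in $\zeta$, $t$ and $k$, and also $R_j=I+\mathcal{O}(k^{-1})$; combined with the previous step, $R(k)=I+\mathcal{O}(k^{-1})$.

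\textbf{Main obstacle.} The hard step is exactly this case-by-case signature check. The regions are drawn with $\pm k_0$ swapped relative to Figure \ref{fig_Gamma2}, and the entries were reshuffled using Remark \ref{remark_relation_hatalpha}. So one must confirm, region by region, that each surviving entry has the right sign. This is delicate for the $\Phi_{13}$ entries, since the sign of $\rre\Phi_{13}$ is determined only by the half-plane. The approximant $\tilde r_{2,a}$ is analytic only in a quadrant, so I also need each region carrying a $\tilde r_{2,a}(\pm k)$ factor to lie where the argument $\pm k$ is in the domain of $\tilde r_{2,a}$. If some entry fails this check, I would first try replacing it via the identities of Remark \ref{remark_relation_hatalpha} by an equivalent combination whose approximant is defined in that region.
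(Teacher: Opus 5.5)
Your strategy is the one the paper intends. The paper gives no separate proof of this lemma; the implied argument is the region-by-region check of Lemma \ref{lem_T_bound}, bounding each entry of $R_j-I$ by $\frac{C}{1+|k|}\re^{-\frac34 t|\rre\Phi_{il}|}$. Your signature bookkeeping is correct: on $\tilde D_1$ one has $\rre\Phi_{12}=-4\rim k(\rre k+k_0)<0$, $\rre\Phi_{13}=-8k_0\rim k>0$ and $\rre\Phi_{23}=4\rim k(\rre k-k_0)>0$, and every exponential in $R_2,\dots,R_{10}$ likewise decays on its region. Two small corrections are needed. First, the reflection identity is $\Phi_{12}(\zeta,-k)=-\Phi_{23}(\zeta,k)$, not $+$; this is harmless because only $|\rre\Phi|$ enters the bounds. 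Second, starred approximants are evaluated at $\pm\bar k$, not $\pm k$. For example, $\hat\alpha_a^*(-k)=\overline{\hat\alpha_a(-\bar k)}$, and for $k\in\tilde D_1$ it is $-\bar k$ (not $-k$) that lies in $\tilde D_2^{(23)}$, with $\rre\Phi_{23}(\zeta,-\bar k)=\rre\Phi_{12}(\zeta,k)$. Your domain checks must be done at these points, otherwise they fail spuriously.

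The step you flagged does genuinely fail with Lemma \ref{alpha_decomposition} as written. The diamond $\tilde D_2\cup\tilde D_3$ is symmetric about $\rre k=0$. So the $(3,1)$ entry $\tilde r_{2,a}(-k)$ of $R_2$ and the $(1,3)$ entry $\tilde r_{2,a}(k)$ of $R_3$ need $\tilde r_{2,a}$ on the whole lower half of the diamond. That includes points with $\rre k<0$, outside the quadrant $\{\rre k\ge0,\rim k\le0\}$ where the lemma defines it.

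Your fallback via Remark \ref{remark_relation_hatalpha} does not repair this. The rewritten entries need $\tilde r_{2,a}$ at $\bar k$ (for $R_2$) or at $-\bar k$ (for $R_3$). Those points have real part of the wrong sign on the other half of the diamond, so you would have to switch formulas along $\rre k=0$ inside the diamond. That changes $R$ and creates an extra jump on a segment not contained in $\Gamma^{(2)}$. The jump is also not exponentially small near $k=0$, since $\rre\Phi_{13}\to0$ there.

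The correct fix is in the decomposition itself:
\begin{itemize}
\item Split $\tilde r_2$ on all of $\mathbb{R}$, which is needed anyway because $v^{(1)}_2$ contains $\tilde r_2(k)$ for $k\in(k_0,0)$.
\item Take $\tilde r_{2,a}$ analytic in $\mathbb{C}_-$ and bounded by $\frac{C}{1+|k|}\re^{\frac t4|\rre\Phi_{13}|}$ on $\overline{\mathbb{C}_-}$. This is exactly how $r_2$ is treated in Lemma \ref{lem_r_decomposition}, with $\mathbb{C}_+$ replaced by $\mathbb{C}_-$.
\item Add a decomposition of $\hat r_1$ on $(-\infty,k_0]$, analytic in the lower-left region of $k_0$; this is needed for $R_1$ and $R_6$.
\end{itemize}
With these, all ten cases close as you describe.
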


    Next, we set $	m^{(2)}(x,t,k)=m^{(1)}(x,t,k)R(x,t,k),$ and require that $m_+^{(2)}(x,t,k)=m_-^{(2)}(x,t,k)v^{(2)}(x,t,k)$ for $k\in \Gamma^{(2)}$, where $v^{(2)}_j(x,t,k)$ for $k\in \Gamma^{(2)}_j$, $j=1,2,\cdots,8$, are as follows
  	\begin{align*}
  		&v^{(2)}_1(x,t,k)=\begin{pmatrix}
  			1 & -\hat \alpha_a^*(-k)\tilde\delta_{21}^{-1}(k)\re^{t\Phi_{12}(\zeta,k)}  & 0\\
  			0  & 1 & 0\\
  			0 & 0 & 1\\
  		\end{pmatrix},\quad &&v^{(2)}_2(x,t,k)=\begin{pmatrix}
  		1 & 0 & 0\\
  		-2\sigma \tilde \alpha_a(-k)\tilde\delta_{21}(k)\re^{-t\Phi_{12}(\zeta,k)} & 1 & 0\\
  		0 & 0 & 1\\
  		\end{pmatrix},\nonumber\\
  		&v^{(2)}_3(x,t,k)=\begin{pmatrix}
  			1 & \tilde \alpha_a^*(-k)\tilde\delta_{21}^{-1}(k)\re^{t\Phi_{12}(\zeta,k)} & 0\\
  			0 & 1 & 0\\
  			0 & 0 & 1\\
  		\end{pmatrix},\quad &&v^{(2)}_4(x,t,k)=\begin{pmatrix}
  		1 & 0  & 0\\
  		2\sigma  \hat \alpha_a(-k)\tilde\delta_{21}(k)\re^{-t\Phi_{12}(\zeta,k)}  & 1  & 0\\
  		0  & 0  & 1\\
  		\end{pmatrix},\\
  		&v^{(2)}_5(x,t,k)=\begin{pmatrix}
  			1 & 0 & 0\\
  			0 & 1 & 0\\
  			0 &  \tilde \alpha_a^*(k)\tilde\delta_{23}^{-1}(k)\re^{-t\Phi_{23}(\zeta,k)} & 1\\
  		\end{pmatrix},\quad &&v^{(2)}_6(x,t,k)=\begin{pmatrix}
  		1 & 0  & 0\\
  		0 & 1 &  2 \sigma  \hat \alpha_a(k)\tilde\delta_{23} (k)\re^{t\Phi_{23}(\zeta,k)}\\
  		0 & 0  & 1\\
  		\end{pmatrix},\nonumber\\
  		&v^{(2)}_7(x,t,k)=\begin{pmatrix}
  			1 & 0 & 0\\
  			0 & 1 & 0\\
  			0 & - \hat\alpha_a^*(k)\tilde\delta_{23}^{-1}(k)\re^{-t\Phi_{23}(\zeta,k)} & 1\\
  		\end{pmatrix},\quad &&v^{(2)}_8(x,t,k)=\begin{pmatrix}
  		1 & 0 & 0\\
  		0 & 1 & - 2 \sigma  \tilde \alpha_a(k)\tilde\delta_{23} (k)\re^{t\Phi_{23}(\zeta,k)}\\
  		0  & 0  & 1\\
  		\end{pmatrix}.\nonumber
  	\end{align*}
  
  When $k\in\Gamma^{(2)}_j$, $j=9,10,\cdots,14$, the corresponding jump matrices can be expressed as:
  \begin{equation*}
  	v^{(2)}_9(k)=R_6^{-1}(k)v^{(1)}_1(k)R_5(k),\quad
  	v^{(2)}_{10}(k)=R_3^{-1}(k)v^{(1)}_2(k)R_2(k),\quad v^{(2)}_{11}(k)=R_4^{-1}(k)v^{(1)}_3(k)R_1(k),
  \end{equation*}
  \begin{equation*}
  	v^{(2)}_{12}(k)=R^{-1}_7(k)R_8(k),\qquad v^{(2)}_{13}(k)=R^{-1}_9(k)R_{10}(k).
  \end{equation*}

  \begin{lem}\label{lem_V2_error}
  	The jump matrix $v^{(2)}(x,t,k)$ converges to the identity matrix $I$ uniformly for $\zeta\in\mathcal{I}_3$ and $k\in \Sigma^{(2)}$ except near the two critical points $\pm k_0$ as $t\to\infty$. Additionally, the following estimates are established:
  	\begin{align*}
  		&\left\|(1+|\cdot|)\left(v^{(2)}_j(x,t,\cdot)-I\right)\right\|_{(L^1\cap L^\infty)\left(\Gamma^{(2)}_j\right)}\leq Ct^{-N}, \quad  &&j=9,10,11,\\
  		&\left\|(1+|\cdot|)\left(v^{(2)}_j(x,t,\cdot)-I\right)\right\|_{(L^1\cap L^\infty)\left(\Gamma^{(2)}_j\right)}\leq C\re^{-ct}, \quad &&j=12,13.
  	\end{align*}
  \end{lem}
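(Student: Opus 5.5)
The proof will follow the Region II argument (Lemma \ref{lem_v2_error}), adapted to the signature tables of Figure \ref{fig_theta_sign2}. The contour $\Gamma^{(2)}$ splits into three kinds of pieces.

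\textbf{Rays $\Gamma^{(2)}_1,\dots,\Gamma^{(2)}_8$.} Each jump $v^{(2)}_j$ is unipotent with a single off-diagonal entry. That entry has the form (analytic approximant)$\times$($\tilde\delta_{21}^{\pm1}$ or $\tilde\delta_{23}^{\pm1}$)$\times \re^{\pm t\Phi_{12}}$ or $\re^{\pm t\Phi_{23}}$.

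1. I will check from Figure \ref{fig_theta_sign2} that, on each ray leaving $\pm k_0$, the relevant exponential has negative real part.
2. Lemma \ref{lem_delta1} gives uniform boundedness of $\tilde\delta_{21}^{\pm1},\tilde\delta_{23}^{\pm1}$.
3. Lemma \ref{alpha_decomposition} gives $|\hat\alpha_a|,|\tilde\alpha_a|\le C(1+|k|)^{-1}\re^{\frac t4|\rre\Phi|}$.
4. Together these bound each entry by $C(1+|k|)^{-1}\re^{-\frac34 t|\rre\Phi_{ij}|}$.
5. Along a ray $k=\pm k_0+u\re^{\ri\theta}$ one has $|\rre\Phi_{ij}|\ge c u^2$ near the critical point and $\ge cu$ far away. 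Hence the entry tends to $0$ uniformly away from $\pm k_0$, but not uniformly at $\pm k_0$, where $\rre\Phi_{12}(-k_0)=\rre\Phi_{23}(k_0)=0$.

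\textbf{Real-axis pieces $j=9,10,11$.} I will expand $v^{(2)}_9=R_6^{-1}v^{(1)}_1R_5$, and similarly $v^{(2)}_{10}$ and $v^{(2)}_{11}$. On $\mathbb{R}$ every exponential has modulus one and the $\tilde\delta$ factors are bounded.

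1. Replacing each reflection quantity by its analytic part, the factorization of $v^{(1)}_j$ into $v^{(1)}_{j,l}v^{(1)}_{j,r}$ is matched exactly by the $R_j$. This should make $v^{(2)}_j=I$ up to terms containing at least one remainder $\hat\alpha_r$, $\tilde\alpha_r$, $\tilde r_{2,r}$ (or $\hat r_{1,r}$).
2. Verifying this cancellation uses the algebraic identities of Remark \ref{remark_relation_hatalpha}, together with the scattering relations behind $v$ in \eqref{jump_0}.
3. Each entry of $v^{(2)}_j-I$ is then a finite sum of products of bounded functions with at least one remainder factor.
4. The weighted $L^1\cap L^\infty$ bound $\mathcal{O}(t^{-N})$ follows from the third item of Lemma \ref{alpha_decomposition}, using Hölder for the cross terms.

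This step is the main obstacle. Several entries of the products $R^{-1}vR$ are quadratic or cubic in the reflection data, and the exact cancellation of the analytic parts must be checked entry by entry. I would organize it by writing $v^{(1)}_j=v^{(1)}_{j,l}v^{(1)}_{j,r}$ and observing that $R_6^{-1}v^{(1)}_{1,l}$ and $v^{(1)}_{1,r}R_5$ are each $I$ plus remainder terms. This reduces the computation to two triangular-type products instead of a full triple product.

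\textbf{Vertical segments $j=12,13$.} Here $v^{(2)}_{12}=R_7^{-1}R_8$ and $v^{(2)}_{13}=R_9^{-1}R_{10}$.

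1. Each product is unipotent, with entries involving $\re^{\pm t\Phi_{12}}$, $\re^{\pm t\Phi_{13}}$, $\re^{\pm t\Phi_{23}}$ times bounded analytic approximants.
2. These segments lie on the line $\rre k=0$ at height bounded below, away from $\pm k_0$. There each relevant $\rre\Phi_{ij}$ has the decaying sign with $|\rre\Phi_{ij}|\ge c(1+|k|)$.
3. The same estimate as in Lemma \ref{lem_v2_error} then gives entries bounded by $C(1+|k|)^{-1}\re^{-\frac34 ct(1+|k|)}$.
4. Integrating along the segment, the weight $(1+|k|)$ yields $C\re^{-ct}$ in $L^1\cap L^\infty$.
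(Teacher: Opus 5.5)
Your proposal is correct and follows the same three-part argument the paper uses, since the paper simply defers to the Region II proof of Lemma \ref{lem_v2_error}: sign tables plus the bounds of Lemma \ref{alpha_decomposition} on the rays, remainder control on the real-axis pieces $j=9,10,11$, and exponential decay from the signs of $\rre\Phi_{ij}$ on the imaginary-axis segments $j=12,13$. Your splitting of $v^{(2)}_9$, $v^{(2)}_{10}$ and $v^{(2)}_{11}$ into the two factors $R^{-1}v^{(1)}_{j,l}$ and $v^{(1)}_{j,r}R$, with Remark \ref{remark_relation_hatalpha} supplying the cancellation of the analytic parts, is a more explicit version of the step the paper asserts without detail.
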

  \begin{proof}
  	The proof follows a similar argument as in Lemma \ref{lem_v2_error}.
  \end{proof}

 For $k\in B_\epsilon(k_0)\setminus\left(k_0-\epsilon,k_0+\epsilon \right) $,
 \begin{align*}
 	\tilde{\delta}_{23}(k)&=(2\sqrt{t})^{-2\ri\hat\nu} \re^{\ri(2\nu_1-\nu_2)\ln_\pi(z_{(+)})}\re^{\ri(2\nu_3-\nu_2)\ln_0(z_{(+)})}\re^{\ri(\nu_3-2\nu_2)\ln_0(2k_0)}\re^{-2\chi_1(k_0)+\chi_2(k_0)+2\chi_3(k_0)-\chi_3(-k_0)}\\
 	&\quad\re^{\ri(\nu_3-2\nu_2)\ln_0((k+k_0)/(2k_0))}\re^{-2(\chi_1(k)-\chi_1(k_0))+(\chi_2(k)-\chi_2(k_0))+2(\chi_3(k)-\chi_3(k_0))+(\chi_3(-k)-\chi_3(-k_0))}\delta_1(-k)\delta_2(-k)\\
 	&:=\re^{\ri(2\nu_1-\nu_2)\ln_\pi(z_{(+)})}\re^{\ri(2\nu_3-\nu_2)\ln_0(z_{(+)})}b^{(k_0)}_0(\zeta)b^{(k_0)}_1(\zeta,k),
 \end{align*}
 where $\hat{\nu}=\nu_1-\nu_2+\nu_3$, and
 \begin{align*}
 	&b^{(k_0)}_0(\zeta)=(2\sqrt{t})^{-2\ri\hat\nu}\re^{\ri(\nu_3-2\nu_2)\ln_0(2k_0)}\re^{-2\chi_1(k_0)+\chi_2(k_0)+2\chi_3(k_0)+\chi_3(-k_0)}\delta_1(-k_0)\delta_2(-k_0),\\
 	&b^{(k_0)}_1(\zeta,k)=\re^{\ri(\nu_3-2\nu_2)\ln_0((k+k_0)/(2k_0))}\re^{-2(\chi_1(k)-\chi_1(k_0))+(\chi_2(k)-\chi_2(k_0))+2(\chi_3(k)-\chi_3(k_0))+(\chi_3(-k)-\chi_3(-k_0))}\\
    &\qquad\qquad\quad\delta_1^{-1}(-k_0)\delta^{-1}_2(-k_0)\delta_1(-k)\delta_2(-k).
 \end{align*}
  And for $k\in B_\epsilon(-k_0)\setminus\left(-k_0-\epsilon,-k_0+\epsilon\right)$,
  \begin{align*}
  	\tilde{\delta}_{21}(k)&=(2\sqrt{t})^{-2\ri\hat\nu} \re^{\ri(2\nu_1-\nu_2)\ln_0(z_{(-)})}\re^{\ri(2\nu_3-\nu_2)\ln_\pi(z_{(-)})}\re^{\ri(\nu_3-2\nu_2)\ln_\pi(-2k_0)}\re^{-2\chi_1(k_0)+\chi_2(k_0)+2\chi_3(k_0)+\chi_3(-k_0)}\\
  	&\quad\re^{\ri(\nu_3-2\nu_2)\ln_\pi((k-k_0)/(-2k_0))}\re^{-2(\chi_1(-k)-\chi_1(k_0))+(\chi_2(-k)-\chi_2(k_0))+2(\chi_3(-k)-\chi_3(k_0))+(\chi_3(k)-\chi_3(-k_0))}\delta_1(k)\delta_2(k)\\
  	&:=\re^{\ri(2\nu_1-\nu_2)\ln_0(z_{(-)})}\re^{\ri(2\nu_3-\nu_2)\ln_\pi(z_{(-)})}b^{(-k_0)}_0(\zeta)b^{(-k_0)}_1(\zeta,k),
  \end{align*}
  where
  \begin{align*}
  	&b^{(-k_0)}_0(\zeta)=(2\sqrt{t})^{-2\ri\hat\nu} \re^{\ri(\nu_3-2\nu_2)\ln_\pi(-2k_0)}\re^{-2\chi_1(k_0)+\chi_2(k_0)+2\chi_3(k_0)+\chi_3(-k_0)}\delta_1(-k_0)\delta_2(-k_0),\\
  	&b^{(-k_0)}_1(\zeta,k)=\re^{\ri(\nu_3-2\nu_2)\ln_\pi((k-k_0)/(-2k_0))}\re^{-2(\chi_1(-k)-\chi_1(k_0))+(\chi_2(-k)-\chi_2(k_0))+2(\chi_3(-k)-\chi_3(k_0))+(\chi_3(k)-\chi_3(-k_0))}\\
    &\qquad\qquad\quad\delta_1^{-1}(-k_0)\delta^{-1}_2(-k_0)\delta_1(k)\delta_2(k).
  \end{align*}

 Based on the expansions above near $k_0$ and $-k_0$, construct the following two matrices for $\zeta\in\mathcal{I}_3$: 
 \begin{equation*}
     \tilde{m}^{(k_0)}(\zeta,k)=m^{(2)}(\zeta,k)X_{(+)}(\zeta),\qquad\tilde{m}^{(-k_0)}(\zeta,k)=m^{(2)}(\zeta,k)X_{(-)}(\zeta),
 \end{equation*}
 where
 \begin{align*}
 	&X_{(+)}(\zeta)=\begin{pmatrix}
 		1 & 0 & 0\\
 		0 & \left(b^{(k_0)}_{0}(\zeta)\right)^{1/2}\re^{\frac{t}{2}\Phi_{23}(\zeta,k_0)} & 0\\
 		0 & 0 & \left(b^{(k_0)}_{0}(\zeta)\right)^{-1/2}\re^{-\frac{t}{2}\Phi_{23}(\zeta,k_0)}
 	\end{pmatrix},\quad\qquad k\in B_\epsilon(k_0),\\
 		&X_{(-)}(\zeta)=\begin{pmatrix}
 		\left({b_{0}^{(-k_0)}(\zeta)}\right)^{-1/2}\re^{\frac{t}{2}\Phi_{12}(\zeta,-k_0)}  & 0 & 0\\
 		0 & \left({b_{0}^{(-k_0)}(\zeta)}\right)^{1/2}\re^{-\frac{t}{2}\Phi_{12}(\zeta,-k_0)} & 0\\
 		0 & 0 & 1
 	\end{pmatrix},\quad k\in B_\epsilon(-k_0) .
 \end{align*}
  Hence, the jump matrices $\tilde v^{(\pm k_0)}_j$ of the function $\tilde{m}^{(\pm k_0)}(\zeta,k)$ for $k\in X^{- k_0}_j$, $j=1,2,3,4$, and $k\in X^{ k_0}_j$, $j=5,6,7,8$, are as follows:
 \begin{align*}
 	&\tilde v^{(-k_0)}_1(\zeta,z_{(-)})=\begin{pmatrix}
 		1 & -\hat \alpha_a^*(-k)\left(b^{(-k_0)}_1\right)^{-1}z_{(-)}^{-2\ri \hat{\nu}}\re^{\frac{\ri z_{(-)}^2}{2}}  & 0\\
 		0  & 1 & 0\\
 		0 & 0 & 1\\
 	\end{pmatrix},\\
 	&\tilde v^{(-k_0)}_2(\zeta,z_{(-)})=\begin{pmatrix}
 		1 & 0 & 0\\
 		-2\sigma \tilde \alpha_a(-k)b^{(-k_0)}_1z_{(-)}^{2\ri \hat{\nu}}\re^{-\frac{\ri z_{(-)}^2}{2}} & 1 & 0\\
 		0 & 0 & 1\\
 	\end{pmatrix},\\
 	&\tilde v^{(-k_0)}_3(\zeta,z_{(-)})=\begin{pmatrix}
 		1 & \tilde \alpha_a^*(-k)\left(b^{(-k_0)}_1\right)^{-1}z_{(-)}^{-2\ri \hat{\nu}}\re^{\frac{\ri z_{(-)}^2}{2}} & 0\\
 		0 & 1 & 0\\
 		0 & 0 & 1\\
 	\end{pmatrix},\\
 	&\tilde v^{(-k_0)}_4(\zeta,z_{(-)})=\begin{pmatrix}
 		1 & 0  & 0\\
 		2\sigma \hat \alpha_a(-k)b^{(-k_0)}_1z_{(-)}^{2\ri \hat{\nu}} \re^{-\frac{\ri z_{(-)}^2}{2}}  & 1  & 0\\
 		0  & 0  & 1\\
 	\end{pmatrix},\\
 	&\tilde v^{(k_0)}_5(\zeta,z_{(+)})=\begin{pmatrix}
 		1 & 0 & 0\\
 		0 & 1 & 0\\
 		0 & \tilde \alpha_a^*(k)\left({b^{(k_0)}_1}\right)^{-1}z_{(+)}^{-2\ri \hat{\nu}}\re^{\frac{\ri z_{(+)}^2}{2}} & 1\\
 	\end{pmatrix},\\
 	&\tilde v^{(k_0)}_6(\zeta,z_{(+)})=\begin{pmatrix}
 		1 & 0  & 0\\
 		0 & 1 & 2\sigma \hat \alpha_a(k){b^{(k_0)}_1}z_{(+)}^{2\ri \hat{\nu}} \re^{-\frac{\ri z_{(+)}^2}{2}}\\
 		0 & 0  & 1\\
 	\end{pmatrix},\\
 	&\tilde v^{(k_0)}_7(\zeta,z_{(+)})=\begin{pmatrix}
 		1 & 0 & 0\\
 		0 & 1 & 0\\
 		0 & -\hat\alpha_a^*(k)\left({b^{(k_0)}_1}\right)^{-1}z_{(+)}^{-2\ri \hat{\nu}}\re^{\frac{\ri z_{(+)}^2}{2}} & 1\\
 	\end{pmatrix},\\
 	&\tilde v^{(k_0)}_8(\zeta,z_{(+)})=\begin{pmatrix}
 		1 & 0 & 0\\
 		0 & 1 & -2\sigma \tilde \alpha_a(k){b^{(k_0)}_1}z_{(+)}^{2\ri \hat{\nu}} \re^{-\frac{\ri z_{(+)}^2}{2}}\\
 		0  & 0  & 1\\
 	\end{pmatrix},
 \end{align*}
 where $z_{(+)}^{2\ri \hat{\nu}}=\re^{\ri(2\nu_1-\nu_2)\ln_\pi(z_{(+)})}\re^{\ri(2\nu_3-\nu_2)\ln_0(z_{(+)})}$ and $z_{(-)}^{2\ri \hat{\nu}}=\re^{\ri(2\nu_1-\nu_2)\ln_0(z_{(-)})}\re^{\ri(2\nu_3-\nu_2)\ln_\pi(z_{(-)})}$.
 
  \begin{lem}\label{lem_X12_bound}
  	For $\zeta\in\mathcal{I}_3$ and $t\geq2$, the functions $X_{(\pm)}(\zeta)$ are uniformly bounded, i.e.,
  	\begin{equation*}
  		\sup_{\zeta\in\mathcal{I}_3}\sup_{t\geq2}\left|X_{(\pm)}(\zeta)\right|<C,\quad 	\sup_{\zeta\in\mathcal{I}_3}\sup_{t\geq2}\left|X_{(\pm)}^{-1}(\zeta)\right|<C.
  	\end{equation*}
  	The functions $b_0^{(\pm k_0)}(\zeta)$ and $b_1^{(\pm k_0)}(\zeta,k)$ satisfy
  	\begin{align*}
  		&\left|b_0^{(k_0)}(\zeta)\right|=\re^{\pi(2\nu_1+\nu_2-\nu_3)},  \quad &&\left|b_1^{(k_0)}(\zeta,k)-1\right|\leq C|k-k_0|(1+|\ln|k-k_0||),\\
  		&\left|b_0^{(-k_0)}(\zeta)\right|=\re^{\pi(2\nu_1-\nu_2)}, \quad&&\left|b_1^{(-k_0)}(\zeta,k)-1\right|\leq C|k+k_0|(1+|\ln|k+k_0||).
  	\end{align*}
  	
  \end{lem}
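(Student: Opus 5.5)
The argument mirrors the proof of Lemma \ref{lem_Y12_bound}. It rests on two facts: all the $\nu_j$ and all the $\chi_j$ depend only on $\zeta$ (not on $t$), and the scalar functions $\delta_j$ satisfy the estimates of Lemma \ref{lem_delta1}.

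\textbf{Step 1: the moduli $|b_0^{(\pm k_0)}|$.} Since $\nu_j\in\mathbb{R}$, we have $|(2\sqrt t)^{-2\ri\hat\nu}|=1$. The factor $\re^{\ri(\nu_3-2\nu_2)\ln_\theta(\pm 2k_0)}$ contributes $\re^{-(\nu_3-2\nu_2)\arg_\theta(\pm2k_0)}$. In Region III we have $k_0<0$, so $2k_0$ lies on the negative axis, and hence $\arg_0(2k_0)=\pi$ while $\arg_\pi(-2k_0)=0$. The remaining moduli come from $\rre\chi_j(\zeta,\pm k_0)$, which we evaluate exactly as in Lemma \ref{lem_Y12_bound}. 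For $s$ on the integration interval, $\rre\ln_\theta(\pm k_0-s)=\ln|\cdot|$ and $\rim\ln_\theta(\pm k_0-s)\in\{0,\pi\}$ according to the sign of $\pm k_0-s$. Therefore
\[
\rre\chi_j=-\frac{1}{2\pi}\int(\text{arg})\,\rd\ln(\cdots),
\]
which yields either $0$ or $\pm\pi\nu_j$: the integral $\frac{1}{2}\int\rd\ln(\cdots)$ collapses to $\frac12\ln$ evaluated at the endpoint $k_0$ (or $-k_0$), and this is $-\pi\nu_j$. The factors $\delta_1(-k_0)$ and $\delta_2(-k_0)$ also have explicit modulus. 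Here $-k_0>k_0$ lies off the cut $(-\infty,k_0]$, so $\ln_\pi(-k_0-k_0)$ is real. Thus $|\delta_j(-k_0)|=\re^{-\rre\chi_j(-k_0)}$ with $\rre\chi_j(-k_0)=0$, since the argument vanishes for $s<k_0<-k_0$. Summing all contributions should give $\re^{\pi(2\nu_1+\nu_2-\nu_3)}$ and $\re^{\pi(2\nu_1-\nu_2)}$.

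\textbf{Step 2: the main obstacle.} The real work is the bookkeeping of branches in Step 1. We must track, for each of the $\delta_1^{\pm}$, $\delta_2^{\pm}$, $\delta_3^{\pm}$ factors in $\tilde\delta_{21}$ and $\tilde\delta_{23}$, which branch ($\theta=0$ or $\theta=\pi$) is used and whether $\pm k_0$ sits to the left or right of each cut. This must be checked carefully against the definitions of $b_0^{(\pm k_0)}$. In particular, the sign in front of $\chi_3(-k_0)$ differs between the displayed decomposition and the definition of $b_0^{(k_0)}$; we use the definition.

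\textbf{Step 3: uniform bounds for $X_{(\pm)}$.} The entries of $X_{(\pm)}$ are $(b_0^{(\pm k_0)})^{\pm1/2}\re^{\pm\frac t2\Phi(\zeta,\mp k_0)}$. The exponential factor is unimodular because $\Phi_{ij}$ is purely imaginary on $\mathbb{R}$. Hence $|X_{(\pm)}|$ and $|X_{(\pm)}^{-1}|$ are bounded by $\max\{1,\re^{\pm\pi c/2}\}$, where $c$ is the relevant combination of the $\nu_j$. These $\nu_j$ are bounded uniformly on the compact set $\mathcal I_3$. Indeed, Assumption \ref{assu_LT} and continuity of $r_1,r_2$ keep the logarithms' arguments in a compact subset of $(0,\infty)$; for $\nu_1$ this needs $1-2\sigma|r_1(k_0)|^2>0$, which follows from the jump being positive definite, as in Lemma \ref{lem_vanishing}.

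\textbf{Step 4: the estimate for $b_1^{(\pm k_0)}$.} We expand $b_1^{(\pm k_0)}-1$ as a product of factors, each of the form $1+\mathcal O(|k\mp k_0|(1+|\ln|k\mp k_0||))$. The differences $\chi_j(k)-\chi_j(k_0)$ (and the corresponding ones at $\pm k_0$) are controlled by item 3 of Lemma \ref{lem_delta1}. The factor $\re^{\ri(\nu_3-2\nu_2)\ln((k\pm k_0)/(\pm2k_0))}$ equals $1+\mathcal O(|k\mp k_0|)$, since its argument is near $1$. The ratio $\delta_j(-k)/\delta_j(-k_0)$ is Lipschitz because $-k$ stays at distance at least $|k_0|/2$ from the cut. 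Using $|\re^w-1|\le |w|\re^{|w|}$ on $B_\epsilon(\pm k_0)$ and multiplying the factors completes the estimate.
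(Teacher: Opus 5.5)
Your Steps 3 and 4 follow the paper's route: uniform boundedness of the $\nu_j$ on $\mathcal{I}_3$, and Lemma \ref{lem_delta1}(3) for $b_1^{(\pm k_0)}$. The gap is that the lemma's only substantive content, the two modulus identities, is never derived: Step 1 ends with ``should give''. Carried out with your own recipe, it does not give them.

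\emph{The real parts of the $\chi_j$.} The functions $\chi_1,\chi_2$ are built on the cut $(-\infty,k_0]$, so $\ln_\pi(k_0-s)$ is real for $s<k_0$. Hence $\rre\chi_1(\zeta,k_0)=\rre\chi_2(\zeta,k_0)=0$; equivalently, for real $k>k_0$ both $\delta_j(k)$ and $\re^{\ri\nu_j\ln_\pi(k-k_0)}$ are unimodular.

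The analogy with Lemma \ref{lem_Y12_bound} breaks down here. There, the relevant factor $\delta(-k)$ had its cut to the \emph{right} of $k_0$. In Region III that role is played by $\delta_3$. With $\theta=0$ one has $\ln_0(k_0-s)=\ln|k_0-s|+\ri\pi$ for $s\in(k_0,-k_0)$, so
$\rre\chi_3(\zeta,k_0)=\frac{1}{2\pi}\int_{k_0}^{-k_0}\pi\,\rd\ln f_3=\pi(\nu_3-\nu_2)$, where $f_3(s)=1-2\sigma|r_1(-s)|^2+|r_2(s)|^2$. Both endpoints contribute, and the sign is $+$; the minus sign you copied from Lemma \ref{lem_Y12_bound} comes from the substitution $s\mapsto-s$ made there.

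\emph{Consequence for $b_0^{(k_0)}$.} Combine this with $\rre\chi_3(\zeta,-k_0)=0$ (taking $\theta=\pi$) and $|\delta_j(-k_0)|=1$. The definition of $b_0^{(k_0)}$ then gives $|b_0^{(k_0)}|=\re^{\pi(2\nu_2-\nu_3)}\re^{2\pi(\nu_3-\nu_2)}=\re^{\pi\nu_3}$, not $\re^{\pi(2\nu_1+\nu_2-\nu_3)}$. The paper's proof reaches the stated value only by asserting $\rre\chi_1(\zeta,k_0)=-\pi\nu_1$, $\rre\chi_2(\zeta,k_0)=-\pi\nu_2$ and $\rre\chi_3(\zeta,\pm k_0)=0$. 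These are incompatible with the definitions of the $\chi_j$.

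Part of the problem is that the displayed factorization of $\tilde\delta_{23}$ drops a constant $\re^{\pi(\nu_3-\nu_2)}$. It arises when $1/\delta_3(-k)$ is rewritten near $k_0$ via $\ln_\pi(k_0-k)=\ln_0(k-k_0)-\ri\pi$ and $\ln_\pi(-k-k_0)=\ln_0(k+k_0)-\ri\pi$.

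\emph{A check that avoids the $\chi_j$.} On $\mathbb{R}$ off its cut each $\delta_j$ is unimodular, and on its cut $|\delta_{j,\pm}|=f_j^{\pm1/2}$, with $f_j$ the functions whose logarithms appear in \eqref{vol_delta123}. As $k\to k_0$ along $(k_0,-k_0)$ from $\mathbb{C}_+$, one has $|z_{(+)}^{2\ri\hat\nu}|=1$ and $b_1^{(k_0)}\to1$. This gives $|b_0^{(k_0)}|=\re^{\pi(2\nu_3-\nu_2)}$ for the correctly normalized constant. The same computation at $-k_0$ gives $\re^{\pi(2\nu_1-\nu_2)}$.

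So the value at $-k_0$ is right for the corrected constant. The value at $k_0$ cannot be proved as stated, since $2\nu_3-\nu_2\ne 2\nu_1+\nu_2-\nu_3$ in general. What survives, and is all that is used later, is $|b_0^{(\pm k_0)}|=\re^{\pi\ell_\pm}$ with $\ell_\pm$ fixed real linear combinations of $\nu_1,\nu_2,\nu_3$. That suffices for your Step 3.

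Smaller points:
\begin{itemize}
\item Step 4 works only if the logarithm of $(k+k_0)/(2k_0)$ in $b_1^{(k_0)}$ is the principal branch. With the printed $\ln_0$, this quotient lies just below $1$ for $k\in\mathbb{C}_+$, and $b_1^{(k_0)}\to\re^{-2\pi(\nu_3-2\nu_2)}$ there.
\item Positivity of $1-2\sigma|r_1(k_0)|^2$ is not supplied by Assumption \ref{assu_LT}, which concerns a different combination. Nor does it follow from Lemma \ref{lem_vanishing}, which the paper does not prove. Like the paper, you are assuming it tacitly.
\end{itemize}
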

  \begin{proof}
  	Direct calculation yields
  	\begin{align*}
  		&\rre\chi_1(\zeta,k_0)
  		=-\pi\nu_1,\quad \rre\chi_2(\zeta,k_0)
  		=-\pi\nu_2,\quad \rre\chi_3(k_0)=\rre\chi_3(-k_0)=0.
  	\end{align*}
  	The remaining estimates can be established analogously by referring to Lemmas \ref{lem_delta} and \ref{lem_delta1}.
  \end{proof}

 The following matrix functions are introduced:
  	\begin{align*}
  		&v^{X}_{- k_0}(\zeta,z_{(-)})=\begin{pmatrix}
  			1 & -\hat \alpha^*(k_0)z_{(-)}^{-2\ri\hat\nu}\re^{\frac{\ri z_{(-)}^2}{2}}  & 0\\
  			0  & 1 & 0\\
  			0 & 0 & 1\\
  		\end{pmatrix},\quad z_{(-)}\in X_1,\\ 
  		&v^{X}_{-k_0}(\zeta,z_{(-)})=\begin{pmatrix}
  			1 & 0 & 0\\
  			-2\sigma \tilde \alpha(k_0)z_{(-)}^{2\ri\hat\nu}\re^{-\frac{\ri z_{(-)}^2}{2}} & 1 & 0\\
  			0 & 0 & 1\\
  		\end{pmatrix},\quad  z_{(-)}\in X_2,\\
  		&v^{X}_{ -k_0}(\zeta,z_{(-)})=\begin{pmatrix}
  			1 & \tilde \alpha^*(k_0)z_{(-)}^{-2\ri\hat\nu}\re^{\frac{\ri z_{(-)}^2}{2}} & 0\\
  			0 & 1 & 0\\
  			0 & 0 & 1\\
  		\end{pmatrix},\quad\,\,\,\,  z_{(-)}\in X_3,\\
  		&v^{X}_{ -k_0}(\zeta,z_{(-)})=\begin{pmatrix}
  			1 & 0  & 0\\
  			2\sigma \hat \alpha(k_0)z_{(-)}^{2\ri\hat\nu}\re^{-\frac{\ri z_{(-)}^2}{2}}  & 1  & 0\\
  			0  & 0  & 1\\
  		\end{pmatrix},\quad  \,\,\,\, z_{(-)}\in X_4,
  	\end{align*}
  	\begin{align*}
  		&v^{X}_{k_0}(\zeta,z_{(+)})=\begin{pmatrix}
  			1 & 0 & 0\\
  			0 & 1 & 0\\
  			0 & \tilde \alpha^*(k_0)z_{(+)}^{-2\ri\hat\nu}\re^{\frac{\ri z_{(+)}^2}{2}} & 1\\
  		\end{pmatrix},\quad\,\,\,\, z_{(+)}\in X_1,\\  
  		&v^{X}_{ k_0}(\zeta,z_{(+)})=\begin{pmatrix}
  			1 & 0  & 0\\
  			0 & 1 & 2\sigma \hat \alpha(k_0)z_{(+)}^{2\ri\hat\nu}\re^{\frac{\ri z_{(+)}^2}{2}}\\
  			0 & 0  & 1\\
  		\end{pmatrix},\qquad\,\,\,  z_{(+)}\in X_2,\\
  		&v^{X}_{k_0}(\zeta,z_{(+)})=\begin{pmatrix}
  			1 & 0 & 0\\
  			0 & 1 & 0\\
  			0 & -\hat\alpha^*(k_0)z_{(+)}^{-2\ri\hat\nu}\re^{\frac{\ri z_{(+)}^2}{2}} & 1\\
  		\end{pmatrix},\quad  z_{(+)}\in X_3,\\
  		&v^{X}_{k_0}(\zeta,z_{(+)})=\begin{pmatrix}
  			1 & 0 & 0\\
  			0 & 1 & -2\sigma \tilde \alpha(k_0)z_{(+)}^{2\ri\hat\nu}\re^{-\frac{\ri z_{(+)}^2}{2}}\\
  			0  & 0  & 1\\
  		\end{pmatrix},\quad\,\,\,  z_{(+)}\in X_4.
  	\end{align*}
 Define functions $m^{(\pm k_0)}$ in $B_\epsilon(\pm k_0)$, respectively, to approximate the function  $m^{(2)}(x,t,k)$:
  \begin{align}\label{mk_0}
  	m^{(\pm k_0)}(\zeta,k)=X_{(\pm)}(\zeta)m^{X}_{\pm k_0}(q(\zeta),z_{(\pm)}(k))X^{-1}_{(\pm)}(\zeta), \quad k\in B_\epsilon(\pm k_0),
  \end{align}
  where $m^{X}_{\pm k_0}(q(\zeta),z_{(\pm)}(k))$ satisfy the RH problem \ref{model-rhp1}, except that the jump matrices $V^X_{\pm k_0}$ are replaced by $v^X_{\pm k_0}$.
 
  The solution $m^{X}_{\pm k_0}(q,z_{(\pm)})$ admits the following expansion:
  \begin{equation*}
  	m^{X}_{\pm k_0}(\zeta ,z_{(\pm)})=I+\frac{\left( m^{X}_{\pm k_0}(\zeta)\right) _1}{z_{(\pm)}}+\mathcal{O}\left(\frac{1}{z_{(\pm)}^2}\right), \quad z_{(\pm)}\to\infty,
  \end{equation*}
  where
  \begin{equation*}
  	\left( m^{X}_{k_0}(\zeta)\right) _1=\begin{pmatrix}
  		0 & 0 & 0\\
  		0 & 0 & \tilde\alpha_{23}\\
  		0 & \tilde\alpha_{32} & 0\\
  	\end{pmatrix},\quad 	\left( M^{X}_{-k_0}(\zeta)\right) _1=\begin{pmatrix}
  		0 & \tilde\alpha_{12} & 0\\
  		\tilde\alpha_{21} & 0 & 0\\
  		0 & 0 & 0\\
  	\end{pmatrix},
  \end{equation*}
  and 
  \begin{equation*}
  \begin{aligned}
      &\tilde\alpha_{12}=\frac{\sqrt{2\pi}\re^{-\frac{\pi\ri}{4}}\re^{-\frac{\pi\hat\nu}{2}}}{2\sigma  \alpha(k_0)\re^{2\pi(\nu_2-\nu_1)}\Gamma(\ri\hat\nu)},\quad 	&&\tilde\alpha_{21}=\frac{\sqrt{2\pi}\re^{\frac{\pi\ri}{4}}\re^{-\frac{\pi\hat\nu}{2}}}{  \alpha^*(k_0)\re^{2\pi\nu_1}\Gamma(-\ri\hat\nu)},\\
      &\tilde\alpha_{23}=\frac{\sqrt{2\pi}\re^{-\frac{\pi\ri}{4}}\re^{-\frac{5\pi\hat\nu}{2}}}{\alpha^*(k_0)\re^{2\pi\nu_1}\Gamma(-\ri\hat\nu)},\quad
  	&&\tilde\alpha_{32}=\frac{\sqrt{2\pi}\re^{\frac{\pi\ri}{4}}\re^{\frac{3\pi\hat\nu}{2}}}{2\sigma \alpha(k_0)\re^{2\pi(\nu_2-\nu_1)}\Gamma(\ri\hat\nu)}.
  \end{aligned}
  \end{equation*}

  \begin{lem}\label{lem_Error_V2-vk1}
  	For each $\zeta\in\mathcal{I}_3$ and $t\geq 2$, the eigenfunctions $m^{( \pm k_0)}$  defined in equation \eqref{mk_0} are analytic and bounded for $k\in B_\epsilon(\pm k_0)\setminus X^{\pm k_0}$, respectively. The jump relations $m^{(\pm k_0)}_+(x,t,k)=m^{(\pm k_0)}_-(x,t,k)v^{(\pm k_0)}(x,t,k)$ for $k\in X^{\pm k_0}$ hold. Moreover, the jump matrices $v^{(\pm k_0)}$ satisfy the following estimates:
  	\begin{align*}
  		&\left\|v^{(2)}(x,t,\cdot)-v^{(\pm k_0)}(x,t,\cdot)\right\|_{L^1(X^{\pm k_0})}\leq \frac{C \ln t}{t},\\
  		&\left\|v^{(2)}(x,t,\cdot)-v^{(\pm k_0)}(x,t,\cdot)\right\|_{L^\infty(X^{\pm k_0})}\leq \frac{C \ln t}{\sqrt{t}}.
  	\end{align*}
  	In addition,
  	\begin{equation*}
  		\left\|\left(m^{(\pm k_0)}(x,t,\cdot)\right)^{-1}-I\right\|_{L^\infty(\partial B_\epsilon(\pm k_0))}=\mathcal{O}\left(t^{-1/2}\right),
  	\end{equation*}
  	\begin{equation*}
  		\frac{1}{2\pi\ri} \int_{\partial B_\epsilon(\pm k_0)} \left(\left(m^{(\pm k_0)}(x,t,k)\right)^{-1}-I\right)\rd k=-\frac{X_{(\pm)}(\zeta)\left( m^{X}_{\pm k_0}(\zeta)\right) _1X_{(\pm)}^{-1}(\zeta)}{2\sqrt{t}}+\mathcal{O}\left(t^{-1}\right),
  	\end{equation*}
  	uniformly for $\zeta\in\mathcal{I}_3$.
  \end{lem}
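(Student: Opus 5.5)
The proof will follow that of Lemma \ref{lem_Error_v2-vk1} for Region II. The only new ingredients are the modified scaling matrices $X_{(\pm)}$ and the functions $b_0^{(\pm k_0)},b_1^{(\pm k_0)}$, which replace $Y_{(\pm)}$, $d_0^{(\pm k_0)}$ and $d_1^{(\pm k_0)}$.

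First, analyticity and boundedness of $m^{(\pm k_0)}$ on $B_\epsilon(\pm k_0)\setminus X^{\pm k_0}$. By definition \eqref{mk_0}, $m^{(\pm k_0)}=X_{(\pm)}m^X_{\pm k_0}X_{(\pm)}^{-1}$. The model solution $m^X_{\pm k_0}$ is the explicit parabolic-cylinder solution of the modified RH problem \ref{model-rhp1}, so it is analytic off $X$ and bounded, including at $z_{(\pm)}=0$. The matrices $X_{(\pm)}^{\pm1}$ are bounded uniformly in $\zeta\in\mathcal{I}_3$ and $t\ge2$ by Lemma \ref{lem_X12_bound}. The jump relation $m^{(\pm k_0)}_+=m^{(\pm k_0)}_-v^{(\pm k_0)}$ on $X^{\pm k_0}$ then holds with
\[
v^{(\pm k_0)}=X_{(\pm)}\,v^X_{\pm k_0}(\zeta,z_{(\pm)}(k))\,X_{(\pm)}^{-1}.
\]
On the same set $v^{(2)}=X_{(\pm)}\tilde v^{(\pm k_0)}X_{(\pm)}^{-1}$, so by Lemma \ref{lem_X12_bound} it suffices to estimate $\tilde v^{(\pm k_0)}-v^X_{\pm k_0}$.

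Second, the jump estimates, taking the $(3,2)$ entry on $X_5^{k_0}$ as a representative case. We have
\[
\tilde\alpha_a^*(k)\bigl(b_1^{(k_0)}\bigr)^{-1}z_{(+)}^{-2\ri\hat\nu}\re^{\ri z_{(+)}^2/2}-\tilde\alpha^*(k_0)z_{(+)}^{-2\ri\hat\nu}\re^{\ri z_{(+)}^2/2}.
\]
We split this as $\bigl[(b_1^{(k_0)})^{-1}-1\bigr]\tilde\alpha_a^*(k)+\bigl[\tilde\alpha_a^*(k)-\tilde\alpha_a^*(k_0)\bigr]$, times the common factor, using $\tilde\alpha_a(k_0)=\tilde\alpha(k_0)$. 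The term $|z_{(+)}^{-2\ri\hat\nu}|$ is bounded on the rays, since $\arg z$ is fixed there. By Lemma \ref{alpha_decomposition}, $\tilde\alpha_a$ is Lipschitz near $k_0$ with the growth factor $\re^{\frac t4|\rre\Phi_{23}|}$. By Lemma \ref{lem_X12_bound}, $|b_1^{(k_0)}-1|\le C|k-k_0|(1+|\ln|k-k_0||)$. On the ray, $\rre(\ri z_{(+)}^2/2)=-|z_{(+)}|^2/2$, and this absorbs the factor $\re^{|z|^2/8}$. Together these give a bound
\[
C|k-k_0|(1+|\ln|k-k_0||)\,\re^{-ct|k-k_0|^2}.
\]
Integrating in $u=|k-k_0|$ gives the $L^1$ bound $C\ln t/t$, and taking the supremum gives the $L^\infty$ bound $C\ln t/\sqrt t$. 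The remaining seven rays are handled identically, each with its own $\hat\alpha_a$ or $\tilde\alpha_a$.

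Third, the boundary estimates. For $k\in\partial B_\epsilon(\pm k_0)$ we have $|z_{(\pm)}|=2\sqrt t\,\epsilon\to\infty$. The large-$z$ expansion of $m^X_{\pm k_0}$ then gives
\[
\bigl(m^{(\pm k_0)}\bigr)^{-1}-I=-\frac{X_{(\pm)}\bigl(m^X_{\pm k_0}\bigr)_1X_{(\pm)}^{-1}}{2\sqrt t\,(k\mp k_0)}+\mathcal O(t^{-1}),
\]
uniformly in $\zeta$, again by Lemma \ref{lem_X12_bound}. This immediately gives the $\mathcal O(t^{-1/2})$ sup bound. The Cauchy residue formula applied to $1/(k\mp k_0)$ on the counterclockwise circle gives the stated integral identity.

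The main obstacle is the uniformity of the $\mathcal O(t^{-1})$ remainder in the model expansion. This requires the model solution's parameters, namely $\hat\nu$, $\hat\alpha(k_0)$ and $\tilde\alpha(k_0)$, to range over a compact set as $\zeta$ ranges over $\mathcal{I}_3$. It also requires the double-branch power $z^{2\ri\hat\nu}$, built from both $\ln_0$ and $\ln_\pi$, to remain bounded on each ray. Both follow from continuity of the reflection coefficients and Assumption \ref{assu_LT}. That assumption keeps $\nu_2$ and $\nu_3$ finite, and with them $|b_0^{(\pm k_0)}|$.
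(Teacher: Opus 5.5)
Your proposal is correct and takes the same route as the paper. The paper's proof simply defers to the Region II argument: conjugate by $X_{(\pm)}$, estimate $\tilde v^{(\pm k_0)}-v^X_{\pm k_0}$ on the rays using the analytic approximants and the $b_1^{(\pm k_0)}$ bound, then apply the large-$z$ expansion of the model solution and Cauchy's formula on $\partial B_\epsilon(\pm k_0)$. That is exactly what you carry out.

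One small inaccuracy is in your closing remark. $|b_0^{(\pm k_0)}|$ also involves $\nu_1$, whose finiteness needs $1-2\sigma|r_1(k_0)|^2>0$; Assumption \ref{assu_LT} does not give this. Since you take boundedness of $X_{(\pm)}^{\pm1}$ from Lemma \ref{lem_X12_bound}, this does not affect the argument.
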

  The proof follows the same reasoning as Lemma \ref{lem_Error_v2-vk1} and is omitted here.
  
  Define $\hat{\Gamma}=\Gamma^{(2)}\cup \partial B_{\epsilon}$, and $\partial B_{\epsilon}$ are oriented counterclockwise. Construct the following piecewise analytic function $\hat m(x,t,k)$ and its corresponding jump matrix $\hat{v}(x,t,k)$ for $k\in\hat{\Gamma}$:
  \begin{equation}\label{hatm}
  	\hat{m}=\left\{
  	\begin{aligned}
  		&m^{(2)}\left(m^{(\pm k_0)}\right)^{-1},\quad &&k\in B_\epsilon(\pm k_0),\\
  		&m^{(2)},\quad &&\text{elsewhere}.
  	\end{aligned}
  	\right.\qquad 	\hat{v}=\left\{
  	\begin{aligned}
  		&m^{(\pm k_0)}_-v^{(2)}\left(m^{(\pm k_0)}_+\right)^{-1},\quad &&k\in \hat{\Gamma} \cap B_\epsilon(\pm k_0),\\
  		& \left(m^{(\pm k_0)}\right)^{-1}, && k\in \partial B_\epsilon(\pm k_0),\\
  		&v^{(2)},\quad &&k\in \hat{\Gamma} \setminus B_\epsilon.
  	\end{aligned}
  	\right.
  \end{equation}

  \begin{lem}\label{lem_hatW}
  	Let $\hat{w}=\hat{v}-I$, the following estimates hold uniformly for $\zeta\in\mathcal{I}_3$, $t\geq2$:
  	\begin{align*}
  		&\left\|(1+|\cdot|) \hat{w}\right\|_{\left(L^1\cap L^\infty\right)(\mathbb{R})}\leq \frac{C}{t^{N}}, \quad \left\|(1+|\cdot|)  \hat{w}\right\|_{\left(L^1\cap L^\infty\right)(\Gamma^{(2)}\setminus(\mathbb{R}\cup B_\epsilon))}\leq C\re^{-ct},\label{Error_hatW_2}\\
  		&\left\| \hat{w}\right\|_{\left(L^1\cap L^\infty\right)(\partial B_\epsilon(\pm k_0))}\leq \frac{C}{\sqrt{t}},\quad
  		\left\| \hat{w}\right\|_{L^1(X^{\pm k_0})}\leq \frac{C\ln t}{t},\quad 
  		\left\|  \hat{w}\right\|_{L^\infty(X^{\pm k_0})}\leq \frac{C\ln t}{\sqrt{t}}.
  	\end{align*}
  \end{lem}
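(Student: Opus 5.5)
The estimates for $\hat w=\hat v-I$ will be obtained piece by piece over the decomposition $\hat\Gamma=(\mathbb{R}\setminus B_\epsilon)\cup(\Gamma^{(2)}\setminus(\mathbb{R}\cup B_\epsilon))\cup\partial B_\epsilon\cup X^{\pm k_0}$. This mirrors the proof of Lemma \ref{lem_hatw} in Region II, using in each piece the Region III ingredients: Lemmas \ref{alpha_decomposition}, \ref{lem_delta1}, \ref{lem_R_bound}, \ref{lem_V2_error}, \ref{lem_X12_bound} and \ref{lem_Error_V2-vk1}.

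\emph{Contour away from the disks.} On $\hat\Gamma\setminus B_\epsilon$ we have $\hat w=v^{(2)}-I$ by \eqref{hatm}.
\begin{itemize}
\item On the real pieces $\Gamma^{(2)}_{9},\Gamma^{(2)}_{10},\Gamma^{(2)}_{11}$, the entries of $v^{(2)}_j-I$ are combinations of the remainders $\tilde r_{2,r}$, $\hat\alpha_r$, $\tilde\alpha_r$ (and $\hat r_{1,r}$), multiplied by bounded factors. These factors are the $\tilde\delta$'s (bounded by Lemma \ref{lem_delta1}) and unimodular exponentials, since $\rre\Phi_{ij}=0$ on $\mathbb{R}$. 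The relations in Remark \ref{remark_relation_hatalpha} guarantee that the analytic parts cancel exactly, leaving only remainder terms.
\item The $\mathcal{O}(t^{-N})$ bound in weighted $L^1\cap L^\infty$ on these pieces then follows from Lemma \ref{alpha_decomposition}; this is the content of Lemma \ref{lem_V2_error}.
\item On the vertical pieces $\Gamma^{(2)}_{12},\Gamma^{(2)}_{13}$ and on the parts of the rays $\Gamma^{(2)}_{1},\dots,\Gamma^{(2)}_{8}$ outside $B_\epsilon$, each nonzero entry has the form (analytic approximant)$\times$(bounded $\tilde\delta$-factor)$\times \re^{\pm t\Phi_{ij}}$.
\item On these sets the sign of $\rre\Phi_{ij}$ (Figure \ref{fig_theta_sign2}) is the decaying one, and since $|k\mp k_0|\ge\epsilon$ we get $|\rre\Phi_{ij}|\ge c(1+|k|)$ along the rays. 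Combining this with the $\re^{\frac t4|\rre\Phi_{ij}|}$ bounds of Lemma \ref{alpha_decomposition} gives $|\hat w|\le C(1+|k|)^{-1}\re^{-\frac34 ct(1+|k|)}$. Its weighted $L^1\cap L^\infty$ norm is $\mathcal{O}(\re^{-ct})$.
\end{itemize}

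\emph{Disk boundaries.} On $\partial B_\epsilon(\pm k_0)$ we have $\hat w=(m^{(\pm k_0)})^{-1}-I$. Its $L^\infty$ bound $\mathcal{O}(t^{-1/2})$ is exactly Lemma \ref{lem_Error_V2-vk1}, and the $L^1$ bound follows because the circles have finite length.

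\emph{Crosses inside the disks.} On $X^{\pm k_0}$, the model $m^{X}_{\pm k_0}$ has exactly the jump $v^{X}_{\pm k_0}$, so $m^{(\pm k_0)}$ has jump $v^{(\pm k_0)}$. This gives
\[
\hat w=m^{(\pm k_0)}_-\bigl(v^{(2)}-v^{(\pm k_0)}\bigr)\bigl(m^{(\pm k_0)}_+\bigr)^{-1}.
\]
The factors $m^{(\pm k_0)}_\pm$ and their inverses are uniformly bounded: $X_{(\pm)}^{\pm1}$ are bounded by Lemma \ref{lem_X12_bound}, the model solution is bounded on its contour, and $\det m^X=1$. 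The $L^1$ bound $C\ln t/t$ and the $L^\infty$ bound $C\ln t/\sqrt t$ therefore follow from those for $v^{(2)}-v^{(\pm k_0)}$ in Lemma \ref{lem_Error_V2-vk1}. Inside $B_\epsilon$ but off the crosses, $\hat\Gamma$ contains only real-axis segments. There $\hat w$ is the conjugate of $v^{(2)}-I$ by the bounded $m^{(\pm k_0)}$, so the $\mathcal{O}(t^{-N})$ bound persists.

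\emph{Main obstacle.} The genuinely delicate point is the real-axis estimate: checking that in each $v^{(2)}_j$, $j=9,10,11$, every entry really factors through a remainder. These matrices involve the mixed factors $\tilde\delta_{21},\tilde\delta_{23}$ with jumps on both $(-\infty,k_0]$ and $[k_0,-k_0]$, and the cancellation must be verified from Remark \ref{remark_relation_hatalpha} together with the scalar jumps \eqref{vol_delta123}. I would first verify this on $\Gamma^{(2)}_{10}$, i.e. $(k_0,-k_0)$, where all three $\delta_j$ jump. I would then handle the terms containing the remainders divided by $(k\mp k_0)$ using the second norm in Lemma \ref{alpha_decomposition}, which controls the behaviour near $\pm k_0$.
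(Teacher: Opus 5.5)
Your proposal is correct and is essentially the paper's argument: the paper's proof is the one-line statement that the estimates follow directly from the decomposition lemma, the $\delta$-lemma, and Lemmas~\ref{lem_V2_error}--\ref{lem_Error_V2-vk1}. You have unpacked exactly that piece by piece over $\hat\Gamma$ (real axis, decaying rays and vertical segments, disk boundaries, and crosses via $\hat w=m^{(\pm k_0)}_-(v^{(2)}-v^{(\pm k_0)})(m^{(\pm k_0)}_+)^{-1}$), and your citation of the Region III versions (Lemmas~\ref{alpha_decomposition} and~\ref{lem_delta1}) is, if anything, more accurate than the paper's reference to the Region II Lemmas~\ref{lem_r_decomposition} and~\ref{lem_delta}.
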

  \begin{proof}
  	This follows directly from Lemmas \ref{lem_r_decomposition}, \ref{lem_delta}, and \ref{lem_V2_error}-\ref{lem_Error_V2-vk1}.
  \end{proof}
  
  From the above lemma, we obtain
  \begin{equation*}
  	\left\|(1+|\cdot|)  \hat{w}\right\|_{L^p(\hat{\Gamma})}\leq \frac{C\ln t^{\frac{p-1}{p}}}{\sqrt{t}},\quad 1\leq p\leq \infty,
  \end{equation*}
  and $\hat{w}\in (\dot{L}^3\cap L^\infty)(\hat{\Gamma})$.
  \begin{lem}\label{lem_I-C_inverse1}
  	For any sufficiently large $t$ and $\zeta\in\mathcal{I}_3$, $I-\mathcal{C}_{\hat{w}(x,t,\cdot)}\in \mathcal{B}(\dot L^3(\hat{\Gamma})) $ is invertible.
  \end{lem}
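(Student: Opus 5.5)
The plan is to mirror the proof of Lemma \ref{lem_I-C_inverse} from Region II, now on the contour $\hat{\Gamma}$ with the weight $\hat{w}$. The invertibility will follow from a Neumann series, once we show that $\mathcal{C}_{\hat{w}}$ has operator norm strictly less than one on $\dot L^3(\hat{\Gamma})$ for large $t$.

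First, I would recall that $\hat{\Gamma}=\Gamma^{(2)}\cup\partial B_\epsilon$ is a finite union of rays, segments and two circles. It has the same geometry as $\hat{\Sigma}$, with $\pm k_0$ interchanged, and it is uniformly Carleson as $\zeta$ ranges over the compact set $\mathcal{I}_3$. The circles have radius $\epsilon=|k_0|/2$, which is bounded above and away from zero on $\mathcal{I}_3$ provided $\mathcal{I}_3$ avoids $\zeta=1$. Consequently $\mathcal{C}_-$ is bounded on $\dot L^3(\hat{\Gamma})$, with norm bounded uniformly in $\zeta\in\mathcal{I}_3$; this is the same fact already used for $\hat{\Sigma}$. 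I regard this uniformity as the only point needing care. To justify it, I would note that the contours for different $\zeta$ are related by translations and dilations with bounded parameters, so the Cauchy operator norms are uniformly controlled.

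Next, for $f\in\dot L^3(\hat{\Gamma})$, I would write
\begin{equation*}
\left\|\mathcal{C}_{\hat w}f\right\|_{\dot L^3(\hat{\Gamma})}=\left\|\mathcal{C}_-(f\hat w)\right\|_{\dot L^3(\hat{\Gamma})}\leq \left\|\mathcal{C}_-\right\|_{\mathcal{B}(\dot L^3(\hat{\Gamma}))}\left\|\hat{w}\right\|_{L^\infty(\hat{\Gamma})}\left\|f\right\|_{\dot L^3(\hat{\Gamma})}.
\end{equation*}
By Lemma \ref{lem_hatW}, or equivalently by the $L^p$ bound with $p=\infty$ displayed after it, we have $\|\hat w\|_{L^\infty(\hat{\Gamma})}\leq C\ln t/\sqrt{t}$ uniformly for $\zeta\in\mathcal{I}_3$.

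Finally, I would choose $T_0$ so large that $C\ln t/\sqrt t\,\|\mathcal{C}_-\|<1$ for all $t\geq T_0$. Then $\|\mathcal{C}_{\hat w}\|<1$, the Neumann series $\sum_j\mathcal{C}_{\hat w}^j$ converges in $\mathcal{B}(\dot L^3(\hat{\Gamma}))$, and therefore $I-\mathcal{C}_{\hat w}$ is invertible, with inverse norm at most $(1-\|\mathcal{C}_{\hat w}\|)^{-1}$, uniformly in $\zeta$.
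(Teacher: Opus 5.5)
Your proposal is correct and takes the same route as the paper: by reference to Lemma \ref{lem_I-C_inverse}, it bounds $\|\mathcal{C}_{\hat w}\|_{\mathcal{B}(\dot L^3(\hat\Gamma))}\le\|\mathcal{C}_-\|_{\mathcal{B}(\dot L^3(\hat\Gamma))}\|\hat w\|_{L^\infty(\hat\Gamma)}\le C\|\mathcal{C}_-\|\ln t/\sqrt{t}$ using Lemma \ref{lem_hatW}, then inverts $I-\mathcal{C}_{\hat w}$ by a Neumann series. The paper leaves two things implicit that you make explicit: the uniformity of $\|\mathcal{C}_-\|$ in $\zeta\in\mathcal{I}_3$, since the contour is a dilation by $|k_0|$, bounded above and away from zero, of a fixed one, and the strict inequality $\|\mathcal{C}_{\hat w}\|<1$, where the paper's Region II argument writes only $\le$.
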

 
   Define a function  $\tilde{\mu}(x,t,k)$ for  sufficiently large $t$:
  \begin{equation*}
  	\tilde{\mu}=I+\left(I-\mathcal{C}_{\hat{w}}\right)^{-1}\mathcal{C}_{\hat{w}}I\in I+\dot L^3(\hat{\Gamma}),
  \end{equation*}
 and obtain the following two lemmas.
  
  \begin{lem}\label{lem_hatm_solution}
  	For sufficiently large time $t$, the RH problem for $\hat{m}$ in \eqref{hatm} has a unique solution $\hat{m}\in I+\dot{E}^3(\mathbb{C} \setminus \hat{\Gamma})$, which
  	\begin{equation*}
  		\hat{m}(x,t,k)=I+\mathcal{C}(\tilde\mu \hat w)=I+\frac{1}{2\pi\ri}\int_{\hat{\Gamma}}\frac{\tilde{\mu}(x,t,k')\hat{w}(x,t,k')}{k'-k}\rd k'.
  	\end{equation*}
  \end{lem}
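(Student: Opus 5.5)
We follow the standard small-norm argument in the $\dot L^3$ framework of \cite{Lenells_2018}, as was done for Lemma \ref{lem_hatM_solution}. The RH problem for $\hat m$ in \eqref{hatm} has contour $\hat\Gamma$, jump $\hat v=I+\hat w$, and normalization $\hat m\to I$ at infinity. By Lemma \ref{lem_hatW} and the $L^p$ estimate following it, $\hat w\in(\dot L^3\cap L^\infty)(\hat\Gamma)$ with $\|\hat w\|_{L^\infty(\hat\Gamma)}\le C\ln t/\sqrt t$. Lemma \ref{lem_I-C_inverse1} then shows that $I-\mathcal C_{\hat w}$ is invertible on $\dot L^3(\hat\Gamma)$ for large $t$, so $\tilde\mu$ is well defined and $\tilde\mu-I\in\dot L^3(\hat\Gamma)$.

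\emph{Existence.} Set $\hat m:=I+\mathcal C(\tilde\mu\hat w)$. Write $\tilde\mu\hat w=\hat w+(\tilde\mu-I)\hat w$. The first term lies in $\dot L^3$ because $\hat w\in\dot L^3\cap L^\infty$. The second is a product of an $\dot L^3$ function with an $L^\infty$ function, so it also lies in $\dot L^3$. Hence $\mathcal C(\tilde\mu\hat w)$ is analytic off $\hat\Gamma$ and belongs to $\dot E^3(\mathbb C\setminus\hat\Gamma)$, using the standard fact that Cauchy transforms of $\dot L^3$ densities lie in $\dot E^3$.

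The boundary values satisfy
\[
\hat m_\pm=I+\mathcal C_\pm(\tilde\mu\hat w),\qquad \hat m_+-\hat m_-=\tilde\mu\hat w,
\]
using $\mathcal C_+-\mathcal C_-=I$. The defining equation $\tilde\mu=I+\mathcal C_{\hat w}\tilde\mu$ means exactly $\hat m_-=\tilde\mu$. Therefore $\hat m_+=\hat m_-(I+\hat w)=\hat m_-\hat v$, so $\hat m$ solves the problem.

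\emph{Uniqueness.} Let $\hat m^\sharp\in I+\dot E^3$ be another solution. The jump matrices are products of unimodular factors ($\det v^{(2)}=1$, $\det m^{(\pm k_0)}=1$), so $\det\hat v=1$. Hence $\det\hat m$ has no jump, lies in $1+\dot E^1$, and by Liouville equals $1$; in particular $\hat m^{-1}$ exists and lies in $I+\dot E^3$.

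Consider $\hat m^\sharp\hat m^{-1}$. It has no jump across $\hat\Gamma$. Being a product of $\dot E^3$-type functions, it lies in $I+\dot E^{3/2}$, and a function in $\dot E^{p}$ with no jump is identically zero. Hence $\hat m^\sharp=\hat m$.

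\emph{Main obstacle.} The delicate point is the function-space bookkeeping in the uniqueness step: showing that $\hat m^{-1}$ lies in the right $\dot E$ class and that $\hat\Gamma$ is admissible (Carleson, with finitely many self-intersections and rays going to infinity), so that the Cauchy operator bounds and the "no jump implies zero" Liouville lemma of \cite{Lenells_2018} apply. These hold because $\hat\Gamma$ is a finite union of rays, segments and circles.
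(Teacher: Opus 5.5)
Your proposal is the standard small-norm argument in the $\dot L^3$ framework of \cite{Lenells_2018}. That is exactly what the paper relies on here; it gives no separate proof, just as for Lemma \ref{lem_hatM_solution}. Your existence part is correct, including the identification $\hat m_-=\tilde\mu$ via $\tilde\mu=I+\mathcal{C}_-(\tilde\mu\hat w)$.

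The uniqueness bookkeeping, however, is misstated. From $\det\hat m=1$ you only get $\hat m^{-1}=\operatorname{adj}\hat m$, whose entries are $2\times 2$ minors. So $\hat m^{-1}\in I+\dot E^3+\dot E^{3/2}$, not $I+\dot E^3$ as you claim.

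Similarly, $\det\hat m-1$ and $\hat m^\sharp\hat m^{-1}-I$ lie in the sum $\dot E^3+\dot E^{3/2}+\dot E^1$, not in a single class $\dot E^1$ or $\dot E^{3/2}$. On the unbounded components of $\mathbb{C}\setminus\hat\Gamma$ there is no inclusion $\dot E^3\subset\dot E^{3/2}$, so this matters.

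The repair is immediate. Each $f\in\dot E^p$ with $1\le p<\infty$ equals $\mathcal{C}(f_+-f_-)$, so a finite sum of such functions with zero total jump vanishes.

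This is also why the exponent $3$ is chosen for a $3\times3$ problem. Determinants, adjugates and the product $\hat m^\sharp\hat m^{-1}$ involve products of at most three $\dot E^3$ factors, so every exponent stays $\ge 1$.
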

  \begin{lem}\label{Error_tildemu-I}
  	For sufficiently large $t$ and $\zeta\in\mathcal{I}_3$, we have
  	\begin{equation*}
  		\left\|\tilde{\mu}-I\right\|_{L^p(\hat{\Gamma})}\leq \frac{C\ln t^{\frac{p-1}{p}}}{\sqrt{t}}, \quad 1<p<\infty.
  	\end{equation*}
  \end{lem}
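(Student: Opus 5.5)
The bound will come from a Neumann series for the singular integral equation defining $\tilde\mu$, as in the proof of Lemma \ref{Error_mu-I} for Region II. Fix $1<p<\infty$ and set $\|\mathcal{C}_-\|_p:=\|\mathcal{C}_-\|_{\mathcal{B}(L^p(\hat{\Gamma}))}$. This norm is finite because $\hat{\Gamma}$ is a finite union of rays, segments and circles, which is a Carleson contour. It is also independent of $t$ and $\zeta\in\mathcal{I}_3$: the circles $\partial B_\epsilon(\pm k_0)$ have radius $\epsilon=|k_0|/2$, and $k_0$ ranges over a compact set bounded away from $0$ when $\zeta\in\mathcal{I}_3$. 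So the contours form a compact family of translates and dilates of one fixed configuration, and the Cauchy operator norms are uniformly bounded.

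From the estimates after Lemma \ref{lem_hatW} we have $\|\hat w\|_{L^\infty(\hat\Gamma)}\le C\ln t/\sqrt t$. Choose $t$ so large that $\|\mathcal{C}_-\|_p\|\hat w\|_{L^\infty}\le 1/2$. Then $\mathcal{C}_{\hat w}$ has norm at most $1/2$ on $L^p(\hat\Gamma)$, and we can write
\[
\tilde\mu-I=(I-\mathcal{C}_{\hat w})^{-1}\mathcal{C}_{\hat w}I=\sum_{j\ge0}\mathcal{C}_{\hat w}^{\,j}\,\mathcal{C}_-(\hat w).
\]
Each term satisfies $\|\mathcal{C}_{\hat w}^{\,j}\mathcal{C}_-(\hat w)\|_{L^p}\le \|\mathcal{C}_-\|_p^{j+1}\|\hat w\|_{L^\infty}^{j}\|\hat w\|_{L^p}$. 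Summing the geometric series gives
\[
\|\tilde\mu-I\|_{L^p(\hat\Gamma)}\le \frac{\|\mathcal{C}_-\|_p\|\hat w\|_{L^p(\hat\Gamma)}}{1-\|\mathcal{C}_-\|_p\|\hat w\|_{L^\infty(\hat\Gamma)}}\le 2\|\mathcal{C}_-\|_p\|\hat w\|_{L^p(\hat\Gamma)}.
\]
The estimate $\|(1+|\cdot|)\hat w\|_{L^p(\hat\Gamma)}\le C(\ln t)^{(p-1)/p}/\sqrt t$ then yields the claim.

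We must also check that this $L^p$ series really represents the $\dot L^3$ solution used to define $\tilde\mu$. Both are the Neumann series of the same operator, and for $p=3$ the weighted and unweighted estimates are compatible because $\hat w$ has the extra weight $(1+|k|)$ in its bounds. Uniqueness from Lemma \ref{lem_I-C_inverse1} identifies the two solutions.

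The main obstacle is the $L^p$ bound on $\hat w$. It is obtained by interpolating the $L^1$ and $L^\infty$ bounds of Lemma \ref{lem_hatW} through $\|f\|_p\le\|f\|_1^{1/p}\|f\|_\infty^{1-1/p}$ on each piece of $\hat\Gamma$, which requires the uniform-in-$\zeta$ estimates of Lemma \ref{lem_Error_V2-vk1}. On $X^{\pm k_0}$ this gives $(\ln t/t)^{1/p}(\ln t/\sqrt t)^{1-1/p}=(\ln t)\,t^{-1/2-1/(2p)}$, which is smaller than $t^{-1/2}$ up to logarithms. On $\partial B_\epsilon$ the bound $C/\sqrt t$ holds in both norms. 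The remaining parts of the contour are $O(t^{-N})$ or exponentially small. Combining these pieces produces the stated rate.
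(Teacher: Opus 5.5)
Your proposal is correct and is essentially the paper's argument: the paper proves the Region II analogue (Lemma \ref{Error_mu-I}) by the same Neumann-series bound $\|\tilde\mu-I\|_{L^p}\le \|\mathcal{C}_-\|_p\|\hat w\|_{L^p}/(1-\|\mathcal{C}_-\|_p\|\hat w\|_{L^\infty})$, combined with the interpolated $L^p$ estimate on $\hat w$, and leaves the Region III statement to that reasoning. Your extra remarks are refinements the paper leaves implicit: the scale invariance of $\|\mathcal{C}_-\|_p$ in $\zeta$, and the identification of the $L^p$ and $\dot L^3$ limits, which is cleanest argued by noting that the identical partial sums converge in both spaces and hence agree a.e.
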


  Consider the following nontangential limit and decompose it using Lemmas \ref{lem_hatW} and \ref{Error_tildemu-I}, we obtain:
  \begin{align*}
  	\tilde Q(x,t)&=\lim_{k\to\infty}k\left(\hat{m}(x,t,k)-I\right)=-\frac{1}{2\pi\ri}\int_{\hat{\Gamma}}\tilde{\mu}(x,t,k)\hat{w}(x,t,k)\rd k\\
    &=-\frac{1}{2\pi\ri}\int_{\partial B_\epsilon }\left(\left(m^{(\pm k_0)}\right)^{-1}-I\right)\rd k+\mathcal{O}\left(\frac{\ln t}{t}\right)\\
  	&=\frac{X_{(+)}(\zeta)\left( m^{X}_{ k_0}(q(\zeta))\right) _1X_{(+)}^{-1}(\zeta)}{2\sqrt{t}}+\frac{X_{(-)}(\zeta)\left( m^{X}_{- k_0}(q(\zeta))\right) _1X_{(-)}^{-1}(\zeta)}{2\sqrt{t}}+\mathcal{O}\left(\frac{\ln t}{t}\right).
  \end{align*}
  
  Considering the transformations made in this subsection, together with Theorem \ref{theo_reconstruction formula} and Lemmas \ref{lem_delta}, \ref{lem_delta1} and \ref{lem_R_bound}, we obtain the long-time asymptotic behavior in \eqref{theo_region3} of the solution to the initial value problem \eqref{Newell_initial} for the Newell equation in Region III (i.e., $\zeta \in \mathcal{I}_3$) for the relation $\hat\nu=\nu$.

     \subsection{Long-time asymptotics in Region {\rm{IV}}}\label{subsec_region4}
    The difference between this region and Region I is that we consider the asymptotic behavior of the solution as $x \to -\infty$ under the parameter $\tau$. The specific procedure is similar to the discussion in Region III. The asymptotic result of the solution is given by equation \eqref{theo_region4}.
	
	\begin{remark}
		Since $k_0\to-\infty$ as $\tau \to 0$,  $r_1(k)$ and $r_2(k)$ vanish to all orders at $k=k_0$, it follows that the functions $\nu$, $\nu_1$, $ \tilde\nu$, $ \hat\nu$ and $s_2$ vanish to all orders as $\tau \to 0$. Consequently, equation \eqref{theo_region4} implies that as $x\to-\infty$:
		\begin{equation*}
			\begin{aligned}
				r(x,t)&=\mathcal{O}\left(\frac{1}{|x|^{N}}+\frac{C_N(\tau)}{|x|}\right),\qquad
				q(x,t)=\mathcal{O}\left(\frac{1}{|x|^{N}}+\frac{C_N(\tau)}{|x|}\right),
			\end{aligned}
		\end{equation*}
		uniformly for $\tau\in\mathcal{I}_4$. In particular, for any fixed $t\geq 0$, the above expression can be reduced to $\mathcal{O}\left({|x|^{-N}}\right) $ as $x \to -\infty$.
	\end{remark}

	\noindent{\bf Conflict of interest declaration.} 
    
    We declare we have no competing interests.
    
	\subsection*{Acknowledgements}
	Support is acknowledged from the National Natural Science Foundation of China, Grant No. 12371247 and No. 12431008 and Beijing Natural Science Foundation Grant No. 1262012.
	
	\bibliographystyle{amsplain}

\end{document}